\newtheorem{theorem}{Theorem} [section]
\newtheorem{proposition}[theorem]{Proposition}	
\newtheorem{lemma}[theorem]{Lemma}
\newtheorem{remark}[theorem]{Remark}
\theoremstyle{definition}
\newtheorem{definition}[theorem]{Definition}
\newcommand{\C}{\mathbb{C}}
\newcommand{\R}{\mathbb{R}}
\newcommand{\N}{\mathbb{N}}
\newcommand{\supp}{\operatorname{supp}}
\newcommand{\re}{\text{\upshape Re\,}}
\newcommand{\erfc}{\mathrm{erfc}\,}
\tikzset{
	master/.style={
		execute at end picture={
			\coordinate (lower right) at (current bounding box.south east);
			\coordinate (upper left) at (current bounding box.north west);
		}
	},
	slave/.style={
		execute at end picture={
			\pgfresetboundingbox
			\path (upper left) rectangle (lower right);
		}
	}
}
\let\oldbibliography\thebibliography
\renewcommand{\thebibliography}[1]{\oldbibliography{#1}
\setlength{\itemsep}{-0.5pt}}
\def\XXint#1#2#3{{\setbox0=\hbox{$#1{#2#3}{\int}$}
\vcenter{\hbox{$#2#3$}}\kern-.5\wd0}}
\tikzset{->-/.style={decoration={
				markings,
				mark=at position #1 with {\arrow{latex}}},postaction={decorate}}}
	\tikzset{-<-/.style={decoration={
				markings,
				mark=at position #1 with {\arrowreversed{latex}}},postaction={decorate}}}
\tikzset{cross/.style={cross out, draw,
         minimum size=2*(#1-\pgflinewidth),
         inner sep=0pt, outer sep=0pt}}
\numberwithin{equation}{section}
\def\bigO{{\cal O}}
\begin{document}
\title{\vspace{-1cm} Random normal matrices: eigenvalue \\ correlations near a hard wall\\ }
\author{Yacin Ameur\footnote{Centre for Mathematical Sciences, Lund University, 22100 Lund, Sweden. e-mails: yacin.ameur@math.lu.se,  christophe.charlier@math.lu.se, joakim.cronvall@math.lu.se
}\,, Christophe Charlier$^{*}$\hspace{-0.09cm},  Joakim Cronvall$^{*}$}

\maketitle

\begin{abstract} We study pair correlation functions for planar Coulomb systems in the pushed phase, near a ring-shaped impenetrable wall.
We assume coupling constant $\Gamma=2$ and that the number $n$ of particles is large. We find that the correlation functions decay slowly along the edges of the wall, in a narrow interface
stretching a distance of order $1/n$ from the hard edge. At distances much larger than $1/\sqrt{n}$, the effect of the hard wall is negligible and pair correlation functions decay very quickly, and in between sits an interpolating interface that we call the ``semi-hard edge''.

More precisely, we provide asymptotics for the correlation kernel $K_{n}(z,w)$ as $n\to\infty$ in two microscopic regimes (with either $|z-w| = \bigO (1/\sqrt{n})$ or $|z-w| = \bigO (1/n)$), as well as in three macroscopic regimes (with $|z-w| \asymp 1$). For some of these regimes, the asymptotics
 involve
 oscillatory theta functions and weighted Szeg\H{o} kernels.
\end{abstract}

\noindent
{\small{\sc AMS Subject Classification (2020)}: 41A60, 60B20, 60G55.}

\noindent
{\small{\sc Keywords}: Random normal matrices,  correlation kernel, pushed phase, hard wall, theta functions.}

\section{Introduction: the pushed phase Coulomb gas}\label{section:pushed phase}
Hard edge boundary conditions are well-known in the theory of Hermitian random matrices, where
they are, for instance, associated with the Bessel kernel, see \cite{TW,CD2019,Fo}. In dimension two, the study of one-component plasmas near a hard wall is likewise of interest and variants appear in for example \cite{J,J2,Sm,RB,CPR,AR2017,AKM,AKMW,CMV,Fo,CFLV1,CFLV2,Seo0,Seo2021,CharlierAnnuli,ACCL2022 1,ACCL2022 2,Berezin,JV,CBalayage,CLBalayage}. As in the majority of these works, we take the coupling constant to be $\Gamma=2$, or equivalently, we consider eigenvalues of random normal matrices (see \cite{CZ1998,Z}).

\medskip  In \cite{J}, Jancovici considers two versions of the Ginibre ensemble,\footnote{Ginibre matrices are almost surely non-normal. Nevertheless, their eigenvalues are distributed in the same way as the eigenvalues of random normal matrices with the Gaussian potential $|z|^{2}$.} the first being the usual ``soft edge'' ensemble.
In the second model a wall is placed along the boundary of the droplet (nowadays this is called a ``soft/hard edge'').
One of Jancovici's motivations for studying a Coulomb system near a hard wall is its interest for describing an electrolyte near a colloidal wall or an electrode plate.

\medskip In both cases it was shown that the pair correlation functions along the edge
decay very slowly compared with the situation in the bulk. More precisely, they decay only as an inverse power of the distance along the edge. The heuristic is that the screening cloud that surrounds a particle sitting near the edge is prevented by the external field and/or the wall from being rotationally symmetric, which gives the particle plus cloud system a nonvanishing electrical dipole moment. This dipole moment is not strongly localized and causes long-range correlations along the edge \cite{J}.

\medskip In the soft edge-case, Jancovici's results were first generalized to the elliptic Ginibre ensemble in \cite{FJ} and later to quite general potentials in the paper \cite{AC}, where a leading term for macroscopic correlations near the outer boundary component of the droplet is found and expressed in terms of a Szeg\H{o} kernel. The paper \cite{ACC2022} provides corresponding results near the boundary of a ring-shaped
spectral gap, where some additional oscillations (depending on the number of particles) enter the picture. Microscopically, the oscillations are expressed in terms of the Jacobi theta function, which enters the subleading term of the microscopic density near the soft edge; macroscopically, the pair correlations along the boundary involve
an oscillatory Szeg\H{o} kernel.  The heuristic is that there is an additional uncertainty concerning the number of particles that fall near each of the two boundary components of the spectral gap, and this manifests itself in terms of some oscillations. It is observed in \cite{ACC2022} that the subleading term of the microscopic density near the edge is closely related to fluctuations of linear statistics.

\medskip The goal of the present work is to adapt the above results to the setting where a hard wall is placed inside the bulk of a plasma. This is known as a ``pushed phase'' \cite{CMV,CFLV1,CFLV2} or as a ``hard edge''  \cite{Seo2021,ACCL2022 1}, and is quite different from the ``critical phase'', i.e., the soft/hard edge in Jancovici's original work.
In the pushed phase, the plasma undergoes two transitions, at distances $\bigO(1/\sqrt{n})$ and $\bigO(1/n)$ respectively, from the hard edge.

\medskip

Before we continue, it is expedient to briefly recall the Coulomb gas model.
We consider Coulomb systems $\{z_j\}_{j=1}^n$ in the complex plane $\C$ subjected to an appropriate confining potential $Q$, where the Hamiltonian is
$$H_n=\sum_{i\ne j} \log \frac 1 {|z_i-z_j|}+n\sum_{j=1}^n Q(z_j),$$
i.e. it is the sum of the logarithmic interaction energy and the energy of interaction with the external field. The probability law of the system is given by the Gibbs measure
\begin{equation}\label{2d Coulomb gas}d \mathbb{P}_n(z_1,\ldots,z_n)=\frac 1 {n!Z_n}e^{-H_n(z_1,\ldots z_n)}\, \prod_{j=1}^n \frac {d^2 z_j}\pi\end{equation}
where $Z_n$ is the normalizing constant.

\medskip Classically, one approximates
the random measure $\frac 1 n \sum_{j=1}^n \delta_{z_j}$
by a continuous unit charge distribution $\sigma=\sigma[Q]$. This distribution is precisely given by Frostman's equilibrium measure associated with the potential $Q$, which is the unique minimizer $\sigma$ of the ``$Q$-energy''
\begin{align*}
\nu \mapsto I_Q[\nu]=\iint_{\C^2}\log \frac 1 {|z-w|}\, d\nu(z)\, d\nu(w)+\int_\C Q\, d\nu
\end{align*}
among all compactly supported unit measures $\nu$. The support $S=\supp\sigma$ is called the droplet in potential $Q$. (The measures $\frac{1}{n}\sum_{j=1}^n \delta_{z_j}$ converge, in a probabilistic and weak sense,
to $\sigma$ as $n\to\infty$, see e.g. \cite{Deift,A1}.)

\medskip If $Q$ is smooth in a neighbourhood of the droplet, then by Frostman's theorem (see \cite{SaTo}), $\sigma$ is absolutely continuous with respect to the area measure $d^2 z$ and takes the form $d\sigma(z)=\Delta Q(z)\chi_S(z)\,\frac {d^2 z}\pi$ where $\Delta Q:=\frac 1 4 (Q_{xx}+Q_{yy})$ is the usual Laplacian divided by four and $\chi_S$ is the indicator function of $S$. For example, if $Q(z)=|z|^2$ we obtain the Ginibre ensemble for which $S=\{z\,:\,|z|\le 1\}$ and $d\sigma(z)=\chi_S(z)\, \frac {d^2 z}\pi$.

\medskip We now fix a suitable (smooth) potential. In this paper we assume the rotational symmetry
$Q(z)=Q(|z|)$ and that the droplet is a disc $S$.
We also fix an open subset $G$ of the interior of $S$ which we take to be an annulus $r_1<|z|<r_2$ and modify the potential inside $G$ to $+\infty$, i.e. we put
\begin{align}\label{potential Q tilde}
\tilde{Q}(z)=\begin{cases}Q(z),& z\in \C\setminus G,\cr
+\infty,& z\in G.\cr
\end{cases}
\end{align}

Replacing $Q$ by $\tilde{Q}$ has a drastic effect even at the level of the equilibrium measure. Indeed, the equilibrium measure $\tilde{\sigma}$ associated with the potential $\tilde{Q}$ puts zero mass in the gap $G$,
 and the restriction $\sigma|_G$ is swept to a measure $\nu_G$ supported on the boundary of $G$ via a construction known as balayage \cite{SaTo}.
In short, the measure $\nu_G$ has the same total mass as $\sigma|_G$, is supported on the boundary of $G$ and takes the form $\nu_G=c_1\nu_1+c_2\nu_2$ where $\nu_j$ denotes the arclength measure along the circle $|z|=r_j$ and $c_j$ are certain constants, given in \eqref{def of taustar} below. The balayage measure $\nu_G$ and the restriction $\sigma|_G$ have furthermore identical logarithmic potentials in the complement $\C\setminus \overline{G}$.

\medskip The equilibrium measure $\tilde{\sigma}$ can now be written $$\tilde{\sigma}=\sigma|_{\tilde{S}}+\nu_G,$$ where
$d\sigma=\Delta Q(z)\frac {d^2 z} \pi$ and $\tilde{S}=S\setminus G$ is the droplet associated with the potential $\tilde{Q}$.

\medskip On the level of the Coulomb system $\{z_j\}_{j=1}^n$, the picture is that most of the particles that originally occupied the gap get ``swept'' to a thin interface in $\tilde{S}$ near the boundary $\partial G$, at a distance of $\bigO(1/n)$; following \cite{ACCL2022 1} we call this the ``hard edge regime''. If we denote this interface by $E_n$ then the corresponding  random measure $\frac 1 n\sum_{z_j\in E_n}\delta_{z_j}$ approximates the singular part $\nu_G$ of $\tilde{\sigma}$.

\medskip The 1-particle density of the pushed system is of order of magnitude $n^2$ in the hard edge regime. Further inside the bulk of $\tilde{S}$, at distances much larger than $1/\sqrt{n}$ from $\partial G$, the effect of the hard wall becomes negligible and the $1$-particle density is, to a first order approximation, given by $n$ times the equilibrium density $n \Delta Q(z)\frac {d^2 z} \pi$ of the unconstrained ensemble (associated with the potential $Q$).
In between these regimes sits a transitional regime which we call the ``semi-hard edge'', following \cite{ACCL2022 1}.

\medskip As we already mentioned, a different kind of wall, known as a soft/hard wall or a critical phase, is obtained by placing the hard wall along the boundary of the droplet $S$. This corresponds to
setting $G=\C\setminus S$ in \eqref{potential Q tilde}. In this case the local statistics near the wall, in a $\bigO(1/\sqrt{n})$-interface, is affected, but the
equilibrium measure and the droplet
are unchanged, and no hard-edge regime is present.

\medskip In studying hard walls, we must face the difficulty that the wall might cause long-range correlations merely due to its ``symmetry breaking''. A main insight in the forthcoming work \cite{C} is that, for ``general'' potentials, symmetry breaking is avoided precisely if the
hard wall is placed along the boundary of the droplet associated with the potential $Q/\tau$ where $0<\tau<1$ is a fixed suitable constant. In other words, the hard wall must be placed along the boundary of a mass-$\tau$ droplet associated with $Q$ (cf.~\cite{LM} for more about $\tau$-droplets). If this is done, some first order universality results can be proven; however, we emphasize that much more refined asymptotic results, such as the ones we obtain below, remain currently out of reach in this generality.
Hard walls which break the symmetry are so far studied mostly at the level of the equilibrium measure, see the works \cite{AR2017,CBalayage, CLBalayage}.

\medskip  For rotationally symmetric pushed phase models (including Coulomb gases in $\R^d$ and Yukawa gases) more work has been done.
 In \cite{CFLV2} it is proven that the weighted logarithmic energy $I_Q[\tilde{\sigma}]$ of the equilibrium measure exhibits a third-order phase transition as the wall crosses the critical phase and enters the pushed phase. In the planar case ($d=2$), a large $n$-expansion of the free-energy is proved in \cite{CharlierAnnuli}, allowing for the case of annular spectral gaps. The third order phase transition is reflected in the leading coefficient of that expansion.
In the work \cite{Seo2021}, Seo finds the leading order microscopic one-point density near a hard wall and proves it to be universal for a class of rotationally symmetric ensembles (which may have an additional  logarithmic singularity along the hard edge). Partition functions with hard edges are related with so-called \textit{large gap (or hole) probabilities}; this topic has a long history, see e.g. \cite{ForresterHoleProba, APS2009, SP2024}, and has found important recent applications to physics, see e.g. \cite{LMS2018}. Disc counting statistics are considered in \cite{ACCL2022 1,ACCL2022 2}, and \cite{Berezin} gives a functional limit theorem for certain smooth radially symmetric linear statistics in the hard edge regime, near the outer boundary of the droplet.

\medskip It is worth noting that a different type of hard-edge ensemble, corresponding to the external potential $Q=0$ and a hard wall outside some compact set $\Sigma$ bounded by a Jordan curve
has attracted some recent interest (e.g.~\cite{JV}). In this case the entire system will tend to occupy the hard wall regime, i.e. the portion of $\Sigma$ which has distance
$\bigO(1/n)$ to the boundary of $\Sigma$.
In a way, this case is simpler than the kind of hard walls studied in the present paper, since there is essentially just one regime (no ``bulk'' to interact with). In the case when $\Sigma$ is an elliptic disc (possibly with an added logarithmic singularity along the hard edge) correlations are studied in \cite{ACV, NAKP2020}, following the earlier work \cite{ZS2000} on truncated unitary matrices.

\medskip The present work gives a comprehensive study of correlations near a ring-shaped hard wall in the hard and semi-hard regimes on the microscopic and macroscopic levels. We shall consider the class of underlying rotationally symmetric potentials of the form \begin{equation}\label{originML}Q(z)=|z|^{2b}+\frac {2\alpha} n\log \frac 1 {|z|}\end{equation}
where $b>0$ and $\alpha>-1$. These are sometimes called model Mittag-Leffler potentials, and
they give rise to the droplets $S=\{z\,:\,|z|\le b^{-1/2b}\}$.
The corresponding point-processes \eqref{2d Coulomb gas} are well studied and are sometimes called model Mittag-Leffler ensembles, see \cite{AV2003, AKS2018, CharlierFH, BC2022} and the references there.

\medskip For the potentials \eqref{originML} and an annular spectral gap $G=\{r_1<|z|<r_2\}$ with $0<r_1<r_2<b^{-1/2b}$ we provide a full asymptotic picture of the correlations associated with the potential $\tilde{Q}$, cf.~Figures \ref{fig: ML with hard wall} and \ref{fig:summary} for illustrations.

\medskip In the rest of this paper, we will use the symbol $Q$ to denote the pushed-phase Mittag-Leffler potential (above denoted $\tilde{Q}$) which is $+\infty$ in $G$, and we will write $\mu$ for the equilibrium measure of $Q$ and $S=\supp\mu=\{|z|\le b^{-1/2b}\}\setminus G$ for the pushed-phase droplet, see Figure \ref{fig:summary}.

\section{Description of the model}

The $k$-point correlation functions $\{R_{n,k}:\C^{k}\to [0,+\infty)\}_{k= 1}^{n}$ associated with the point process $\{z_j\}_{j=1}^n$ in \eqref{2d Coulomb gas} are defined such that
\begin{align*}
\mathbb{E}_n[f(z_1,\ldots,z_k)]=\frac {(n-k)!}{n!}\int_{\C^k}f(z_1,\ldots,z_k)\, R_{n,k}(z_1,\ldots,z_k)\, \prod_{j=1}^{k}\frac{d^{2}z_{j}}{\pi}
\end{align*}
holds for all continuous and compactly supported functions $f$ on $\mathbb{C}^{k}$. The point process \eqref{2d Coulomb gas} is determinantal, meaning that all correlation functions exist and that there exists a correlation kernel $K_{n}(z,w)$ such that
\begin{align*}
R_{n,k}(z_{1},\ldots,z_{n}) = \det(K_{n}(z_{i},z_{j}))_{i,j=1}^{k}, \qquad \mbox{for all } n \in \N_{>0}, \; k\in \{1,2,\ldots,n\}, \; z_{1},\ldots,z_{k}\in \C.
\end{align*}

It should be noted that while the correlation functions are uniquely defined, a correlation kernel $K_n(z,w)$ is only defined up to a multiplicative ``cocycle'', i.e., a function of the form $c_n(z,w)=g_n(z)\overline{g_n(w)}$ where $g_n$ is a unimodular function. We fix $K_n$ uniquely by taking $K_n(z,w)$ to be the reproducing kernel of the $n$-dimensional subspace of $L^2$ consisting of weighted polynomials $p(z)e^{-nQ(z)/2}$ where $p$ is a holomorphic polynomial of degree at most $n-1$. This $K_n$ is called the \textit{canonical} correlation kernel and is used without exception in the following; see \eqref{def of Kn} for an explicit formula.

\medskip We shall study large $n$ asymptotics for pair correlations $K_n(z,w)$ when $z,w$ are close to a hard wall. However, before
specializing to our setting, it is convenient to recall a few well known asymptotic results
in other regimes, such as the bulk and soft edge regimes. Our discussion is far from exhaustive and we refer to the recent survey \cite{BF2022} for more background.

\medskip If $z_{0}$ is a regular point in the bulk (i.e. the interior of $S$) then \cite{AHM Duke} the rescaled kernels
\begin{align*}
\mathrm{K}_n(\mathrm{z},\mathrm{w})=\frac{1}{n\Delta Q(z_{0})} K_{n}\Big( z_{0} + \frac{\mathrm{z}}{(n\Delta Q(z_{0}))^{1/2}} , z_{0} + \frac{\mathrm{w}}{(n\Delta Q(z_{0}))^{1/2}} \Big), \qquad \mathrm{z},\mathrm{w}\in \C,
\end{align*}
converge as $n\to + \infty$ (after multiplication by suitable cocycles $c_n(\mathrm{z},\mathrm{w})$) to the Ginibre kernel $e^{\mathrm{z} \overline{\mathrm{w}}-|\mathrm{z}|^{2}/2-|\mathrm{w}|^{2}/2}$, and where we recall that $\Delta Q$ denotes the standard Laplacian divided by four, i.e. $\Delta Q=(Q_{xx}+Q_{yy})/4$.

\medskip

\medskip In the soft edge case, the large $n$ behavior of $K_{n}(z,w)$ is well understood when $z,w$ are close to the ``outer boundary'' of $S$, provided that this is an everywhere regular Jordan curve and that $\Delta Q>0$ along the curve. By the outer boundary, we mean the component $\partial U$ of $\partial S$ where $U$ (throughout) denotes the unbounded component of $\C\setminus S$.
In this case, the leading order asymptotics of $K_{n}(z,w)$ is found in e.g. \cite{FH, K2005, TV2015, AKM, HW} for the case when $z,w$ are at a distance of order $n^{-1/2}$ from the outer boundary $\partial U$ and such that $|z-w| = \bigO(n^{-1/2})$.\footnote{$f(n)=\bigO(g(n))$ means $f(n) \leq C g(n)$ for all large enough $n$ and $C>0$ is independent of $n$.} The papers \cite{LR,ACC2022} provide subleading corrections to the microscopic kernel near an outer boundary, and \cite{ACC2022} also gives such correction terms near the boundary of a ring-shaped spectral gap and relates them to fluctuations of linear statistics.

\medskip As we already noted in Section \ref{section:pushed phase},
two points $z, w$ near a (smooth) outer soft edge $\partial U$ are strongly correlated even if $|z-w|\asymp 1$,\footnote{$f(n)\asymp g(n)$ means $C_{1}g(n) \leq f(n)\leq C_{2} g(n)$ for all large enough $n$ and $C_{2}\geq C_{1}>0$ are independent of $n$.} i.e. even if $z, w$ lie at a macroscopic distance
of each other. (By contrast, if $z, w$ are in the bulk, $K_{n}(z, w)$ gets exponentially small as $n\to+\infty$ with $|z-w|\asymp 1$, and $K_{n}(z, w)$ has also a Gaussian decay in the distance $|z-w|$ for fixed $n$).
A formula for long-range correlations along the outer boundary $\partial U$ of the droplet is computed in \cite{AC}.
As noticed in \cite{ACC2022}, the situation is more subtle when $z,w$ are in a $n^{-1/2}$ neighborhood of the boundary of a (soft) spectral gap that is not simply connected: in this case, the asymptotics of $K_{n}(z,w)$ are oscillatory and described in terms of Jacobi $\theta$-functions (if $|z-w|=\bigO(n^{-1/2})$) or weighted Szeg\H{o} kernels (if $|z-w|\asymp 1$).

\medskip Some further related results for correlations along the boundary are found in \cite{ADM} (higher dimensional elliptic Ginibre ensemble) and \cite{BY2022} (lemniscate ensembles).

\medskip Scaling limits have also been investigated in other situations than the bulk and soft edge regimes, see e.g. \cite{AB2012, AKS2018} near root-type bulk singularities, \cite{AKM, AKMW, Seo0, BLY2021, KLM2023} near singular boundary points (such as cusp-like singularities and so-called local droplets), \cite{FKS1997, ACV, AB2021, BS2021} for bandlimited point processes, and \cite{J,CPR,RB,AKM,AKMW,AKS20} near soft/hard edges and other sorts of soft/hard boundary conditions.

\medskip Much less is known about pair correlations near the hard edge for pushed-phase ensembles (i.e. with a non-constant potential).
Indeed, to our knowledge, the only hitherto recorded results appear in the paper \cite{Seo2021}, where large $n$ asymptotics for
$K_{n}(z,w)$ is studied the microscopic regime where $z,w$ satisfy $|z-w|=\bigO(n^{-1})$ and are in a $n^{-1}$ neighborhood of $\partial U$ (with $\partial U$ being a hard edge), in the case of rotation-invariant $Q$ (which may have a weak logarithmic singularity along the hard edge).

\medskip Recall that the droplet associated with the Mittag-Leffler potential \eqref{originML} is given by $\{|z|\le b^{-1/2b}\}$. We now fix numbers $r_1,r_2$ with $0<r_1<r_2<b^{-1/2b}$
and redefine that potential to be $+\infty$ in the gap (or hard wall) $G:=\{z:|z|\in (r_{1},r_{2})\}$, i.e. we set
\begin{align}\label{def of Q hard edge}
Q(z) = \begin{cases}
|z|^{2b} - \frac{2\alpha}{n}\log |z|, & \mbox{if } |z| \notin (r_{1},r_{2}), \\
+\infty, & \mbox{if } |z| \in (r_{1},r_{2}),
\end{cases} \qquad b>0, \; \alpha > -1.
\end{align}
In this work, we focus on the corresponding point process:
\begin{align}\label{def of point process}
\frac{1}{n!\mathcal{Z}_{n}} \prod_{1 \leq j < k \leq n} |z_{k} -z_{j}|^{2} \prod_{j=1}^{n}e^{-n Q(z_{j})}\prod_{j=1}^n \frac {d^2 z_j} \pi.
\end{align}
\begin{figure}
\begin{center}
\begin{tikzpicture}[master]
\node at (0,0) {\includegraphics[width=4.2cm]{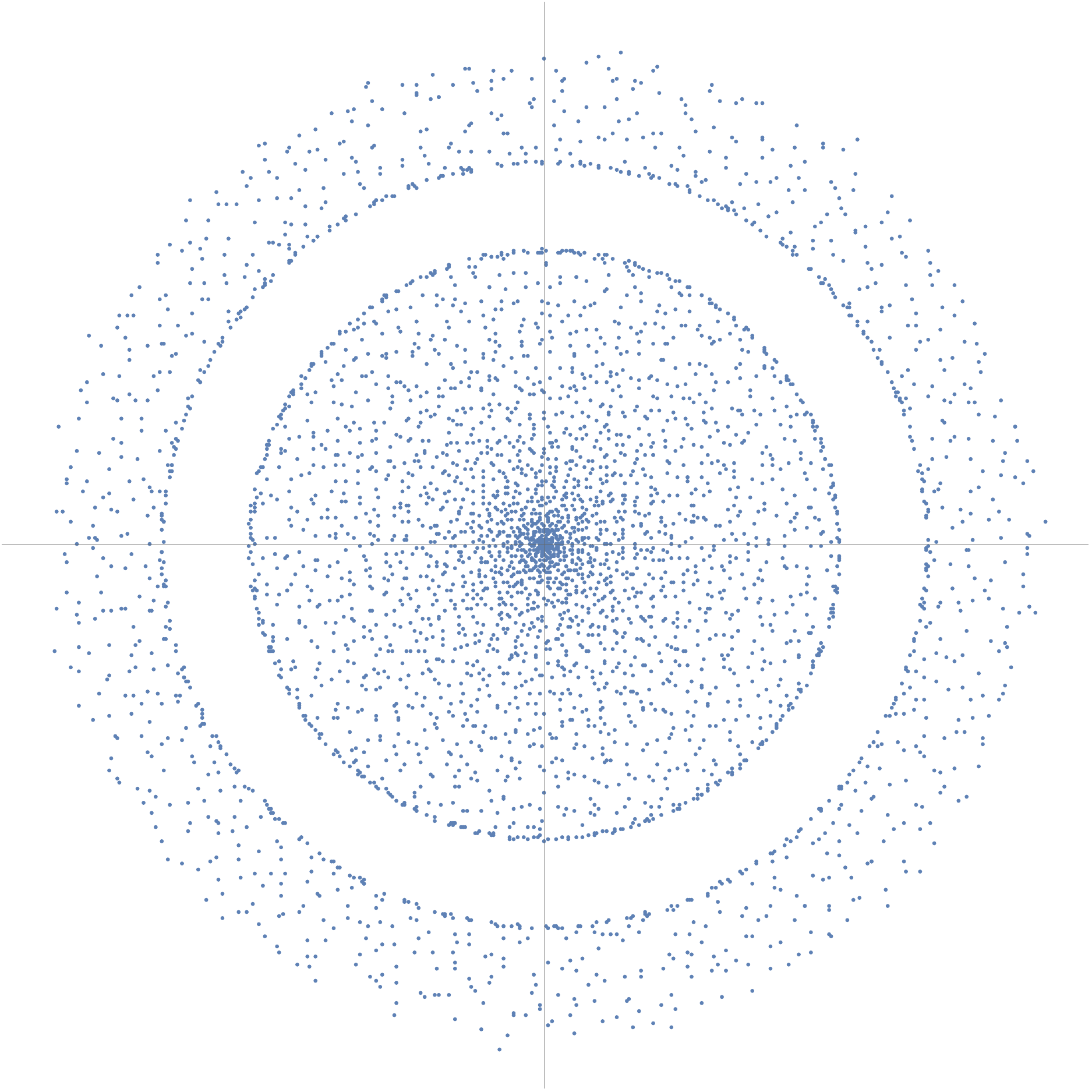}};
\node at (0,-2.5) {$b=\frac{1}{2}$};
\end{tikzpicture}
\begin{tikzpicture}[slave]
\node at (0,0) {\includegraphics[width=4.2cm]{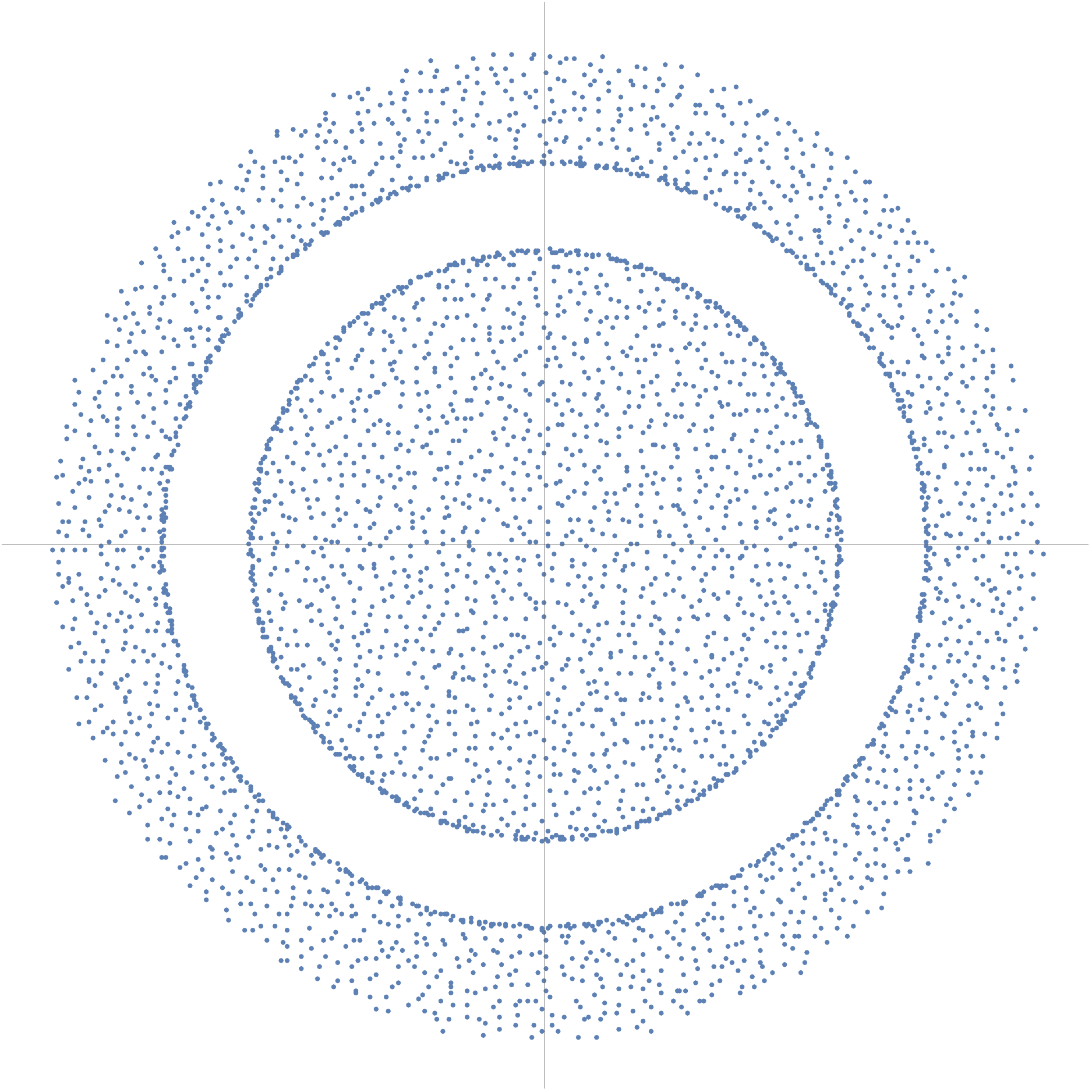}};
\node at (0,-2.5) {$b=1$};
\end{tikzpicture}
\begin{tikzpicture}[slave]
\node at (0,0) {\includegraphics[width=4.2cm]{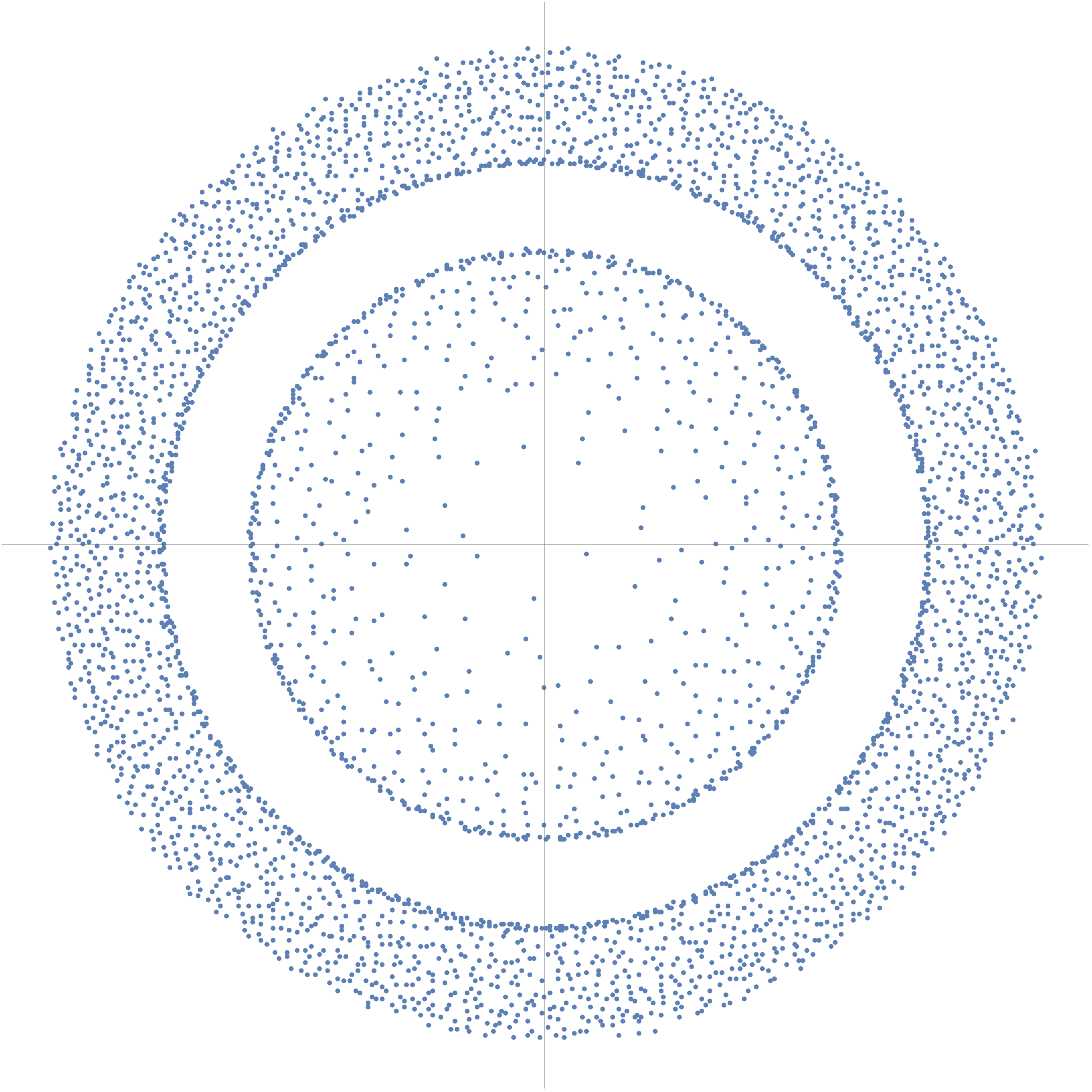}};
\node at (0,-2.5) {$b=2$};
\end{tikzpicture}
\end{center}
\caption{\label{fig: ML with hard wall} Illustration of the point process \eqref{def of point process} with $n=4096$, $r_{1}=\frac{3}{5}b^{-\frac{1}{2b}}$, $r_{2}=\frac{4}{5}b^{-\frac{1}{2b}}$, $\alpha=0$ and the indicated values of $b$.}
\end{figure}

The method of balayage in \cite{ACCL2022 2} shows that the equilibrium measure associated with \eqref{def of Q hard edge} is
\begin{align}\label{def of muh}
\mu(d^{2}z) = 2b^{2}r^{2b-1}\chi_{[0,r_{1}]\cup [r_{2},b^{-\frac{1}{2b}}]}(r) dr\frac{d\theta}{2\pi} + \sigma_{1} \delta_{r_{1}}(r) dr \frac{d\theta}{2\pi} + \sigma_{2} \delta_{r_{2}}(r) dr \frac{d\theta}{2\pi},
\end{align}
where $z=re^{i\theta}$, $r>0$, $\theta \in (-\pi,\pi]$ and
\begin{align}\label{def of taustar}
\sigma_{1} = \sigma_{\star}-br_{1}^{2b}, \qquad \sigma_{2} = br_{2}^{2b}-\sigma_{\star}, \qquad \sigma_{\star} := \frac{r_{2}^{2b}-r_{1}^{2b}}{2\log(\frac{r_{2}}{r_{1}})}.
\end{align}
The droplet is given by $S=\{z:|z|\in [0,r_{1}]\cup[r_{2},b^{-\frac{1}{2b}}]\}$.
We write $U=\{z:|z|>b^{-\frac{1}{2b}}\}$ and $G=\{z:r_1<|z|<r_2\}$ for the unbounded and bounded components of $\C\setminus S$ respectively, and refer to $G$ as the ``spectral gap'', or ``hard wall''. Note also that $G$ is not simply connected.

Thus $\partial U = \{|z| = b^{-\frac{1}{2b}}\}$ is a soft edge while $\partial G = \partial S \setminus \partial U = \{|z|=r_{1}\}\cup \{|z|=r_{2}\}$ is the union of two hard edges (see also Figure \ref{fig: ML with hard wall}). The quantity $\sigma_{1}>0$ represents the proportion of the points swept out from $G$ which accumulate near $\{|z|=r_{1}\}$, $\sigma_{2}>0$ is the proportion of the points accumulating near $\{|z|=r_{2}\}$, and $\sigma_{\star}=\mu(\{z:\, |z|\le r_1\})$.

\medskip Since $Q$ is rotation-invariant, the canonical correlation kernel of \eqref{def of point process} is given by
\begin{align}\label{def of Kn}
& K_{n}(z,w) = e^{-\frac{n}{2}Q(z)}e^{-\frac{n}{2}Q(w)} \sum_{j=1}^{n} \frac{z^{j-1}\overline{w}^{j-1}}{h_{j}},
\end{align}
where
\begin{align}
h_{j} & := \|z^{j-1}e^{-\frac{n}{2}Q(z)}\|_{L^{2}(\C,d^{2}z/\pi)}^{2} =  \int_{\mathbb{C}}|z|^{2j-2} e^{-nQ(z)} \frac{d^{2}z}{\pi} \nonumber \\
& = \frac{1}{b n^{\frac{j+\alpha}{b}}} \bigg( \gamma(\tfrac{j+\alpha}{b},nr_{1}^{2b}) - \gamma(\tfrac{j+\alpha}{b},nr_{2}^{2b}) + \Gamma(\tfrac{j+\alpha}{b}) \bigg), \qquad j=1,2,\ldots,n, \label{def of hj}
\end{align}
and the gamma functions $\gamma$, $\Gamma$ are defined by
\begin{align}\label{def of gamma and Gamma}
\Gamma(a):=\int_{0}^{\infty} t^{a-1}e^{-t}dt, \qquad \gamma(a,z) := \int_{0}^{z}t^{a-1}e^{-t}dt.
\end{align}
We recall some asymptotic properties of $\gamma$ in Appendix \ref{appendix:incomplete gamma}.

\medskip Our main results are asymptotic formulas for $K_{n}(z,w)$ as $n\to +\infty$ for five regimes where $z,w\in S$ are ``close" to $\partial G$. We will explore (i) the so-called ``hard edge regime", i.e. when $z,w$ are $\bigO(n^{-1})$ away from $\partial G$, and (ii) the ``semi-hard edge regime", i.e. when $z,w$ are at a distance of order $n^{-1/2}$ from $\partial G$. (If $z,w$ are slightly further away from $\partial G$, for example if $z,w$ are at a distance of order $\sqrt{\log n}/\sqrt{n}$ from $\partial G$, then the asymptotics of $K_{n}(z,w)$ are no longer affected by the hard wall and we recover the bulk regime \cite{ACCL2022 1}. This regime is not included here as it is standard by now.) The semi-hard edge regime was recently discovered in \cite{ACCL2022 1} in the study of a problem of counting statistics, but this regime has not yet been explored at the level of the correlation kernel. This regime is genuinely different from the bulk regime and from the hard edge regime.

\medskip For both the hard and semi-hard edge regimes, we study the asymptotics of $K_{n}(z,w)$ in the microscopic (i.e. when $z \approx w$) and the macroscopic (i.e. when $|z - w| \asymp 1$) cases. We find that the asymptotics of $K_{n}(z,w)$ oscillates in $n$ in the hard edge regime, but not in the semi-hard edge regime. In the microscopic (or ``diagonal'') case, the oscillations are described in terms of the Jacobi $\theta$ function, while in the macroscopic (or ``off-diagonal'') case they are described in terms of weighted Szeg\H{o} kernels. Our main results are stated in Section \ref{section:main results} and can be summarized as follows (see also Figure \ref{fig:summary}):
\begin{enumerate}
\item Theorem \ref{thm:r1 hard} establishes the asymptotics of $K_{n}(z,w)$, up to and including the fourth term of order $\sqrt{n}$, when $|z-w|=\bigO(\frac{1}{n})$ with $z$, $w$ being both at a distance $\bigO(\frac{1}{n})$ from $\partial G$,
\item Theorem \ref{thm:r1 semi-hard} establishes the asymptotics of $K_{n}(z,w)$, up to and including the second term of order $\sqrt{n}$, when $|z-w| =\bigO(\frac{1}{\sqrt{n}})$ with $z$, $w$ being both at a distance $\asymp \frac{1}{\sqrt{n}}$ from $\partial G$,
\item Theorem \ref{thm:r1-r2 case} establishes the leading order asymptotics of $K_{n}(z,w)$ when $z$, $w$ are on different sides of $G$ and are both at a distance $\bigO(\frac{1}{n})$ from $\partial G$,
\item Theorem \ref{thm:r1r1 hard} establishes the leading order asymptotics of $K_{n}(z,w)$ when $|z-w| \asymp 1$ with $z$, $w$ being on the same side of $G$ and being both at a distance $\bigO(\frac{1}{n})$ from $\partial G$,
\item Theorem \ref{thm:r1r1 semi-hard} establishes an upper bound on $K_{n}(z,w)$ when $|z-w| \asymp 1$ with $z$, $w$ being on the same side of $G$ and being both at a distance $\asymp \frac{1}{\sqrt{n}}$ from $\partial G$.
\end{enumerate}

\begin{figure}[h!]
\begin{center}
\begin{tikzpicture}
\node at (0,0) {\includegraphics[width=14cm]{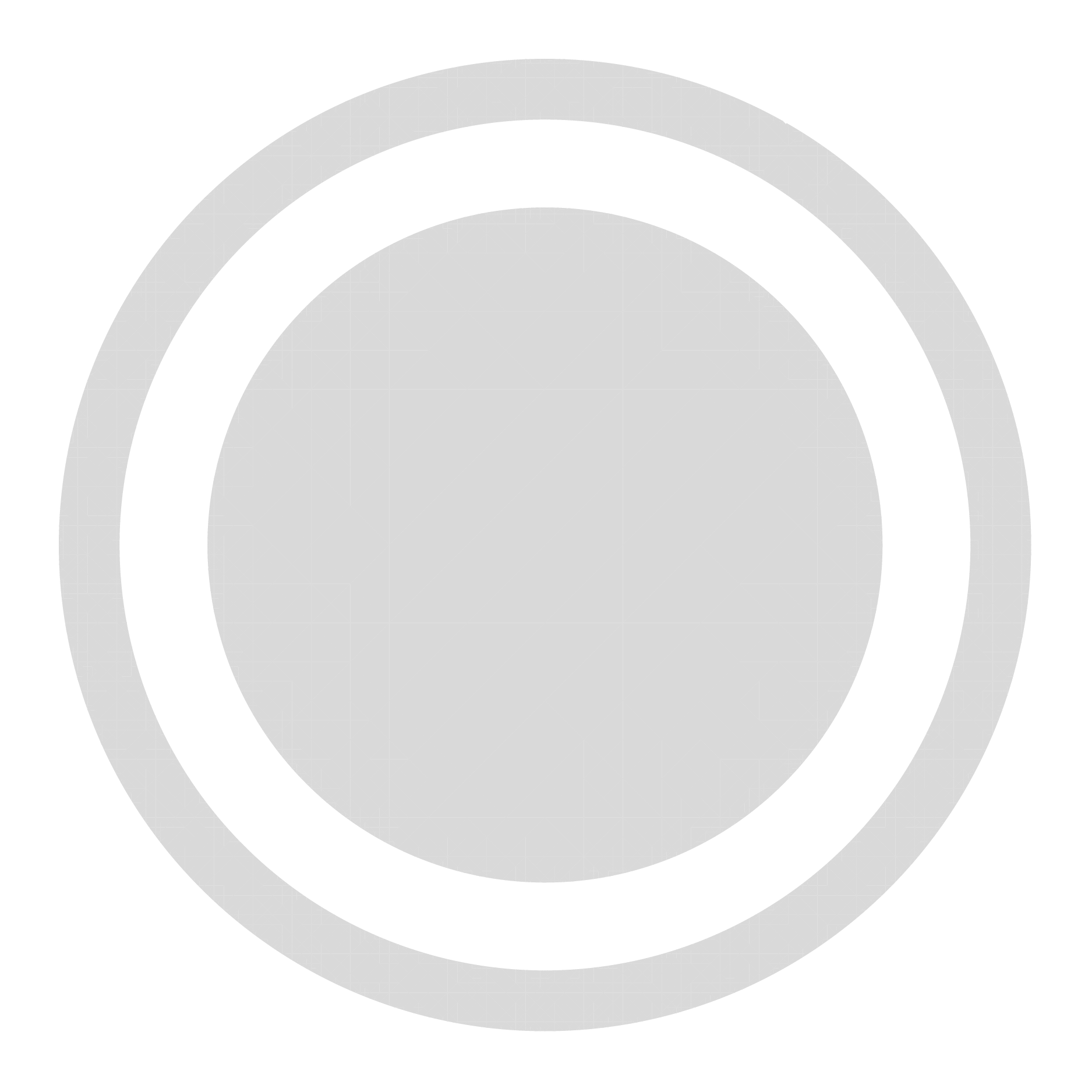}};
\node at (0,0) {\footnotesize $S$};
\node at (11:4.8) {\footnotesize $G$};
\node at (9:5.82) {\footnotesize $S$};
\node at ($(9:6.6)+(0,-0.13)$) {\footnotesize $U$};
\draw[fill] (6.22,0) circle (0.03cm);
\node at (6.35,0.27) {\footnotesize $b^{-\frac{1}{2b}}$};
\draw[fill] (5.44,0) circle (0.03cm);
\node at (5.44,0.2) {\footnotesize $r_{2}$};
\draw[fill] (4.32,0) circle (0.03cm);
\node at (4.32,0.2) {\footnotesize $r_{1}$};

\draw[fill] (58:4.29) circle (0.03cm);
\node at (57:4.45) {\footnotesize $z$};
\draw[fill] (58:4.2) circle (0.03cm);
\node at (56.5:4.12) {\footnotesize $w$};
\draw (1.7,3.0)--(1.7,5.2)--(3.9,5.2)--(3.9,3.0)--(1.7,3.0);
\node at (2.3,5) {\tiny Thm \ref{thm:r1 hard}:};
\node at (2.78,4.8) {\tiny $r_{1}\hspace{-0.05cm}-\hspace{-0.05cm}|z|\hspace{-0.05cm}=\hspace{-0.05cm}\bigO(n^{-1})$};
\node at (2.81,4.55) {\tiny $r_{1}\hspace{-0.05cm}-\hspace{-0.05cm}|w|\hspace{-0.05cm}=\hspace{-0.05cm}\bigO(n^{-1})$};
\node at (2.76,4.3) {\tiny $|z\hspace{-0.05cm}-\hspace{-0.05cm}w|\hspace{-0.05cm}=\hspace{-0.05cm}\bigO(n^{-1})$};

\draw (1.7,5.2)--(2,6);
\node at (3.2,6.1) {\tiny Oscillatory Jacobi $\theta$ function};

\draw[fill] (118:4.05) circle (0.03cm);
\node at (117:4.2) {\footnotesize $z$};
\draw[fill] (118:3.8) circle (0.03cm);
\node at (115:3.72) {\footnotesize $w$};
\draw (-2.6,3.0)--(-2.6,5.2)--(-0.4,5.2)--(-0.4,3.0)--(-2.6,3.0);
\node at (-2,5) {\tiny Thm \ref{thm:r1 semi-hard}:};
\node at (-1.61,4.8) {\tiny $r_{1}\hspace{-0.05cm}-\hspace{-0.05cm}|z|\hspace{-0.05cm}\asymp \hspace{-0.05cm} n^{-1/2}$};
\node at (-1.58,4.55) {\tiny $r_{1}\hspace{-0.05cm}-\hspace{-0.05cm}|w|\hspace{-0.05cm}\asymp\hspace{-0.05cm}n^{-1/2}$};
\node at (-1.52,4.3) {\tiny $|z\hspace{-0.05cm}-\hspace{-0.05cm}w|\hspace{-0.05cm}=\hspace{-0.05cm}\bigO(\hspace{-0.02cm}n^{\hspace{-0.02cm}-\hspace{-0.02cm}1\hspace{-0.02cm}/\hspace{-0.02cm}2})$};

\draw (-2.6,5.2)--(-2.8,6);
\node at (-2.6,6.1) {\tiny No oscillations};

\draw[fill] (176:4.25) circle (0.03cm);
\node at (174:4.2) {\footnotesize $z$};
\draw[fill] (186:5.54) circle (0.03cm);
\node at (186:5.72) {\footnotesize $w$};
\draw (-6.1,-0.85)--(-6.1,1.35)--(-3.9,1.35)--(-3.9,-0.85)--(-6.1,-0.85);
\node at (-5.5,1.15) {\tiny Thm \ref{thm:r1-r2 case}:};
\node at (-5.03,0.95) {\tiny $r_{1}\hspace{-0.05cm}-\hspace{-0.05cm}|z|\hspace{-0.05cm}= \hspace{-0.05cm} \bigO(n^{-1})$};
\node at (-5.03,0.7) {\tiny $|w|\hspace{-0.05cm}-\hspace{-0.05cm}r_{2}\hspace{-0.05cm}=\hspace{-0.05cm}\bigO(n^{-1})$};

\draw (-3.9,1.35)--(-3.4,1.6);
\node at (-2.4,1.7) {\tiny Oscillatory Szeg\H{o} kernel};

\draw[fill] (232:4.25) circle (0.03cm);
\node at (230:4.2) {\footnotesize $z$};
\draw[fill] (253:4.25) circle (0.03cm);
\node at (253:4.1) {\footnotesize $w$};
\draw (-3.1,-5.25)--(-3.1,-3)--(-0.9,-3)--(-0.9,-5.25)--(-3.1,-5.25);
\node at (-2.5,-4.4) {\tiny Thm \ref{thm:r1r1 hard}:};
\node at (-2.03,-4.6) {\tiny $r_{1}\hspace{-0.05cm}-\hspace{-0.05cm}|z|\hspace{-0.05cm}= \hspace{-0.05cm} \bigO(n^{-1})$};
\node at (-2.01,-4.85) {\tiny $r_{1}\hspace{-0.05cm}-\hspace{-0.05cm}|w|\hspace{-0.05cm}=\hspace{-0.05cm}\bigO(n^{-1})$};
\node at (-2.42,-5.1) {\tiny $|z\hspace{-0.05cm}-\hspace{-0.05cm}w|\hspace{-0.05cm}\asymp \hspace{-0.05cm} 1$};

\draw (-3.1,-3)--(-2.8,-2.5);
\node at (-1.68,-2.4) {\tiny Oscillatory regularized Szeg\H{o} kernel};

\draw[fill] (288:3.95) circle (0.03cm);
\node at (288:3.8) {\footnotesize $z$};
\draw[fill] (306:3.95) circle (0.03cm);
\node at (304:3.85) {\footnotesize $w$};
\draw (0.9,-5.25)--(0.9,-3)--(3.1,-3)--(3.1,-5.25)--(0.9,-5.25);
\node at (1.5,-4.4) {\tiny Thm \ref{thm:r1r1 semi-hard}:};
\node at (1.9,-4.6) {\tiny $r_{1}\hspace{-0.05cm}-\hspace{-0.05cm}|z|\hspace{-0.05cm}\asymp \hspace{-0.05cm} n^{-1/2}$};
\node at (1.935,-4.85) {\tiny $r_{1}\hspace{-0.05cm}-\hspace{-0.05cm}|w|\hspace{-0.05cm}\asymp\hspace{-0.05cm}n^{-1/2}$};
\node at (1.55,-5.1) {\tiny $|z\hspace{-0.05cm}-\hspace{-0.05cm}w|\hspace{-0.05cm}\asymp \hspace{-0.05cm} 1$};

\draw (0.9,-3)--(1.2,-2.5);
\node at (2,-2.4) {\tiny No oscillations};
\node at (2.2,-2.65) {\tiny Small correlations};
\end{tikzpicture}
\end{center}
\caption{Summary of our main results on the asymptotics of $K_{n}(z,w)$.}
\label{fig:summary}
\end{figure}

\section{Main results}\label{section:main results}
Our first main result is an asymptotic formula for $K_{n}(z,w)$ in the microscopic regime $|z-w|=\bigO(n^{-1})$ when both $z$ and $w$ are in the hard edge regime. This asymptotic formula contains oscillations that are described in terms of the Jacobi $\theta$ function. We recall that this function is defined by
\begin{align}\label{def of Jacobi theta}
\theta(z;\tau) := \sum_{\ell=-\infty}^{+\infty} e^{2\pi i \ell z}e^{\pi i \ell^{2}\tau}, \qquad z \in \mathbb{C}, \qquad \tau \in i(0,+\infty),
\end{align}
and satisfies $\theta(-z;\tau)=\theta(z;\tau)$ and $\theta(z+1;\tau)=\theta(z;\tau)$. Other properties of $\theta$ can be found in e.g. \cite[Chapter 20]{NIST}. The statement of Theorem \ref{thm:r1 hard} also involves Euler's gamma constant $\gamma_{\mathrm{E}}\approx 0.5772$, as well as the exponential integral $E_{1}$ and the complementary error function $\mathrm{erfc}$ (see e.g. \cite[eqs 6.2.1 and 7.2.2]{NIST}), which we recall are defined by
\begin{align}\label{def of erfc}
E_{1}(x) := \int_{x}^{+\infty} \frac{e^{-t}}{t}dt, \qquad \mathrm{erfc} (z) = \frac{2}{\sqrt{\pi}}\int_{z}^{\infty} e^{-t^{2}}dt.
\end{align}
\begin{theorem}\label{thm:r1 hard}\emph{(``$r_{1}$ hard edge case")}
Let $t_{1},t_{2}\geq 0$ and $\beta \in \R$ be fixed, and define
\begin{align}\label{def of z1z2 r1 r1 case intro}
z = r_{1}\Big(1-\frac{t_{1}}{\sigma_{1}n}\Big)e^{i\beta}, \qquad w = r_{1}\Big(1-\frac{t_{2}}{\sigma_{1}n}\Big)e^{i\beta}.
\end{align}
As $n \to + \infty$,
\begin{align}\label{kernel asymp r1 hard}
& K_{n}(z,w) = C_{1} n^{2} + C_{2} \, n \log n + \Big(C_{3}+\frac{\sigma_{1}}{r_{1}^{2}}e^{-t_{1}-t_{2}}\mathcal{F}_{n}\Big) n + C_{4} \sqrt{n} + \bigO(n^{\frac{2}{5}}),
\end{align}
where $\mathcal{F}_{n}$ is given by
\begin{align}
& \mathcal{F}_{n} = \frac{(\log \theta)'\Big(n \sigma_{\star}+\frac{1}{2}-\alpha+\frac{\log (\sigma_{2}/\sigma_{1})}{2\log (r_{2}/r_{1})}; \frac{\pi i}{\log (r_{2}/r_{1})}\Big) + \log (\sigma_{2}/\sigma_{1})}{2\log (r_{2}/r_{1})}, \label{Fn theta 2 intro}
\end{align}
and $\sigma_{1},\sigma_{2}$ and $\sigma_{\star}$ are defined in \eqref{def of taustar}.
If $(t_{1},t_{2}) \neq (0,0)$, the constants $C_{1},\ldots,C_{4}$ (constants with respect to $n$) are given by
\begin{align}
& C_{1} = \sigma_{1}^{2}\frac{1-e^{-t_{1}-t_{2}}( 1+t_{1}+t_{2}) }{r_{1}^{2}(t_{1}+t_{2})^{2}}, \nonumber \\
& C_{2} = \frac{b^{2}r_{1}^{2b-2}}{2}, \nonumber \\
& C_{3} = - b^{2} r_{1}^{2b-2}\bigg( E_{1}( t_{1}+t_{2} ) + \gamma_{\mathrm{E}} + \log\bigg(\frac{br_{1}^{2b}(t_{1}+t_{2})}{\sigma_{1}\sqrt{2\pi}}\bigg) \bigg) \nonumber \\
& + \frac{\sigma_{1}}{ r_{1}^{2}(t_{1}+t_{2})^{3}} \bigg\{ \frac{(t_{1}+t_{2})^{2}(t_{1}^{2}+t_{2}^{2})}{2e^{t_{1}+t_{2}}}  + \Big( 2t_{1}t_{2} - \frac{b^{2}r_{1}^{2b}}{\sigma_{1}}(t_{1}+t_{2})(t_{1}^{2}+t_{2}^{2})\Big) \bigg( 1 - \frac{1+t_{1}+t_{2}}{e^{t_{1}+t_{2}}} \bigg) \bigg\}, \nonumber \\
& C_{4} = \sqrt{2} b^{2}r_{1}^{b-2}\big( 1-2br_{1}^{2b}\frac{t_{1}+t_{2}}{\sigma_{1}} \big)\mathcal{I}, \label{def of C1C2C3C4 hard edge}
\end{align}
and $\mathcal{I}\in \R$ is given by
\begin{align}
& \mathcal{I} = \int_{-\infty}^{+\infty} \bigg\{ \frac{y\, e^{-y^{2}}}{\sqrt{\pi}\, \mathrm{erfc}(y)} - \chi_{(0,+\infty)}(y) \bigg[ y^{2}+\frac{1}{2} \bigg] \bigg\}dy \approx -0.81367. \label{def of I}
\end{align}
If $t_{1}=t_{2}=0$, then one simply needs to let $t_{1}+t_{2}\to 0$ in the above formulas for $C_{1},\ldots,C_{4}$; more precisely, for $t_{1}=t_{2}=0$ the constants $C_{1},\ldots,C_{4}$ are given by
\begin{align*}
& C_{1} = \frac{\sigma_{1}^{2}}{2 r_{1}^{2}}, \qquad C_{2} = \frac{b^{2}r_{1}^{2b-2}}{2}, \qquad C_{3} =  \frac{b^{2}r_{1}^{2b-2}}{2}\log \Big( \frac{2\pi \sigma_{1}^{2}}{b^{2}r_{1}^{2b}} \Big), \qquad C_{4} = \sqrt{2} b^{2}r_{1}^{b-2} \mathcal{I}.
\end{align*}
\end{theorem}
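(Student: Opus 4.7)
The proof starts from the explicit formula \eqref{def of Kn} for the canonical kernel. Since $z$ and $w$ share the common argument $\beta$, we have $z^{j-1}\overline{w}^{j-1} = (|z||w|)^{j-1}$, and the problem reduces to the asymptotic analysis of
\begin{align*}
K_n(z,w) \;=\; e^{-\frac{n}{2}Q(z)-\frac{n}{2}Q(w)} \sum_{j=1}^{n} \frac{(|z||w|)^{j-1}}{h_j}.
\end{align*}
A Taylor expansion in $t_{1},t_{2}$ gives the prefactor $r_1^{2\alpha}e^{-nr_1^{2b}}e^{br_1^{2b}(t_1+t_2)/\sigma_1}(1+\bigO(1/n))$. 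The main task is then a sharp asymptotic analysis of the $h_{j}$, based on the uniform expansions of the incomplete gamma function recorded in Appendix \ref{appendix:incomplete gamma}. A key identity $r_1^{2n\sigma_{\star}}e^{-nr_1^{2b}}=r_2^{2n\sigma_{\star}}e^{-nr_2^{2b}}$, which follows from the definition of $\sigma_{\star}$, identifies $j=n\sigma_{\star}$ as the point where the two incomplete gammas in \eqref{def of hj} balance each other.

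I would partition $\{1,\ldots,n\}$ into five regions determined by the transition points $nbr_{1}^{2b}$ and $nbr_{2}^{2b}$ of the two incomplete gammas, with transition layers of width $\bigO(\sqrt{n}\log n)$. Since $|z|,|w|<r_{1}$ and the factor $(r_{2}/r_{1})^{2j}$ dominates in the second term of $h_{j}$ once $j>n\sigma_{\star}$, the three regions $j\gtrsim n\sigma_{\star}$ contribute only exponentially small amounts. In the middle region restricted to $j<n\sigma_{\star}$ and at distance $\gg\sqrt{n}\log n$ from the transition layer, the uniform asymptotics yield
\begin{align*}
h_j \;\approx\; \frac{r_1^{2(j+\alpha)}e^{-nr_1^{2b}}}{j-nbr_1^{2b}} + \frac{r_2^{2(j+\alpha)}e^{-nr_2^{2b}}}{nbr_2^{2b}-j},
\end{align*}
with the first term dominating. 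In the rescaled variable $\xi=(j-n\sigma_{\star})/n\in(-\sigma_{1},0)$, the summand is $\bigO(n)$ and there are $\bigO(n)$ terms, producing an $n^{2}$ contribution. After the substitution $u=1+\xi/\sigma_{1}$ and evaluation of $\int_{0}^{1}u e^{-(t_1+t_2)u}\,du = [1-(1+t_1+t_2)e^{-(t_1+t_2)}]/(t_1+t_2)^2$, the Riemann-sum approximation yields precisely $C_{1}n^{2}$.

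The oscillatory contribution $\mathcal{F}_n$ arises from the narrow discrete window $|j-n\sigma_{\star}|=\bigO(1)$, where both terms in the approximation of $h_{j}$ are comparable. Writing $j=\lfloor n\sigma_{\star}\rfloor + k$, the contribution of this window reduces, up to smooth factors of order $n$, to a sum of the form $\sum_{k\in\Z}(\sigma_{2}+\sigma_{1}(r_{2}/r_{1})^{2(k-\{n\sigma_{\star}\}+\alpha)})^{-1}$, with $\{\cdot\}$ denoting the fractional part. This sum is termwise divergent, but after subtracting the continuum part that has already been counted in the preceding Riemann-sum calculation, the remaining discrete defect is identified via Poisson summation with $(\log\theta)'(v;\tau)/(2\log(r_{2}/r_{1}))$ with $\tau = \pi i/\log(r_{2}/r_{1})$; the argument $v$ acquires the shifts $\tfrac12-\alpha$ and $\log(\sigma_{2}/\sigma_{1})/(2\log(r_{2}/r_{1}))$ from the Poisson transform and from recentering so that the two terms of $h_{j}$ balance, producing \eqref{Fn theta 2 intro}, and the prefactor $(\sigma_{1}/r_{1}^{2})e^{-(t_1+t_2)}$ is inherited from the exponential expansion of the kernel. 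The $C_{4}\sqrt{n}$ term comes from the transition layer around $j=nbr_{1}^{2b}$, where the uniform asymptotic gives $h_{j}\approx \Gamma(\tfrac{j+\alpha}{b})(1-\tfrac12\mathrm{erfc}(y_{1}))/(bn^{(j+\alpha)/b})$ with $y_{1}=(nbr_{1}^{2b}-j)/(br_{1}^{b}\sqrt{2n})$; rescaling the sum to an integral in $y_{1}$ produces a factor of $\sqrt{n}$ from the Jacobian, and, after subtracting the polynomial ``bulk'' piece $\chi_{(0,\infty)}(y)[y^{2}+\tfrac12]$ already counted in the neighbouring regions, one is left with the regularized integral $\mathcal{I}$ of \eqref{def of I}. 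The multiplicative correction $1-2br_{1}^{2b}(t_{1}+t_{2})/\sigma_{1}$ tracks the linear-in-$t$ deformation of $(|z||w|)^{j}$ across this layer.

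The remaining $C_{2}n\log n$ and $C_{3}n$ terms require tracking several subleading contributions simultaneously. The coefficient $C_{2}=b^{2}r_{1}^{2b-2}/2 = \Delta Q(r_{1})/2$ arises from the Stirling expansion of $\log \Gamma(\tfrac{j+\alpha}{b})$ integrated over the transition layer near $j=nbr_{1}^{2b}$; this produces a $\log n$ via the $\log$ in Stirling. The constant $C_{3}$ collects Euler--Maclaurin endpoint corrections from the middle-region Riemann sum (producing the exponential integral $E_{1}(t_{1}+t_{2})$ via the integral $\int_{t_1+t_2}^{\infty} s^{-1} e^{-s} ds$), together with $\gamma_{\mathrm{E}}$ and the logarithmic constants coming from Stirling near the saddle $j=nbr_{1}^{2b}$. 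The degenerate case $t_{1}=t_{2}=0$ follows by taking limits in the general formulas. \textbf{The principal obstacle} is identifying the theta function in the discrete window around $n\sigma_{\star}$: one must carefully isolate and subtract the divergent continuum part of the sum and recognize the finite defect as $(\log\theta)'$ with the correct modular parameter and shifted argument. A secondary difficulty is the bookkeeping for $C_{2}$ and $C_{3}$, which forces one to simultaneously retain subleading Stirling corrections, Euler--Maclaurin boundary terms, and transition-layer corrections, and to verify that all approximation errors combine to the claimed $\bigO(n^{2/5})$ remainder.
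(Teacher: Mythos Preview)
Your overall strategy matches the paper's: partition the sum $\sum_j F_j$ by regimes determined by the transition points of the incomplete gammas, extract $C_1 n^2$ from a Riemann sum over the middle region $g_{1,+}<j<\lfloor j_\star\rfloor$, extract the oscillatory theta contribution from the window $|j-j_\star|=\bigO(1)$, and get $C_4\sqrt n$ from the transition layer near $j\approx nbr_1^{2b}$. The identification of the theta function via (what amounts to) Poisson summation is exactly what the paper does, citing a logarithmic-derivative identity for $\theta$ from \cite{CharlierAnnuli}.

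Two points need correction. First, your attribution of the $C_2\,n\log n$ term is wrong: it does \emph{not} come from Stirling over the transition layer. In the paper's decomposition the transition-layer sum $S_4$ contributes only at orders $n$ and $\sqrt n$; the $n\log n$ arises from the \emph{middle-region} sum $S_{567}$, because the $1/n$ correction to $h_j^{-1}$ (coming from the subleading term $\tfrac{1+10\lambda+\lambda^2}{12(\lambda-1)^3 a^{3/2}}$ in the incomplete-gamma expansion) contains a simple pole $\tfrac{b^2 r_1^{2b}}{j/n-br_1^{2b}}$, and summing this over $g_{1,+}<j<\lfloor j_\star\rfloor$ produces the $\log n$. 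Second, your transition-layer width $\bigO(\sqrt n\log n)$ (i.e.\ $M\asymp\log n$) is too narrow to reach the claimed $\bigO(n^{2/5})$ remainder: the competing errors are $\bigO(M^4)$ from the layer and $\bigO(n/M^6)$, $\bigO(\sqrt n/M)$ from the middle region's higher corrections, and these balance only at $M=n^{1/10}$, which is what the paper takes. With $M\asymp\log n$ the error would be essentially $n/(\log n)^6$, far worse than $n^{2/5}$.
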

\begin{remark}
If either $t_{1}<0$ or $t_{2}<0$, then either $z\in G$ or $w\in G$ and $K_{n}(z,w)$ is trivially $0$ by \eqref{def of Q hard edge} and \eqref{def of Kn}; this is why we restricted ourselves to $t_{1},t_{2}\geq 0$ in the statement of Theorem \ref{thm:r1 hard}. Taking $t_{1}=t_{2}$ in Theorem \ref{thm:r1 hard} yields precise asymptotics for the one-point function $R_{n,1}\big(r_{1}e^{i\beta}(1-\frac{t_{1}}{\sigma_{1}n})\big)$ in the hard edge regime. For $t_{1}=t_{2}$, $C_{1}$ in Theorem \ref{thm:r1 hard} also appeared in \cite[eq (21) with $L=1$]{ZS2000} in the context of truncated unitary matrices, and for general $t_{1},t_{2}$ the quantity $C_{1}$ also appeared when considering perturbations of rank $L$ to Hermitian
matrices, see e.g. \cite[eq (2.64)]{BF2022}, \cite[Section 2.2 and 3.1]{FS1997} and the references therein. (For truncated unitary matrices, the situation is somewhat simpler because there is no bulk and the droplet is connected. Therefore, the asymptotics for the correlation kernel are expected to be of a simpler form than \eqref{kernel asymp r1 hard}, i.e. without oscillations, and without terms of order $n\log n$ -- see also the discussion \cite[end of Section 1]{ACM2024}.)

For comparison, in the soft edge setting, the asymptotics of the one-point function near a spectral gap take the following form: for $z=e^{i\beta}(r_{1} + \frac{t}{\sqrt{n\Delta Q(r_{1})}})$, we have
\begin{align*}
R_{n,1}(z)=\tilde{C}_{1}n + \sqrt{n}(\tilde{C}_{2} + e^{-2t^{2}}\tilde{\mathcal{F}}_{n}) + \bigO((\log n)^{4}), \qquad \mbox{as } n \to + \infty,
\end{align*}
where $\tilde{C}_{1},\tilde{C}_{2}$ are independent of $n$ and $\tilde{\mathcal{F}}_{n}$ is independent of $t$ and of order $1$, see \cite[Theorem 1.9]{ACC2022}.
\end{remark}
\begin{figure}
\begin{center}
\begin{tikzpicture}[master]
\node at (0,0) {\includegraphics[width=7cm]{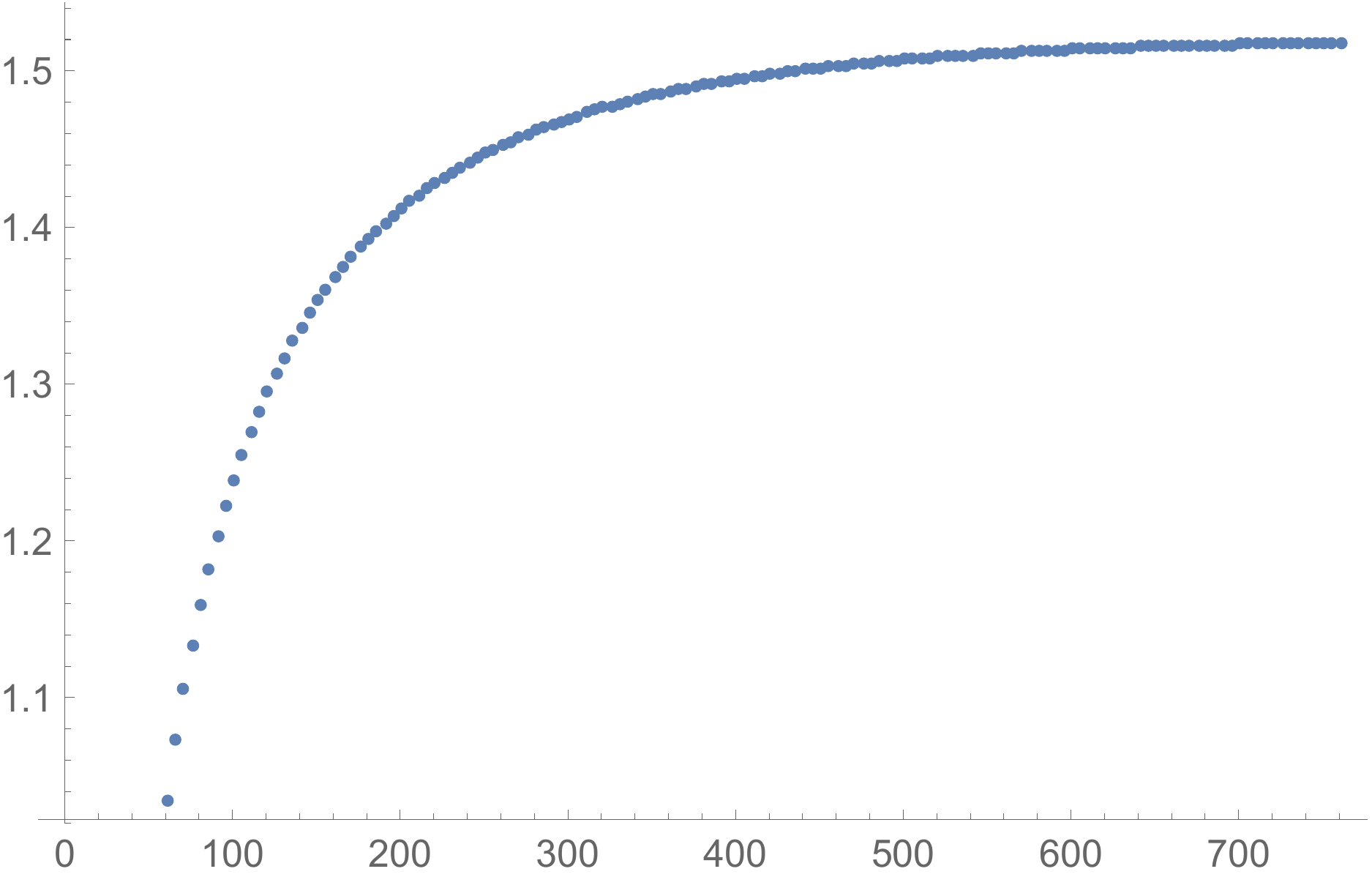}};
\end{tikzpicture}
\begin{tikzpicture}[slave]
\node at (0,0) {\includegraphics[width=7cm]{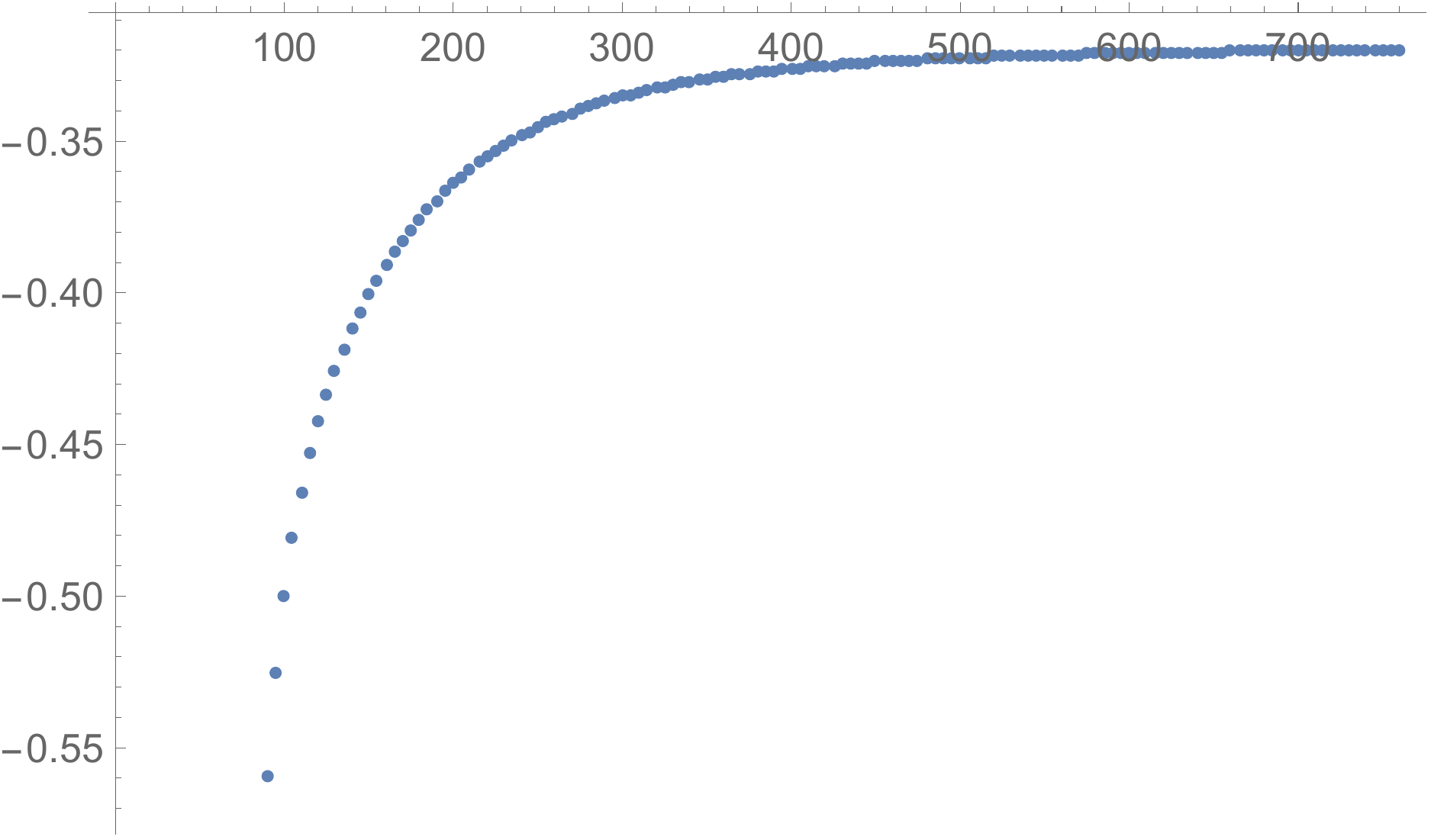}};
\end{tikzpicture}
\end{center}
\caption{Numerical confirmations of Theorems \ref{thm:r1 hard} (left) and \ref{thm:r1 semi-hard} (right). For both pictures, $b=1.3$, $\alpha=1.26$, $r_{1}=0.42\smash{b^{-\frac{1}{2b}}}$, $r_{2}=0.67\smash{b^{-\frac{1}{2b}}}$. Left: the function $n\mapsto \frac{1}{\log n}\big[K_{n}(z,w) - \big(C_{1} n^{2} + C_{2} \, n \log n + \big(C_{3}+\frac{\sigma_{1}}{r_{1}^{2}}e^{-t_{1}-t_{2}}\mathcal{F}_{n}\big) n + C_{4} \sqrt{n}\big)\big]$, with $z = r_{1}(1-\frac{t_{1}}{\sigma_{1}n})$, $w = r_{1}(1-\frac{t_{2}}{\sigma_{1}n})$, $t_{1}=0.21$, $t_{2}=0.45$, and $C_{1},C_{2},C_{3},C_{4}$ as in \eqref{def of C1C2C3C4 hard edge}. This function seems to approach a constant as $n\to+\infty$, which is consistent with Theorem \ref{thm:r1 hard}, and also suggests that the term $\bigO(n^{\frac{2}{5}})$ in \eqref{kernel asymp r1 hard} is actually $\bigO(\log n)$. Right: the function $n\mapsto K_{n}(z,w) - \big(C_{1} n + C_{2} \sqrt{n}\big)$, with $z = r_{1}\big(1-\frac{\mathfrak{s}_{1}}{br_{1}^{b}\sqrt{2n}}\big)$, $w = r_{1}\big(1-\frac{\mathfrak{s}_{2}}{br_{1}^{b}\sqrt{2n}}\big)$, $\mathfrak{s}_{1}=1.21$, $\mathfrak{s}_{2}=1.45$, and $C_{1},C_{2}$ as in \eqref{def of C1C2 semi hard}. This function seems to approach a constant as $n\to+\infty$, which is consistent with Theorem \ref{thm:r1 semi-hard}.
}
\label{fig:numerical check}
\end{figure}
\begin{remark}
In this paper we focus on the case where the hard wall $G$ lies entirely in the interior of $\{z:|z|\leq b^{-\frac{1}{2b}}\}$. We do not cover the simpler case where the hard wall is of the form $\tilde{G}=\{z:|z| > r_{1}\}$; in this case we expect that a similar formula as \eqref{kernel asymp r1 hard} holds but with $\mathcal{F}_{n}=0$ (there should be no oscillations since the droplet is then given by $\{z:|z|\leq r_{1}\}$, which is a connected set).

This expectation is supported by the works \cite{Seo2021}. Consider a radially symmetric potential $Q$ whose associated equilibrium measure is $\mu(d^{2}z)=2\Delta Q(z) \, \chi_{[\rho_{1},r_{1}]}(r) r\, dr \frac{d\theta}{2\pi} + \sigma_{1} \delta_{r_{1}}(r)dr \frac{d\theta}{2\pi} $ ($z=re^{i\theta}$), supported on a connected set of the form $\tilde{S}=\{z:|z|\in [\rho_{1},r_{1}] \}$ with $r_{1}>\rho_{1}\geq 0$. Let $p\in \partial \tilde{U} = \{z:|z|=r_{1}\}$. Under general conditions on $Q$, it is proved in \cite[Theorem 2.1]{Seo2021} that, as $n\to + \infty$,
\begin{align*}
K_{n}\Big( r_{1}e^{i\beta} \Big[1 - \frac{t_{1}}{n \sigma_{1}}\Big], r_{1}e^{i\beta}\Big[1 - \frac{t_{2}}{n \sigma_{1}} \Big] \Big) & = \frac{n^{2}\sigma_{1}^{2}}{r_{1}^{2}} \int_{0}^{1} t e^{-x(t_{1}+t_{2})}dx + o(n^{2}) = C_{1}n^{2} + o(n^{2}).
\end{align*}
Comparing the above with \eqref{kernel asymp r1 hard}, we conclude that the leading order behavior of $K_{n}\big( r_{1}e^{i\beta} (1 - \frac{t_{1}}{\sigma_{1}n})$, $r_{1}e^{i\beta} (1 - \frac{t_{2}}{\sigma_{1}n} ) \big)$ in the two situations $G=\{z:|z|\in (r_{1},r_{2})\}$ (considered here) and $\tilde{G}=\{z:|z|\in (r_{1},+\infty)\}$ (covered in \cite{Seo2021}) are identical.
\end{remark}
\begin{remark}
For two dimensional point processes in the multi-component regime (i.e. when $S$ consists of several disjoint components), the Jacobi $\theta$ function is known to also describe large gap fluctuations \cite{CharlierAnnuli}, smooth linear statistics and microscopic correlations for ensembles with soft edges \cite{ACC2022}, and disk counting statistics for ensembles with hard edges \cite{ACCL2022 2}. It is still an open problem to establish the emergence of the theta function for two-dimensional point processes that are not rotation-invariant; such ensembles include the so-called lemniscate ensemble (see e.g. \cite{BY2022}) and the elliptic Ginibre point process with a large point charge \cite{Byun}.
\end{remark}
Our next theorem on the asymptotics of $K_{n}(z,w)$ concerns the semi-hard edge regime in the microscopic regime $|z-w|=\bigO(n^{-1/2})$. Let $\tilde{\Delta} Q(r_{1}) := \lim_{r\to r_{1}, r<r_{1}}\Delta Q(r) = b^{2}r_{1}^{2b-2}$.
\begin{theorem}\label{thm:r1 semi-hard}\emph{(``$r_{1}$ semi-hard edge case")}
Let $\mathfrak{s}_{1}, \mathfrak{s}_{2} > 0$, $\beta \in \mathbb{R}$ be fixed parameters, and define
\begin{align}\label{def of z1z2 r1 r1 case o thm intro}
& z = r_{1}\Big(1-\frac{\mathfrak{s}_{1}}{br_{1}^{b}\sqrt{2n}}\Big)e^{i\beta}, & & w = r_{1}\Big(1-\frac{\mathfrak{s}_{2}}{br_{1}^{b}\sqrt{2n}}\Big)e^{i\beta} = \bigg(r_{1} - \frac{\mathfrak{s}_{2}}{(2 \tilde{\Delta} Q(r_{1}))^{\frac{1}{2}}\sqrt{ n}}\bigg)e^{i\beta}.
\end{align}
As $n \to + \infty$,
\begin{align}\label{asymp semi hard}
& K_{n}(z,w) = C_{1} n + C_{2} \sqrt{n} + \bigO(1),
\end{align}
where
\begin{align}
& C_{1} = 2\tilde{\Delta} Q(r_{1}) \int_{-\infty}^{+\infty} \frac{e^{-\frac{(y+\mathfrak{s}_{1})^{2} + (y+\mathfrak{s}_{2})^{2}}{2}}}{\sqrt{\pi} \mathrm{erfc}(y)}dy , \nonumber \\
& C_{2} = b\frac{(2 \tilde{\Delta} Q(r_{1}))^{\frac{1}{2}} }{r_{1}}\int_{-\infty}^{+\infty} \frac{e^{-\frac{(y+\mathfrak{s}_{1})^{2} + (y+\mathfrak{s}_{2})^{2}}{2}}}{\sqrt{\pi} \mathrm{erfc}(y)} \bigg\{ \frac{(10y^{2}-2)e^{-y^{2}}}{3\sqrt{\pi}\mathrm{erfc}(y)} + 5y - \frac{10y^{3}}{3} + \frac{\mathfrak{s}_{1}+\mathfrak{s}_{2}}{b} - y \frac{\mathfrak{s}_{1}^{2}+\mathfrak{s}_{2}^{2}}{2b} \nonumber \\
& \hspace{6.5cm} - 2y^{2}(\mathfrak{s}_{1}+\mathfrak{s}_{2}) + \frac{2b-3}{b} \frac{\mathfrak{s}_{1}^{3}+\mathfrak{s}_{2}^{3}}{6} \bigg\}dy. \label{def of C1C2 semi hard}
\end{align}
\end{theorem}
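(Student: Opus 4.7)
The starting point is the representation \eqref{def of Kn}--\eqref{def of hj}, which expresses $K_{n}(z,w)$ as a single sum over $j\in\{1,\ldots,n\}$ whose summand involves the incomplete gamma functions $\gamma(\tfrac{j+\alpha}{b}, nr_{k}^{2b})$ for $k=1,2$. The plan is a Laplace/saddle-point analysis concentrated on those indices $j$ for which the parameter $a:=(j+\alpha)/b$ lies within an $\bigO(\sqrt{n\log n})$ window of the transition value $nr_{1}^{2b}$. Indices outside this window contribute only superpolynomially in $n$: combined with the factor $|z|^{2(j-1)}|w|^{2(j-1)}e^{-n(|z|^{2b}+|w|^{2b})}$, the Stirling-type size of $h_{j}$ produces a Gaussian penalty peaked at $j=nbr_{1}^{2b}$. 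Inside the window, $\Gamma(a)-\gamma(a,nr_{2}^{2b})$ is exponentially small since $nr_{2}^{2b}\gg a$, so that $h_{j} = (bn^{a})^{-1}\gamma(a,nr_{1}^{2b})\bigl(1+\bigO(e^{-cn})\bigr)$.

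Next, I would insert the uniform asymptotic expansion of $P(a,x):=\gamma(a,x)/\Gamma(a)$ near the transition $x=a$ (collected in Appendix \ref{appendix:incomplete gamma}) at $x=nr_{1}^{2b}$, together with Stirling's expansion of $\Gamma(a)$, both to one order beyond the leading term. Setting $y=(j-nbr_{1}^{2b})/(br_{1}^{b}\sqrt{2n})$, one has $P(a,nr_{1}^{2b})=\tfrac{1}{2}\mathrm{erfc}(y)(1+\bigO(n^{-1/2}))$ with an explicit $y$-dependent first correction. In parallel, writing $|z|=r_{1}(1-\mathfrak{s}_{1}/(br_{1}^{b}\sqrt{2n}))$ and similarly for $w$, one expands the prefactor $(z\overline{w})^{j-1}e^{-\frac{n}{2}(|z|^{2b}+|w|^{2b})}|z|^{\alpha}|w|^{\alpha}$ through cubic order in $\mathfrak{s}_{k}/\sqrt{n}$. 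After the Stirling-type cancellation between $\Gamma(a)$, $n^{a}$, $r_{1}^{2(j-1+\alpha)}$ and $e^{-nr_{1}^{2b}}$ (the $\sqrt{n}$-size terms in the exponents cancel identically), the summand collapses to
\[
\frac{2\tilde{\Delta}Q(r_{1})}{\sqrt{\pi}\,\mathrm{erfc}(y)}\exp\!\Bigl(-\tfrac{1}{2}\bigl((y+\mathfrak{s}_{1})^{2}+(y+\mathfrak{s}_{2})^{2}\bigr)\Bigr)\Bigl(1+\frac{A(y)}{\sqrt{n}}+\bigO(n^{-1})\Bigr),
\]
per unit increment of the index $j$, where $A(y)$ is an explicit elementary function collecting all $n^{-1/2}$ contributions from the expansions above.

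Recognizing the sum over $j$ as a Riemann sum in $y$ with step $1/(br_{1}^{b}\sqrt{2n})$, the leading order reproduces the integral defining $C_{1}$ in \eqref{def of C1C2 semi hard}, multiplied by $n$. The coefficient $C_{2}\sqrt{n}$ is then read off by combining four sources of $n^{-1/2}$ corrections: the second term in the uniform expansion of $P(a,nr_{1}^{2b})$ (which is responsible for the $(10y^{2}-2)e^{-y^{2}}/(3\sqrt{\pi}\mathrm{erfc}(y))$ piece appearing in \eqref{def of C1C2 semi hard}); the next Stirling term for $\Gamma(a)$; the cubic-in-$\mathfrak{s}_{k}$ expansion of $|z|^{2b}$, $|w|^{2b}$ and $(z\overline{w})^{j-1}$; and the Euler--Maclaurin correction converting the discrete sum into the integral.

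The main obstacle is precisely this last step: the explicit closed form of $C_{2}$ given in \eqref{def of C1C2 semi hard} forces a consistent bookkeeping of coefficients across four independent asymptotic expansions, each of which must be handled to exactly one order beyond the leading behaviour, and the various $y$-odd and $\mathfrak{s}_{k}$-odd pieces must be combined against the non-even weight $e^{-\frac{1}{2}((y+\mathfrak{s}_{1})^{2}+(y+\mathfrak{s}_{2})^{2})}/\mathrm{erfc}(y)$. The remainder estimate $\bigO(1)$ in \eqref{asymp semi hard} follows by bounding the next order $n^{-1}$ of each expansion uniformly in $y$ over the window $|y|\le C\sqrt{\log n}$, together with the superpolynomial decay of the Gaussian weight that controls the tails outside this window.
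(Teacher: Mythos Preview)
Your overall strategy is the same as the paper's: localize to a window of indices around $j=nbr_{1}^{2b}$, expand $h_{j}$ via the uniform (Temme) asymptotics of $\gamma(a,nr_{1}^{2b})$ together with Stirling, expand the weight $(z\bar w)^{j-1}e^{-\frac n2(|z|^{2b}+|w|^{2b})}|zw|^{\alpha}$ in powers of $n^{-1/2}$, and convert the resulting sum into an integral in $y$.

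There is one concrete slip. The window you propose, $|y|\le C\sqrt{\log n}$, is too narrow, and the assertion of a ``Gaussian penalty'' is only half right. On the bulk side ($y\to-\infty$, i.e.\ $j<nbr_{1}^{2b}$) the summand $e^{-\frac12((y+\mathfrak{s}_{1})^{2}+(y+\mathfrak{s}_{2})^{2})}/\mathrm{erfc}(y)$ does decay like a Gaussian. But on the hard-edge side ($y\to+\infty$, i.e.\ $j>nbr_{1}^{2b}$) the factor $1/\mathrm{erfc}(y)$ grows like $y\sqrt{\pi}\,e^{y^{2}}$ and cancels the Gaussian in the numerator, leaving only the cross term $e^{-y(\mathfrak{s}_{1}+\mathfrak{s}_{2})}$, i.e.\ merely \emph{exponential} decay in $y$. (This is precisely the fingerprint of the hard wall: for these $j$ the norm $h_{j}$ is not of Stirling size but is governed by $\gamma(a,nr_{1}^{2b})\ll\Gamma(a)$.) Hence truncating at $|y|\le C\sqrt{\log n}$ leaves a tail contribution to the leading term of order $n\cdot e^{-C(\mathfrak{s}_{1}+\mathfrak{s}_{2})\sqrt{\log n}}$, which diverges. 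The fix is immediate: widen the window to $|y|\le C\log n$ (equivalently $|j-nbr_{1}^{2b}|\lesssim\sqrt{n}\,\log n$), so that the exponential tail becomes $n^{1-C'}$ with $C'$ as large as one likes. This is exactly what the paper does, taking $M=M'\log n$ in Section~\ref{section:r1r1 1/sqrtn}.

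With that adjustment, your decomposition of the $\sqrt{n}$-coefficient into four sources matches the paper's bookkeeping. One small refinement of terminology: the ``Euler--Maclaurin correction'' you invoke is not the endpoint-derivative term of the classical formula, but rather the $\bigO(n^{-1/2})$ correction caused by the nonlinearity of the change of variables $j\mapsto y$ (equivalently $j\mapsto M_{j,1}$); in the paper this is the $-2t\,h_{1}(t)$ contribution in Subsection~\ref{subsec:6.1}, produced by Lemma~\ref{lemma:Riemann sum}.
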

\begin{figure}
\begin{center}
\begin{tikzpicture}[master]
\node at (0,0) {\includegraphics[width=8cm]{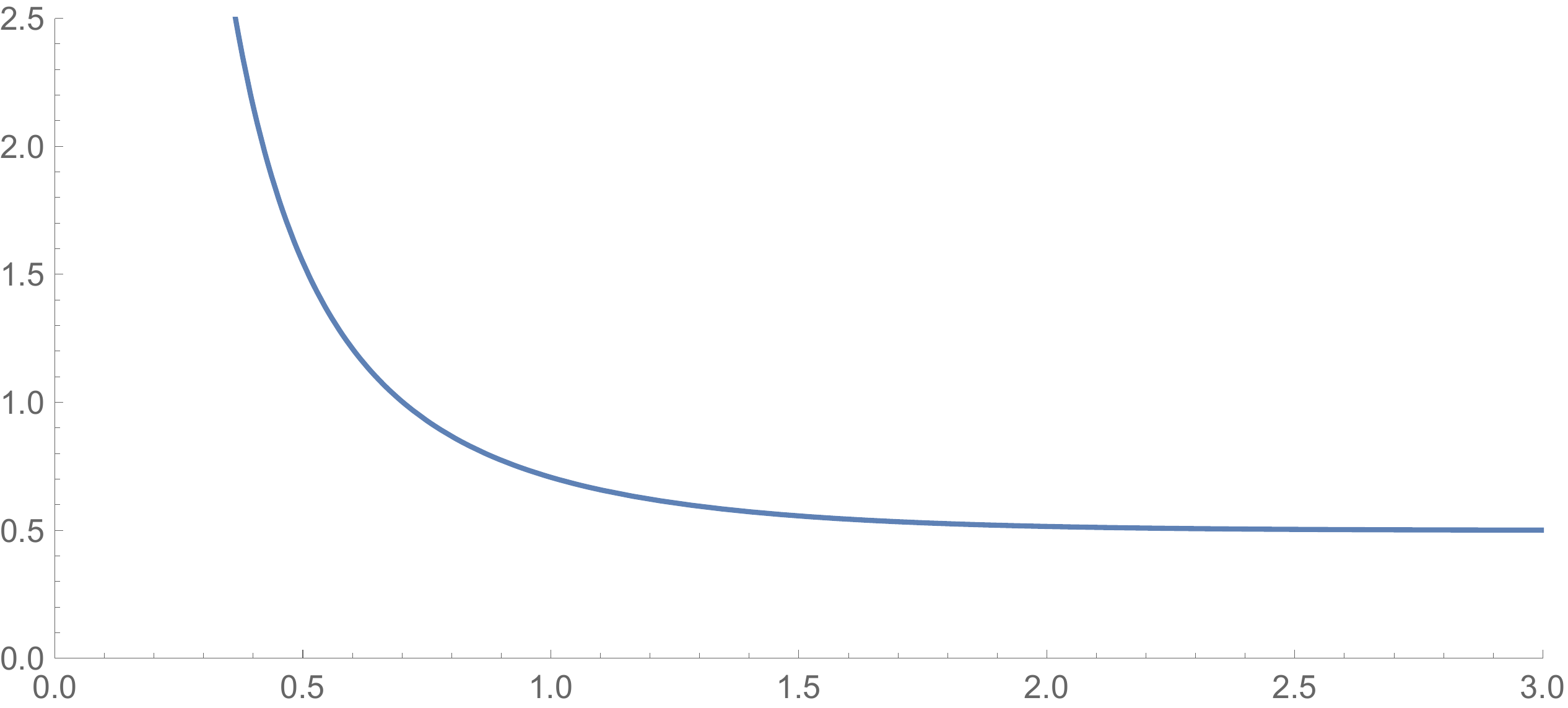}};
\end{tikzpicture}
\end{center}
\caption{The semi-hard edge density profile $\rho(x)$.
}
\label{fig:C1 semi hard}
\end{figure}

\begin{remark}
In the statement of Theorem \ref{thm:r1 semi-hard}, note that $\mathfrak{s}_{1},\mathfrak{s}_{2}>0$. The case $\mathfrak{s}_{1}=\mathfrak{s}_{2}=0$ is very different and is covered as a special case of Theorem \ref{thm:r1 hard}.
\end{remark}
\begin{remark}\label{ss}
The constant $C_{1}$ in \eqref{asymp semi hard} is universal for general potentials in the semi-hard regime: if things are defined appropriately, the associated ``microscopic semi-hard correlation kernel''
$$\tilde{K}_n(\mathfrak{s}_{1},\mathfrak{s}_{2}):=\frac{1}{2n\tilde{\Delta} Q(r_1)} K_n\Big(r_1-\tfrac{\mathfrak{s}_{1}}{\sqrt{2n\tilde{\Delta} Q(r_1)}},r_1-\tfrac{\mathfrak{s}_{2}}{\sqrt{2n\tilde{\Delta} Q(r_1)}}\Big)$$
satisfies
\begin{align}\label{zipp}
 \tilde{K}_n(\mathfrak{s}_{1},\mathfrak{s}_{2})= \int_{-\infty}^{+\infty} \frac{e^{-\frac{(y+\mathfrak{s}_{1})^{2} + (y+\mathfrak{s}_{2})^{2}}{2}}}{\sqrt{\pi} \mathrm{erfc}(y)}dy + o(1), \quad \mbox{as } n \to + \infty.
\end{align}
For convenience, precise conditions implying the universality \eqref{zipp} are discussed in Appendix \ref{Appendix semi-hard universal}.

Passing to the limit as $n\to\infty$ in \eqref{zipp} with $\mathfrak{s}_{2} = \mathfrak{s}_{1}=x$, we obtain the microscopic semi-hard edge one-point density profile
\begin{equation}\label{1dp}\rho(x):=\int_{-\infty}^{+\infty} \frac{e^{-(y+x)^{2}}}{\sqrt{\pi} \mathrm{erfc}(y)}\,dy,\qquad (x>0),\end{equation}
which is believed to be universal at any kinds hard walls in the semi-hard regime, see Appendix \ref{Appendix semi-hard universal}. The density profile $\rho(x)$ is depicted in Figure \ref{fig:C1 semi hard}.
\end{remark}

\begin{remark}
In the paper \cite{RB}, the authors consider two independent plasmas whose droplets are annuli which are adjacent to each other, and where the common boundary between the plasmas is an impermeable membrane, i.e. it is a soft/hard edge for each of the plasmas.

Interestingly, a density profile for the $\bigO(1/\sqrt{n})$-interface about the edge is computed in \cite[Eq. (3.4)]{RB}.
The profile bears a formal resemblance to \eqref{1dp}, and also to the soft/hard edge plasma function studied in \cite{J,AKM,AKMW}.
 \end{remark}

Our next three theorems on the asymptotics of $K_{n}(z,w)$ treat some macroscopic regimes when $|z-w|\asymp 1$.

\medskip For comparison purposes, we first briefly recall the results from \cite{AC, ACC2022} about asymptotics for $K_{n}(z,w)$ in some macroscopic regimes for ensembles with soft edges. When $z,w$ are far from each other but both close to $\partial U$, the asymptotics of $K_{n}(z,w)$ involves the (unweighted) Szeg\H{o} kernel $S^U_{\mathrm{soft}}(z,w)$, which is the reproducing kernel for the Hardy space $H^2_0(U)$ of all holomorphic functions on $U$ vanishing at infinity, supplied with the norm $\|f\|_{H^2_{0}(U)}^2 := \int_{\partial U}|f|^2\,|dz|$, see \cite{AC}. On the other hand, when $z,w$ are far from each other and both close to a (soft) spectral gap of the form $G=\{z:|z|\in (r_{1},r_{2})\}$, there is no convergence towards a limiting kernel \cite{ACC2022}: for example, if $z$ and $w$ are close to $\partial G$ but lie on different regions of $\C\setminus G$ (i.e. $|z|-r_{1} \asymp n^{-1/2}$ and $|w|-r_{2} \asymp n^{-1/2}$, or vice-versa), the asymptotics are described by the oscillatory Szeg\H{o} kernel
\begin{equation}\label{wsg soft}
\mathcal{S}^G_{\mathrm{soft}}(z,w;n):=\frac 1 {2\pi}\sum_{\ell=-\infty}^{+\infty}\frac {(z\bar{w})^\ell}{\frac{{r_1}^{2\ell+1-2x}}{\sqrt{\Delta Q(r_1)}}+\frac {r_2^{2\ell+1-2x}}{\sqrt{\Delta Q(r_2)}}},
\end{equation}
where $x=x(n):= n \int_{|z|\leq r_{1}}\mu(d^{2}z)-\lfloor n \int_{|z|\leq r_{1}}\mu(d^{2}z) \rfloor \in [0,1)$ and $\mu$ is the equilibrium measure associated with $Q$. $\mathcal{S}^G_{\mathrm{soft}}(z,w;n)$ is an analytic function of $z,\bar{w}\in G$ which is also well-defined when $|z|-r_{1} \asymp n^{-1/2}$ and $|w|-r_{2} \asymp n^{-1/2}$ (or vice-versa). This function is also the reproducing kernel for the weighted Hardy space $H^2(G;n)$ consisting of all analytic functions on $G$ such that
\begin{equation}\label{wnorm soft}
\|f\|_{H^2(G;n)}^2:=\int_{\partial G}|f(z)|^2|z|^{-2x(n)}(\Delta Q(z))^{-\frac 1 2}\,|dz|<\infty.
\end{equation}
The situation when $z,w$ are far from each other but are on the same side of $G$ gets more complicated because the series \eqref{wsg soft} is divergent if $z,w \in \partial G$ with $|z|=|w|$. For $z=r_{1}e^{i\theta_{1}}$ and $w=r_{1}e^{i\theta_{2}}$, the weighted Szeg\H{o} kernel appearing in the asymptotics of $K_{n}(z,w)$ turns out to be
\begin{multline}\label{r1r1form}
\mathcal{S}^G_{\mathrm{soft}}(r_{1}e^{i\theta_{1}},r_{1}e^{i\theta_{2}};n):= \frac{1}{2\pi} \frac{\sqrt{\Delta Q(r_1)}}{r_1^{1-2x}}  \bigg(  \frac{1}{e^{i( \theta_1-\theta_2)}-1}   \\
+ \sum\limits_{\ell=0}^{\infty} \frac{e^{i(\theta_1-\theta_2)\ell }}{1 + \sqrt{\frac{\Delta Q(r_{1})}{\Delta Q(r_{2})}}  (\frac{r_{2}}{r_{1}})^{2\ell-2x+1}}    -\sum\limits_{\ell=-\infty}^{-1} \frac{ e^{i(\theta_1-\theta_2)\ell}  }{1+\sqrt{\frac{\Delta Q(r_{2})}{\Delta Q(r_{1})}} (\frac{r_{1}}{r_{2}})^{2\ell -2x+1}}\bigg).
\end{multline}
As noticed in \cite{ACC2022}, the series in \eqref{r1r1form} can be recognized as the Abel limit of \eqref{wsg soft}; more precisely, $\mathcal{S}^G_{\mathrm{soft}}(r_{1}e^{i\theta_{1}},r_{1}e^{i\theta_{2}};n)$ in \eqref{r1r1form} is equal to $\lim_{r \searrow r_{1}} \mathcal{S}^G_{\mathrm{soft}}(re^{\theta_{1}},re^{\theta_{2}};n)$, where $\mathcal{S}^G_{\mathrm{soft}}(re^{\theta_{1}},re^{\theta_{2}};n)$ is as in \eqref{wsg soft} with $z=re^{i\theta_{1}}$ and $w=re^{i\theta_{2}}$.

\medskip It is apriori not clear at all whether analogues to the above results from \cite{AC, ACC2022} exist near a hard edge. For example, $\Delta Q(r_{1})$ appears in \eqref{wsg soft}, \eqref{wnorm soft} and \eqref{r1r1form}, but in our setting $\{|z|=r_{1}\}$ is a hard edge and $\Delta Q(r_{1})$ does not even make sense. It therefore came as a surprise to us that our findings completely mimic the results of \cite{ACC2022}: the main difference is that the quantities $\sqrt{\Delta Q(r_{1})}$ and $\sqrt{\Delta Q(r_{2})}$ appearing in \eqref{wsg soft}--\eqref{r1r1form} should here be replaced by $\sigma_{1}/r_{1}$ and $\sigma_{2}/r_{2}$, respectively. More precisely, we have the following definition.
\begin{definition}
The oscillatory Szeg\H{o} kernel $\mathcal{S}^{G}_{\mathrm{hard}}(z,w;n)$ associated with $G = \{r_{1} < |z| < r_{2}\}$ and $n$ is defined for $z, w \in G$ by
\begin{align}\label{wsg}
\mathcal{S}^{G}_{\mathrm{hard}}(z,w;n) := \frac{1}{2\pi} \sum_{\ell=-\infty}^{+\infty} \frac{(z\bar{w})^{\ell}}{\frac{r_{1}^{2(\ell+1-x)}}{\sigma_{1}} + \frac{r_{2}^{2(\ell+1-x)}}{\sigma_{2}} }.
\end{align}
This is an analytic function of $z, \bar{w} \in G$. This function is also well-defined as long as $|z\overline{w}|$ lies in a compact subset of $(r_{1}^{2},r_{2}^{2})$; for example, it is well-defined for $|z|=r_{1}$ and $|w|=r_{2}$ (and vice-versa, it is also well-defined for $|z|=r_{2}$ and $|w|=r_{1}$). For $|z|=|w|=r_{1}$, $z\neq w$, the above series does not converge absolutely. Nevertheless, for $\theta_{1} \neq \theta_{2}$, the Abel limit $\lim_{r \searrow 1} \mathcal{S}^{G}_{\mathrm{hard}}(re^{\theta_{1}},re^{\theta_{2}};n)$ exists, we denote it by $\mathcal{S}^G_{\mathrm{hard}}(r_{1}e^{i\theta_{1}},r_{1}e^{i\theta_{2}};n)$, and is equal to
\begin{multline}\label{Abel series}
\mathcal{S}^G_{\mathrm{hard}}(r_{1}e^{i\theta_{1}},r_{1}e^{i\theta_{2}};n) = \\
\frac{1}{2\pi} \frac{\sigma_{1}}{r_{1}^{2(1-x)}} \bigg( \frac{1}{e^{i(\theta_{1}-\theta_{2})}-1} + \sum_{\ell=0}^{+\infty} \frac{e^{i(\theta_{1}-\theta_{2})\ell}}{1+\frac{\sigma_{1}}{\sigma_{2}}(\frac{r_{2}}{r_{1}})^{2(\ell+1-x)}} - \sum_{\ell=-\infty}^{-1} \frac{e^{i(\theta_{1}-\theta_{2})\ell}}{1+\frac{\sigma_{2}}{\sigma_{1}}(\frac{r_{1}}{r_{2}})^{2(\ell+1-x)}} \bigg).
\end{multline}
Because one had to take the Abel limit to make sense of $\mathcal{S}^G_{\mathrm{hard}}(r_{1}e^{i\theta_{1}},r_{1}e^{i\theta_{2}};n)$, we call $\mathcal{S}^G_{\mathrm{hard}}(r_{1}e^{i\theta_{1}}$,  $r_{1}e^{i\theta_{2}};n)$ an oscillatory ``regularized" Szeg\H{o} kernel, see also Figure \ref{fig:summary}.
\end{definition}
\begin{remark} The kernel \eqref{wsg} is the reproducing kernel for the weighted Hardy space $H^2(G;n)$ of all analytic functions $g$ on $G$ satisfying
\begin{equation}\label{wnorm}
\|f\|_{H^2(G;n)}^2:=\int_{\{|z|=r_{1}\}}|f(z)|^2 \frac{r_{1}^{1-2x(n)}}{\sigma_{1}}|dz| + \int_{\{|z|=r_{2}\}}|f(z)|^2 \frac{r_{2}^{1-2x(n)}}{\sigma_{2}}\,|dz|<\infty.
\end{equation}
\end{remark}
We next state our results on the asymptotics of $K_{n}(z,w)$ in the macroscopic regime when $z$ and $w$ are in different regions of $\C\setminus G$.
\begin{theorem}\label{thm:r1-r2 case}\emph{(``$r_{1}$-$r_{2}$ hard edge case")}
Let $t_{1},t_{2}\geq 0$ and $\theta_{1},\theta_{2}\in \R$ be fixed, and define
\begin{align}\label{def of z1z2 r1 r2 case intro}
z = r_{1}\Big(1-\frac{t_{1}}{\sigma_{1}n}\Big)e^{i\theta_{1}}, \qquad w = r_{2}\Big(1+\frac{t_{2}}{\sigma_{2}n}\Big)e^{i\theta_{2}}, \qquad t_{1}, t_{2} \geq 0, \;\; \theta_{1},\theta_{2}\in \mathbb{R}.
\end{align}
As $n \to + \infty$, we have
\begin{align*}
& K_{n}(z,w) = 2 \pi n \cdot \mathcal{S}^{G}_{\mathrm{hard}}(z,w;n) \cdot e^{i \lfloor j_{\star}\rfloor (\theta_{1}-\theta_{2})} (r_{1}r_{2})^{-x} e^{-t_{1}-t_{2}}  + \bigO((\log n)^{2}),
\end{align*}
where $j_{\star} := n \sigma_{\star}-\alpha$, and $x=x(n)$ is given by
\begin{align}\label{def of theta star intro}
x = j_{\star}-\lfloor j_{\star} \rfloor = n \sigma_{\star}-\alpha - \lfloor n \sigma_{\star}-\alpha \rfloor \in [0,1).
\end{align}
\end{theorem}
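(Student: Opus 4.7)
The plan is to work directly from the explicit formula \eqref{def of Kn} and reorganize the sum around the transition index $j_{\star} = n \sigma_{\star} - \alpha$. Writing $j = \lfloor j_{\star} \rfloor + 1 + \ell$, so that $j+\alpha = n \sigma_{\star} + \ell + 1 - x$, a direct Taylor expansion at the prescribed $z, w$ yields
\begin{align*}
z^{j-1} \overline{w}^{j-1} e^{-\frac{n}{2}Q(z) - \frac{n}{2}Q(w)} = e^{-t_{1}-t_{2}} e^{i \lfloor j_{\star} \rfloor (\theta_{1}-\theta_{2})} (r_{1}r_{2})^{\ell - x} e^{-nA} e^{i \ell(\theta_{1}-\theta_{2})} \bigl(1+\bigO(n^{-1})\bigr),
\end{align*}
where $A := \tfrac{1}{2}(r_{1}^{2b} + r_{2}^{2b}) - \sigma_{\star} \log(r_{1}r_{2})$. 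The defining identity \eqref{def of taustar} ensures $r_{1}^{2n\sigma_{\star}} e^{-n r_{1}^{2b}} = r_{2}^{2n\sigma_{\star}} e^{-n r_{2}^{2b}} = e^{-nA}$, while the residual factors $e^{-t_{1}}$ and $e^{-t_{2}}$ arise through the cancellations $br_{1}^{2b} = \sigma_{\star}-\sigma_{1}$ and $br_{2}^{2b} = \sigma_{\star}+\sigma_{2}$ when combining $|z|^{j-1+\alpha}$ with the weight $e^{-n|z|^{2b}/2}$ near $j+\alpha = n\sigma_{\star}$, and similarly for $w$.

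The central analytic input is the asymptotics of $h_{j}$ for $j$ within a window of width $\bigO(\log n)$ around $j_{\star}$. Writing $h_{j} = 2\int_{0}^{r_{1}} e^{-\phi(r)} dr + 2\int_{r_{2}}^{\infty} e^{-\phi(r)}dr$ with $\phi(r) := n r^{2b} - (2j+2\alpha -1)\log r$, one computes $\phi'(r_{1}) \sim -2n\sigma_{1}/r_{1}$ and $\phi'(r_{2}) \sim 2n\sigma_{2}/r_{2}$, so each piece is monotone and is dominated by its endpoint on $\partial G$. A single integration by parts (equivalently the uniform asymptotics of $\gamma$ in Appendix \ref{appendix:incomplete gamma}) yields
\begin{align*}
h_{\lfloor j_{\star} \rfloor + 1 + \ell} = \frac{e^{-nA}}{n} \biggl[\frac{r_{1}^{2(\ell+1-x)}}{\sigma_{1}} + \frac{r_{2}^{2(\ell+1-x)}}{\sigma_{2}}\biggr] \bigl(1+\bigO(n^{-1})\bigr).
\end{align*}
Inserting this into the reindexed sum, using that $e^{-nA}$ cancels, and replacing $r_{1}r_{2} e^{i(\theta_{1}-\theta_{2})}$ by $z\overline{w}$ to the requisite order, one recognizes by \eqref{wsg} the series $2\pi \mathcal{S}^{G}_{\mathrm{hard}}(z,w;n)$, which produces the claimed leading term.

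It remains to control two sources of error. First, the contribution of indices $j$ outside a window of width $C\log n$ around $j_{\star}$: for such $j$ one has either $(j+\alpha)/(bn) < r_{1}^{2b}$ or $(j+\alpha)/(bn) > r_{2}^{2b}$, hence $h_{j}\sim \Gamma((j+\alpha)/b)/(bn^{(j+\alpha)/b})$, and the ratio $(z\overline{w})^{j-1}/h_{j}$ decays geometrically in $|\ell|$ since $r_{1}^{2} < |z\overline{w}| < r_{2}^{2}$. Choosing $C$ large enough makes this contribution $\bigO(n^{-N})$ for any fixed $N$. Second, uniform tracking of the $\bigO(n^{-1})$ relative corrections in both the prefactor and in $h_{j}$ over $|\ell|\le C\log n$ yields, after summation against the geometrically convergent Szeg\H{o} series, a relative error of $\bigO((\log n)/n)$, which multiplied by the leading order $n$ produces the final $\bigO((\log n)^{2})$. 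The main obstacle is not identifying the limit but securing these uniform bounds across the window; this is analogous to (but lighter than) the error analysis needed for Theorem \ref{thm:r1 hard}, and carries over with only minor modifications.
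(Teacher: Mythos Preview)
Your overall scheme---localize to a window $|\ell|\le C\log n$ around $j_{\star}$, expand both the weight factor and $h_{j}$ there, and read off the Szeg\H{o} kernel---is exactly what the paper does in Section~\ref{section:4}. Your computations for the prefactor and for $h_{\lfloor j_{\star}\rfloor+1+\ell}$ are correct and match the paper's Lemma~\ref{lemma: asymp of hjinv}~(6)--(7) specialized to $j/n\approx\sigma_{\star}$.

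There is, however, a genuine gap in your tail control. You assert that for $j$ outside a $C\log n$-window of $j_{\star}$ one has $(j+\alpha)/(bn)<r_{1}^{2b}$ or $(j+\alpha)/(bn)>r_{2}^{2b}$, and therefore $h_{j}\sim \Gamma((j+\alpha)/b)/(bn^{(j+\alpha)/b})$. This is false: the interval $\{j:\, r_{1}^{2b}<(j+\alpha)/(bn)<r_{2}^{2b}\}$ has width of order $n$, not $\log n$, and for all such $j$ the critical point of $\phi$ lies inside the forbidden annulus, so the bulk asymptotic for $h_{j}$ does not apply. In this entire ``gap'' range the norms $h_{j}$ are exponentially smaller than $\Gamma(a_{j})/(bn^{a_{j}})$, and your stated mechanism breaks down.

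The fix is easy and is implicit in what you already wrote for the central window: your endpoint/Laplace formula for $h_{j}$,
\[
h_{j}\;\sim\;\frac{e^{-nA}}{n}\Bigl[\frac{r_{1}^{2(\ell+1-x)}}{\sigma_{1}}+\frac{r_{2}^{2(\ell+1-x)}}{\sigma_{2}}\Bigr],
\]
in fact holds (with $\sigma_{1},\sigma_{2}$ replaced by $j/n-br_{1}^{2b}$ and $br_{2}^{2b}-j/n$, and with a uniform $\bigO(n^{-1})$ error) throughout the range $j_{1,+}\le j\le j_{2,-}$; this is precisely Lemma~\ref{lemma: asymp of hjinv}~(6)--(7). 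With that formula in hand, your observation $r_{1}^{2}<|z\overline{w}|\approx r_{1}r_{2}<r_{2}^{2}$ does give the geometric decay $|F_{j}|\lesssim n\,(r_{1}/r_{2})^{|\ell|}$, and choosing $C$ large makes the tail $\bigO(n^{-N})$. The paper handles the remaining ranges (transitions near $j\approx nbr_{k}^{2b}$ and the true bulk) via Lemma~\ref{lemma: asymp of Fj r1r2}~(1)--(5),(8)--(11), where the estimates are exponentially small in $n$; you should at least acknowledge that those regimes require separate (routine) bounds rather than folding them into the bulk asymptotic for $h_{j}$. Once this is repaired, your error bookkeeping in the window (relative $\bigO((\log n)/n)$ per term, summed over $\bigO(\log n)$ terms of size $\bigO(n)$) correctly yields the stated $\bigO((\log n)^{2})$.
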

\begin{remark}\label{remark:r1r2 case}
The analogue of Theorem \ref{thm:r1-r2 case} near soft edges is given by \cite[Corollary 1.14]{ACC2022} and is as follows: for $z=\big(r_1 + \frac{t}{\sqrt{n\Delta Q(r_1)}}\big)e^{i\theta_1}$, $w=\big(r_2 + \frac{s}{\sqrt{n\Delta Q(r_2)}}\big)e^{i\theta_2}$ with $s,t\in \R$, we have
\begin{align*}
K_{n}(z,w) = \sqrt{2\pi n} \cdot \mathcal{S}_{\mathrm{soft}}^G(z,w;n)\cdot e^{i\lfloor j_{\star}\rfloor(\theta_1-\theta_2)} (r_1r_2)^{-x} e^{-t^2-s^2}
+\bigO((\log n)^{3}),
\end{align*}
where $\mathcal{S}_{\mathrm{soft}}^G(z,w;n)$ is given by \eqref{wsg soft}, $j_{\star}:=n \int_{|z|\leq r_{1}}\mu(d^{2}z)$ and $x$ and $\mu$ are as in \eqref{wsg soft}.
\end{remark}
\begin{figure}
\begin{center}
\begin{tikzpicture}[master]
\node at (0,0) {\includegraphics[width=7cm]{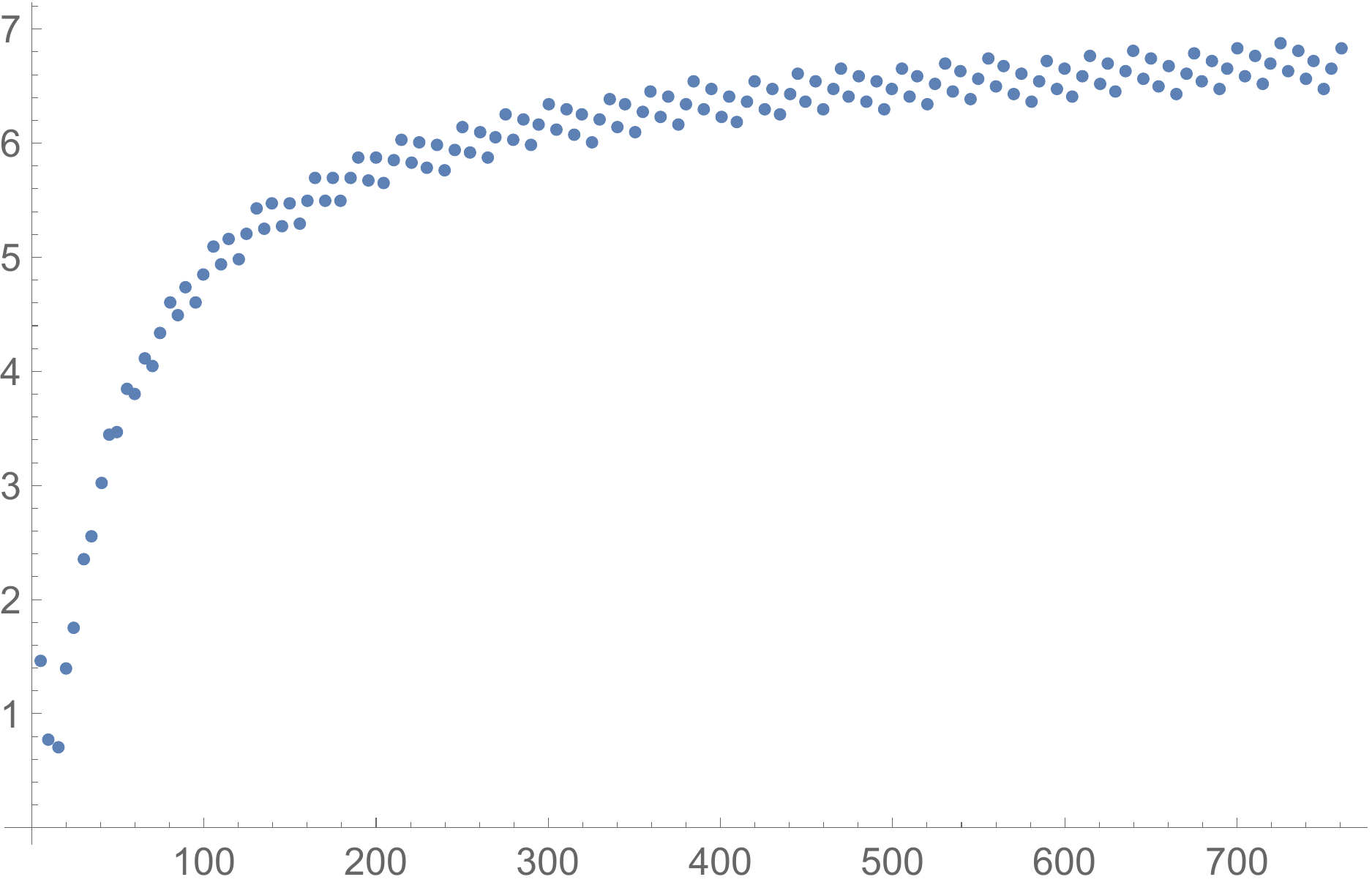}};
\end{tikzpicture}
\begin{tikzpicture}[slave]
\node at (0,0) {\includegraphics[width=7cm]{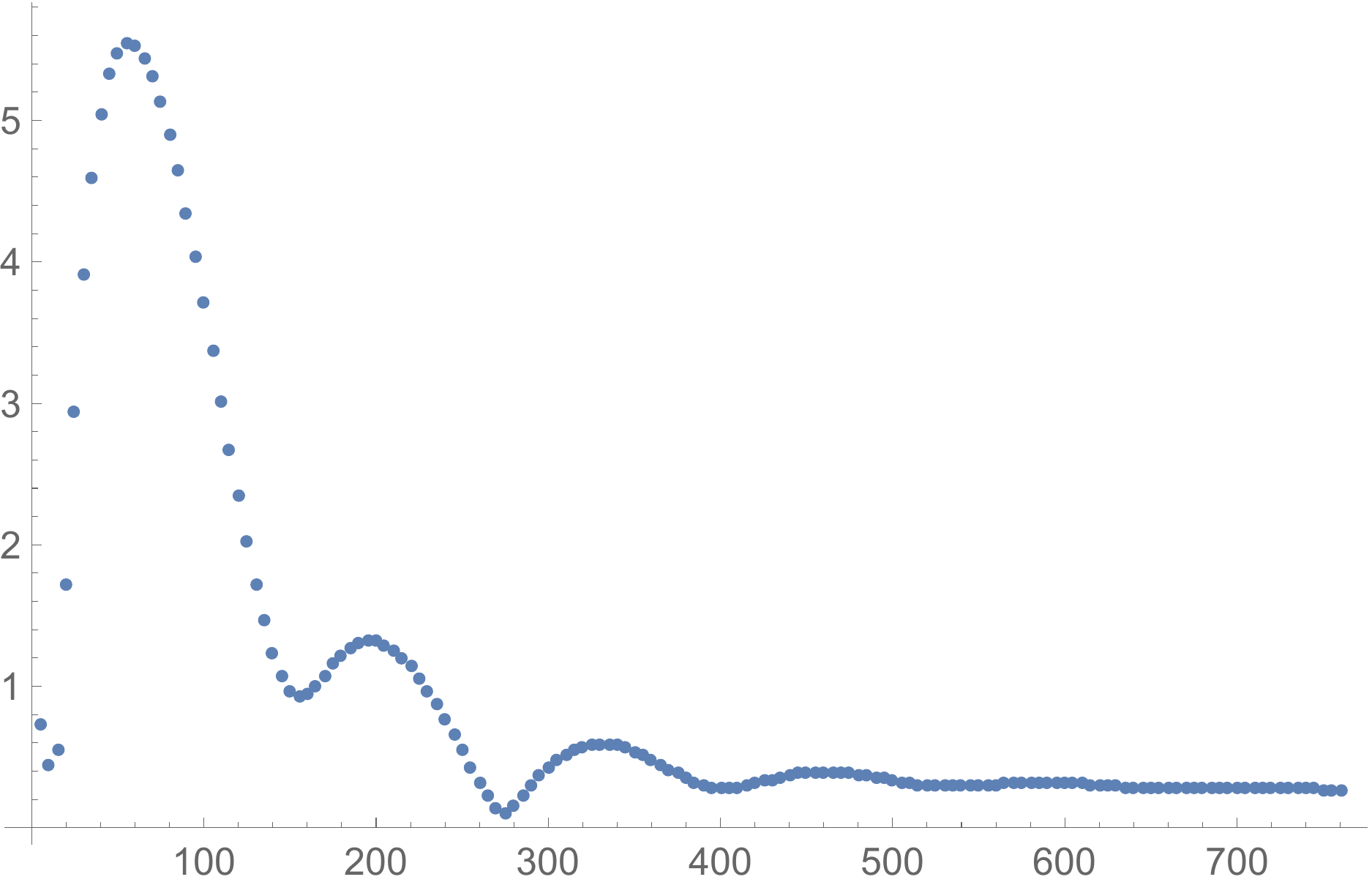}};
\end{tikzpicture}
\end{center}
\caption{Numerical confirmations of Theorems \ref{thm:r1-r2 case} (left) and \ref{thm:r1r1 hard} (right). For both pictures, $b=1.3$, $\alpha=1.26$, $r_{1}=0.42\smash{b^{-\frac{1}{2b}}}$, $r_{2}=0.67\smash{b^{-\frac{1}{2b}}}$. Left: the function $n\mapsto \big|\big[K_{n}(z,w) - 2 \pi n \cdot \mathcal{S}^{G}_{\mathrm{hard}}(z,w;n) \cdot e^{i \lfloor j_{\star}\rfloor (\theta_{1}-\theta_{2})} (r_{1}r_{2})^{-x} e^{-t_{1}-t_{2}}\big|$, with $z = r_{1}(1-\frac{t_{1}}{\sigma_{1}n})e^{i\theta_{1}}$, $w = r_{2}(1+\frac{t_{2}}{\sigma_{2}n})e^{i\theta_{2}}$, $t_{1}=0.21$, $t_{2}=0.45$, $\theta_{1}=0$, $\theta_{2}=0.312$. This function seems to grow very slowly as $n\to+\infty$, which is consistent with Theorem \ref{thm:r1-r2 case}. Right: the function $n\mapsto \frac{1}{\sqrt{n}} \big|K_{n}(z,w) - 2\pi n \cdot \mathcal{S}^{G}_{\mathrm{hard}}(r_{1}e^{\theta_{1}},r_{1}e^{\theta_{2}};n) \cdot e^{i \lfloor j_{\star} \rfloor (\theta_{1}-\theta_{2})} r_{1}^{-2x} e^{-t_{1}-t_{2}} \big|$, with $z = r_{1}(1-\frac{t_{1}}{\sigma_{1}n})e^{i\theta_{1}}$, $w = r_{1}(1-\frac{t_{2}}{\sigma_{1}n})e^{i\theta_{2}}$, $t_{1}=0.91$, $t_{2}=1.45$, $\theta_{1}=0$, $\theta_{2}=0.312$. This function seems to approach a constant as $n\to+\infty$, which is consistent with Theorem \ref{thm:r1r1 hard}, and also suggests that the term $\bigO(\sqrt{n \log n} )$ in \eqref{lol19} is actually $\bigO(\sqrt{n} )$.
}
\label{fig:numerical check}
\end{figure}
We next consider the macroscopic regime where $z$ and $w$ are on different sides of $G$.
\begin{theorem}\label{thm:r1r1 hard}\emph{(``$r_{1}$-$r_{1}$ hard edge case")}
Let $t_{1},t_{2}\geq 0$ and $\theta_{1}\neq \theta_{2} \in \R$ be fixed, and define
\begin{align}\label{def of z1z2 r1 r1 case intro r1r1}
z = r_{1}\Big(1-\frac{t_{1}}{\sigma_{1}n}\Big)e^{i\theta_{1}}, \qquad w = r_{1}\Big(1-\frac{t_{2}}{\sigma_{1}n}\Big)e^{i\theta_{2}}.
\end{align}
As $n \to \infty$, we have
\begin{align}\label{lol19}
K_{n}(z,w) = 2\pi n \cdot \mathcal{S}^{G}_{\mathrm{hard}}(r_{1}e^{\theta_{1}},r_{1}e^{\theta_{2}};n) \cdot e^{i \lfloor j_{\star} \rfloor (\theta_{1}-\theta_{2})} r_{1}^{-2x} e^{-t_{1}-t_{2}} +  \bigO(\sqrt{n \log n} ),
\end{align}
where $j_{\star} := n \sigma_{\star}-\alpha$, $x = j_{\star}-\lfloor j_{\star} \rfloor$.
\end{theorem}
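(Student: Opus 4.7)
The plan is to begin from the representation $K_{n}(z,w) = e^{-nQ(z)/2 - nQ(w)/2}\, S_{n}$ with $S_{n} := \sum_{j=1}^{n} (z\overline{w})^{j-1}/h_{j}$, and to extract the main contribution to $S_{n}$ from indices $j$ in a $\bigO(\log n)$ neighborhood of the critical value $j_{\star} = n\sigma_{\star} - \alpha$. The first step is to apply Laplace's method (using the uniform asymptotics of the incomplete gamma function recalled in Appendix \ref{appendix:incomplete gamma}) to the two terms in \eqref{def of hj}. For $j$ in the middle range, $n b r_{1}^{2b} + C\sqrt{n} < j+\alpha < n b r_{2}^{2b} - C\sqrt{n}$, this gives
\begin{align*}
h_{j} = \frac{r_{1}^{2(j+\alpha)} e^{-n r_{1}^{2b}}}{j + \alpha - n b r_{1}^{2b}} + \frac{r_{2}^{2(j+\alpha)} e^{-n r_{2}^{2b}}}{n b r_{2}^{2b} - (j + \alpha)} + \text{lower order},
\end{align*}
and the defining identity $2\sigma_{\star} \log(r_{2}/r_{1}) = r_{2}^{2b} - r_{1}^{2b}$ produces the crucial cancellation $r_{2}^{2(j+\alpha)} e^{-n r_{2}^{2b}}/\bigl[r_{1}^{2(j+\alpha)} e^{-n r_{1}^{2b}}\bigr] = (r_{2}/r_{1})^{2(j - j_{\star})}$. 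Writing $j = \lfloor j_{\star} \rfloor + 1 + \ell$, for $|\ell| \leq C \log n$ one obtains
\begin{align*}
\frac{r_{1}^{2(j-1)}}{h_{j}} = \frac{n \, r_{1}^{-2(\alpha+1)} e^{n r_{1}^{2b}} \sigma_{1} \sigma_{2}}{\sigma_{2} + \sigma_{1} (r_{2}/r_{1})^{2(\ell+1-x)}} \bigl(1 + \bigO(\tfrac{\log n}{n})\bigr).
\end{align*}

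Next, since $\theta_{1} \neq \theta_{2}$, summation by parts against $e^{i(\theta_{1} - \theta_{2})(j-1)}$ gives meaning to the formal $\ell$-series (which is non-absolutely convergent as $\ell \to -\infty$), and matches it with the Abel-regularized Szeg\H{o} kernel in \eqref{Abel series}. The slowly varying factor $(1-t_{k}/(n\sigma_{1}))^{j-1}$ contributes $e^{-(t_{1}+t_{2})\sigma_{\star}/\sigma_{1}}$ for $j \approx j_{\star}$, while the Gaussian-type prefactor expands as
\begin{align*}
e^{-nQ(z)/2 - nQ(w)/2} = r_{1}^{2\alpha} e^{-n r_{1}^{2b}} e^{(t_{1}+t_{2}) b r_{1}^{2b}/\sigma_{1}} (1+\bigO(1/n)).
\end{align*}
The identity $\sigma_{\star} - b r_{1}^{2b} = \sigma_{1}$ then collapses the two exponents into $e^{-(t_{1}+t_{2})}$, and tracking the algebra of the $r_{1}$-powers $r_{1}^{-2(\alpha+1)+2-2x}\cdot r_{1}^{2\alpha}=r_{1}^{-2x}$ yields the leading term $2\pi n \, \mathcal{S}^{G}_{\mathrm{hard}}(r_{1} e^{i\theta_{1}}, r_{1} e^{i\theta_{2}}; n) \cdot e^{i \lfloor j_{\star} \rfloor (\theta_{1} - \theta_{2})} r_{1}^{-2x} e^{-t_{1} - t_{2}}$ announced in \eqref{lol19}.

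The main obstacle is controlling the remainder. The naive leading-order $\ell$-sum does not converge absolutely (the summand tends to a positive constant as $\ell \to -\infty$), so the oscillation $e^{i(\theta_{1} - \theta_{2})(j-1)}$ with $\theta_{1} \neq \theta_{2}$ is essential: partial sums of $e^{i(\theta_{1}-\theta_{2}) j}$ are bounded by $2/|e^{i(\theta_{1}-\theta_{2})}-1|$, and an Abel-summation argument together with the $\ell^{1}$-summability of the discrete derivatives of $\ell \mapsto (\sigma_{2} + \sigma_{1}(r_{2}/r_{1})^{2(\ell+1-x)})^{-1}$ shows that the tails are acceptable. Simultaneously, one must glue the middle-regime asymptotics to the transitional strips $|j + \alpha - n b r_{k}^{2b}| \lesssim \sqrt{n}$, where the erfc-type uniform asymptotics of $\gamma(a,z)$ enter. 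The summand magnitude in those strips peaks at $\bigO(\sqrt{n}\, e^{n r_{1}^{2b}})$ over a window of width $\sqrt{n}$, and after the oscillatory cancellation from summation by parts the net contribution of each strip to $S_{n}$ is of order $\sqrt{n \log n}\, e^{n r_{1}^{2b}}$ at worst, the $\log n$ arising from the matching interval where the middle-regime and erfc asymptotics must agree. Multiplying through by $e^{-nQ(z)/2 - nQ(w)/2}$ produces the advertised error $\bigO(\sqrt{n \log n})$.
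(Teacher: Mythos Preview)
Your proposal is correct and follows the same route as the paper's proof in Subsection~\ref{subsection:5.2}: Laplace/incomplete-gamma asymptotics for $h_j$ in each regime, Abel/summation-by-parts against $e^{i(j-1)(\theta_1-\theta_2)}$ on the middle range to extract the regularized Szeg\H{o} kernel, and a separate summation by parts on the erfc transition window near $j\approx nbr_1^{2b}$ (of width $M\sqrt{n}$ with $M=M'\sqrt{\log n}$), which is what produces the $\bigO(\sqrt{n\log n})$ error. The paper organizes this through the decomposition $K_n=S_4+S_{567}+n\sigma_1 r_1^{-2}e^{i\lfloor j_\star\rfloor(\theta_1-\theta_2)-t_1-t_2}\,\mathcal Q_n+\bigO((\log n)^2)$ (Lemmas~\ref{lemma:r1r1 second lemma} and~\ref{lemma:S456 valid for all theta1 and theta2}) and applies two successive summations by parts on $S_{567}$, with the dominant $\bigO(M\sqrt{n})$ error arising from the boundary term \eqref{lol5} at $j=g_{1,+}+1$ rather than from the $\ell^1$-tail you describe near $j_\star$.
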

\begin{remark}
The analogue of Theorem \ref{thm:r1r1 hard} near a soft edge is given by \cite[Corollary 1.14]{ACC2022} and is as follows: for $z=\big(r_1 + \frac{t}{\sqrt{n\Delta Q(r_1)}}\big)e^{i\theta_1}$, $w=\big(r_1 + \frac{s}{\sqrt{n\Delta Q(r_2)}}\big)e^{i\theta_2}$ with $s,t\in \R$, $\theta_{1}\neq \theta_{2}$, we have
\begin{align*}
K_{n}(z,w) = \sqrt{2\pi n} \cdot \mathcal{S}_{\mathrm{soft}}^G(r_{1}e^{i\theta_{1}},r_{1}e^{i\theta_{2}};n)\cdot e^{i\lfloor j_{\star}\rfloor(\theta_1-\theta_2)} r_1^{-2x} e^{-t^2-s^2}
+\bigO((\log n)^{5}),
\end{align*}
where $\mathcal{S}_{\mathrm{soft}}^G(r_{1}e^{i\theta_{1}},r_{1}e^{i\theta_{2}};n)$ is given by \eqref{r1r1form}, $j_{\star}=n \int_{|z|\leq r_{1}}\mu(d^{2}z)$, and $x$ and $\mu$ are as in \eqref{wsg soft}.
\end{remark}

Our last result shows that $K_{n}(z,w)$ is small whenever $z$ and $w$ are in the semi-hard edge regime and $|z-w|\asymp 1$.
\begin{theorem}\label{thm:r1r1 semi-hard}\emph{(``$r_{1}$-$r_{1}$ semi-hard edge case")}
Let $\mathfrak{s}_{1}, \mathfrak{s}_{2} > 0$ and $\theta_{1}\neq \theta_{2} \in \R$ be fixed, and define
\begin{align*}
& z = r_{1}\Big(1-\frac{\mathfrak{s}_{1}}{br_{1}^{b}\sqrt{2n}}\Big)e^{i\theta_{1}}, \qquad w = r_{1}\Big(1-\frac{\mathfrak{s}_{2}}{br_{1}^{b}\sqrt{2n}}\Big)e^{i\theta_{2}}.
\end{align*}
As $n \to + \infty$,
\begin{align}\label{small corr}
& K_{n}(z,w) = \bigO(1).
\end{align}
\end{theorem}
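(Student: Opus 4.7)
My plan is to leverage the angular oscillation factor $e^{i(j-1)(\theta_1-\theta_2)}$ that appears in the sum \eqref{def of Kn} for $K_n(z,w)$, via two iterations of Abel's summation by parts. Write
$$K_n(z,w) = e^{-\frac{n}{2}Q(z)-\frac{n}{2}Q(w)} \sum_{j=1}^n e^{i(j-1)\phi} g_j,\qquad \phi := \theta_1-\theta_2,\qquad g_j := \frac{(|z||w|)^{j-1}}{h_j}.$$
The hypothesis $\theta_1\neq \theta_2$ guarantees that $\phi\not\equiv 0\pmod{2\pi}$, so the partial sums $E_j := \sum_{k=1}^{j} e^{i(k-1)\phi}$ satisfy $|E_j|\le 2/|1-e^{i\phi}|=:C_\phi$ uniformly in $j$. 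Each Abel iteration will replace the summand with a higher-order discrete difference of the envelope $g_j$, and the key point is that for $z,w$ in the semi-hard regime, $g_j$ is a smooth unimodal sequence of Gaussian-like shape whose width is $\asymp \sqrt{n}$, so each iteration gains a factor of $n^{-1/2}$.

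\textbf{Steps.} First I would perform a single Abel summation:
$$\sum_{j=1}^n e^{i(j-1)\phi} g_j = E_n g_n - \sum_{j=1}^{n-1}E_j(g_{j+1}-g_j).$$
The boundary term $E_n g_n$ is exponentially small since $|z||w|<r_2^2$. Next, using the identity $E_j = A + B e^{ij\phi}$ with constants $A,B\in\C$ depending only on $\phi$, the $A$-contribution telescopes to the exponentially small quantity $A(g_n-g_1)$, and the oscillatory piece $B\sum_j e^{ij\phi}(g_{j+1}-g_j)$ is amenable to a \emph{second} Abel summation, yielding (up to another $O(e^{-cn})$ boundary term)
$$\Big|\sum_{j=1}^n e^{i(j-1)\phi} g_j\Big| \le C_\phi^2 \sum_{j=1}^{n-2}|\Delta^2 g_j| + O(e^{-cn}),\qquad \Delta^2 g_j := g_{j+2}-2g_{j+1}+g_j.$$
The remaining task is to estimate $\sum_j |\Delta^2 g_j|$.

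\textbf{Core estimate on the second difference.} Because $\mathfrak{s}_1,\mathfrak{s}_2>0$ are fixed, $|z||w| = r_1^2 + O(1/\sqrt n)$ lies strictly below $r_1^2$ in the limit. From \eqref{def of hj} together with the asymptotics of $\gamma(a,x)$ in Appendix \ref{appendix:incomplete gamma}, for all $j$ with $j/n$ at a macroscopic distance below $\sigma_\star$ one has $h_j = \Gamma(\tfrac{j+\alpha}{b})(bn^{(j+\alpha)/b})^{-1}(1+O(e^{-cn}))$. Setting $a_j = (j+\alpha)/b$, this gives $h_{j+1}/h_j = (a_j/n)^{1/b}(1+O(1/j))$, so that $g_j$ is peaked at $j_\star \simeq bn(|z||w|)^b = bnr_1^{2b} + O(\sqrt n)$, which lies at distance $\asymp n\sigma_1$ from the hard-edge location $n\sigma_\star$. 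In particular, the peak and the entire Gaussian window of width $\sqrt n$ around $j_\star$ sit safely inside the inner regime. A two-term expansion of $\log(h_{j+1}/h_j)$ in $j$ then yields $|\log(g_{j+2}/g_{j+1}) - \log(g_{j+1}/g_j)| = O(1/n)$, and combined with $|(\log g_j)'(j)| = O(|j-j_\star|/n)$, this gives $|\Delta^2 g_j| \le C g_j/n$ uniformly for $j$ in the Gaussian window. Outside this window $g_j$ is exponentially small, so its second differences contribute only $O(e^{-cn})$. Summing,
$$\sum_j |\Delta^2 g_j| \le \frac{C}{n}\sum_j g_j + O(e^{-cn}) \le \frac{C'}{n}\cdot M_n\sqrt n = \frac{C' M_n}{\sqrt n},$$
where $M_n := \max_j g_j$ and the bound $\sum_j g_j \asymp M_n\sqrt n$ reflects the Gaussian width.

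\textbf{Conclusion.} The proof of Theorem \ref{thm:r1 semi-hard} (applied at $\mathfrak{s}_1=\mathfrak{s}_2$) together with $\sum_j g_j \asymp M_n\sqrt n$ gives the diagonal balance $e^{-\frac{n}{2}Q(z)-\frac{n}{2}Q(w)} M_n = O(\sqrt n)$. Putting everything together,
$$|K_n(z,w)| \le e^{-\frac{n}{2}Q(z)-\frac{n}{2}Q(w)}\cdot C_\phi^2\cdot \frac{C' M_n}{\sqrt n} + O(e^{-cn}) = O(1).$$

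\textbf{Main obstacle.} The delicate step is the uniform second-difference bound $|\Delta^2 g_j| = O(g_j/n)$ throughout the entire Gaussian window of width $\sqrt n$ around $j_\star$. This requires a sufficiently precise (two-term) asymptotic expansion of $h_{j+1}/h_j$, uniformly in $j$ and extending right up to the edges of the window, which in turn demands the uniform incomplete gamma asymptotics developed in Appendix \ref{appendix:incomplete gamma}. Once this single estimate is in place, the rest of the argument is a clean piece of summation-by-parts bookkeeping.
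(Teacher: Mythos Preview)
Your overall strategy—Abel summation by parts to exploit the oscillation $e^{i(j-1)\phi}$—is the same as the paper's. However, your justification of the core second-difference estimate has a genuine gap. You assert that for all $j$ with $j/n$ macroscopically below $\sigma_\star$ one has $h_j = \Gamma(\tfrac{j+\alpha}{b})(bn^{(j+\alpha)/b})^{-1}(1+O(e^{-cn}))$, and conclude that the Gaussian window of $g_j$ lies ``safely inside the inner regime''. This is false: the formula you quote for $h_j$ is valid only for $j/n$ below $br_1^{2b}$ by more than order $M/\sqrt n$ (Lemma~\ref{lemma: asymp of hjinv}, items 1--3), whereas the peak of $g_j$ sits at $bn(|z||w|)^b = bnr_1^{2b} + O(\sqrt n)$, so the window of width $\asymp\sqrt n$ about it lands squarely inside the $\mathrm{erfc}$ transition zone $[g_{1,-},g_{1,+}]$, where $h_j^{-1}$ carries the extra factor $\big(\tfrac12\mathrm{erfc}(-M_{j,1}r_1^b/\sqrt 2)\big)^{-1}$ (Lemma~\ref{lemma: asymp of hjinv}(4)). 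You have conflated the $\mathrm{erfc}$ transition of $h_j$ at $bnr_1^{2b}$ with the later crossover at $n\sigma_\star$; these are different and separated by $n\sigma_1$.

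The gap is repairable: the correct envelope in the window is $|F_j|\sim \sqrt n\,f(M_{j,1})$ with $f(x)\propto e^{-\frac12 r_1^{2b}(x-b(s_1+s_2))^2}\big/\mathrm{erfc}(-xr_1^b/\sqrt 2)$ (see~\eqref{hj asymp 4 r1r1 o}), and since $M_{j+1,1}-M_{j,1}=O(n^{-1/2})$ with $f$ smooth and exponentially decaying at both ends (here $s_1,s_2>0$ is essential), one still obtains $\sum_j|\Delta^2 F_j|=O(1)$. The paper (Subsection~\ref{subsec:6.2}) organises this differently: it first isolates the $\mathrm{erfc}$ window via Lemma~\ref{lemma:semi hard S4+O1} ($K_n=S_4+O(1)$), and then iterates summation by parts inside that window using only the first-difference bound $c_{\ell+1}'-c_\ell'=O(n^{-1/2})c_\ell'$ together with the decay $c_\ell'=O(e^{-c|M_{j,1}|})$; repeated iteration in fact gives $S_4=O(n^{-10})$, stronger than needed.
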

\begin{remark}
The estimate \eqref{small corr} can be probably be strengthened with more efforts. In fact, in the setting of Theorem \ref{thm:r1r1 semi-hard}, we believe that $K_{n}(z,w) = \bigO(n^{-K})$ holds as $n \to  \infty$ for any fixed $K>0$.
\end{remark}

\textbf{Outline of the paper.} In Section \ref{section:hj}, we obtain large $n$ estimates for $h_{j}$ valid uniformly in different ranges of $j\in \{1,\ldots,n\}$. These estimates will be useful for the proofs of each of our five theorems. The proof of Theorem \ref{thm:r1-r2 case} is given in Section \ref{section:4}, the proofs of Theorems \ref{thm:r1 hard} and \ref{thm:r1r1 hard} are given in Section \ref{section:r1r1 1/n}, and the proofs of Theorems \ref{thm:r1 semi-hard} and \ref{thm:r1r1 semi-hard} are given in Section \ref{section:r1r1 1/sqrtn}.

\section{Asymptotics of $h_{j}$}\label{section:hj}
Recall that $h_{j}$ is defined in \eqref{def of hj}. Following \cite{CharlierFH, CharlierAnnuli}, we introduce the following quantities: for $j=1,\ldots,n$ and $\ell=1,2$, we define
\begin{align}\label{def of aj lambdajl etajl}
a_{j} := \frac{j+\alpha}{b}, \qquad \lambda_{j,\ell} := \frac{bnr_{\ell}^{2b}}{j+\alpha}, \qquad \eta_{j,\ell} := (\lambda_{j,\ell}-1)\sqrt{\frac{2 (\lambda_{j,\ell}-1-\log \lambda_{j,\ell})}{(\lambda_{j,\ell}-1)^{2}}}.
\end{align}
Let $\epsilon > 0$ be a small constant independent of $n$. Define
\begin{align}
& j_{\ell,-}:=\lceil \tfrac{bnr_{\ell}^{2b}}{1+\epsilon} - \alpha \rceil, & & j_{\ell,+} := \lfloor  \tfrac{bnr_{\ell}^{2b}}{1-\epsilon} - \alpha \rfloor, & & \ell=1,2, \label{def of jk plus and minus} \\
& j_{0,-}:=1, & & j_{0,+}:=M', & & j_{3,-}:=n+1, \nonumber
\end{align}
where $\lceil x \rceil$ denotes the smallest integer $\geq x$, $\lfloor  x \rfloor$ denotes the largest integer $\leq x$, and $M'$ is a large but fixed constant. We take $\epsilon$ sufficiently small such that
\begin{align}\label{cond on epsilon 1}
\frac{br_{1}^{2b}}{1-\epsilon} < \frac{br_{2}^{2b}}{1+\epsilon},  \qquad \mbox{and} \qquad \frac{br_{2}^{2b}}{1-\epsilon} < 1.
\end{align}
The quantity $\sigma_{\star}$, which we recall is defined in \eqref{def of taustar}, will appear naturally in our analysis. It is easy to check that $\sigma_{\star} \in (br_{1}^{2b},br_{2}^{2b})$. For technical reasons, we also assume $\epsilon>0$ is small enough so that
\begin{align}\label{cond on epsilon 2}
\frac{br_{1}^{2b}}{1-\epsilon} < \sigma_{\star} < \frac{br_{2}^{2b}}{1+\epsilon}.
\end{align}
Recall also that $j_{\star} := n \sigma_{\star} -\alpha$, and note that
\begin{align}\label{asymp etajk-etajkm1 r1r1}
\frac{a_{j}(\eta_{j,2}^{2}-\eta_{j,1}^{2})}{2} = 2(j_{\star}-j) \log\frac{r_{2}}{r_{1}}.
\end{align}
Let $M$ be such that $M'\sqrt{\log n}\leq M \leq n^{\frac{1}{10}}$, and define
\begin{align*}
g_{\ell,-} := \bigg\lceil \frac{bnr_{\ell}^{2b}}{1+\frac{M}{\sqrt{n}}}-\alpha \bigg\rceil, \qquad g_{\ell,+} := \bigg\lfloor \frac{bnr_{\ell}^{2b}}{1-\frac{M}{\sqrt{n}}}-\alpha \bigg\rfloor, \qquad \ell=1,2,
\end{align*}
and
\begin{align}\label{def of Mjk}
M_{j,k} := \sqrt{n}(\lambda_{j,k}-1), \qquad \mbox{for all } k \in \{1,2\} \mbox{ and } j \in \{g_{k,-},\ldots,g_{k,+}\}.
\end{align}
In this section, $M$ can be arbitrary within the range $M \in [M'\sqrt{\log n}, n^{\frac{1}{10}}]$, but we already mention that in Section \ref{section:4} and Subsection \ref{subsection:5.2} we will take $M=M'\sqrt{\log n}$, in Subsection \ref{subsection:5.1} we will take $M=n^{\frac{1}{10}}$, and in Section \ref{section:r1r1 1/sqrtn} we will take $M=M' \log n$. For $k=1,2$ and $g_{k,-}\leq j \leq g_{k,+}$, we also define
\begin{align*}
\chi_{j,k}^{-} := \chi_{(-\infty,0)}(M_{j,k}), \qquad \chi_{j,k}^{+} := \chi_{(0,+\infty)}(M_{j,k}).
\end{align*}
We start with the following useful lemma.
\begin{lemma}\label{lemma:asymp prefactor}
As $n \to + \infty$ with $M' \leq j \leq n$, we have
\begin{align}\label{lol10}
\frac{n^{\frac{j+\alpha}{b}}}{\Gamma (\frac{j+\alpha}{b})} = \frac{\sqrt{n} \, e^{\frac{n}{b}(j/n-j/n \log(\frac{j/n}{b}))}}{\sqrt{2\pi}}\bigg( \frac{b}{j/n} \bigg)^{\frac{\alpha}{b}} \sqrt{\frac{j/n}{b}} \bigg( 1 + \frac{-b^{2}+6b \alpha - 6 \alpha^{2}}{12 b j/n} \frac{1}{n} + \bigO(j^{-2}) \bigg).
\end{align}
As $n \to + \infty$ with $g_{k,-} \leq j \leq g_{k,+}$, $k=1,2$, we have
\begin{align}
& \frac{n^{\frac{j+\alpha}{b}}}{\Gamma (\frac{j+\alpha}{b})} = e^{r_{k}^{2b}(1-\log(r_{k}^{2b}))n}e^{M_{j,k} r_{k}^{2b} \log(r_{k}^{2b})\sqrt{n}}e^{-r_{k}^{2b}(\frac{1}{2}+\log(r_{k}^{2b}))M_{j,k}^{2}} \frac{r_{k}^{b}\sqrt{n}}{\sqrt{2\pi}} \nonumber \\
& \times \bigg( 1+\frac{M_{j,k}}{6}(-3+5M_{j,k}^{2}r_{k}^{2b}+6M_{j,k}^{2}r_{k}^{2b}\log(r_{k}^{2b})) \frac{1}{\sqrt{n}} + \frac{1}{n}\bigg\{ r_{k}^{4b} M_{j,k}^{6} \bigg( \frac{25}{72} + \frac{5\log(r_{k}^{2b})}{6} + \frac{(\log(r_{k}^{2b}))^{2}}{2} \bigg)  \nonumber \\
& - \frac{3}{2}r_{k}^{2b}M_{j,k}^{4} (1+\log(r_{k}^{2b})) \bigg\} + \frac{125+450 \log(r_{k}^{2b}) + 540 \log^{2}(r_{k}^{2b}) + 216\log^{3}(r_{k}^{2b})}{1296}r_{k}^{6b}M_{j,k}^{9}n^{-3/2} \nonumber \\
& + \bigO\bigg(\frac{1+M_{j,k}^{2}}{n} + \frac{1+M_{j,k}^{7}}{n^{3/2}} + \frac{1+M_{j,k}^{12}}{n^{2}} \bigg) \bigg). \label{lol11}
\end{align}
\end{lemma}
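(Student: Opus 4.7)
The plan is to derive both estimates from the classical Stirling asymptotic expansion
\[
\log\Gamma(z)=(z-\tfrac{1}{2})\log z-z+\tfrac{1}{2}\log(2\pi)+\frac{1}{12z}-\frac{1}{360 z^{3}}+O(z^{-5}),\qquad z\to\infty,
\]
applied to $z=a_{j}=(j+\alpha)/b$. In both regimes $a_{j}\to\infty$ uniformly (since $j\geq M'$ with $M'$ taken large in \eqref{lol10}, and $j\asymp n$ in \eqref{lol11}), so Stirling is valid with a controlled remainder. The two statements then differ only in which small parameter one uses for the subsequent Taylor expansion.

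For \eqref{lol10}, set $t=j/n$ so that $a_{j}=nt/b+\alpha/b$ and $\log(n/a_{j})=\log(b/t)-\log(1+\alpha/(nt))$. Expanding the latter logarithm through order $(\alpha/(nt))^{2}$ and multiplying by $a_{j}$, I obtain
\[
a_{j}\log(n/a_{j})+a_{j}=\frac{nt}{b}\bigl(1+\log(b/t)\bigr)+\frac{\alpha}{b}\log(b/t)-\frac{\alpha^{2}}{2bnt}+O(n^{-2}).
\]
Similarly $\tfrac{1}{2}\log a_{j}=\tfrac{1}{2}\log n+\tfrac{1}{2}\log(t/b)+\alpha/(2nt)+O(n^{-2})$ and $-1/(12 a_{j})=-b/(12nt)+O(n^{-2})$. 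Collecting the $(nt)^{-1}$ coefficients gives
\[
-\frac{\alpha^{2}}{2b}+\frac{\alpha}{2}-\frac{b}{12}=\frac{-b^{2}+6b\alpha-6\alpha^{2}}{12b},
\]
and exponentiating produces \eqref{lol10}, with the omitted terms being uniformly $O(j^{-2})$ for $j\geq M'$.

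For \eqref{lol11} the expansion must be carried out to considerably higher accuracy. In this regime, since $\lambda_{j,k}=bnr_{k}^{2b}/(j+\alpha)$, one has $a_{j}=nr_{k}^{2b}/\lambda_{j,k}=nr_{k}^{2b}/(1+\varepsilon)$ with $\varepsilon:=M_{j,k}/\sqrt{n}$ and $|M_{j,k}|\leq M\leq n^{1/10}$, so all relevant quantities become power series in the small parameter $\varepsilon$. I would expand
\[
a_{j}=nr_{k}^{2b}\sum_{m\geq 0}(-\varepsilon)^{m},\qquad \log(n/a_{j})=-\log(r_{k}^{2b})+\sum_{m\geq 1}\tfrac{(-1)^{m+1}}{m}\varepsilon^{m},
\]
substitute into $\log(n^{a_{j}}/\Gamma(a_{j}))=a_{j}\log(n/a_{j})+a_{j}+\tfrac{1}{2}\log a_{j}-\tfrac{1}{2}\log(2\pi)-1/(12 a_{j})+O(a_{j}^{-3})$, and regroup by order in $n$. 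The three leading exponential factors in \eqref{lol11} come respectively from the $\varepsilon^{0}$, $\varepsilon^{1}$ and $\varepsilon^{2}$ contributions to $a_{j}\log(n/a_{j})+a_{j}$, whose sizes are $n$, $M_{j,k}\sqrt{n}$ and $M_{j,k}^{2}$; the prefactor $r_{k}^{b}\sqrt{n}/\sqrt{2\pi}$ comes from $\tfrac{1}{2}\log a_{j}-\tfrac{1}{2}\log(2\pi)$ at leading order; and the polynomial corrections of orders $n^{-1/2},n^{-1},n^{-3/2}$ inside the bracket are obtained by applying $\exp$ to the remaining higher-order contributions to the exponent, together with the $1/(12 a_{j})$ Stirling term.

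The main obstacle is the bookkeeping. Because the higher-order terms in the exponent have the form $M_{j,k}^{p}/n^{(p-2)/2}$ (from $n\varepsilon^{p}$ with $p\geq 3$), the Taylor expansion of $\exp$ generates, at order $n^{-q/2}$ inside the bracket, polynomials in $M_{j,k}$ of degree up to $3q$ coming from $(M_{j,k}^{3}/\sqrt{n})^{q}$. This explains the explicit leading monomial $M_{j,k}^{9}n^{-3/2}$ and also the shape of the three error terms: $O((1+M_{j,k}^{2})/n)$ absorbs lower-degree subleading contributions at order $1/n$, $O((1+M_{j,k}^{7})/n^{3/2})$ collects the next omitted mixed contributions from $\varepsilon$-expansions interacting with $\tfrac{1}{2}\log a_{j}$ and $1/(12a_{j})$, and $O((1+M_{j,k}^{12})/n^{2})$ comes from $(c_{3}M_{j,k}^{3}/\sqrt{n})^{4}$ together with the $1/a_{j}^{3}$ Stirling tail. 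Since $M_{j,k}=O(n^{1/10})$, each such term is $o(1)$ uniformly, which makes the expansion legitimate; tracking which monomials appear at which order is what makes this lemma tedious rather than technically difficult.
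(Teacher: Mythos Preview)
Your proposal is correct and follows essentially the same route as the paper: both derive \eqref{lol10} and \eqref{lol11} from Stirling's asymptotic expansion for $\Gamma(a_j)$, with \eqref{lol11} obtained after substituting $a_j = nr_k^{2b}/(1+M_{j,k}/\sqrt{n})$ (equivalently, the paper's relation $j/n = -\alpha/n + br_k^{2b}/(1+M_{j,k}/\sqrt{n})$) and expanding in the small parameter $M_{j,k}/\sqrt{n}$. Your explicit verification of the $(nt)^{-1}$ coefficient in \eqref{lol10} and your explanation of how the monomial degrees $M_{j,k}^{3q}$ arise at order $n^{-q/2}$, and hence why the error terms have the stated form, in fact spell out more detail than the paper's proof, which simply invokes Stirling and the change of variables.
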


\begin{proof}
Formula \eqref{lol10} directly follows from the well-known large $j$ asymptotics of $\Gamma(\frac{j+\alpha}{b})$ (see e.g. \cite[5.11.1]{NIST}), and \eqref{lol11} directly follows from the large $j$ asymptotics of $\Gamma(\frac{j+\alpha}{b})$ and the fact that, by \eqref{def of aj lambdajl etajl} and \eqref{def of Mjk}, we have
\begin{align*}
\frac{j}{n} = -\frac{\alpha}{n} + \frac{br_{k}^{2b}}{1+\frac{M_{j,k}}{\sqrt{n}}}, \qquad g_{k,-} \leq j \leq g_{k,+}, \; k=1,2.
\end{align*}
\end{proof}
\begin{lemma}\label{lemma: asymp of hjinv} $M'$ can be chosen sufficiently large such that the following asymptotics hold uniformly for $M'\sqrt{\log n}\leq M \leq n^{\frac{1}{10}}$.
\begin{enumerate}
\item Let $j$ be fixed. As $n \to + \infty$, we have
\begin{align}\label{hj asymp 1}
h_{j}^{-1} = \frac{b n^{\frac{j+\alpha}{b}}}{\Gamma (\frac{j+\alpha}{b})} \bigg( 1 + \bigO(n^{\frac{j+\alpha}{b}-1}e^{-nr_{1}^{2b}}) \bigg).
\end{align}
\item As $n \to + \infty$ with $M' \leq j \leq j_{1,-}$, we have
\begin{align}
h_{j}^{-1} & = \frac{b n^{\frac{j+\alpha}{b}}}{\Gamma (\frac{j+\alpha}{b})} \bigg( 1 + \bigO(n^{-\frac{1}{2}}e^{-nr_{1}^{2b}\frac{\epsilon - \log(1+\epsilon)}{1+\epsilon}} \bigg) \nonumber \\
& = b \frac{\sqrt{n} \, e^{\frac{n}{b}(j/n-j/n \log(\frac{j/n}{b}))}}{\sqrt{2\pi}}\bigg( \frac{b}{j/n} \bigg)^{\frac{\alpha}{b}} \sqrt{\frac{j/n}{b}} \bigg( 1 + \frac{-b^{2}+6b \alpha - 6 \alpha^{2}}{12 b j/n} \frac{1}{n} + \bigO(j^{-2}) \bigg). \label{hj asymp 2}
\end{align}
\item As $n \to + \infty$ with $j_{1,-} \leq j \leq g_{1,-}$, we have
\begin{align}
h_{j}^{-1} & = \frac{b n^{\frac{j+\alpha}{b}}}{\Gamma (\frac{j+\alpha}{b})} \bigg( 1 + \bigO(n^{-100}) \bigg) \nonumber \\
& = b \frac{\sqrt{n} \, e^{\frac{n}{b}(j/n-j/n \log(\frac{j/n}{b}))}}{\sqrt{2\pi}}\bigg( \frac{b}{j/n} \bigg)^{\frac{\alpha}{b}} \sqrt{\frac{j/n}{b}} \bigg( 1 + \frac{-b^{2}+6b \alpha - 6 \alpha^{2}}{12 b j/n} \frac{1}{n} + \bigO(n^{-2}) \bigg). \label{hj asymp 3}
\end{align}
\item As $n \to + \infty$ with $g_{1,-} \leq j \leq g_{1,+}$, we have
\begin{align}
& h_{j}^{-1} = \frac{b n^{\frac{j+\alpha}{b}}}{\Gamma (\frac{j+\alpha}{b})} \frac{1}{\frac{1}{2}\mathrm{erfc}(-\frac{M_{j,1}r_{1}^{b}}{\sqrt{2}}) } \bigg( 1 + \frac{(5M_{j,1}^{2}r_{1}^{2b}-2)e^{-\frac{r_{1}^{2b}M_{j,1}^{2}}{2}}}{3\sqrt{2\pi} r_{1}^{b} \mathrm{erfc}(-\frac{M_{j,1}r_{1}^{b}}{\sqrt{2}}) \sqrt{n}} + \frac{\chi_{j,1}^{-}}{n} \bigg\{ \frac{25r_{1}^{4b}M_{j,1}^{6}}{72}  \nonumber \\
& + \frac{3r_{1}^{2b}M_{j,1}^{4}}{2} \bigg\} - \chi_{j,1}^{-}\frac{125r_{1}^{6b}M_{j,1}^{9}}{1296n^{3/2}} + \bigO\bigg(\frac{1+M_{j,1}^{2}}{n}+\frac{1+|M_{j,1}^{7}|}{n^{3/2}} + \frac{1+M_{j,1}^{12}}{n^{2}}\bigg)     \bigg) \nonumber \\
& = b \frac{e^{r_{1}^{2b}(1-\log(r_{1}^{2b}))n}e^{M_{j,1} r_{1}^{2b} \log(r_{1}^{2b})\sqrt{n}}e^{-r_{1}^{2b}(\frac{1}{2}+\log(r_{1}^{2b}))M_{j,1}^{2}} r_{1}^{b}\sqrt{n}}{\frac{1}{2}\mathrm{erfc}(-\frac{M_{j,1}r_{1}^{b}}{\sqrt{2}})\sqrt{2\pi}} \nonumber \\
& \times \bigg( 1 + \frac{1}{\sqrt{n}}\bigg\{\frac{(5M_{j,1}^{2}r_{1}^{2b}-2)e^{-\frac{r_{1}^{2b}M_{j,1}^{2}}{2}}}{3\sqrt{2\pi} r_{1}^{b} \mathrm{erfc}(-\frac{M_{j,1}r_{1}^{b}}{\sqrt{2}}) } + M_{j,1}^{3}r_{1}^{2b}\log(r_{1}^{2b}) - \frac{M_{j,1}}{2} + \frac{5M_{j,1}^{3}r_{1}^{2b}}{6} \bigg\} \nonumber \\
& + \frac{\frac{r_{1}^{4b}}{2} (\log(r_{1}^{2b}))^{2} M_{j,1}^{6} - 2 r_{1}^{2b} \log(r_{1}^{2b})M_{j,1}^{4}}{n} + \frac{r_{1}^{6b} (\log(r_{1}^{2b}))^{3}M_{j,1}^{9}}{6n^{3/2}} \nonumber \\
& + \bigO\Big(\frac{1+M_{j,1}^{2}+M_{j,1}^{6}\chi_{j,1}^{+}}{n}+\frac{1+|M_{j,1}^{7}|+M_{j,1}^{9}\chi_{j,1}^{+}}{n^{3/2}} + \frac{1+M_{j,1}^{12}}{n^{2}}\Big)   \bigg). \label{hj asymp 4}
\end{align}
\item As $n \to + \infty$ with $g_{1,+} \leq j \leq j_{1,+}$, we have
\begin{align}
& h_{j}^{-1} = \frac{b n^{\frac{j+\alpha}{b}}}{\Gamma (\frac{j+\alpha}{b})} \sqrt{n} \, e^{\frac{n}{b}(br_{1}^{2b}-j/n+j/n \log (\frac{j/n}{br_{1}^{2b}}))} \frac{\sqrt{2\pi}}{\sqrt{b \, j/n}} \bigg( \frac{j/n}{br_{1}^{2b}} \bigg)^{\frac{\alpha}{b}} ( j/n - br_{1}^{2b} )  \nonumber \\
& \hspace{1cm} \times \bigg( 1 + \frac{12b^{3} r_{1}^{2b} j/n + 6b((j/n)^{2}-(b r_{1}^{2b})^{2})\alpha + (b^{2}+6\alpha^{2}) (j/n-br_{1}^{2b})^{2}}{12 b j/n (j/n-br_{1}^{2b})^{2} n} \nonumber \\
& \hspace{1cm} - \frac{2 b^{4} r_{1}^{4b}}{n^{2}(j/n-br_{1}^{2b})^{4}} + \frac{10b^{6}r_{1}^{6b}}{n^{3}(j/n-br_{1}^{2b})^{6}} + \bigO\Big( \frac{1}{n^{2}(j/n-br_{1}^{2b})^{3}} + \frac{1}{n^{4}(j/n-br_{1}^{2b})^{8}} \Big) \bigg) \nonumber \\
& = \frac{n(j/n-br_{1}^{2b})}{r_{1}^{2\alpha}}e^{n(r_{1}^{2b}-2 j/n \log(r_{1}))} \bigg( 1 + \frac{b^{2}r_{1}^{2b}+(j/n-br_{1}^{2b})\alpha}{n(j/n-br_{1}^{2b})^{2}} - \frac{2 b^{4} r_{1}^{4b}}{n^{2}(j/n-br_{1}^{2b})^{4}} \nonumber \\
& + \frac{10b^{6}r_{1}^{6b}}{n^{3}(j/n-br_{1}^{2b})^{6}} + \bigO\Big( \frac{1}{n^{2}(j/n-br_{1}^{2b})^{3}} + \frac{1}{n^{4}(j/n-br_{1}^{2b})^{8}} \Big) \bigg). \label{hj asymp 5}
\end{align}
\item As $n \to + \infty$ with $j_{1,+} \leq j \leq \lfloor j_{\star}\rfloor$, we have
\begin{align}
& h_{j}^{-1} = \frac{b n^{\frac{j+\alpha}{b}}}{\Gamma (\frac{j+\alpha}{b})} \sqrt{n} \, e^{\frac{n}{b}(br_{1}^{2b}-j/n+j/n \log (\frac{j/n}{br_{1}^{2b}}))} \frac{\sqrt{2\pi}}{\sqrt{b \, j/n}} \frac{\big( \frac{j/n}{br_{1}^{2b}} \big)^{\frac{\alpha}{b}}}{\frac{1}{j/n-br_{1}^{2b}}+\frac{(\frac{r_{1}}{r_{2}})^{2(j_{\star}-j)}}{br_{2}^{2b}-j/n}} \nonumber \\
& \hspace{1cm} \times \bigg( 1 + \frac{12b^{3} r_{1}^{2b} j/n + 6b((j/n)^{2}-(b r_{1}^{2b})^{2})\alpha + (b^{2}+6\alpha^{2}) (j/n-br_{1}^{2b})^{2}}{12 b j/n (j/n-br_{1}^{2b})^{2} n} \nonumber \\
& \hspace{1cm} + \bigO\Big( \frac{(\frac{r_{1}}{r_{2}})^{2(j_{\star}-j)}}{n} + \frac{1}{n^{2}} \Big) \bigg) \nonumber \\
& = \frac{n}{r_{1}^{2\alpha}}\frac{e^{n(r_{1}^{2b}-2 j/n \log(r_{1}))}}{\frac{1}{j/n-br_{1}^{2b}}+\frac{(\frac{r_{1}}{r_{2}})^{2(j_{\star}-j)}}{br_{2}^{2b}-j/n}} \bigg( 1 + \frac{b^{2}r_{1}^{2b}+(j/n-br_{1}^{2b})\alpha}{n(j/n-br_{1}^{2b})^{2}} + \bigO\Big( \frac{(\frac{r_{1}}{r_{2}})^{2(j_{\star}-j)}}{n} + \frac{1}{n^{2}} \Big) \bigg). \label{hj asymp 6}
\end{align}
\item As $n \to + \infty$ with $\lfloor j_{\star}\rfloor+1 \leq j \leq j_{2,-}$, we have
\begin{align}
& h_{j}^{-1} = \frac{b n^{\frac{j+\alpha}{b}}}{\Gamma (\frac{j+\alpha}{b})} \sqrt{n} \, e^{\frac{n}{b}(br_{2}^{2b}-j/n+j/n \log (\frac{j/n}{br_{2}^{2b}}))} \frac{\sqrt{2\pi}}{\sqrt{b \, j/n}} \frac{\big( \frac{j/n}{br_{2}^{2b}} \big)^{\frac{\alpha}{b}}}{\frac{1}{br_{2}^{2b} - j/n}+\frac{(\frac{r_{1}}{r_{2}})^{2(j-j_{\star})}}{j/n-br_{1}^{2b}}} \nonumber \\
& \hspace{1cm} \times \bigg( 1 + \frac{12b^{3} r_{2}^{2b} j/n + 6b((j/n)^{2}-(b r_{2}^{2b})^{2})\alpha + (b^{2}+6\alpha^{2}) (j/n-br_{2}^{2b})^{2}}{12 b j/n (j/n-br_{2}^{2b})^{2} n} \nonumber \\
& \hspace{1cm} + \bigO\Big( \frac{(\frac{r_{1}}{r_{2}})^{2(j-j_{\star})}}{n} + \frac{1}{n^{2}} \Big) \bigg) \nonumber \\
& = \frac{n}{r_{2}^{2\alpha}} \frac{e^{n(r_{2}^{2b}-2j/n \log(r_{2}))}}{\frac{1}{br_{2}^{2b} - j/n}+\frac{(\frac{r_{1}}{r_{2}})^{2(j-j_{\star})}}{j/n-br_{1}^{2b}}} \bigg( 1 + \frac{b^{2}r_{2}^{2b}+(j/n-br_{2}^{2b})\alpha}{n(j/n-br_{2}^{2b})^{2}} + \bigO\Big( \frac{(\frac{r_{1}}{r_{2}})^{2(j-j_{\star})}}{n} + \frac{1}{n^{2}} \Big) \bigg). \label{hj asymp 7}
\end{align}
\item As $n \to + \infty$ with $j_{2,-} \leq j \leq g_{2,-}$, we have
\begin{align}
& h_{j}^{-1} = \frac{b n^{\frac{j+\alpha}{b}}}{\Gamma (\frac{j+\alpha}{b})} \sqrt{n} \, e^{\frac{n}{b}(br_{2}^{2b}-j/n+j/n \log (\frac{j/n}{br_{2}^{2b}}))} \frac{\sqrt{2\pi}}{\sqrt{b \, j/n}} \bigg( \frac{j/n}{br_{2}^{2b}} \bigg)^{\frac{\alpha}{b}} ( br_{2}^{2b} - j/n )  \nonumber \\
& \hspace{1cm} \times \bigg( 1 + \frac{12b^{3} r_{2}^{2b} j/n + 6b((j/n)^{2}-(b r_{2}^{2b})^{2})\alpha + (b^{2}+6\alpha^{2}) (j/n-br_{2}^{2b})^{2}}{12 b j/n (j/n-br_{2}^{2b})^{2} n} \nonumber \\
& \hspace{1cm} + \bigO\Big( \frac{1}{n^{2}(j/n-br_{2}^{2b})^{4}} \Big) \bigg) \nonumber \\
& = \frac{n(br_{2}^{2b}-j/n)}{r_{2}^{2\alpha}} e^{n(r_{2}^{2b}-2j/n \log(r_{2}))} \bigg( 1 + \frac{b^{2}r_{2}^{2b}+(j/n-br_{2}^{2b})\alpha}{n(j/n-br_{2}^{2b})^{2}} + \bigO\Big( \frac{1}{n^{2}(j/n-br_{2}^{2b})^{4}} \Big) \bigg). \label{hj asymp 8}
\end{align}
\item As $n \to + \infty$ with $g_{2,-} \leq j \leq g_{2,+}$, we have
\begin{align}
& h_{j}^{-1} = \frac{b n^{\frac{j+\alpha}{b}}}{\Gamma (\frac{j+\alpha}{b})} \frac{1}{1-\frac{1}{2}\mathrm{erfc}(-\frac{M_{j,2}r_{2}^{b}}{\sqrt{2}}) } \bigg( 1 + \frac{(2-5M_{j,2}^{2}r_{2}^{2b})e^{-\frac{r_{2}^{2b}M_{j,2}^{2}}{2}}}{3\sqrt{2\pi} r_{2}^{b} (2-\mathrm{erfc}(-\frac{M_{j,2}r_{2}^{b}}{\sqrt{2}})) \sqrt{n}} + \frac{\chi_{j,2}^{+}}{n} \bigg\{ \frac{25r_{2}^{4b}M_{j,2}^{6}}{72} \nonumber \\
& + \frac{3r_{2}^{2b}M_{j,2}^{4}}{2} \bigg\} - \chi_{j,2}^{+}\frac{125r_{2}^{6b}M_{j,2}^{9}}{1296n^{3/2}} + \bigO\bigg(\frac{1+M_{j,2}^{2}}{n} + \frac{1+|M_{j,2}^{7}|}{n^{3/2}} + \frac{1+M_{j,2}^{12}}{n^{2}} \bigg)  \bigg) \nonumber \\
& = b \frac{e^{r_{2}^{2b}(1-\log(r_{2}^{2b}))n}e^{M_{j,2} r_{2}^{2b} \log(r_{2}^{2b})\sqrt{n}}e^{-r_{2}^{2b}(\frac{1}{2}+\log(r_{2}^{2b}))M_{j,2}^{2}} r_{2}^{b}\sqrt{n}}{(1-\frac{1}{2}\mathrm{erfc}(-\frac{M_{j,2}r_{2}^{b}}{\sqrt{2}}))\sqrt{2\pi}} \nonumber \\
& + \frac{\frac{r_{2}^{4b}}{2} (\log(r_{2}^{2b}))^{2} M_{j,2}^{6} - 2 r_{2}^{2b} \log(r_{2}^{2b})M_{j,2}^{4}}{n} + \frac{r_{2}^{6b} (\log(r_{2}^{2b}))^{3}M_{j,2}^{9}}{6n^{3/2}} \nonumber \\
& + \bigO\Big(\frac{1+M_{j,2}^{2}+M_{j,2}^{6}\chi_{j,2}^{-}}{n}+\frac{1+|M_{j,2}^{7}|+M_{j,2}^{9}\chi_{j,2}^{-}}{n^{3/2}} + \frac{1+M_{j,2}^{12}}{n^{2}}\Big)  \bigg). \label{hj asymp 9}
\end{align}
\item As $n \to + \infty$ with $g_{2,+} \leq j \leq j_{2,+}$, we have
\begin{align}
h_{j}^{-1} & = \frac{b n^{\frac{j+\alpha}{b}}}{\Gamma (\frac{j+\alpha}{b})} \bigg( 1 + \bigO(n^{-100}) \bigg) \nonumber \\
& = b \frac{\sqrt{n} \, e^{\frac{n}{b}(j/n-j/n \log(\frac{j/n}{b}))}}{\sqrt{2\pi}}\bigg( \frac{b}{j/n} \bigg)^{\frac{\alpha}{b}} \sqrt{\frac{j/n}{b}} \bigg( 1 + \frac{-b^{2}+6b \alpha - 6 \alpha^{2}}{12 b j/n} \frac{1}{n} + \bigO(n^{-2}) \bigg). \label{hj asymp 10}
\end{align}
\item As $n \to + \infty$ with $j_{2,+} \leq j \leq n$, we have
\begin{align}
h_{j}^{-1} & = \frac{b n^{\frac{j+\alpha}{b}}}{\Gamma (\frac{j+\alpha}{b})} \bigg( 1 + \bigO(n^{-\frac{1}{2}}e^{-nr_{2}^{2b}\frac{-\epsilon - \log(1-\epsilon)}{1-\epsilon}} \bigg) \nonumber \\
& = b \frac{\sqrt{n} \, e^{\frac{n}{b}(j/n-j/n \log(\frac{j/n}{b}))}}{\sqrt{2\pi}}\bigg( \frac{b}{j/n} \bigg)^{\frac{\alpha}{b}} \sqrt{\frac{j/n}{b}} \bigg( 1 + \frac{-b^{2}+6b \alpha - 6 \alpha^{2}}{12 b j/n} \frac{1}{n} + \bigO(n^{-2}) \bigg). \label{hj asymp 11}
\end{align}
\end{enumerate}
\end{lemma}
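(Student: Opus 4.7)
My plan is to use the decomposition
\[
h_j = \frac{1}{bn^{a_j}}\Bigl( \gamma(a_j, nr_1^{2b}) + \bigl[\Gamma(a_j) - \gamma(a_j, nr_2^{2b})\bigr] \Bigr),
\]
which is identical to \eqref{def of hj} but emphasizes that $h_j$ is the sum of an ``inner tail'' and an ``outer tail'' of the full gamma integral, reflecting the two-component support of the potential $Q$. Setting $P_{j,k} := \gamma(a_j, nr_k^{2b})/\Gamma(a_j)$, this gives
\[
h_j^{-1} = \frac{bn^{a_j}}{\Gamma(a_j)}\cdot \frac{1}{P_{j,1} + 1 - P_{j,2}},
\]
and the Stirling-type prefactor $bn^{a_j}/\Gamma(a_j)$ is directly controlled by Lemma \ref{lemma:asymp prefactor}, using \eqref{lol10} when $j$ is away from the critical values $bnr_k^{2b}$ and the refined version \eqref{lol11} when $j\in[g_{k,-},g_{k,+}]$. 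The entire substance of the lemma is therefore the uniform asymptotics of the quantity $P_{j,1} + 1 - P_{j,2}$ across the eleven ranges.

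To expand this quantity I would invoke the classical uniform expansions of $P(a,z)=\gamma(a,z)/\Gamma(a)$ collected in Appendix \ref{appendix:incomplete gamma}, parametrized by $\lambda_{j,k}$ and the Temme variable $\eta_{j,k}$ from \eqref{def of aj lambdajl etajl}. These expansions split into three regimes: (a) for $\lambda_{j,k}>1$ bounded away from $1$, one has $1-P_{j,k}\sim e^{-nr_k^{2b}}(nr_k^{2b})^{a_j-1}/((nr_k^{2b}-a_j)\,\Gamma(a_j))$ times a rational-function power series in $a_j/(nr_k^{2b})$; (b) for $\lambda_{j,k}<1$ bounded away from $1$, one has $P_{j,k}$ given by the symmetric formula with roles reversed; (c) in the transition regime $|\lambda_{j,k}-1| = \bigO(n^{-1/2})$, Temme's uniform expansion yields $P_{j,k} = \tfrac12\mathrm{erfc}(-\eta_{j,k}\sqrt{a_j/2})$ plus correction terms in powers of $a_j^{-1/2}$ whose leading coefficient involves $e^{-r_k^{2b}M_{j,k}^2/2}/(\sqrt{2\pi}\,r_k^b)$, precisely matching the subleading term in \eqref{hj asymp 4} and \eqref{hj asymp 9}. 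Substituting these into $(P_{j,1}+1-P_{j,2})^{-1}$ and combining with Lemma \ref{lemma:asymp prefactor} produces each of the eleven claimed formulas after elementary simplification.

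The proof then proceeds range-by-range with a clear pattern. In ranges (i)--(iii) we have $\lambda_{j,k}>1$ uniformly bounded away from $1$ for both $k$, so $P_{j,1}$ is exponentially close to $1$ and $1-P_{j,2}$ is exponentially small, yielding $h_j^{-1}$ equal to $bn^{a_j}/\Gamma(a_j)$ up to exponentially small errors. Range (iv) is the Temme transition zone near $\lambda_{j,1}=1$, with $1-P_{j,2}$ still negligible; range (ix) is symmetric. In ranges (v)--(viii) both $P_{j,1}$ and $1-P_{j,2}$ are small: their leading tail asymptotics produce terms $e^{-nr_1^{2b}}(nr_1^{2b})^{a_j}/((a_j - nr_1^{2b})\Gamma(a_j))$ and $e^{-nr_2^{2b}}(nr_2^{2b})^{a_j}/((nr_2^{2b} - a_j)\Gamma(a_j))$, and the identity \eqref{asymp etajk-etajkm1 r1r1} shows that their ratio equals $(r_1/r_2)^{2(j_\star - j)}$ to leading order. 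This is precisely what produces the mixed-denominator structure $1/(j/n-br_1^{2b}) + (r_1/r_2)^{2(j_\star-j)}/(br_2^{2b}-j/n)$ in \eqref{hj asymp 6} and its counterpart in \eqref{hj asymp 7}. Ranges (x)--(xi) are symmetric to (i)--(iii).

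The main obstacles I expect are concentrated in two places. First, in the transition ranges (iv) and (ix), Temme's expansion must be pushed to order $n^{-3/2}$ with error terms that scale correctly in $M_{j,k}$ as shown in \eqref{hj asymp 4} and \eqref{hj asymp 9}; the higher-order coefficients involving $r_k^{4b}M_{j,k}^6$ and $r_k^{6b}M_{j,k}^9$ require careful bookkeeping to separate the parts that combine with Stirling and those that remain as explicit correction terms. Second, in ranges (vi) and (vii) around $j=\lfloor j_\star\rfloor$, neither $P_{j,1}$ nor $1-P_{j,2}$ dominates; the balance must be resolved using \eqref{asymp etajk-etajkm1 r1r1} while simultaneously controlling the subleading rational corrections and checking that the error $\bigO((r_1/r_2)^{2(j_\star-j)}/n + 1/n^2)$ is uniform down to $j=\lfloor j_\star\rfloor + 1$. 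The remaining ranges reduce to routine bookkeeping, with \eqref{cond on epsilon 1}--\eqref{cond on epsilon 2} ensuring that the exponential decay rates separating the various regimes are uniform in $j$.
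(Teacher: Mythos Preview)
Your proposal is correct and follows essentially the same route as the paper: the paper also writes $h_j^{-1} = \frac{bn^{a_j}}{\Gamma(a_j)}\big(P_{j,1}-P_{j,2}+1\big)^{-1}$, invokes the uniform incomplete-gamma asymptotics of Appendix~\ref{appendix:incomplete gamma} (Lemmas~\ref{lemma:various regime of gamma}, \ref{lemma: uniform}, \ref{lemma: asymp of gamma for lambda bounded away from 1}) range-by-range, and combines with Lemma~\ref{lemma:asymp prefactor} for the second forms. Your identification of \eqref{asymp etajk-etajkm1 r1r1} as the mechanism producing the mixed denominator in ranges (vi)--(vii), and of ranges (iv), (ix) as requiring the Temme expansion pushed to higher order in $M_{j,k}$, matches exactly what the paper does (the latter being where the paper also spells out the most intermediate detail).
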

\begin{proof}
Note from \eqref{def of hj} and \eqref{def of aj lambdajl etajl} that
\begin{align}
h_{j}^{-1} & = \frac{b n^{\frac{j+\alpha}{b}}}{\Gamma(\tfrac{j+\alpha}{b})} \bigg( \frac{\gamma(\tfrac{j+\alpha}{b},nr_{1}^{2b})}{\Gamma(\tfrac{j+\alpha}{b})} - \frac{\gamma(\tfrac{j+\alpha}{b},nr_{2}^{2b})}{\Gamma(\tfrac{j+\alpha}{b})} + 1 \bigg)^{-1} \nonumber \\
& = \frac{b n^{\frac{j+\alpha}{b}}}{\Gamma(a_{j})} \bigg( \frac{\gamma(a_{j},\lambda_{j,1}a_{j})}{\Gamma(a_{j})} - \frac{\gamma(a_{j},\lambda_{j,2}a_{j})}{\Gamma(a_{j})} + 1 \bigg)^{-1}, \qquad j=1,2,\ldots,n. \label{lol12}
\end{align}
For fixed $j$, Lemma \ref{lemma:various regime of gamma} implies that $\gamma(\tfrac{j+\alpha}{b},nr_{k}^{2b})=\Gamma(\tfrac{j+\alpha}{b})+\bigO(n^{\tfrac{j+\alpha}{b}-1}e^{-nr_{k}^{2b}})$ as $n \to \infty$, and \eqref{hj asymp 1} follows. Now we turn to \eqref{hj asymp 2}. By Lemma \ref{lemma: asymp of gamma for lambda bounded away from 1} (i),
\begin{align*}
h_{j}^{-1} = \frac{b n^{\frac{j+\alpha}{b}}}{\Gamma (\frac{j+\alpha}{b})} \bigg( 1 + \bigO(n^{-\frac{1}{2}}e^{-\frac{a_{j}\eta_{j,1}^{2}}{2}}) + \bigO(n^{-\frac{1}{2}}e^{-\frac{a_{j}\eta_{j,2}^{2}}{2}}) \bigg) = \frac{b n^{\frac{j+\alpha}{b}}}{\Gamma (\frac{j+\alpha}{b})} \bigg( 1 + \bigO(n^{-\frac{1}{2}}e^{-\frac{a_{j}\eta_{j,1}^{2}}{2}})  \bigg)
\end{align*}
as $n \to \infty$, $M' \leq j \leq j_{1,-}$. Moreover,
\begin{align*}
\frac{a_{j}\eta_{j,1}^{2}}{2} = n \frac{j/n}{b}\bigg(\frac{br_{1}^{2b}}{j/n}-1-\log \Big(\frac{br_{1}^{2b}}{j/n}\Big)\bigg) + \bigO(1) \geq nr_{1}^{2b}\frac{\epsilon - \log(1+\epsilon)}{1+\epsilon} + \bigO(1)
\end{align*}
as $n \to \infty, \; M' \leq j \leq j_{1,-}$, which implies the first line in \eqref{hj asymp 2}. The second line in \eqref{hj asymp 2} then directly follows from Lemma \ref{lemma:asymp prefactor}. Now we prove \eqref{hj asymp 3}. By \eqref{lol12} and Lemmas \ref{lemma: uniform} and \ref{lemma: asymp of gamma for lambda bounded away from 1} (i),
\begin{align*}
h_{j}^{-1} = \frac{b n^{\frac{j+\alpha}{b}}}{\Gamma (\frac{j+\alpha}{b})} \bigg( \frac{1}{\frac{1}{2}\mathrm{erfc}(-\eta_{j,1}\sqrt{\frac{a_{j}}{2}}) } + \bigO(n^{-\frac{1}{2}}e^{-\frac{1}{2}a_{j}\eta_{j,1}^{2}}) \bigg)
\end{align*}
as $n \to + \infty$ with $j_{1,-} \leq j \leq g_{1,-}$. On the other hand,
\begin{align*}
-\eta_{j,1}\sqrt{\frac{a_{j}}{2}} \leq -\frac{Mr_{1}^{2}}{\sqrt{2}} + \bigO\Big(\frac{M^{2}}{\sqrt{n}}\Big), \qquad \mbox{as } n \to + \infty, \; j_{1,-} \leq j \leq g_{1,-},
\end{align*}
which implies the first line in \eqref{hj asymp 3} provided $M'$ is chosen large enough (recall that $M'\sqrt{\log n}\leq M$). The second line in \eqref{hj asymp 3} then directly follows from Lemma \ref{lemma:asymp prefactor}. Now we prove \eqref{hj asymp 4}. By Lemmas \ref{lemma: uniform} and \ref{lemma: asymp of gamma for lambda bounded away from 1} (i), as $n \to + \infty$ with $g_{1,-} \leq j \leq g_{1,+}$, we have
\begin{align*}
h_{j}^{-1} = \frac{b n^{\frac{j+\alpha}{b}}}{\Gamma (\frac{j+\alpha}{b})}  \frac{1 + \bigO(e^{-\frac{1}{2}\frac{a_{j}}{2}(\eta_{j,2}^{2}-\eta_{j,1}^{2})})}{\frac{1}{2}\mathrm{erfc}(-\eta_{j,1}\sqrt{\frac{a_{j}}{2}}) - \frac{e^{-\frac{1}{2}a_{j}\eta_{j,1}^{2}}}{\sqrt{2\pi a_{j}}}\big( c_{0}(\eta_{j,1}) + \frac{c_{1}(\eta_{j,1})}{a_{j}} + \bigO(\frac{1}{n^{2}}) \big)}  .
\end{align*}
The above can be expanded in terms of $M_{j,1}$ using \eqref{def of aj lambdajl etajl} and \eqref{def of Mjk}, namely using
\begin{align*}
\eta_{j,1} = (\lambda_{j,1}-1)\sqrt{\frac{2(\lambda_{j,1}-1-\log \lambda_{j,1})}{(\lambda_{j,1}-1)^{2}}}, \qquad \lambda_{j,1} = 1 + \frac{M_{j,1}}{\sqrt{n}}, \qquad j = \frac{br_{1}^{2b} n}{\lambda_{j,1}}-\alpha.
\end{align*}
Using also
\begin{align*}
\frac{e^{-\frac{r_{1}^{2b}M_{j,1}^{2}}{2}}}{\sqrt{\pi} \mathrm{erfc}(-\frac{M_{j,1}r_{1}^{b}}{\sqrt{2}}) } = \begin{cases}
- \frac{r_{1}^{b}M_{j,1}}{\sqrt{2}} - \frac{1}{\sqrt{2}r_{1}^{b}M_{j,1}} + \bigO(M_{j,1}^{-3}), & \mbox{as } M_{j,1}\to -\infty, \\
\bigO(e^{-\frac{r_{1}^{2b}M_{j,1}^{2}}{2}}), & \mbox{as } M_{j,1}\to +\infty,
\end{cases}
\end{align*}
we obtain
\begin{multline*}
\frac{1}{\frac{1}{2}\mathrm{erfc}(-\eta_{j,1}\sqrt{\frac{a_{j}}{2}}) - \frac{e^{-\frac{1}{2}a_{j}\eta_{j,1}^{2}}}{\sqrt{2\pi a_{j}}}\big( c_{0}(\eta_{j,1}) + \frac{c_{1}(\eta_{j,1})}{a_{j}} + \bigO(\frac{1}{n^{2}}) \big)} \\
= \frac{1}{\frac{1}{2}\mathrm{erfc}(-\frac{M_{j,1}r_{1}^{b}}{\sqrt{2}})} \bigg\{ 1 + \frac{(5M_{j,1}^{2}r_{1}^{2b}-2)e^{-\frac{r_{1}^{2b}M_{j,1}^{2}}{2}}}{3\sqrt{2\pi} r_{1}^{b} \mathrm{erfc}(-\frac{M_{j,1}r_{1}^{b}}{\sqrt{2}}) \sqrt{n}} + \frac{\chi_{j,1}^{-}}{n} \bigg( \frac{25r_{1}^{4b}M_{j,1}^{6}}{72} + \frac{3r_{1}^{2b}M_{j,1}^{4}}{2} \bigg) \nonumber \\
  - \chi_{j,1}^{-} \frac{125r_{1}^{6b}M_{j,1}^{9}}{1296n^{3/2}} + \bigO\bigg(\frac{1+M_{j,1}^{2}}{n} + \frac{1+|M_{j,1}^{7}|}{n^{3/2}} + \frac{1+M_{j,1}^{12}}{n^{2}} \bigg)  \bigg) \bigg\}
\end{multline*}
as $n \to + \infty$ uniformly for $g_{1,-} \leq j \leq g_{1,+}$, and the first expansion in \eqref{hj asymp 4} follows. The second expansion in \eqref{hj asymp 4} then follows from Lemma \ref{lemma:asymp prefactor}. Now we prove \eqref{hj asymp 5}. By \eqref{cond on epsilon 2}, \eqref{lol12}, Lemma \ref{lemma: uniform} and Lemma \ref{lemma: asymp of gamma for lambda bounded away from 1} (i), we have
\begin{align*}
h_{j}^{-1} = \frac{b n^{\frac{j+\alpha}{b}}}{\Gamma (\frac{j+\alpha}{b})} \frac{1}{\frac{1}{2}\mathrm{erfc}(-\eta_{j,1}\sqrt{a_{j}/2})-\frac{e^{-\frac{a_{j}\eta_{j,1}^{2}}{2}}}{\sqrt{2\pi a_{j}}}(c_{0}(\eta_{j,1})+\frac{c_{1}(\eta_{j,1})}{a_{j}})} \bigg( 1 + \bigO\big(n^{-2} + e^{-\frac{a_{j}}{2}(\eta_{j,2}^{2}-\eta_{j,1}^{2})}\big) \bigg),
\end{align*}
and the first expansion in \eqref{hj asymp 5} follows from a long but direct computation using \eqref{def of aj lambdajl etajl}, \eqref{def of c0 and c1} and \eqref{large y asymp of erfc}. The second expansion in \eqref{hj asymp 5} then follows from Lemma \ref{lemma:asymp prefactor}. Now we prove \eqref{hj asymp 6} and \eqref{hj asymp 7}. Using Lemma \ref{lemma: asymp of gamma for lambda bounded away from 1} (i)--(ii), we find
\begin{align*}
& h_{j}^{-1} = \frac{b n^{\frac{j+\alpha}{b}}}{\Gamma (\frac{j+\alpha}{b})} \Big( 1+\bigO(n^{-2}) \Big) \\
& \times  \frac{1}{\frac{e^{-\frac{a_{j}\eta_{j,1}^{2}}{2}}}{\sqrt{2\pi}}(\frac{1}{1-\lambda_{j,1}}\frac{1}{\sqrt{a_{j}}}+\frac{1+10 \lambda_{j,1}+\lambda_{j,1}^{2}}{12(\lambda_{j,1}-1)^{3}}\frac{1}{a_{j}^{3/2}}) - \frac{e^{-\frac{a_{j}\eta_{j,2}^{2}}{2}}}{\sqrt{2\pi}}(\frac{-1}{\lambda_{j,2}-1}\frac{1}{\sqrt{a_{j}}}+\frac{1+10 \lambda_{j,2}+\lambda_{j,2}^{2}}{12(\lambda_{j,2}-1)^{3}}\frac{1}{a_{j}^{3/2}})}
\end{align*}
as $n \to + \infty$ with $j_{1,+} \leq j \leq j_{2,-}$. Then \eqref{hj asymp 6} and \eqref{hj asymp 7} follow from a long but direct computation using \eqref{def of aj lambdajl etajl}, \eqref{asymp etajk-etajkm1 r1r1} and Lemma \ref{lemma:asymp prefactor}.

The proofs of \eqref{hj asymp 8}, \eqref{hj asymp 9}, \eqref{hj asymp 10} and \eqref{hj asymp 11} are similar to that of \eqref{hj asymp 5}, \eqref{hj asymp 4}, \eqref{hj asymp 3} and \eqref{hj asymp 2}, respectively, so we omit them.
\end{proof}

\section{Proof of Theorem \ref{thm:r1-r2 case}}\label{section:4}
In this section $z$ and $w$ are given by
\begin{align}\label{def of z1z2 r1 r2 case}
z = r_{1}\Big(1-\frac{t_{1}}{\sigma_{1}n}\Big)e^{i\theta_{1}}, \qquad w = r_{2}\Big(1+\frac{t_{2}}{\sigma_{2}n}\Big)e^{i\theta_{2}}, \qquad t_{1}, t_{2} \geq 0, \theta_{1},\theta_{2}\in \mathbb{R}.
\end{align}
Let us rewrite \eqref{def of Kn} as $K_{n}(z,w) = \sum_{j=1}^{n}F_{j}$, where
\begin{align}\label{def of Fj}
& F_{j} := e^{-\frac{n}{2}Q(z)}e^{-\frac{n}{2}Q(w)} \frac{z^{j-1}\overline{w}^{j-1}}{h_{j}} = e^{-\frac{n}{2}|z|^{2b}}e^{-\frac{n}{2}|w|^{2b}}|z|^{\alpha}|w|^{\alpha}\frac{z^{j-1}\overline{w}^{j-1}}{h_{j}}.
\end{align}
We will obtain the asymptotics of the summand $F_{j}$ as $n\to +\infty$ for several regimes of the indice $j$; such splitting can also be found in e.g. \cite{ForresterHoleProba, APS2009, SP2024} in the context of large gap probabilities.
In this section we take $M=M' \sqrt{\log n}$. The following lemma is proved using Lemma \ref{lemma: asymp of hjinv}. 
\begin{lemma}\label{lemma: asymp of Fj r1r2} Let $\delta\in (0,\frac{1}{100})$ be fixed. $M'$ can be chosen sufficiently large and independently of $\delta$ such that the following hold.
\begin{enumerate}
\item Let $j$ be fixed. As $n \to + \infty$, we have
\begin{align}\label{hj asymp 1 r1r2}
F_{j} = \bigO(e^{-\frac{n}{2}(1-\delta)(r_{1}^{2b}+r_{2}^{2b})}).
\end{align}
\item As $n \to + \infty$ with $M' \leq j \leq j_{1,-}$, we have
\begin{align}
F_{j} & = \bigO(e^{-n(1-\delta) \log(\frac{r_{2}}{r_{1}})\sigma_{1}}). \label{hj asymp 2 r1r2}
\end{align}
\item As $n \to + \infty$ with $j_{1,-} \leq j \leq g_{1,-}$, we have
\begin{align}
F_{j} & = \bigO(e^{-n(1-\delta) \log(\frac{r_{2}}{r_{1}})\sigma_{1}}). \label{hj asymp 3 r1r2}
\end{align}
\item As $n \to + \infty$ with $g_{1,-} \leq j \leq g_{1,+}$, we have
\begin{align}
& F_{j} = \bigO(e^{-n(1-\delta) \log(\frac{r_{2}}{r_{1}})\sigma_{1}}). \label{hj asymp 4 r1r2}
\end{align}
\item As $n \to + \infty$ with $g_{1,+} \leq j \leq j_{1,+}$, we have
\begin{align}
& F_{j} = \bigO(e^{-n(1-\delta) \log(\frac{r_{2}}{r_{1}})(\sigma_{\star}-\frac{br_{1}^{2b}}{1-\epsilon})}). \label{hj asymp 5 r1r2}
\end{align}
\item As $n \to + \infty$ with $j_{1,+} \leq j \leq \lfloor j_{\star}\rfloor$, we have
\begin{align}
& F_{j} = \frac{n}{r_{1}r_{2}}\frac{(\frac{r_{1}}{r_{2}})^{j_{\star}-j}e^{(j-1)i(\theta_{1}-\theta_{2})}}{\frac{1}{j/n-br_{1}^{2b}}+\frac{(\frac{r_{1}}{r_{2}})^{2(j_{\star}-j)}}{br_{2}^{2b}-j/n}}e^{-(j/n-br_{1}^{2b})\frac{t_{1}}{\sigma_{1}}-(br_{2}^{2b}-j/n)\frac{t_{2}}{\sigma_{2}}} \Big( 1 + \bigO\big( n^{-1} \big) \Big). \label{hj asymp 6 r1r2}
\end{align}
\item As $n \to + \infty$ with $\lfloor j_{\star}\rfloor+1 \leq j \leq j_{2,-}$, we have
\begin{align}
& F_{j} = \frac{n}{r_{1}r_{2}}\frac{(\frac{r_{1}}{r_{2}})^{j-j_{\star}}e^{(j-1)i(\theta_{1}-\theta_{2})}}{\frac{1}{br_{2}^{2b}-j/n}+\frac{(\frac{r_{1}}{r_{2}})^{2(j-j_{\star})}}{j/n-br_{1}^{2b}}}e^{-(j/n-br_{1}^{2b})\frac{t_{1}}{\sigma_{1}}-(br_{2}^{2b}-j/n)\frac{t_{2}}{\sigma_{2}}} \Big( 1 + \bigO\big( n^{-1} \big) \Big). \label{hj asymp 7 r1r2}
\end{align}
\item As $n \to + \infty$ with $j_{2,-} \leq j \leq g_{2,-}$, we have
\begin{align}
& F_{j} = \bigO(e^{-n(1-\delta) \log(\frac{r_{2}}{r_{1}})(\frac{br_{2}^{2b}}{1+\epsilon}-\sigma_{\star})}). \label{hj asymp 8 r1r2}
\end{align}
\item As $n \to + \infty$ with $g_{2,-} \leq j \leq g_{2,+}$, we have
\begin{align}
& F_{j} = \bigO(e^{-n(1-\delta) \log(\frac{r_{2}}{r_{1}})\sigma_{2}}). \label{hj asymp 9 r1r2}
\end{align}
\item As $n \to + \infty$ with $g_{2,+} \leq j \leq j_{2,+}$, we have
\begin{align}
F_{j} & = \bigO(e^{-n(1-\delta) \log(\frac{r_{2}}{r_{1}})\sigma_{2}}). \label{hj asymp 10 r1r2}
\end{align}
\item As $n \to + \infty$ with $j_{2,+} \leq j \leq n$, we have
\begin{align}
F_{j} & = \bigO(e^{-n(1-\delta) \log(\frac{r_{2}}{r_{1}})\sigma_{2}}). \label{hj asymp 11 r1r2}
\end{align}
\end{enumerate}
\end{lemma}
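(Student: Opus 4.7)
The plan is to derive each of the eleven estimates by substituting the corresponding expansion of $h_j^{-1}$ from Lemma \ref{lemma: asymp of hjinv} into the definition \eqref{def of Fj} of $F_j$, after first expanding the remaining $j$-dependent prefactors. Using $|z|=r_1(1-t_1/(\sigma_1 n))$ and Taylor's theorem, one obtains
\begin{align*}
|z|^{j-1+\alpha}e^{-\frac{n}{2}|z|^{2b}}=r_1^{j-1+\alpha}e^{-\frac{n}{2}r_1^{2b}}e^{-(j/n-br_1^{2b})t_1/\sigma_1}\bigl(1+O(n^{-1})\bigr),
\end{align*}
uniformly for $j=O(n)$, with an analogous formula for $w$ obtained by replacing $r_1,t_1,\sigma_1$ by $r_2,t_2,\sigma_2$ and a sign change in the Gaussian term since $|w|>r_2$. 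The angular factor $z^{j-1}\bar w^{j-1}$ further contributes $|z|^{j-1}|w|^{j-1}e^{i(j-1)(\theta_1-\theta_2)}$.

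For items (vi) and (vii), I would substitute \eqref{hj asymp 6} and \eqref{hj asymp 7} respectively. The crucial algebraic step is that the exponential $e^{n(r_k^{2b}-2(j/n)\log r_k)}$ appearing in $h_j^{-1}$, multiplied by the prefactor $(r_1r_2)^{j-1+\alpha}r_k^{-2\alpha}e^{-\frac{n}{2}(r_1^{2b}+r_2^{2b})}$, collapses to $(r_1r_2)^{-1}(r_1/r_2)^{\pm(j_\star-j)}$. This relies on the identity $\sigma_\star\log(r_2/r_1)=(r_2^{2b}-r_1^{2b})/2$, immediate from \eqref{def of taustar}, combined with $j_\star+\alpha=n\sigma_\star$. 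Assembling the remaining factors then yields \eqref{hj asymp 6 r1r2}--\eqref{hj asymp 7 r1r2} directly, with the $O(n^{-1})$ error inherited from both Lemma \ref{lemma: asymp of hjinv} and the prefactor expansion.

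For the bound regimes (i)--(v) and (viii)--(xi), only upper bounds are needed. In each regime the leading exponential behavior of $F_j$ takes the form $e^{n\psi(j/n)}$ for a phase function $\psi$ read off from the corresponding asymptotic of $h_j^{-1}$, and a direct calculation shows that the phase functions from the various regimes patch together continuously to give a single continuous $\psi:[0,b^{-\frac{1}{2b}}]\to\mathbb{R}$ with unique maximum $\psi(\sigma_\star)=0$ and boundary values $\psi(br_k^{2b})=-\sigma_k\log(r_2/r_1)$. The stated exponential bounds follow by evaluating $\psi$ at the right endpoint of the interval of interest, the polynomial prefactor $n^K$ being absorbed via the $\delta$-slack for $M'$ large. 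Regime (v) is handled via the same simplification as (vi): one obtains $F_j=O((r_1/r_2)^{j_\star-j})$, and $j\le j_{1,+}$ together with \eqref{cond on epsilon 2} yields the bound; regime (viii) is symmetric. The main obstacle is the bookkeeping in the transitional regimes (iv) and (ix), where the factor $1/\mathrm{erfc}(\mp M_{j,k}r_k^b/\sqrt 2)$ appearing in $h_j^{-1}$ can grow exponentially when $M_{j,k}$ is large with the wrong sign; this growth is exactly compensated by the other exponential factors in the same expansion of \eqref{hj asymp 4}, \eqref{hj asymp 9}, and a short application of \eqref{large y asymp of erfc} confirms that the effective phase is the smooth continuation of $\psi$ from the neighbouring Stirling regimes, so the bounds are uniform across the transition.
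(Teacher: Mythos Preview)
Your proposal is correct and follows essentially the same route as the paper: substitute the appropriate case of Lemma~\ref{lemma: asymp of hjinv} into the definition \eqref{def of Fj} of $F_j$, expand the prefactor $|z|^{j-1+\alpha}|w|^{j-1+\alpha}e^{-\frac{n}{2}(|z|^{2b}+|w|^{2b})}$, and for items (vi)--(vii) use the identity $\sigma_\star\log(r_2/r_1)=(r_2^{2b}-r_1^{2b})/2$ to collapse the exponentials. The paper's proof is organized case-by-case rather than via a single phase function $\psi$, and for regime (iv) it does not discuss the $\mathrm{erfc}$ compensation explicitly---since $M=M'\sqrt{\log n}$ here, all the $M_{j,1}$-dependent factors in \eqref{hj asymp 4} are at worst $n^{C(M')^2}$ and are simply absorbed into the $\delta$-slack against the leading $e^{-cn}$ decay---but your more structural description of the matching across regimes is an equivalent and arguably cleaner way to see the same thing.
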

\begin{proof}
\eqref{hj asymp 1 r1r2} directly follows from \eqref{hj asymp 1} and \eqref{def of Fj}. Let us now prove \eqref{hj asymp 2 r1r2}. By \eqref{hj asymp 2} and \eqref{def of Fj}, as $n \to + \infty$ with $M' \leq j \leq j_{1,-}$, we have
\begin{align*}
F_{j} = \bigO\bigg(\exp \bigg\{ n(1-\delta) \bigg[ \frac{1}{b}(j/n-j/n \log(\frac{j/n}{b})) - \frac{1}{2}r_{1}^{2b} - \frac{1}{2}r_{2}^{2b} + j/n \log(r_{1}r_{2}) \bigg] \bigg\} \bigg).
\end{align*}
It is easy to check that the function $[0,br_{1}^{b}r_{2}^{b}] \in y \mapsto y-y \log(\frac{y}{b}) + y \log(r_{1}^{b}r_{2}^{b})$ is increasing. Since $j/n \leq br_{1}^{2b}$, we thus have
\begin{align*}
F_{j} & = \bigO\bigg(\exp \bigg\{ n(1-\delta) \bigg[ r_{1}^{2b}-r_{1}^{2b} \log(r_{1}^{2b}) - \frac{1}{2}r_{1}^{2b} - \frac{1}{2}r_{2}^{2b} + br_{1}^{2b} \log(r_{1}r_{2}) \bigg] \bigg\} \bigg),
\end{align*}
which is \eqref{hj asymp 2 r1r2} (recall the definitions of $\sigma_{1}$ and $\sigma_{\star}$ in \eqref{def of taustar}). The proof of \eqref{hj asymp 3 r1r2} is identical (but uses \eqref{hj asymp 3} instead of \eqref{hj asymp 2}). We now turn to the proof of \eqref{hj asymp 4 r1r2}. By \eqref{hj asymp 4} and \eqref{def of Fj}, as $n \to +\infty$ with $g_{1,-} \leq j \leq g_{1,+}$ we have
\begin{align*}
F_{j} = \bigO \bigg( \exp \bigg\{ n(1-\delta/2) \bigg[ r_{1}^{2b} - r_{1}^{2b}\log(r_{1}^{2b}) - \frac{1}{2}r_{1}^{2b} - \frac{1}{2}r_{2}^{2b} + j/n \log(r_{1}r_{2})  \bigg] \bigg\} \bigg).
\end{align*}
Since $j/n = br_{1}^{2b} + \bigO(n^{-\frac{1}{2}})$,
\begin{align*}
F_{j} = \bigO \bigg( \exp \bigg\{ n(1-\delta) \bigg[ r_{1}^{2b} - r_{1}^{2b}\log(r_{1}^{2b}) - \frac{1}{2}r_{1}^{2b} - \frac{1}{2}r_{2}^{2b} + br_{1}^{2b} \log(r_{1}r_{2})  \bigg] \bigg\} \bigg),
\end{align*}
and \eqref{hj asymp 4 r1r2} follows. Now we prove \eqref{hj asymp 5 r1r2}. By \eqref{hj asymp 5} and \eqref{def of Fj}, as $n \to + \infty$ with $g_{1,+} \leq j \leq j_{1,+}$, we have
\begin{align*}
F_{j} = \bigO \bigg( \exp \bigg\{ n(1-\delta) \bigg[ r_{1}^{2b} - 2j/n \log(r_{1}) - \frac{1}{2}r_{1}^{2b} - \frac{1}{2}r_{2}^{2b} + j/n \log(r_{1}r_{2})  \bigg] \bigg\} \bigg).
\end{align*}
Recall from \eqref{cond on epsilon 2} that $\sigma_{\star} > \frac{br_{1}^{2b}}{1-\epsilon}$. Since $j/n \leq \frac{br_{1}^{2b}}{1-\epsilon} + \bigO(n^{-1})$ by \eqref{def of jk plus and minus}, we have
\begin{align*}
F_{j} = \bigO \bigg( \exp \bigg\{ n(1-\delta) \bigg[ - \frac{r_{2}^{2b}-r_{1}^{2b}}{2} + \frac{br_{1}^{2b}}{1-\epsilon} \log(\frac{r_{2}}{r_{1}})  \bigg] \bigg\} \bigg),
\end{align*}
which is \eqref{hj asymp 5 r1r2}. Now we prove \eqref{hj asymp 6 r1r2}. As $n \to + \infty$ with $j_{1,+} \leq j \leq \lfloor j_{\star}\rfloor$, we have
\begin{align*}
& e^{n(r_{1}^{2b}-2 j/n \log(r_{1}))} e^{-\frac{n}{2}|z|^{2b}}e^{-\frac{n}{2}|w|^{2b}}|z|^{\alpha}|w|^{\alpha} z^{j-1}\overline{w}^{j-1} \\
& = e^{-\frac{n}{2}(r_{2}^{2b}-r_{1}^{2b}-2j/n \log(\frac{r_{2}}{r_{1}}))} e^{(j-1)i(\theta_{1}-\theta_{2})} e^{- (j/n-br_{1}^{2b})\frac{t_{1}}{\sigma_{1}}}e^{-(br_{2}^{2b}-j/n)\frac{t_{2}}{\sigma_{2}}}r_{1}^{\alpha-1}r_{2}^{\alpha-1} \big( 1+\bigO(n^{-1}) \big)
\end{align*}
Now \eqref{hj asymp 6 r1r2} follows directly from \eqref{hj asymp 6}, \eqref{def of Fj} and the identity (recall $j_{\star} = n \sigma_{\star} -\alpha$)
\begin{align*}
\frac{r_{2}^{\alpha}}{r_{1}^{\alpha}} e^{-\frac{n}{2}(r_{2}^{2b}-r_{1}^{2b}-2j/n \log(\frac{r_{2}}{r_{1}}))} = \Big(\frac{r_{2}}{r_{1}}\Big)^{-(j_{\star}-j)}.
\end{align*}

The proofs of \eqref{hj asymp 7 r1r2}, \eqref{hj asymp 8 r1r2}, \eqref{hj asymp 9 r1r2}, \eqref{hj asymp 10 r1r2} and \eqref{hj asymp 11 r1r2} are similar to that of \eqref{hj asymp 6 r1r2}, \eqref{hj asymp 5 r1r2}, \eqref{hj asymp 4 r1r2}, \eqref{hj asymp 3 r1r2} and \eqref{hj asymp 2 r1r2}, respectively, so we omit them.
\end{proof}
\begin{theorem}
Let $z,w,t_{1},t_{2},\theta_{1},\theta_{2}$ be as in \eqref{def of z1z2 r1 r2 case}. As $n \to + \infty$, we have
\begin{align*}
& K_{n}(z,w) = \frac{n}{r_{1}r_{2}} e^{-t_{1}-t_{2}} e^{i(\theta_{1}-\theta_{2})\lfloor j_{\star}\rfloor} \nonumber \\
& \times \Bigg\{ \sum_{j=0}^{+\infty} \frac{(\frac{r_{1}}{r_{2}})^{j+x}e^{(j+1)i(\theta_{2}-\theta_{1})}}{\frac{1}{\sigma_{1}}+\frac{(\frac{r_{1}}{r_{2}})^{2(j+x)}}{\sigma_{2}}} + \sum_{j=0}^{+\infty} \frac{(\frac{r_{1}}{r_{2}})^{j+1-x}e^{j i(\theta_{1}-\theta_{2})}}{\frac{1}{\sigma_{2}}+\frac{(\frac{r_{1}}{r_{2}})^{2(j+1-x)}}{\sigma_{1}}} \Bigg\} + \bigO((\log n)^{2}),
\end{align*}
where $\sigma_{\star}, \sigma_{1}, \sigma_{2}$ are given by \eqref{def of taustar}, $j_{\star} = n \sigma_{\star}-\alpha$, and $x=x(n)$ is given by \eqref{def of theta star intro}.
\end{theorem}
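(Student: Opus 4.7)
The plan is to decompose $K_n(z,w) = \sum_{j=1}^n F_j$ as in \eqref{def of Fj} and exploit Lemma \ref{lemma: asymp of Fj r1r2} to isolate the two ranges of $j$ that carry the leading order. The estimates \eqref{hj asymp 1 r1r2}--\eqref{hj asymp 5 r1r2} and \eqref{hj asymp 8 r1r2}--\eqref{hj asymp 11 r1r2} show that all summands from ranges $1$--$5$ and $8$--$11$ are of size $\bigO(e^{-c_{0} n})$ for some fixed $c_{0}>0$ (a fact which is precisely ensured by the conditions \eqref{cond on epsilon 1}--\eqref{cond on epsilon 2} on $\epsilon$), so their aggregate contribution is $\bigO(n\, e^{-c_{0} n})$ and can be absorbed in the $\bigO((\log n)^{2})$ error.

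For range $6$, I would introduce $k := \lfloor j_{\star}\rfloor - j\ge 0$ and use
\begin{align*}
j_{\star} - j = k + x, \qquad \frac{j}{n} - b r_{1}^{2b} = \sigma_{1} - \frac{k+x+\alpha}{n}, \qquad b r_{2}^{2b} - \frac{j}{n} = \sigma_{2} + \frac{k+x+\alpha}{n},
\end{align*}
together with $e^{(j-1)i(\theta_{1}-\theta_{2})} = e^{i\lfloor j_{\star}\rfloor(\theta_{1}-\theta_{2})}\, e^{-i(k+1)(\theta_{1}-\theta_{2})}$, to Taylor expand the asymptotic formula \eqref{hj asymp 6 r1r2} around $j/n = \sigma_{\star}$. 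This yields
\begin{align*}
F_{\lfloor j_{\star}\rfloor - k} = \frac{n}{r_{1} r_{2}}\, e^{-t_{1}-t_{2}}\, e^{i\lfloor j_{\star}\rfloor(\theta_{1}-\theta_{2})}\, \frac{(r_{1}/r_{2})^{k+x}\, e^{(k+1)i(\theta_{2}-\theta_{1})}}{\frac{1}{\sigma_{1}} + \frac{(r_{1}/r_{2})^{2(k+x)}}{\sigma_{2}}}\, \Big(1 + \bigO\Big(\tfrac{k+1}{n}\Big)\Big).
\end{align*}
The companion substitution $\ell := j - \lfloor j_{\star}\rfloor - 1 \ge 0$ in range $7$, starting from \eqref{hj asymp 7 r1r2}, produces the second summand with $(r_{1}/r_{2})^{\ell+1-x}$, denominator $\frac{1}{\sigma_{2}} + \frac{(r_{1}/r_{2})^{2(\ell+1-x)}}{\sigma_{1}}$, and phase $e^{i\ell(\theta_{1}-\theta_{2})}$, matching the second series in the statement.

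It then remains to extend the finite summation ranges $0\le k\le \lfloor j_{\star}\rfloor - j_{1,+}$ and $0\le \ell\le j_{2,-}-\lfloor j_{\star}\rfloor-1$ to $\mathbb{N}$ and to aggregate the errors. Since each summand decays geometrically in $k$ (resp.~$\ell$) by a factor $(r_{1}/r_{2})^{k}$, truncating at any cutoff $K_{n}\asymp \log n$ (sufficiently large) replaces the infinite tail by $\bigO(n(r_{1}/r_{2})^{K_{n}}) = \bigO(1)$. The per-term relative error $\bigO((k+1)/n)$, multiplied by $n(r_{1}/r_{2})^{k}$ and summed over $k$, also yields $\bigO(1)$ by absolute convergence of $\sum_{k} (k+1)(r_{1}/r_{2})^{k}$. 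The dominant error arises from multiplying the $\bigO(n^{-2})$ correction inside \eqref{hj asymp 6}--\eqref{hj asymp 7} by $n$ and summing over the $\bigO(\log n)$ effective indices, producing the $\bigO((\log n)^{2})$ error stated.

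The main obstacle is the uniform bookkeeping of the relative error across the two ranges and its dependence on $k$: one must carefully distinguish between the exponentially small $\bigO((r_{1}/r_{2})^{2(j_{\star} - j)}/n)$ correction and the polynomial $\bigO((k+1)/n)$ correction arising from the Taylor expansion of the denominators and the exponentials in \eqref{hj asymp 6 r1r2}--\eqref{hj asymp 7 r1r2}, and ensure that all implicit constants are independent of $k, \ell$ and $n$.
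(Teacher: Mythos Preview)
Your approach is essentially the paper's: kill ranges $1$--$5$ and $8$--$11$ by Lemma~\ref{lemma: asymp of Fj r1r2}, shift indices by $\lfloor j_\star\rfloor$ in ranges $6$ and $7$, and replace $j/n$ by $\sigma_\star$ in the denominators and exponentials of \eqref{hj asymp 6 r1r2}--\eqref{hj asymp 7 r1r2}. The paper does this a little more crudely: it first truncates to $|j-\lfloor j_\star\rfloor|\le M'\log n$ and then uses the uniform bound $j/n=\sigma_\star+\bigO(\log n/n)$, producing an $\bigO(\log n)$ error per term and hence $\bigO((\log n)^2)$ in total. Your per-term relative error $\bigO((k+1)/n)$ is sharper, and your observation that $\sum_k (k+1)(r_1/r_2)^k<\infty$ correctly shows the aggregate of these errors is in fact $\bigO(1)$.

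Two issues remain. First, your last sentence is arithmetically wrong: an $\bigO(n^{-2})$ correction multiplied by $n$ and summed over $\bigO(\log n)$ indices gives $\bigO((\log n)/n)$, not $\bigO((\log n)^2)$. You have actually proved a stronger bound than the statement requires; there is no need to hunt for a source of the weaker $\bigO((\log n)^2)$. Second, the Taylor expansion producing the $\bigO((k+1)/n)$ error is only uniform when $(k+1)/n$ stays bounded away from $\sigma_1$, so it cannot be applied directly over the full range $0\le k\le \lfloor j_\star\rfloor - j_{1,+}\asymp n$. You must first truncate the \emph{raw} sum $\sum F_j$ at some $K_n\asymp\log n$ using the factor $(r_1/r_2)^{j_\star-j}$ already present in \eqref{hj asymp 6 r1r2} (this costs $\bigO(n(r_1/r_2)^{K_n})=o(1)$), and only then Taylor expand for $k\le K_n$. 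You allude to this cutoff, but your write-up applies the expansion before the truncation, which is the wrong order.
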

\begin{proof}
By Lemma \ref{lemma: asymp of Fj r1r2},
\begin{align*}
K_{n} = \sum_{j=j_{1,+}}^{j_{2,-}}F_{j} + \bigO(e^{-cn}), \qquad \mbox{as } n \to + \infty
\end{align*}
for some $c>0$. Furthermore, by \eqref{hj asymp 6 r1r2} and \eqref{hj asymp 7 r1r2},
\begin{align*}
& \sum_{j=j_{1,+}}^{\lfloor j_{\star} \rfloor }F_{j} = \sum_{j=j_{1,+}}^{\lfloor j_{\star} \rfloor } \frac{n}{ r_{1}r_{2}}\frac{(\frac{r_{1}}{r_{2}})^{j_{\star}-j}e^{(j-1)i(\theta_{1}-\theta_{2})}}{\frac{1}{j/n-br_{1}^{2b}}+\frac{(\frac{r_{1}}{r_{2}})^{2(j_{\star}-j)}}{br_{2}^{2b}-j/n}}e^{-(j/n-br_{1}^{2b})\frac{t_{1}}{\sigma_{1}}-(br_{2}^{2b}-j/n)\frac{t_{2}}{\sigma_{2}}} \Big( 1 + \bigO\big( n^{-1} \big) \Big), \\
& \sum_{j=\lfloor j_{\star} \rfloor +1}^{j_{2,-}}F_{j} = \sum_{j=\lfloor j_{\star} \rfloor +1}^{j_{2,-}} \frac{n}{ r_{1}r_{2}}\frac{(\frac{r_{1}}{r_{2}})^{j-j_{\star}}e^{(j-1)i(\theta_{1}-\theta_{2})}}{\frac{1}{br_{2}^{2b}-j/n}+\frac{(\frac{r_{1}}{r_{2}})^{2(j-j_{\star})}}{j/n-br_{1}^{2b}}}e^{-(j/n-br_{1}^{2b})\frac{t_{1}}{\sigma_{1}}-(br_{2}^{2b}-j/n)\frac{t_{2}}{\sigma_{2}}} \Big( 1 + \bigO\big( n^{-1} \big) \Big).
\end{align*}
For $j-\lfloor j_{\star} \rfloor \geq M' \log n$, we have $(\frac{r_{1}}{r_{2}})^{j-\lfloor j_{\star}}=\bigO(n^{-200})$ provided that $M'$ is chosen large enough. We infer that
\footnotesize
\begin{align*}
& \sum_{j=j_{1,+}}^{\lfloor j_{\star} \rfloor }F_{j} = \bigO(n^{-100}) + \hspace{-0.1cm} \sum_{j=\lfloor j_{\star} \rfloor - M' \log n}^{\lfloor j_{\star} \rfloor } \hspace{-0.2cm} \frac{n}{r_{1}r_{2}}\frac{(\frac{r_{1}}{r_{2}})^{j_{\star}-j}e^{(j-1)i(\theta_{1}-\theta_{2})}}{\frac{1}{j/n-br_{1}^{2b}}+\frac{(\frac{r_{1}}{r_{2}})^{2(j_{\star}-j)}}{br_{2}^{2b}-j/n}}e^{-(j/n-br_{1}^{2b})\frac{t_{1}}{\sigma_{1}}-(br_{2}^{2b}-j/n)\frac{t_{2}}{\sigma_{2}}} \Big( 1 + \bigO\big( n^{-1} \big) \Big), \\
& \sum_{j=\lfloor j_{\star} \rfloor +1}^{j_{2,-}}F_{j} = \bigO(n^{-100}) + \hspace{-0.1cm} \sum_{j=\lfloor j_{\star} \rfloor +1}^{\lfloor j_{\star} \rfloor +M' \log n} \hspace{-0.2cm} \frac{n}{r_{1}r_{2}}\frac{(\frac{r_{1}}{r_{2}})^{j-j_{\star}}e^{(j-1)i(\theta_{1}-\theta_{2})}}{\frac{1}{br_{2}^{2b}-j/n}+\frac{(\frac{r_{1}}{r_{2}})^{2(j-j_{\star})}}{j/n-br_{1}^{2b}}}e^{-(j/n-br_{1}^{2b})\frac{t_{1}}{\sigma_{1}}-(br_{2}^{2b}-j/n)\frac{t_{2}}{\sigma_{2}}} \Big( 1 + \bigO\big( n^{-1} \big) \Big).
\end{align*}
\normalsize
provided that $M'$ is chosen sufficiently large. For $|j-\lfloor j_{\star} \rfloor| \leq M' \log n$, we have $j/n = \sigma_{\star} + \bigO(\frac{\log n}{n})$, and thus
\begin{align*}
& \sum_{j=j_{1,+}}^{\lfloor j_{\star} \rfloor }F_{j} = \bigO(n^{-100}) + \frac{n}{r_{1}r_{2}} e^{- t_{1}- t_{2}} \hspace{-0.1cm} \sum_{j=\lfloor j_{\star} \rfloor - M' \log n}^{\lfloor j_{\star} \rfloor } \hspace{-0cm} \frac{(\frac{r_{1}}{r_{2}})^{j_{\star}-j}e^{(j-1)i(\theta_{1}-\theta_{2})}}{\frac{1}{\sigma_{1}}+\frac{(\frac{r_{1}}{r_{2}})^{2(j_{\star}-j)}}{\sigma_{2}}} \Big( 1 + \bigO\big( \frac{\log n}{n} \big) \Big), \\
& \sum_{j=\lfloor j_{\star} \rfloor +1}^{j_{2,-}}F_{j} = \bigO(n^{-100}) + \frac{n}{r_{1}r_{2}} e^{- t_{1}- t_{2}} \hspace{-0.1cm} \sum_{j=\lfloor j_{\star} \rfloor +1}^{\lfloor j_{\star} \rfloor +M' \log n} \hspace{-0cm} \frac{(\frac{r_{1}}{r_{2}})^{j-j_{\star}}e^{(j-1)i(\theta_{1}-\theta_{2})}}{\frac{1}{\sigma_{2}}+\frac{(\frac{r_{1}}{r_{2}})^{2(j-j_{\star})}}{\sigma_{1}}} \Big( 1 + \bigO\big( \frac{\log n}{n} \big) \Big).
\end{align*}
We can rewrite this as
\begin{align*}
& \sum_{j=j_{1,+}}^{\lfloor j_{\star} \rfloor }F_{j} = \bigO\big((\log n)^{2}\big) + \frac{n}{r_{1}r_{2}} e^{- t_{1}- t_{2}} \sum_{j=\lfloor j_{\star} \rfloor - M' \log n}^{\lfloor j_{\star} \rfloor }  \frac{(\frac{r_{1}}{r_{2}})^{j_{\star}-j}e^{(j-1)i(\theta_{1}-\theta_{2})}}{\frac{1}{\sigma_{1}}+\frac{(\frac{r_{1}}{r_{2}})^{2(j_{\star}-j)}}{\sigma_{2}}}, \\
& \sum_{j=\lfloor j_{\star} \rfloor +1}^{j_{2,-}}F_{j} = \bigO\big((\log n)^{2}\big) + \frac{n}{r_{1}r_{2}} e^{- t_{1}- t_{2}} \sum_{j=\lfloor j_{\star} \rfloor +1}^{\lfloor j_{\star} \rfloor +M' \log n} \frac{(\frac{r_{1}}{r_{2}})^{j-j_{\star}}e^{(j-1)i(\theta_{1}-\theta_{2})}}{\frac{1}{\sigma_{2}}+\frac{(\frac{r_{1}}{r_{2}})^{2(j-j_{\star})}}{\sigma_{1}}}.
\end{align*}
In the above sums appearing on the right-hand sides, we can replace $M' \log n$ by $+\infty$ at the cost of an error $\bigO(n^{-100})$; this error can in turn be absorbed in $\bigO\big((\log n)^{2}\big)$. We then find the claim after changing indices.
\end{proof}

A computation using \eqref{wsg} gives
\begin{multline*}
\frac{1}{2\pi r_{1}r_{2}}  \Bigg\{ \sum_{j=0}^{+\infty} \frac{(\frac{r_{1}}{r_{2}})^{j+x}e^{(j+1)i(\theta_{2}-\theta_{1})}}{\frac{1}{\sigma_{1}}+\frac{(\frac{r_{1}}{r_{2}})^{2(j+x)}}{\sigma_{2}}} + \sum_{j=0}^{+\infty} \frac{(\frac{r_{1}}{r_{2}})^{j+1-x}e^{j i(\theta_{1}-\theta_{2})}}{\frac{1}{\sigma_{2}}+\frac{(\frac{r_{1}}{r_{2}})^{2(j+1-x)}}{\sigma_{1}}} \Bigg\} \\
= (r_{1}r_{2})^{-x} \mathcal{S}^{G}_{\mathrm{hard}}(r_{1}e^{i\theta_{1}},r_{2}e^{i\theta_{2}};n) = (r_{1}r_{2})^{-x} \mathcal{S}^{G}_{\mathrm{hard}}(z,w;n) + \bigO(n^{-1}),
\end{multline*}
and Theorem \ref{thm:r1-r2 case} follows.

\section{Proofs of Theorems \ref{thm:r1 hard} and \ref{thm:r1r1 hard}}\label{section:r1r1 1/n}
In this section $z$ and $w$ are given by
\begin{align}\label{def of z1z2 r1 r1 case}
z = r_{1}\Big(1-\frac{t_{1}}{\sigma_{1}n}\Big)e^{i\theta_{1}}, \qquad w = r_{1}\Big(1-\frac{t_{2}}{\sigma_{1}n}\Big)e^{i\theta_{2}}, \qquad t_{1}, t_{2} \geq 0, \theta_{1},\theta_{2}\in \mathbb{R},
\end{align}
and the parameters $t_{1},t_{2},\theta_{1},\theta_{2}$ are independent of $n$. At this point, $\theta_{1},\theta_{2}\in \R$ can be arbitrary, but we already mention that in Section \ref{subsection:5.1} below, we will focus on the case $\theta_{1}=\theta_{2}$, and in Section \ref{subsection:5.2} we will consider the case $\theta_{1}\neq \theta_{2} \mod 2\pi$. As in \eqref{def of Fj} we write $K_{n}(z,w) = \sum_{j=1}^{n}F_{j}$, where
\begin{align}\label{def of Fj r1r1}
& F_{j} := e^{-\frac{n}{2}Q(z)}e^{-\frac{n}{2}Q(w)} \frac{z^{j-1}\overline{w}^{j-1}}{h_{j}} = e^{-\frac{n}{2}|z|^{2b}}e^{-\frac{n}{2}|w|^{2b}}|z|^{\alpha}|w|^{\alpha}\frac{z^{j-1}\overline{w}^{j-1}}{h_{j}}.
\end{align}
Here $M$ can be arbitrary within the range $M \in [M'\sqrt{\log n}, n^{\frac{1}{10}}]$ (the values of $M$ will be more restricted in Subsections \ref{subsection:5.1} and \ref{subsection:5.2} below). The following lemma is proved using Lemma \ref{lemma: asymp of hjinv}.
\begin{lemma}\label{lemma: asymp of Fj r1r1} Let $\delta \in (0,\frac{1}{100})$ be fixed. $M'$ can be chosen sufficiently large and independently of $\delta$ such that the following hold.
\begin{enumerate}
\item Let $j$ be fixed. As $n \to + \infty$, we have
\begin{align}\label{hj asymp 1 r1r1}
F_{j} = \bigO(e^{-n (1-\delta)r_{1}^{2b}}).
\end{align}
\item As $n \to + \infty$ with $M' \leq j \leq j_{1,-}$, we have
\begin{align}
F_{j} & = \bigO(e^{-nr_{1}^{2b}(1-\delta) \frac{\epsilon - \log(1+\epsilon)}{1+\epsilon}}). \label{hj asymp 2 r1r1}
\end{align}
\item As $n \to + \infty$ with $j_{1,-} \leq j \leq g_{1,-}$, we have
\begin{align}
F_{j} & = \bigO(n^{-100}). \label{hj asymp 3 r1r1}
\end{align}
\item As $n \to + \infty$ with $g_{1,-} \leq j \leq g_{1,+}$, we have
\begin{align}
& F_{j} = e^{i(j-1)(\theta_{1}-\theta_{2})} \sqrt{n}\frac{\sqrt{2} \, b r_{1}^{b-2} e^{-\frac{M_{j,1}^{2}r_{1}^{2b}}{2}}}{\sqrt{\pi} \mathrm{erfc}(-\frac{M_{j,1}r_{1}^{b}}{\sqrt{2}})} \bigg( 1 + \frac{1}{\sqrt{n}}\bigg\{\frac{(5M_{j,1}^{2}r_{1}^{2b}-2)e^{-\frac{r_{1}^{2b}M_{j,1}^{2}}{2}}}{3\sqrt{2\pi} r_{1}^{b} \mathrm{erfc}(-\frac{M_{j,1}r_{1}^{b}}{\sqrt{2}}) }  \nonumber \\
& \qquad + \frac{M_{j,1}}{6} \Big( 5M_{j,1}^{2}r_{1}^{2b}-3+6br_{1}^{2b}\frac{t_{1}+t_{2}}{\sigma_{1}}  \Big) \bigg\} \nonumber \\
& + \bigO\bigg(\frac{1+M_{j,1}^{2}+M_{j,1}^{6}\chi_{j,1}^{+}}{n}+\frac{1+M_{j,1}^{7}+M_{j,1}^{9}\chi_{j,1}^{+}}{n^{3/2}} + \frac{1+M_{j,1}^{12}}{n^{2}}\bigg)  \bigg). \label{hj asymp 4 r1r1}
\end{align}
\item As $n \to + \infty$ with $g_{1,+} \leq j \leq j_{1,+}$, we have
\begin{align}
& F_{j} = e^{i(j-1)(\theta_{1}-\theta_{2})} n \frac{e^{-(j/n-br_{1}^{2b})\frac{t_{1}+t_{2}}{\sigma_{1}}}(j/n-br_{1}^{2b})}{r_{1}^{2}}\bigg( 1 + \frac{1}{n} \bigg\{ \frac{t_{1}+t_{2}}{\sigma_{1}}(1-\alpha) - \frac{j/n}{2 \sigma_{1}^{2}}(t_{1}^{2}+t_{2}^{2}) \nonumber \\
& + \frac{b}{2\sigma_{1}^{2}}(1-2b)r_{1}^{2b}(t_{1}^{2}+t_{2}^{2}) + \frac{b^{2}r_{1}^{2b}+(j/n-br_{1}^{2b})\alpha}{(j/n-br_{1}^{2b})^{2}} \bigg\} - \frac{2 b^{4} r_{1}^{4b}}{n^{2}(j/n-br_{1}^{2b})^{4}} \nonumber \\
& + \frac{10b^{6}r_{1}^{6b}}{n^{3}(j/n-br_{1}^{2b})^{6}} + \bigO\Big( \frac{1}{n^{2}(j/n-br_{1}^{2b})^{3}} + \frac{1}{n^{4}(j/n-br_{1}^{2b})^{8}} \Big) \bigg). \label{hj asymp 5 r1r1}
\end{align}
\item As $n \to + \infty$ with $j_{1,+} \leq j \leq \lfloor j_{\star}\rfloor$, we have
\begin{align}
& F_{j} = e^{i(j-1)(\theta_{1}-\theta_{2})} \frac{n}{r_{1}^{2}} \frac{e^{-(j/n-br_{1}^{2b})\frac{t_{1}+t_{2}}{\sigma_{1}}}}{\frac{1}{j/n-br_{1}^{2b}}+\frac{(\frac{r_{1}}{r_{2}})^{2(j_{\star}-j)}}{br_{2}^{2b}-j/n}} \bigg( 1 + \frac{1}{n} \bigg\{ \frac{t_{1}+t_{2}}{\sigma_{1}}(1-\alpha) - \frac{j/n}{2\sigma_{1}^{2}}(t_{1}^{2}+t_{2}^{2}) \nonumber \\
& + \frac{b}{2\sigma_{1}^{2}}(1-2b)r_{1}^{2b}(t_{1}^{2}+t_{2}^{2}) + \frac{b^{2}r_{1}^{2b}+(j/n-br_{1}^{2b})\alpha}{(j/n-br_{1}^{2b})^{2}} \bigg\} + \bigO\Big( \frac{(\frac{r_{1}}{r_{2}})^{2(j_{\star}-j)}}{n} + \frac{1}{n^{2}} \Big) \bigg). \label{hj asymp 6 r1r1}
\end{align}
\item As $n \to + \infty$ with $\lfloor j_{\star}\rfloor+1 \leq j \leq j_{2,-}$, we have
\begin{align}
& F_{j} = e^{i(j-1)(\theta_{1}-\theta_{2})}\frac{n}{r_{1}^{2}} \frac{e^{-(j/n-br_{1}^{2b})\frac{t_{1}+t_{2}}{\sigma_{1}}}(\frac{r_{1}}{r_{2}})^{2(j-j_{\star})}}{\frac{1}{br_{2}^{2b} - j/n}+\frac{(\frac{r_{1}}{r_{2}})^{2(j-j_{\star})}}{j/n-br_{1}^{2b}}} \bigg( 1 + \frac{1}{n} \bigg\{ \frac{t_{1}+t_{2}}{\sigma_{1}}(1-\alpha) - \frac{j/n}{2\sigma_{1}^{2}}(t_{1}^{2}+t_{2}^{2}) \nonumber \\
& + \frac{b}{2\sigma_{1}^{2}}(1-2b)r_{1}^{2b}(t_{1}^{2}+t_{2}^{2}) + \frac{b^{2}r_{2}^{2b}+(j/n-br_{2}^{2b})\alpha}{(j/n-br_{2}^{2b})^{2}} \bigg\} + \bigO\Big( \frac{(\frac{r_{1}}{r_{2}})^{2(j-j_{\star})}}{n} + \frac{1}{n^{2}} \Big) \bigg). \label{hj asymp 7 r1r1}
\end{align}
\item As $n \to + \infty$ with $j_{2,-} \leq j \leq g_{2,-}$, we have
\begin{align}
& F_{j} = \bigO(e^{-n(1-\delta) \log(\frac{r_{2}}{r_{1}})(\frac{br_{2}^{2b}}{1+\epsilon}-\sigma_{\star})}). \label{hj asymp 8 r1r1}
\end{align}
\item As $n \to + \infty$ with $g_{2,-} \leq j \leq g_{2,+}$, we have
\begin{align}
& F_{j} = \bigO(e^{-n(1-\delta) \log(\frac{r_{2}}{r_{1}})\sigma_{2}}). \label{hj asymp 9 r1r1}
\end{align}
\item As $n \to + \infty$ with $g_{2,+} \leq j \leq j_{2,+}$, we have
\begin{align}
F_{j} & = \bigO(e^{-n(1-\delta) \log(\frac{r_{2}}{r_{1}})\sigma_{2}}). \label{hj asymp 10 r1r1}
\end{align}
\item As $n \to + \infty$ with $j_{2,+} \leq j \leq n$, we have
\begin{align}
F_{j} & = \bigO(e^{-n(1-\delta) \log(\frac{r_{2}}{r_{1}})\sigma_{2}}). \label{hj asymp 11 r1r1}
\end{align}
\end{enumerate}
\end{lemma}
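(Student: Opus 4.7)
The proof is a case-by-case computation, one range of $j$ at a time. For each of the eleven cases, we substitute the matching asymptotic formula for $h_j^{-1}$ from Lemma~\ref{lemma: asymp of hjinv} into the definition \eqref{def of Fj r1r1} of $F_j$ and combine it with an expansion of
\[
\mathcal{G}_j(z,w) := e^{-\frac{n}{2}(|z|^{2b}+|w|^{2b})}|z|^{\alpha}|w|^{\alpha}z^{j-1}\overline{w}^{j-1}.
\]
Using $|z|=r_1(1-t_1/(\sigma_1 n))$ and $|w|=r_1(1-t_2/(\sigma_1 n))$, a Taylor expansion yields
\[
e^{-\frac{n}{2}(|z|^{2b}+|w|^{2b})}=e^{-n r_1^{2b}}\exp\!\Big(\tfrac{br_1^{2b}(t_1+t_2)}{\sigma_1}\Big)\bigl(1+\bigO(n^{-1})\bigr),
\]
while $(1-t_k/(\sigma_1 n))^{j-1}=\exp(-t_k(j-1)/(\sigma_1 n))$ times a convergent series in $1/n$. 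The key algebraic cancellation is that these factors combine with the $n$-dependent exponential $e^{n(r_1^{2b}-2(j/n)\log r_1)}/r_1^{2\alpha}$ appearing in the second form of \eqref{hj asymp 5}--\eqref{hj asymp 7} to produce the compact factor $\exp(-(j/n-br_1^{2b})(t_1+t_2)/\sigma_1)$ seen in items (5)--(7), with all macroscopic $n$-terms cancelling exactly.

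\textbf{Bound-type items.} For the ranges of $j$ far from the critical value $j_{\star}$ (items (1)--(3) and (8)--(11)) only an upper bound is required; one extracts the exponent of $F_j$ and maximizes over the range. In items (1)--(3), the relevant exponent divided by $n$ has the form $u(1-\log(u/b))/b + u\log(r_1^2) - r_1^{2b}$ with $u = j/n$, which is strictly increasing on $[0, br_1^{2b}]$ and vanishes at $u = br_1^{2b}$; evaluation at the upper endpoint of the range produces the stated exponential decay after a short computation. For items (8)--(11), one has $u \approx br_2^{2b}$ while $|z|,|w| \approx r_1 < r_2$, producing a penalty of the form $\exp(-n\log(r_2/r_1)\sigma_2)$ upon combining with the appropriate form of $h_j^{-1}$. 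The conditions \eqref{cond on epsilon 1}--\eqref{cond on epsilon 2} ensure that all these endpoint estimates are uniform in $j$ and in the choice of $M\in[M'\sqrt{\log n},n^{1/10}]$.

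\textbf{Hard-edge expansion.} Item (4) is the most delicate since it demands precision up to order $1/n^{3/2}$. Substituting $j/n = br_1^{2b}(1 - M_{j,1}/\sqrt{n} + M_{j,1}^2/n + \cdots) - \alpha/n$ into $\mathcal{G}_j$ and combining with the second form of \eqref{hj asymp 4}, the macroscopic $n$-exponentials cancel exactly, leaving the Gaussian weight $e^{-r_1^{2b}M_{j,1}^2/2}$ multiplied by a power series in $1/\sqrt{n}$. The extra contribution $br_1^{2b}M_{j,1}(t_1+t_2)/\sigma_1$ in the $1/\sqrt{n}$ correction of \eqref{hj asymp 4 r1r1} arises from expanding $\exp(-(j/n-br_1^{2b})(t_1+t_2)/\sigma_1)$ in powers of $M_{j,1}/\sqrt{n}$. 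Items (5)--(7) are analogous but simpler: in (5) one expands uniformly for $u=j/n$ in a compact interval above $br_1^{2b}$, while in (6) and (7) one additionally tracks the factor $(r_1/r_2)^{2|j-j_\star|}$ that governs the transition through $j = j_\star$ between the two hard edges.

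\textbf{Main obstacle.} The principal technical difficulty lies in the bookkeeping for item (4): three orders of subleading correction ($1/\sqrt{n}$, $1/n$, $1/n^{3/2}$) must be tracked simultaneously, and the final error $\bigO((1+M_{j,1}^{12})/n^2)$ only emerges after careful matching of corrections in $h_j^{-1}$ with those in $\mathcal{G}_j$. The asymmetric error $\bigO(M_{j,1}^6\chi_{j,1}^+/n)$ reflects the fact that for $M_{j,1}>0$ the prefactor $1/(\tfrac{1}{2}\mathrm{erfc}(-M_{j,1}r_1^b/\sqrt{2}))$ is exponentially close to $1$, so several explicit correction terms of \eqref{hj asymp 4} can only feed into the error on that half of the range. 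A parallel but lengthier asymmetry appears on the $r_2$ side in items (8)--(10), which we treat in the same way after swapping the roles of $r_1$ and $r_2$ and of $\chi^\pm$.
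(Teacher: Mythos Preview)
Your proposal is correct and follows essentially the same route as the paper: insert the matching asymptotic for $h_j^{-1}$ from Lemma~\ref{lemma: asymp of hjinv}, expand the weight factor $\mathcal{G}_j(z,w)$, and in the bound-type items maximize the resulting exponent over the range using monotonicity of $u\mapsto u(1-\log(u/b))/b+u\log(r_1^{2})-r_1^{2b}$ on $[0,br_1^{2b}]$. One small overstatement: items (8)--(11) here are pure $\bigO$-bounds, so there is no ``lengthier asymmetry'' to track on the $r_2$ side in this lemma---the $\chi_{j,2}^{\pm}$ corrections from \eqref{hj asymp 9} are simply absorbed into the exponential estimate.
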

\begin{proof}
\eqref{hj asymp 1 r1r1} follows from \eqref{hj asymp 1} and \eqref{def of Fj r1r1}. Let us now prove \eqref{hj asymp 2 r1r1}. By \eqref{hj asymp 2} and \eqref{def of Fj r1r1}, as $n \to + \infty$ with $M' \leq j \leq j_{1,-}$, we have
\begin{align*}
F_{j} = \bigO\bigg(\exp \bigg\{ n(1-\delta) \bigg[ \frac{1}{b}(j/n-j/n \log(\frac{j/n}{b})) - r_{1}^{2b} + 2j/n \log(r_{1}) \bigg] \bigg\} \bigg).
\end{align*}
It is easy to check that the function $y \mapsto y-y \log(\frac{y}{b}) + y \log(r_{1}^{2b})$ is increasing for $y\in [0,br_{1}^{2b}]$. Since $j/n \leq \frac{br_{1}^{2b}}{1+\epsilon}+\bigO(n^{-1})$, we thus have
\begin{align*}
F_{j} & = \bigO\bigg(\exp \bigg\{ -nr_{1}^{2b}(1-\delta) \frac{\epsilon - \log(1+\epsilon)}{1+\epsilon} \bigg\} \bigg),
\end{align*}
which is \eqref{hj asymp 2 r1r1}. Now we prove \eqref{hj asymp 3 r1r1}. By \eqref{hj asymp 3} and \eqref{def of Fj r1r1}, as $n \to + \infty$ with $j_{1,-} \leq j \leq g_{1,-}$, we have
\begin{align*}
F_{j} = \bigO\bigg( \exp \bigg\{ n(1-\delta) \bigg[ \frac{j/n - j/n \log(\frac{j/n}{b})}{b}-r_{1}^{2b} + 2j/n \log(r_{1}) \bigg] \bigg\} \bigg).
\end{align*}
Since $[0,br_{1}^{2b}] \in y \mapsto y-y \log(\frac{y}{b}) + y \log(r_{1}^{2b})$ is increasing and $j/n \leq \frac{b r_{1}^{2b}}{1+\frac{M}{\sqrt{n}}} + \bigO(n^{-1})$, we have
\begin{align*}
F_{j} = \bigO\bigg( \exp \bigg\{ -(1-\delta) \frac{M^{2}r_{1}^{2b}}{2} \bigg\} \bigg).
\end{align*}
Since $M\geq M' \sqrt{\log n}$, \eqref{hj asymp 3 r1r1} holds provided that $M'$ is chosen sufficiently large. Now we prove \eqref{hj asymp 4 r1r1}. As $n \to + \infty$ with $g_{1,-} \leq j \leq g_{1,+}$, we have
\begin{align*}
& e^{r_{1}^{2b}(1-\log(r_{1}^{2b}))n}e^{M_{j,1} r_{1}^{2b} \log(r_{1}^{2b})\sqrt{n}}e^{-r_{1}^{2b}(\frac{1}{2}+\log(r_{1}^{2b}))M_{j,1}^{2}} e^{-\frac{n}{2}|z|^{2b}}e^{-\frac{n}{2}|w|^{2b}}|z|^{\alpha}|w|^{\alpha}z^{j-1}\overline{w}^{j-1} \\
& = e^{i(j-1)(\theta_{1}-\theta_{2})}e^{-\frac{1}{2}M_{j,1}^{2}r_{1}^{2b}}r_{1}^{-2} \bigg\{ 1 - \frac{b M_{j,1}r_{1}^{2b}}{\sqrt{n}}\Big(2M_{j,1}^{2}\log r_{1}-\frac{t_{1}+t_{2}}{\sigma_{1}}\Big) \\
& + \frac{1}{n}\bigg( 2b^{2}r_{1}^{4b}(\log r_{1})^{2} M_{j,1}^{6} + 2br_{1}^{2b}\big( 1-br_{1}^{2b} \frac{t_{1}+t_{2}}{\sigma_{1}} \big) \log(r_{1})M_{j,1}^{4} \bigg) - \frac{r_{1}^{6b}(\log(r_{1}^{2b}))^{3}M_{j,1}^{9}}{6n^{3/2}} \\
& + \bigO\bigg(\frac{1+M_{j,1}^{2}}{n}+\frac{1+M_{j,1}^{7}}{n^{3/2}} + \frac{1+M_{j,1}^{12}}{n^{2}}\bigg) \bigg\}.
\end{align*}
Now \eqref{hj asymp 4 r1r1} follows directly from \eqref{hj asymp 4} and \eqref{def of Fj r1r1}. The proofs of \eqref{hj asymp 5 r1r1}, \eqref{hj asymp 6 r1r1} and \eqref{hj asymp 7 r1r1} follow in a similar way, using \eqref{hj asymp 5}, \eqref{hj asymp 6} and \eqref{hj asymp 7}, respectively.

Let us now prove \eqref{hj asymp 8 r1r1}. By \eqref{hj asymp 8} and \eqref{def of Fj r1r1}, as $n \to + \infty$ with $j_{2,-} \leq j \leq g_{2,-}$, we have
\begin{align*}
F_{j} = \bigO\bigg(\exp \bigg\{ n(1-\delta) \bigg[ r_{2}^{2b} - 2 j/n \log r_{2} - r_{1}^{2b} + 2j/n \log(r_{1}) \bigg] \bigg\} \bigg).
\end{align*}
Since $j/n = \frac{br_{2}^{2b}}{1+\epsilon}$, \eqref{hj asymp 8 r1r1} follows. The proof of \eqref{hj asymp 9 r1r1} is similar, and uses the fact that for $g_{2,-} \leq j \leq g_{2,+}$, we have $j/n = br_{2}^{2b} + \bigO(n^{-1/2})$. The proofs of \eqref{hj asymp 10 r1r1} and \eqref{hj asymp 11 r1r1} are also similar, and uses the fact that for $g_{2,+} \leq j \leq n$, we have $\frac{1}{b}(j/n - j/n \log(\frac{j/n}{b})) - r_{1}^{2b} + 2j/n \log(r_{1}) \leq -2 \log (\frac{r_{2}}{r_{1}})\sigma_{2}$.
\end{proof}

\begin{lemma}\label{lemma:r1r1 second lemma}
As $n \to + \infty$, we have
\begin{align*}
& K_{n}(z,w) = S_{4} + S_{5} + S_{6} + S_{7} + \bigO(n^{-100}),
\end{align*}
where
\begin{align*}
S_{4} := \sum_{j=g_{1,-}}^{g_{1,+}}F_{j}, \qquad S_{5} := \sum_{j=g_{1,+}+1}^{j_{1,+}}F_{j}, \qquad S_{6} := \sum_{j=j_{1,+}+1}^{\lfloor j_{\star}\rfloor}F_{j}, \qquad S_{7} := \sum_{j=\lfloor j_{\star}\rfloor+1}^{j_{2,-}} F_{j}.
\end{align*}
Furthermore, as $n \to + \infty$ we have
\begin{align}
& S_{4} = \sum_{j=g_{1,-}}^{g_{1,+}} e^{i(j-1)(\theta_{1}-\theta_{2})} \sqrt{n}\frac{\sqrt{2} \, b r_{1}^{b-2} e^{-\frac{M_{j,1}^{2}r_{1}^{2b}}{2}}}{\sqrt{\pi} \mathrm{erfc}(-\frac{M_{j,1}r_{1}^{b}}{\sqrt{2}})} \bigg( 1 + \frac{1}{\sqrt{n}}\bigg\{\frac{(5M_{j,1}^{2}r_{1}^{2b}-2)e^{-\frac{r_{1}^{2b}M_{j,1}^{2}}{2}}}{3\sqrt{2\pi} r_{1}^{b} \mathrm{erfc}(-\frac{M_{j,1}r_{1}^{b}}{\sqrt{2}}) }  \nonumber \\
& \qquad + \frac{M_{j,1}}{6} \Big( 5M_{j,1}^{2}r_{1}^{2b}-3+6br_{1}^{2b}\frac{t_{1}+t_{2}}{\sigma_{1}}  \Big) \bigg\} \nonumber \\
& \qquad + \bigO\bigg(\frac{1+M_{j,1}^{2}+M_{j,1}^{6}\chi_{j,1}^{+}}{n}+\frac{1+M_{j,1}^{7}+M_{j,1}^{9}\chi_{j,1}^{+}}{n^{3/2}} + \frac{1+M_{j,1}^{12}}{n^{2}}\bigg)  \bigg), \label{S4 asymp 1} \\
& S_{5} = \sum_{j=g_{1,+}+1}^{j_{1,+}} e^{i(j-1)(\theta_{1}-\theta_{2})} n \frac{e^{-(j/n-br_{1}^{2b})\frac{t_{1}+t_{2}}{\sigma_{1}}}(j/n-br_{1}^{2b})}{r_{1}^{2}}\bigg( 1 + \frac{1}{n} \bigg\{ \frac{t_{1}+t_{2}}{\sigma_{1}}(1-\alpha) - \frac{j/n}{2\sigma_{1}^{2}}(t_{1}^{2}+t_{2}^{2}) \nonumber \\
& \qquad + \frac{b}{2\sigma_{1}^{2}}(1-2b)r_{1}^{2b}(t_{1}^{2}+t_{2}^{2}) + \frac{b^{2}r_{1}^{2b}+(j/n-br_{1}^{2b})\alpha}{(j/n-br_{1}^{2b})^{2}} \bigg\} - \frac{2 b^{4} r_{1}^{4b}}{n^{2}(j/n-br_{1}^{2b})^{4}} \nonumber \\
& \qquad + \frac{10b^{6}r_{1}^{6b}}{n^{3}(j/n-br_{1}^{2b})^{6}} + \bigO\Big( \frac{1}{n^{2}(j/n-br_{1}^{2b})^{3}} + \frac{1}{n^{4}(j/n-br_{1}^{2b})^{8}} \Big) \bigg), \label{S5 asymp 1} \\
& S_{6} = \sum_{j=j_{1,+}+1}^{\lfloor j_{\star}\rfloor}e^{i(j-1)(\theta_{1}-\theta_{2})} \frac{n}{r_{1}^{2}} \frac{e^{-(j/n-br_{1}^{2b})\frac{t_{1}+t_{2}}{\sigma_{1}}}}{\frac{1}{j/n-br_{1}^{2b}}+\frac{(\frac{r_{1}}{r_{2}})^{2(j_{\star}-j)}}{br_{2}^{2b}-j/n}} \bigg( 1 + \frac{1}{n} \bigg\{ \frac{t_{1}+t_{2}}{\sigma_{1}}(1-\alpha) - \frac{j/n}{2\sigma_{1}^{2}}(t_{1}^{2}+t_{2}^{2}) \nonumber \\
& + \frac{b}{2 \sigma_{1}^{2}}(1-2b)r_{1}^{2b}(t_{1}^{2}+t_{2}^{2}) + \frac{b^{2}r_{1}^{2b}+(j/n-br_{1}^{2b})\alpha}{(j/n-br_{1}^{2b})^{2}} \bigg\} + \bigO\Big( \frac{(\frac{r_{1}}{r_{2}})^{2(j_{\star}-j)}}{n} + \frac{1}{n^{2}} \Big) \bigg), \label{S6 asymp 1} \\
& S_{7} = \sum_{j=\lfloor j_{\star}\rfloor+1}^{j_{2,-}} e^{i(j-1)(\theta_{1}-\theta_{2})}\frac{n}{r_{1}^{2}} \frac{e^{-(j/n-br_{1}^{2b})\frac{t_{1}+t_{2}}{\sigma_{1}}}(\frac{r_{1}}{r_{2}})^{2(j-j_{\star})}}{\frac{1}{br_{2}^{2b} - j/n}+\frac{(\frac{r_{1}}{r_{2}})^{2(j-j_{\star})}}{j/n-br_{1}^{2b}}} \bigg( 1 + \frac{1}{n} \bigg\{ \frac{t_{1}+t_{2}}{\sigma_{1}}(1-\alpha) - \frac{j/n}{2\sigma_{1}^{2}}(t_{1}^{2}+t_{2}^{2}) \nonumber \\
& + \frac{b}{2\sigma_{1}^{2}}(1-2b)r_{1}^{2b}(t_{1}^{2}+t_{2}^{2}) + \frac{b^{2}r_{2}^{2b}+(j/n-br_{2}^{2b})\alpha}{(j/n-br_{2}^{2b})^{2}} \bigg\} + \bigO\Big( \frac{(\frac{r_{1}}{r_{2}})^{2(j-j_{\star})}}{n} + \frac{1}{n^{2}} \Big) \bigg). \label{S7 asymp 1}
\end{align}
\end{lemma}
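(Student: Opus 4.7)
The plan is to reduce the statement to a bookkeeping step on top of Lemma \ref{lemma: asymp of Fj r1r1}. Starting from the decomposition $K_{n}(z,w) = \sum_{j=1}^{n} F_{j}$ recorded in \eqref{def of Fj r1r1}, I would split the sum into eleven consecutive pieces $T_{1}, \ldots, T_{11}$, one for each index range appearing in Lemma \ref{lemma: asymp of Fj r1r1}, so that by definition $T_{4}=S_{4}$, $T_{5}=S_{5}$, $T_{6}=S_{6}$, $T_{7}=S_{7}$. It then suffices to prove that the other seven pieces are each $\bigO(n^{-100})$, and to substitute the per-$j$ asymptotics of Lemma \ref{lemma: asymp of Fj r1r1}(4)--(7) directly into $S_{4}, \ldots, S_{7}$ to obtain \eqref{S4 asymp 1}--\eqref{S7 asymp 1}.

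For the pieces $T_{1}, T_{2}, T_{8}, T_{9}, T_{10}, T_{11}$, the bounds provided by Lemma \ref{lemma: asymp of Fj r1r1}(1), (2), (8), (9), (10), (11) are all of the form $F_{j} = \bigO(e^{-cn})$ for some constant $c>0$ depending only on $b, r_{1}, r_{2}, \alpha, \epsilon$ (and not on $n$ or $j$ within the range); since each such range contains at most $n$ indices, the corresponding sum is $\bigO(n e^{-cn}) = \bigO(n^{-100})$ trivially.

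The more delicate piece is $T_{3}$, corresponding to the range $j_{1,-} \leq j \leq g_{1,-}$. Lemma \ref{lemma: asymp of Fj r1r1}(3) gives only a polynomial bound $\bigO(n^{-100})$ per term, which upon summing over up to $n$ indices would yield $\bigO(n^{-99})$. To tighten this, I would revisit the proof of Lemma \ref{lemma: asymp of Fj r1r1}(3): there one in fact has $F_{j} = \bigO(\exp\{-(1-\delta) M^{2} r_{1}^{2b}/2\})$ uniformly in the range, with $M \geq M'\sqrt{\log n}$. Choosing $M'$ (independently of $\delta$) large enough that $(1-\delta)(M')^{2}r_{1}^{2b}/2 \geq 101$ gives $F_{j} = \bigO(n^{-101})$ and hence $T_{3} = \bigO(n^{-100})$. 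This is exactly the freedom in the choice of $M'$ already built into Lemmas \ref{lemma: asymp of hjinv} and \ref{lemma: asymp of Fj r1r1}.

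Once the seven outside pieces are disposed of, the expansions \eqref{S4 asymp 1}--\eqref{S7 asymp 1} follow by inserting \eqref{hj asymp 4 r1r1}--\eqref{hj asymp 7 r1r1} termwise; this is permitted because those per-$j$ asymptotics are uniform in $j$ across each range (as established in the proof of Lemma \ref{lemma: asymp of Fj r1r1}). No further estimation is required. The main obstacle, therefore, is not in the present lemma but already resolved in Lemma \ref{lemma: asymp of Fj r1r1}; the only real point to verify here is the sharpened $T_{3}$ bound above, which is why the statement emphasizes that $M'$ can be enlarged independently of $\delta$.
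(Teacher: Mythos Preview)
Your proposal is correct and follows the same approach as the paper, whose proof is the single sentence ``This is a straightforward consequence of Lemma \ref{lemma: asymp of Fj r1r1}.'' Your explicit treatment of the piece $T_{3}$---noting that the stated per-term bound $\bigO(n^{-100})$ would only yield $\bigO(n^{-99})$ after summing over $\bigO(n)$ indices, and recovering the sharper estimate $\bigO(\exp\{-(1-\delta)M^{2}r_{1}^{2b}/2\})$ from the proof of \eqref{hj asymp 3 r1r1} by enlarging $M'$---is a bookkeeping point the paper leaves implicit, so if anything your write-up is more careful than the original.
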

\begin{proof}
This is a straightforward consequence of Lemma \ref{lemma: asymp of Fj r1r1}.
\end{proof}

\begin{lemma}\label{lemma:asymp of S7}
As $n \to + \infty$,
\begin{align*}
S_{7} = \frac{n  e^{i \lfloor j_{\star} \rfloor (\theta_{1}-\theta_{2})} \sigma_{1}}{r_{1}^{2} e^{t_{1}+t_{2}}}\sum_{j=0}^{+\infty}  \frac{e^{i j(\theta_{1}-\theta_{2})}}{1 + \frac{\sigma_{1}}{\sigma_{2}}(\frac{r_{2}}{r_{1}})^{2(j+1-x)}} + \bigO\big( (\log n)^{2} \big),
\end{align*}
where $x=x(n)$ is given by \eqref{def of theta star intro}.
\end{lemma}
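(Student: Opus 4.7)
The plan is to start from the expression for $S_7$ in Lemma \ref{lemma:r1r1 second lemma}, equation \eqref{S7 asymp 1}, and then perform a truncation followed by a substitution $j/n \leadsto \sigma_{\star}$.

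First I would exploit the geometric decay of the factor $(r_1/r_2)^{2(j-j_{\star})}$. Since $j \geq \lfloor j_\star \rfloor + 1$ in $S_7$ we have $j-j_{\star}\geq 1-x \geq 0$, so the leading factor in \eqref{S7 asymp 1} (including the denominator) is bounded by $C\,n\,(r_1/r_2)^{2(j-j_{\star})}$. Choosing $M'$ sufficiently large, the tail contribution
$$ \sum_{j=\lfloor j_\star\rfloor + M'\log n +1}^{j_{2,-}} F_j = \bigO\!\big( n \cdot (r_1/r_2)^{2M'\log n}\big) = \bigO(n^{-100})$$
is negligible. Thus it suffices to analyze the finite sum for $1 \leq j - \lfloor j_\star\rfloor \leq M'\log n$, which contains only $\bigO(\log n)$ significant terms.

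Second, in this truncated range one has $j/n = \sigma_\star + \bigO((\log n)/n)$, so by \eqref{def of taustar},
$$ j/n - br_1^{2b} = \sigma_1 + \bigO((\log n)/n), \qquad br_2^{2b} - j/n = \sigma_2 + \bigO((\log n)/n).$$
I would substitute these identities into \eqref{S7 asymp 1}, both in the exponential $e^{-(j/n-br_1^{2b})(t_1+t_2)/\sigma_1}$ (which becomes $e^{-t_1-t_2}$ up to a multiplicative error $1+\bigO((\log n)/n)$) and in the denominator $1/(br_2^{2b}-j/n)+(r_1/r_2)^{2(j-j_\star)}/(j/n-br_1^{2b})$. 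Since each summand is $\bigO(n\cdot (r_1/r_2)^{2(j-j_\star)})$ in absolute value and the relative error from these replacements is $\bigO((\log n)/n)$, the error per term is $\bigO(\log n)$, and summing over the $\bigO(\log n)$ non-negligible terms produces the total error $\bigO((\log n)^{2})$. The explicit $1/n$ correction inside the big parenthesis in \eqref{S7 asymp 1} contributes only $\bigO(1)$ after being summed against the geometric decay, which is absorbed.

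Third, with $\ell := j - \lfloor j_\star\rfloor - 1 \geq 0$ and $j - j_{\star} = \ell + 1 - x$, I would rewrite the summand by multiplying numerator and denominator by $(r_2/r_1)^{2(\ell+1-x)}$. This turns the factor
$$ \frac{(r_1/r_2)^{2(j-j_\star)}}{1/(br_2^{2b}-j/n) + (r_1/r_2)^{2(j-j_\star)}/(j/n-br_1^{2b})}  \quad\text{into}\quad \frac{\sigma_1}{1 + (\sigma_1/\sigma_2)(r_2/r_1)^{2(\ell+1-x)}},$$
and the phase becomes $e^{i(j-1)(\theta_1-\theta_2)} = e^{i\lfloor j_\star\rfloor(\theta_1-\theta_2)} e^{i\ell(\theta_1-\theta_2)}$. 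Factoring $n\sigma_1 e^{-t_1-t_2} e^{i\lfloor j_\star\rfloor(\theta_1-\theta_2)}/r_1^{2}$ outside the sum yields exactly
$$ S_7 = \frac{n e^{i\lfloor j_\star\rfloor(\theta_1-\theta_2)}\sigma_1}{r_1^{2} e^{t_1+t_2}} \sum_{\ell=0}^{M'\log n -1} \frac{e^{i\ell(\theta_1-\theta_2)}}{1+(\sigma_1/\sigma_2)(r_2/r_1)^{2(\ell+1-x)}} + \bigO((\log n)^{2}).$$
Finally, extending the $\ell$-sum to $+\infty$ costs only a tail of size $\bigO((r_1/r_2)^{2M'\log n}) = \bigO(n^{-100})$, which is absorbed into the error term, giving the claimed identity.

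The only mildly delicate point is ensuring that the quantity $x \in [0,1)$ interacts correctly with the shift $j - j_\star$; since $x = j_\star - \lfloor j_\star\rfloor$ is fixed once $n$ is fixed (independent of $\ell$), this is clean, but one must be careful that the convergence of the $\ell$-sum is uniform in $x \in [0,1)$, which follows from $r_1 < r_2$.
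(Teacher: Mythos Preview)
Your proposal is correct and follows essentially the same approach as the paper's own proof: truncate the sum at $\lfloor j_\star\rfloor + M'\log n$ using the geometric decay of $(r_1/r_2)^{2(j-j_\star)}$, replace $j/n$ by $\sigma_\star$ in the truncated range (incurring an $\bigO((\log n)^2)$ error), rewrite via the index shift $\ell=j-\lfloor j_\star\rfloor-1$, and then extend the sum back to infinity. Your bookkeeping of the error terms is slightly sharper than the paper's in places (e.g.\ your $\bigO(n^{-100})$ for the tail versus the paper's looser $\bigO(\log n)$), but the structure of the argument is identical.
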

\begin{proof}
By \eqref{S7 asymp 1}, as $n \to + \infty$,
\begin{align*}
S_{7} = \sum_{j=\lfloor j_{\star}\rfloor+1}^{j_{2,-}} e^{i(j-1)(\theta_{1}-\theta_{2})}\frac{n}{r_{1}^{2}} \frac{e^{-(j/n-br_{1}^{2b})\frac{t_{1}+t_{2}}{\sigma_{1}}}(\frac{r_{1}}{r_{2}})^{2(j-j_{\star})}}{\frac{1}{br_{2}^{2b} - j/n}+\frac{(\frac{r_{1}}{r_{2}})^{2(j-j_{\star})}}{j/n-br_{1}^{2b}}} + \sum_{j=\lfloor j_{\star}\rfloor+1}^{j_{2,-}} \bigO\bigg( \Big(\frac{r_{1}}{r_{2}}\Big)^{2(j-j_{\star})} \bigg).
\end{align*}
The last sum is $\bigO(\log n)$. For the first sum, thanks to $(\frac{r_{1}}{r_{2}})^{2(j-j_{\star})}$ appearing in the numerator, the upper bound of summation $j_{2,-}$ can be replaced by $\lfloor j_{\star} \rfloor + M' \log n$ at the cost of an error $\bigO(\log n)$. We thus have
\begin{align*}
S_{7} = \sum_{\lfloor j_{\star}\rfloor+1}^{\lfloor j_{\star} \rfloor + M' \log n} e^{i(j-1)(\theta_{1}-\theta_{2})}\frac{n}{r_{1}^{2}} \frac{e^{-(j/n-br_{1}^{2b})\frac{t_{1}+t_{2}}{\sigma_{1}}}(\frac{r_{1}}{r_{2}})^{2(j-j_{\star})}}{\frac{1}{br_{2}^{2b} - j/n}+\frac{(\frac{r_{1}}{r_{2}})^{2(j-j_{\star})}}{j/n-br_{1}^{2b}}} + \bigO\big( \log n \big).
\end{align*}
For $|j-\lfloor j_{\star} \rfloor| \leq M' \log n$, we have $j/n = \sigma_{\star} + \bigO(\frac{\log n}{n})$, and thus
\begin{align*}
S_{7} = \frac{n e^{-t_{1}-t_{2}}}{r_{1}^{2}}\sum_{\lfloor j_{\star}\rfloor+1}^{\lfloor j_{\star} \rfloor + M' \log n} e^{i(j-1)(\theta_{1}-\theta_{2})} \frac{(\frac{r_{1}}{r_{2}})^{2(j-j_{\star})}}{\frac{1}{\sigma_{2}}+\frac{(\frac{r_{1}}{r_{2}})^{2(j-j_{\star})}}{\sigma_{1}}} + \bigO\big( (\log n)^{2} \big).
\end{align*}
In the above sum, we can replace $\lfloor j_{\star} \rfloor + M' \log n$ by $+\infty$ at the cost of an error $\bigO(\log n)$ (again thanks to $(\frac{r_{1}}{r_{2}})^{2(j-j_{\star})}$ appearing in the numerator). Then the claim follows from a simple shift of the indice of summation.
\end{proof}

\begin{lemma}\label{lemma:asymp of S6}
As $n \to + \infty$,
\begin{align*}
& S_{6} = -\frac{n e^{i \lfloor j_{\star} \rfloor (\theta_{1}-\theta_{2})}\sigma_{1}}{r_{1}^{2}e^{t_{1}+t_{2}} } \sum_{j=0}^{+\infty} \frac{e^{-i(j+1)(\theta_{1}-\theta_{2})}}{1+\frac{\sigma_{2}}{\sigma_{1}}(\frac{r_{2}}{r_{1}})^{2(j+x)}} \\
& + \sum_{j=j_{1,+}+1}^{\lfloor j_{\star}\rfloor}e^{i(j-1)(\theta_{1}-\theta_{2})} \frac{n (j/n-br_{1}^{2b})}{r_{1}^{2}e^{(j/n-br_{1}^{2b})\frac{t_{1}+t_{2}}{\sigma_{1}}}} \bigg( 1 + \frac{1}{n} \bigg\{ \frac{t_{1}+t_{2}}{\sigma_{1}}(1-\alpha) - \frac{j/n}{2\sigma_{1}^{2}}(t_{1}^{2}+t_{2}^{2}) \nonumber \\
& + \frac{b}{2\sigma_{1}^{2}}(1-2b)r_{1}^{2b}(t_{1}^{2}+t_{2}^{2}) + \frac{b^{2}r_{1}^{2b}+(j/n-br_{1}^{2b})\alpha}{(j/n-br_{1}^{2b})^{2}} \bigg\} + \bigO\Big( \frac{1}{n^{2}} \Big) \bigg) + \bigO\big( (\log n)^{2} \big).
\end{align*}
\end{lemma}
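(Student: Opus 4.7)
I would follow the strategy used for $S_{7}$ in Lemma \ref{lemma:asymp of S7}, but start by extracting an explicit leading piece from the denominator of the summand in \eqref{S6 asymp 1} via the algebraic identity
$$\frac{1}{\frac{1}{j/n-br_{1}^{2b}}+\frac{(\frac{r_{1}}{r_{2}})^{2(j_{\star}-j)}}{br_{2}^{2b}-j/n}} = (j/n-br_{1}^{2b}) - \frac{(j/n-br_{1}^{2b})^{2}(\frac{r_{1}}{r_{2}})^{2(j_{\star}-j)}}{br_{2}^{2b}-j/n + (j/n-br_{1}^{2b})(\frac{r_{1}}{r_{2}})^{2(j_{\star}-j)}},$$
which is just $\frac{1}{A+B}=\frac{1}{A}-\frac{B}{A(A+B)}$ with $A=\frac{1}{j/n-br_{1}^{2b}}$ and $B=\frac{(r_{1}/r_{2})^{2(j_{\star}-j)}}{br_{2}^{2b}-j/n}$. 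Substituting back into \eqref{S6 asymp 1} splits $S_{6}$ into a ``leading'' sum built from the factor $(j/n-br_{1}^{2b})$ and a ``correction'' sum built from the second negative term.

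The leading sum is, term by term, exactly the second sum appearing in the claim. The only verification needed here is that the residual error $\bigO\big(\frac{(r_{1}/r_{2})^{2(j_{\star}-j)}}{n}\big)$ from \eqref{S6 asymp 1}, once multiplied by $n(j/n-br_{1}^{2b})$ and summed over $j$, contributes at most $\bigO(1)$, using the geometric decay of $(r_{1}/r_{2})^{2(j_{\star}-j)}$; this is absorbed into the final $\bigO((\log n)^{2})$ remainder.

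For the correction sum I would next truncate the range of summation to $\lfloor j_{\star}\rfloor-M'\log n\le j\le \lfloor j_{\star}\rfloor$; the discarded terms are $\bigO(n^{-100})$ provided $M'$ is chosen large enough. On the retained range one has $j/n=\sigma_{\star}+\bigO((\log n)/n)$, so $j/n-br_{1}^{2b}=\sigma_{1}+\bigO((\log n)/n)$, $br_{2}^{2b}-j/n=\sigma_{2}+\bigO((\log n)/n)$, and $e^{-(j/n-br_{1}^{2b})(t_{1}+t_{2})/\sigma_{1}}=e^{-t_{1}-t_{2}}(1+\bigO((\log n)/n))$. Making these replacements, and also replacing the $(1+\frac{1}{n}\{\cdots\})$ factor from \eqref{S6 asymp 1} by $1$, produces a total error of size $\bigO((\log n)^{2})$.

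After these simplifications and the index change $k:=\lfloor j_{\star}\rfloor-j$ (so that $j_{\star}-j=k+x$), the correction sum takes the form
$$-\frac{n\,\sigma_{1}^{2}\,e^{i\lfloor j_{\star}\rfloor(\theta_{1}-\theta_{2})}}{r_{1}^{2}\,e^{t_{1}+t_{2}}}\sum_{k=0}^{M'\log n}\frac{e^{-i(k+1)(\theta_{1}-\theta_{2})}(r_{1}/r_{2})^{2(k+x)}}{\sigma_{2}+\sigma_{1}(r_{1}/r_{2})^{2(k+x)}}+\bigO((\log n)^{2}).$$
Extending the upper limit to $k=+\infty$ adds only $\bigO(n^{-100})$, and the elementary identity $\frac{\sigma_{1}(r_{1}/r_{2})^{2(k+x)}}{\sigma_{2}+\sigma_{1}(r_{1}/r_{2})^{2(k+x)}}=\frac{1}{1+(\sigma_{2}/\sigma_{1})(r_{2}/r_{1})^{2(k+x)}}$ rewrites this in the form displayed in the lemma. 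There is no conceptual obstacle beyond the partial-fraction decomposition; the main difficulty is the careful bookkeeping needed to confirm that every discarded remainder genuinely fits inside the stated $\bigO((\log n)^{2})$ error (especially the mixing of the $\bigO((r_{1}/r_{2})^{2(j_{\star}-j)}/n)$ error from \eqref{S6 asymp 1} with the $\bigO((\log n)/n)$ errors introduced by freezing $j/n$ at $\sigma_{\star}$).
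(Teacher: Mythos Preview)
Your argument is correct and shares its core idea with the paper's proof, but the organization differs in a way worth noting. The paper first splits $S_6=S_6^{(1)}+S_6^{(2)}$ by range ($j$ far from vs.\ near $\lfloor j_\star\rfloor$), freezes $j/n$ at $\sigma_\star$ throughout $S_6^{(2)}$, then adds and subtracts $\sigma_1 e^{i(j-1)(\theta_1-\theta_2)}$ inside the frozen sum---which is precisely your partial-fraction identity applied \emph{after} freezing---and finally ``un-freezes'' the added-back leading piece so it can be merged with $S_6^{(1)}$. You instead apply $\tfrac{1}{A+B}=\tfrac{1}{A}-\tfrac{B}{A(A+B)}$ uniformly over the full range \emph{before} any truncation, so the leading sum matches the claimed expression immediately and only the correction piece needs truncation and freezing. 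This sidesteps the paper's un-freezing manoeuvre and the explicit range-split of the leading part; the price is the small extra check (which you handle) that the $\bigO((r_1/r_2)^{2(j_\star-j)}/n)$ error carried by the leading sum contributes only $\bigO(1)$ after summation.
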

\begin{proof}
Let us split $S_{6} = S_{6}^{(1)}+S_{6}^{(2)}$, where
\begin{align*}
S_{6}^{(1)} = \sum_{j=j_{1,+}+1}^{\lfloor j_{\star} \rfloor - (M'+1) \log n} F_{j}, \qquad S_{6}^{(2)} = \sum_{j=\lfloor j_{\star} \rfloor - M' \log n}^{\lfloor j_{\star} \rfloor} F_{j}.
\end{align*}
Using \eqref{S6 asymp 1} and a similar analysis as in the proof of Lemma \ref{lemma:asymp of S7}, we obtain
\begin{align}
& S_{6}^{(2)} = \bigO\big( (\log n)^{2} \big) + \frac{n}{ r_{1}^{2}} e^{-t_{1}-t_{2}}\sum_{j=\lfloor j_{\star} \rfloor - M' \log n}^{\lfloor j_{\star} \rfloor}   \frac{e^{i(j-1)(\theta_{1}-\theta_{2})}}{\frac{1}{\sigma_{1}}} \nonumber \\
& + \frac{n}{r_{1}^{2}} e^{-t_{1}-t_{2}}\sum_{j=\lfloor j_{\star} \rfloor - M' \log n}^{\lfloor j_{\star} \rfloor}  \bigg\{ \frac{e^{i(j-1)(\theta_{1}-\theta_{2})}}{\frac{1}{\sigma_{1}}+\frac{(\frac{r_{1}}{r_{2}})^{2(j_{\star}-j)}}{\sigma_{2}}} - \frac{e^{i(j-1)(\theta_{1}-\theta_{2})}}{\frac{1}{\sigma_{1}}} \bigg\} \label{lol1}
\end{align}
as $n \to + \infty$. The last term in \eqref{lol1} is easily shown to be
\begin{align*}
-\frac{n e^{i \lfloor j_{\star} \rfloor (\theta_{1}-\theta_{2})}\sigma_{1}}{r_{1}^{2}e^{t_{1}+t_{2}} } \sum_{j=0}^{+\infty} \frac{e^{-i(j+1)(\theta_{1}-\theta_{2})}}{1+\frac{\sigma_{2}}{\sigma_{1}}(\frac{r_{2}}{r_{1}})^{2(j+x)}} + \bigO(\log n)
\end{align*}
as $n \to + \infty$. For the first sum in \eqref{lol1}, since $|j-\lfloor j_{\star} \rfloor| \leq M' \log n$, we can replace $\sigma_{1}$ by $j/n-br_{1}^{2b}$ at the cost of an error $\bigO((\log n)^{2})$ in the asymptotics of $S_{6}^{(2)}$; we thus find
\begin{align*}
& S_{6}^{(2)} = \bigO\big( (\log n)^{2} \big) + \frac{n}{ r_{1}^{2}} e^{-t_{1}-t_{2}} \hspace{-0.25cm} \sum_{j=\lfloor j_{\star} \rfloor - M' \log n}^{\lfloor j_{\star} \rfloor}   \frac{e^{i(j-1)(\theta_{1}-\theta_{2})}}{\frac{1}{j/n-br_{1}^{2b}}}  -\frac{n e^{i \lfloor j_{\star} \rfloor (\theta_{1}-\theta_{2})}\sigma_{1}}{r_{1}^{2}e^{t_{1}+t_{2}} } \sum_{j=0}^{+\infty} \frac{e^{-i(j+1)(\theta_{1}-\theta_{2})}}{1+\frac{\sigma_{2}}{\sigma_{1}}(\frac{r_{2}}{r_{1}})^{2(j+x)}} \\
& = \bigO\big( (\log n)^{2} \big) -\frac{n e^{i \lfloor j_{\star} \rfloor (\theta_{1}-\theta_{2})}\sigma_{1}}{r_{1}^{2}e^{t_{1}+t_{2}} } \sum_{j=0}^{+\infty} \frac{e^{-i(j+1)(\theta_{1}-\theta_{2})}}{1+\frac{\sigma_{2}}{\sigma_{1}}(\frac{r_{2}}{r_{1}})^{2(j+x)}} \nonumber \\
& + \hspace{-0.25cm} \sum_{j=\lfloor j_{\star} \rfloor - M' \log n}^{\lfloor j_{\star} \rfloor} e^{i(j-1)(\theta_{1}-\theta_{2})} \frac{n}{r_{1}^{2}} \frac{e^{-(j/n-br_{1}^{2b})\frac{t_{1}+t_{2}}{\sigma_{1}}}}{\frac{1}{j/n-br_{1}^{2b}}} \bigg( 1 + \frac{1}{n} \bigg\{ \frac{t_{1}+t_{2}}{\sigma_{1}}(1-\alpha) - \frac{j/n}{2\sigma_{1}^{2}}(t_{1}^{2}+t_{2}^{2}) \nonumber \\
& + \frac{b}{2 \sigma_{1}^{2}}(1-2b)r_{1}^{2b}(t_{1}^{2}+t_{2}^{2}) + \frac{b^{2}r_{1}^{2b}+(j/n-br_{1}^{2b})\alpha}{(j/n-br_{1}^{2b})^{2}} \bigg\} + \bigO\Big( \frac{(\frac{r_{1}}{r_{2}})^{2(j_{\star}-j)}}{n} + \frac{1}{n^{2}} \Big) \bigg)
\end{align*}
For $j_{1,+}+1 \leq j \leq \lfloor j_{\star} \rfloor - (M'+1) \log n$, we can replace $(\frac{r_{1}}{r_{2}})^{2(j_{\star}-j)}$ by $0$ in the asymptotics \eqref{hj asymp 6 r1r1} of $F_{j}$ at the cost of an error $\bigO(n^{-100})$ (provided $M'$ is chosen large enough). We thus have
\begin{align*}
& S_{6}^{(1)} = \sum_{j=j_{1,+}+1}^{\lfloor j_{\star} \rfloor - (M'+1) \log n} e^{i(j-1)(\theta_{1}-\theta_{2})} \frac{n}{r_{1}^{2}} \frac{e^{-(j/n-br_{1}^{2b})\frac{t_{1}+t_{2}}{\sigma_{1}}}}{\frac{1}{j/n-br_{1}^{2b}}} \bigg( 1 + \frac{1}{n} \bigg\{ \frac{t_{1}+t_{2}}{\sigma_{1}}(1-\alpha) - \frac{j/n}{2\sigma_{1}^{2}}(t_{1}^{2}+t_{2}^{2}) \nonumber \\
& + \frac{b}{2 \sigma_{1}^{2}}(1-2b)r_{1}^{2b}(t_{1}^{2}+t_{2}^{2}) + \frac{b^{2}r_{1}^{2b}+(j/n-br_{1}^{2b})\alpha}{(j/n-br_{1}^{2b})^{2}} \bigg\} + \bigO\Big( \frac{(\frac{r_{1}}{r_{2}})^{2(j_{\star}-j)}}{n} + \frac{1}{n^{2}} \Big) \bigg),
\end{align*}
The claim follows after combining the above two expansions.
\end{proof}
As a direct consequence of Lemmas \ref{lemma:r1r1 second lemma}, \ref{lemma:asymp of S7} and \ref{lemma:asymp of S6}, we obtain the following.
\begin{lemma}\label{lemma:S456 valid for all theta1 and theta2}
As $n \to + \infty$,
\begin{align*}
& S_{5} + S_{6} + S_{7} = S_{567} + \frac{n  e^{i \lfloor j_{\star} \rfloor (\theta_{1}-\theta_{2})} \sigma_{1}}{ r_{1}^{2} e^{t_{1}+t_{2}}} \mathcal{Q}_{n} + \bigO\big( (\log n)^{2} \big),
\end{align*}
where
\begin{align}
& S_{567} = \sum_{j=g_{1,+}+1}^{\lfloor j_{\star}\rfloor}e^{i(j-1)(\theta_{1}-\theta_{2})} \frac{n (j/n-br_{1}^{2b})}{r_{1}^{2}e^{(j/n-br_{1}^{2b})\frac{t_{1}+t_{2}}{\sigma_{1}}}} \bigg( 1 + \frac{1}{n} \bigg\{ \frac{t_{1}+t_{2}}{\sigma_{1}}(1-\alpha) - \frac{j/n}{2\sigma_{1}^{2}}(t_{1}^{2}+t_{2}^{2}) \nonumber \\
& + \frac{b}{2\sigma_{1}^{2}}(1-2b)r_{1}^{2b}(t_{1}^{2}+t_{2}^{2}) + \frac{b^{2}r_{1}^{2b}+(j/n-br_{1}^{2b})\alpha}{(j/n-br_{1}^{2b})^{2}} \bigg\} - \frac{2 b^{4} r_{1}^{4b}}{n^{2}(j/n-br_{1}^{2b})^{4}} \nonumber \\
& + \frac{10b^{6}r_{1}^{6b}}{n^{3}(j/n-br_{1}^{2b})^{6}} + \bigO\Big( \frac{1}{n^{2}(j/n-br_{1}^{2b})^{3}} + \frac{1}{n^{4}(j/n-br_{1}^{2b})^{8}} \Big) \bigg), \label{asymp S567} \\
& \mathcal{Q}_{n} := \sum_{j=0}^{+\infty}  \frac{e^{i j(\theta_{1}-\theta_{2})}}{1 + \frac{\sigma_{1}}{\sigma_{2}}(\frac{r_{2}}{r_{1}})^{2(j+1-x)}} - \sum_{j=0}^{+\infty} \frac{e^{-i(j+1)(\theta_{1}-\theta_{2})}}{1+\frac{\sigma_{2}}{\sigma_{1}}(\frac{r_{2}}{r_{1}})^{2(j+x)}}, \label{def Fn theta1 neq theta2}
\end{align}
and $x=x(n)$ is given by \eqref{def of theta star intro}.
\end{lemma}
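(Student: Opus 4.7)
The statement is essentially a bookkeeping identity: I would obtain it by directly adding the asymptotic expansions furnished by Lemmas \ref{lemma:r1r1 second lemma}, \ref{lemma:asymp of S7}, and \ref{lemma:asymp of S6}, and matching the resulting pieces with the two ingredients on the right-hand side ($S_{567}$ and the $\mathcal{Q}_n$ term). So the plan is really just to organize the terms correctly.

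First, the two "oscillatory series" contributions are easy to combine. Lemma \ref{lemma:asymp of S7} produces the term
\[
\frac{n e^{i \lfloor j_{\star} \rfloor (\theta_{1}-\theta_{2})} \sigma_{1}}{r_{1}^{2} e^{t_{1}+t_{2}}} \sum_{j=0}^{+\infty} \frac{e^{i j(\theta_{1}-\theta_{2})}}{1 + \frac{\sigma_{1}}{\sigma_{2}}(\frac{r_{2}}{r_{1}})^{2(j+1-x)}},
\]
while Lemma \ref{lemma:asymp of S6} produces
\[
-\frac{n e^{i \lfloor j_{\star} \rfloor (\theta_{1}-\theta_{2})}\sigma_{1}}{r_{1}^{2}e^{t_{1}+t_{2}} } \sum_{j=0}^{+\infty} \frac{e^{-i(j+1)(\theta_{1}-\theta_{2})}}{1+\frac{\sigma_{2}}{\sigma_{1}}(\frac{r_{2}}{r_{1}})^{2(j+x)}}.
\]
Their sum is exactly $\frac{n e^{i \lfloor j_{\star} \rfloor (\theta_{1}-\theta_{2})} \sigma_{1}}{r_{1}^{2} e^{t_{1}+t_{2}}}\,\mathcal{Q}_n$ with $\mathcal{Q}_n$ as defined in \eqref{def Fn theta1 neq theta2}.

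Second, I would combine the two "$F_j$-like" sums into $S_{567}$. By \eqref{S5 asymp 1}, $S_5$ is a sum over $g_{1,+}+1\le j\le j_{1,+}$ whose summand is precisely the $j$-th summand of \eqref{asymp S567}; by Lemma \ref{lemma:asymp of S6}, the residual piece of $S_6$ is the sum over $j_{1,+}+1\le j\le \lfloor j_\star\rfloor$ of the same summand, modulo error terms. Concatenating the two ranges gives $g_{1,+}+1\le j\le \lfloor j_\star\rfloor$, i.e. exactly $S_{567}$. The only delicate point is that the error terms look different in the two lemmas: in \eqref{asymp S567} one has $\bigO\bigl(\tfrac{1}{n^{2}(j/n-br_{1}^{2b})^{3}}+\tfrac{1}{n^{4}(j/n-br_{1}^{2b})^{8}}\bigr)$ together with explicit $\bigO(1/n^2)$-size terms $-\tfrac{2b^{4}r_{1}^{4b}}{n^{2}(j/n-br_{1}^{2b})^{4}}$ and $\tfrac{10b^{6}r_{1}^{6b}}{n^{3}(j/n-br_{1}^{2b})^{6}}$, whereas Lemma \ref{lemma:asymp of S6} only records $\bigO(1/n^{2})$. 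But for $j\ge j_{1,+}+1$ one has $j/n-br_{1}^{2b}$ bounded below by a positive constant (by \eqref{def of jk plus and minus} and \eqref{cond on epsilon 1}), so all of these contributions are indeed $\bigO(1/n^{2})$ in that range and the two descriptions agree; the finer corrections are only non-negligible for $g_{1,+}+1 \le j \le j_{1,+}$, where $j/n-br_1^{2b}=\bigO(M/\sqrt n)$, and that range is precisely the part handled by $S_5$ with the sharper error bounds.

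Adding the three lemmas then yields $S_5+S_6+S_7=S_{567}+\tfrac{n e^{i\lfloor j_\star\rfloor(\theta_1-\theta_2)}\sigma_1}{r_1^2 e^{t_1+t_2}}\mathcal{Q}_n+\bigO((\log n)^2)$, since the $\bigO((\log n)^2)$ errors from each of Lemmas \ref{lemma:asymp of S7} and \ref{lemma:asymp of S6} simply aggregate. The only mild obstacle in the argument is the error-term matching described in the previous paragraph; no analytic input beyond the three lemmas is needed.
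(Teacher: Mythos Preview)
Your proposal is correct and follows exactly the same approach as the paper, which simply states that the lemma is a direct consequence of Lemmas \ref{lemma:r1r1 second lemma}, \ref{lemma:asymp of S7} and \ref{lemma:asymp of S6}. Your elaboration on the error-term matching for $j\ge j_{1,+}+1$ is a reasonable spelling-out of the only nontrivial bookkeeping step, and is handled precisely as you describe.
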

\subsection{Proof of Theorem \ref{thm:r1 hard}}\label{subsection:5.1}
In this subsection, we take $\theta_{1}=\theta_{2}$ and $M=n^{\frac{1}{10}}$. \begin{lemma}\label{lemma:Qn in terms of theta}
For $\theta_{1}=\theta_{2}$, we have
\begin{align}\label{Fn theta}
\mathcal{Q}_{n} = \frac{(\log \theta)'(n \sigma_{\star}-\alpha +\frac{\log(\frac{\sigma_{2}}{\sigma_{1}}\frac{r_{2}}{r_{1}})}{2\log \frac{r_{2}}{r_{1}}}; \frac{\pi i}{\log \frac{r_{2}}{r_{1}}}) + (2x-1)\log \frac{r_{2}}{r_{1}} + \log \frac{\sigma_{2}}{\sigma_{1}}}{2\log \frac{r_{2}}{r_{1}}},
\end{align}
where $x=x(n)$ is given by \eqref{def of theta star intro}.
\end{lemma}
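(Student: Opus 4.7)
The plan is to recognize $\mathcal{Q}_n$ as the image of a geometric sum encoded by the Jacobi triple product, after applying the imaginary (modular $S$) transformation to $\theta$ at $\tau=\pi i/\log(r_2/r_1)$. First I would set $\rho:=r_2/r_1$ and $\beta:=\log(\sigma_2/\sigma_1)/(2\log\rho)$, so that $(\sigma_1/\sigma_2)\rho^{2(j+1-x)}=\rho^{2(j+1-x-\beta)}$ and $(\sigma_2/\sigma_1)\rho^{2(j+x)}=\rho^{2(j+x+\beta)}$; then \eqref{def Fn theta1 neq theta2} with $\theta_1=\theta_2$ reduces to
\begin{equation*}
\mathcal{Q}_n \;=\; \sum_{j=0}^{\infty}\left[\frac{1}{1+\rho^{2(j+1-x-\beta)}} - \frac{1}{1+\rho^{2(j+x+\beta)}}\right].
\end{equation*}

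Next I would invoke the classical imaginary transformation $\theta(z;\tau)=(-i\tau)^{-1/2}e^{-\pi i z^{2}/\tau}\theta(z/\tau;-1/\tau)$; taking the logarithmic $z$-derivative yields
\begin{equation*}
(\log\theta)'(z;\tau) \;=\; -\frac{2\pi iz}{\tau} + \frac{1}{\tau}(\log\theta)'\bigl(z/\tau;\,-1/\tau\bigr).
\end{equation*}
For $\tau=\pi i/\log\rho$ the dual modulus is $\tilde\tau:=-1/\tau=i\log\rho/\pi$, whose nome is $\tilde q=e^{\pi i\tilde\tau}=\rho^{-1}$; also $z/\tau=-iz\log\rho/\pi=:w$ gives $e^{2\pi iw}=\rho^{2z}$, while $-2\pi iz/\tau=-2z\log\rho$ and $1/\tau=-i\log\rho/\pi$. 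The Jacobi triple product $\theta(w;\tilde\tau)=\prod_{n\ge 1}(1-\tilde q^{2n})(1+\tilde q^{2n-1}e^{2\pi iw})(1+\tilde q^{2n-1}e^{-2\pi iw})$, differentiated logarithmically, then gives
\begin{equation*}
(\log\theta)'(w;\tilde\tau) \;=\; 2\pi i\sum_{n\ge 1}\left[\frac{1}{1+\rho^{(2n-1)-2z}}-\frac{1}{1+\rho^{(2n-1)+2z}}\right],
\end{equation*}
so combining (and noting $(-i\log\rho/\pi)\cdot 2\pi i=2\log\rho$) yields the key identity
\begin{equation*}
\frac{(\log\theta)'(z;\tau)+2z\log\rho}{2\log\rho}\;=\;\sum_{j\ge 0}\left[\frac{1}{1+\rho^{2(j+\frac{1}{2}-z)}}-\frac{1}{1+\rho^{2(j+\frac{1}{2}+z)}}\right].
\end{equation*}

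Finally I would match parameters by choosing $z_{*}:=x+\beta-\tfrac{1}{2}$; then $j+\tfrac{1}{2}-z_{*}=j+1-x-\beta$ and $j+\tfrac{1}{2}+z_{*}=j+x+\beta$, so the right-hand side is precisely $\mathcal{Q}_n$, while $2z_{*}\log\rho=(2x-1)\log\rho+\log(\sigma_2/\sigma_1)$ reproduces exactly the affine correction appearing in \eqref{Fn theta}. The argument in the statement of the lemma is $z_{0}:=n\sigma_{\star}-\alpha+\tfrac{\log((\sigma_{2}/\sigma_{1})(r_{2}/r_{1}))}{2\log\rho}=\lfloor j_{\star}\rfloor+x+\beta+\tfrac{1}{2}$, and $z_{0}-z_{*}=\lfloor j_{\star}\rfloor+1\in\mathbb{Z}$, so the periodicity $\theta(z+1;\tau)=\theta(z;\tau)$ gives $(\log\theta)'(z_{*};\tau)=(\log\theta)'(z_{0};\tau)$ and completes the proof. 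The main obstacle is purely bookkeeping across the modular transformation: both the imaginary transformation and the triple product are classical, but some care is needed to track the non-homogeneous piece $-2\pi iz/\tau$ (which is the source of the extra $(2x-1)\log\rho$ term) and to align the $\tfrac{1}{2}$-shift in the argument with the one in $z_{0}$ using the unit periodicity of $\theta$.
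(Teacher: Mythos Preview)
Your proof is correct. The key identity you derive via the modular $S$-transformation combined with the Jacobi triple product is precisely the differentiated form of the identity that the paper imports from \cite[Lemma~3.25]{CharlierAnnuli}; your parameter matching and the use of the unit periodicity at the end are also handled cleanly.

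The difference from the paper's approach is mainly one of packaging. The paper introduces an auxiliary function $\Theta(y;\rho,a)$ (a logarithmic form of the theta product), invokes the cited identity expressing $\Theta$ in terms of $\log\theta$, differentiates in $y$, and then substitutes $(y,\rho,a)\to(x,\rho^{-1},a^{-1})$ with $\rho=r_2/r_1$, $a=\sigma_2/\sigma_1$. You instead work directly with $\theta_3$: the imaginary transformation supplies the non-homogeneous piece $-2z\log\rho$ (whence the $(2x-1)\log\rho+\log(\sigma_2/\sigma_1)$ term), and the triple product on the dual side produces the two geometric sums. Your route is more self-contained (no external lemma) and makes the origin of each term transparent; the paper's route is terser but relies on \cite{CharlierAnnuli}. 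Mathematically the two are equivalent, since the identity in \cite[Lemma~3.25]{CharlierAnnuli} is itself proved by exactly the combination of modular transformation and triple product that you use.
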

\begin{proof}
For $y \in \mathbb{R}$, $\rho \in (0,1)$ and $a>0$, define
\begin{align}
\Theta(y;\rho,a) & = y(y-1)\log (\rho) + y \log(a) \nonumber \\
& + \sum_{j=0}^{+\infty} \log \bigg( 1+a\, \rho^{2(j+y)} \bigg) + \sum_{j=0}^{+\infty} \log \bigg( 1+a^{-1} \rho^{2(j+1-y)} \bigg). \label{def of Theta}
\end{align}
It is proved in \cite[Lemma 3.25]{CharlierAnnuli} that
\begin{align*}
\Theta(y;\rho,a) = \frac{1}{2}\log \bigg( \frac{\pi a \rho^{-\frac{1}{2}}}{\log(\rho^{-1})} \bigg) + \frac{(\log a)^{2}}{4\log(\rho^{-1})} - \sum_{j=1}^{+\infty} \log(1-\rho^{2j}) + \log \theta \bigg( y + \frac{\log(a \rho)}{2\log(\rho)} \bigg| \frac{\pi i}{\log(\rho^{-1})} \bigg),
\end{align*}
where $\theta$ is the Jacobi theta function defined in \eqref{def of Jacobi theta}. By taking the derivative with respect to $y$ in the above identity, by replacing $(y,\rho,a)$ by $(x,\rho^{-1},a^{-1})$, we get
\begin{align}\label{lol13}
\sum_{j=0}^{+\infty} \frac{1}{1+a^{-1}\rho^{2(j+1-x)}} - \sum_{j=0}^{+\infty} \frac{1}{1+a\rho^{2(j+x)}} = \frac{(\log \theta)'(x+\frac{\log(a\rho)}{2\log \rho}; \frac{\pi i}{\log \rho}) + (2x-1)\log \rho + \log a}{2\log \rho}.
\end{align}
The claim follows from \eqref{lol13} with $\rho = \frac{r_{2}}{r_{1}}>1$ and $a=\frac{\sigma_{2}}{\sigma_{1}}$, and from the facts that $x = n \sigma_{\star}-\alpha - \lfloor n \sigma_{\star}-\alpha \rfloor$ and $\theta(z+1|\tau)=\theta(z|\tau)$.
\end{proof}
To expand $S_{567}$, we will need the following Riemann sum lemma.
\begin{lemma}(taken from \cite[Lemma 3.4]{CharlierAnnuli})\label{lemma:Riemann sum NEW}
Let $A,a_{0}$, $B,b_{0}$ be bounded function of $n \in \{1,2,\ldots\}$, such that
\begin{align*}
& a_{n} := An + a_{0} \qquad \mbox{ and } \qquad b_{n} := Bn + b_{0}
\end{align*}
are integers. Assume also that $B-A$ is positive and remains bounded away from $0$. Let $f$ be a function independent of $n$, and which is $C^{4}([\min\{\frac{a_{n}}{n},A\},\max\{\frac{b_{n}}{n},B\}])$ for all $n\in \{1,2,\ldots\}$. Then as $n \to + \infty$, we have
\begin{align}
&  \sum_{j=a_{n}}^{b_{n}}f(\tfrac{j}{n}) = n \int_{A}^{B}f(y)dy + \frac{(1-2a_{0})f(A)+(1+2b_{0})f(B)}{2}  \nonumber \\
& + \frac{(-1+6a_{0}-6a_{0}^{2})f'(A)+(1+6b_{0}+6b_{0}^{2})f'(B)}{12n}+ \frac{(-a_{0}+3a_{0}^{2}-2a_{0}^{3})f''(A)+(b_{0}+3b_{0}^{2}+2b_{0}^{3})f''(B)}{12n^{2}} \nonumber \\
& + \bigO \bigg( \frac{\mathfrak{m}_{A,n}(f''')+\mathfrak{m}_{B,n}(f''')}{n^{3}} + \sum_{j=a_{n}}^{b_{n}-1} \frac{\mathfrak{m}_{j,n}(f'''')}{n^{4}} \bigg), \label{sum f asymp gap NEW}
\end{align}
where, for a given function $g$ continuous on $[\min\{\frac{a_{n}}{n},A\},\max\{\frac{b_{n}}{n},B\}]$,
\begin{align*}
\mathfrak{m}_{A,n}(g) := \max_{y \in [\min\{\frac{a_{n}}{n},A\},\max\{\frac{a_{n}}{n},A\}]}|g(y)|, \quad \mathfrak{m}_{B,n}(g) := \max_{y \in [\min\{\frac{b_{n}}{n},B\},\max\{\frac{b_{n}}{n},B\}]}|g(y)|,
\end{align*}
and for $j \in \{a_{n},\ldots,b_{n}-1\}$, $\mathfrak{m}_{j,n}(g) := \max_{y \in [\frac{j}{n},\frac{j+1}{n}]}|g(y)|$.
\end{lemma}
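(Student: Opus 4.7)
The plan is to apply the Euler--Maclaurin summation formula to $g(k) := f(k/n)$ on the integer range $a_n \le k \le b_n$, and then reorganize the boundary contributions so that $f$ and its derivatives are evaluated at the ``nice'' points $A$ and $B$ rather than at the slightly shifted points $a_n/n = A + a_0/n$ and $b_n/n = B + b_0/n$. Concretely, Euler--Maclaurin with one Bernoulli correction gives
\[
\sum_{k=a_n}^{b_n} f(k/n) = n\int_{a_n/n}^{b_n/n} f(y)\,dy + \frac{f(a_n/n) + f(b_n/n)}{2} + \frac{f'(b_n/n) - f'(a_n/n)}{12\,n} + R_n,
\]
where the remainder $R_n$ arises from integrating $g^{(4)}(x) = n^{-4} f^{(4)}(x/n)$ against a (bounded) periodic Bernoulli polynomial, giving after rescaling a bound $R_n = \bigO\bigl(\sum_{j=a_n}^{b_n-1} \mathfrak{m}_{j,n}(f'''') / n^4\bigr)$.

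The next step is to Taylor-expand each boundary quantity around $A$ or $B$. For the pointwise evaluations I would use $f(a_n/n) = f(A) + \tfrac{a_0}{n} f'(A) + \tfrac{a_0^2}{2n^2} f''(A) + \bigO(\mathfrak{m}_{A,n}(f''')/n^3)$ and similarly for $f(b_n/n)$, $f'(a_n/n)$, $f'(b_n/n)$. For the integral, I would split
\[
n\int_{a_n/n}^{b_n/n} f\,dy \;=\; n\int_A^B f\,dy \;-\; n\int_A^{A+a_0/n} f\,dy \;+\; n\int_B^{B+b_0/n} f\,dy,
\]
and then Taylor-expand each of the two boundary integrals via an antiderivative of $f$ to obtain $n\int_A^{A+a_0/n} f\,dy = a_0 f(A) + \tfrac{a_0^2}{2n}f'(A) + \tfrac{a_0^3}{6n^2}f''(A) + \bigO(\mathfrak{m}_{A,n}(f''')/n^3)$, and analogously at $B$.

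Finally, I would collect all contributions grouped by powers of $1/n$. A short bookkeeping shows that the order-$1$ terms combine into $n\int_A^B f\,dy + \frac{(1-2a_0)f(A)+(1+2b_0)f(B)}{2}$; the $1/n$ contributions coming from the boundary Taylor expansions of $f$, from the split of the integral, and from the Euler--Maclaurin $f'$-correction collapse to $\frac{-1+6a_0-6a_0^2}{12n}f'(A) + \frac{1+6b_0+6b_0^2}{12n}f'(B)$; and the $1/n^2$ contributions similarly produce the polynomials $-a_0+3a_0^2-2a_0^3$ and $b_0+3b_0^2+2b_0^3$ as prefactors of $f''(A)$ and $f''(B)$.

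The main obstacle I anticipate is purely algebraic bookkeeping: three different expansions (of the boundary values $f(a_n/n)$, of $f'(a_n/n)$, and of the correction integrals $n\int_A^{A+a_0/n}f$) contribute at each order of $1/n$, and the cancellations producing the compact polynomial coefficients above must be verified by hand. The error analysis also requires care, since the Taylor remainders at the endpoints have to be controlled on intervals of length $\bigO(1/n)$ around $A$ and $B$; this is precisely why the maxima $\mathfrak{m}_{A,n}(f''')$ and $\mathfrak{m}_{B,n}(f''')$ appear (taken over intervals whose endpoints are $A$ and $a_n/n$, respectively $B$ and $b_n/n$) rather than pointwise values of $f'''$ at $A$ or $B$.
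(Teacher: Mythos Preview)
Your proposal is correct and is precisely the standard route: Euler--Maclaurin on $g(k)=f(k/n)$ followed by Taylor expansion of the boundary data around $A$ and $B$, with the remainder controlled by the periodic Bernoulli polynomial against $g^{(4)}$. The algebraic bookkeeping you outline checks out (e.g.\ the $1/n$ coefficient of $f'(A)$ is $-\tfrac{a_0^2}{2}+\tfrac{a_0}{2}-\tfrac{1}{12}=\tfrac{-1+6a_0-6a_0^2}{12}$, and similarly for the other terms). Note that the paper does not actually supply a proof of this lemma here; it is quoted verbatim from \cite[Lemma~3.4]{CharlierAnnuli}, so there is no in-paper argument to compare against, but your approach is the natural one and matches what one would expect the original proof to be.
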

It turns out that $S_{567}$ has oscillatory asymptotics as $n \to + \infty$. To handle these oscillations, we follow \cite{CharlierFH} and introduce the following quantities:
\begin{align*}
& \theta_{-}^{(n,M)} := g_{-} - \bigg( \frac{bn r^{2b}}{1+\frac{M}{\sqrt{n}}} - \alpha \bigg) = \bigg\lceil \frac{bn r^{2b}}{1+\frac{M}{\sqrt{n}}} - \alpha \bigg\rceil - \bigg( \frac{bn r^{2b}}{1+\frac{M}{\sqrt{n}}} - \alpha \bigg), \\
& \theta_{+}^{(n,M)} := \bigg( \frac{bn r^{2b}}{1-\frac{M}{\sqrt{n}}} - \alpha \bigg) - g_{+} = \bigg( \frac{bn r^{2b}}{1-\frac{M}{\sqrt{n}}} - \alpha \bigg) - \bigg\lfloor \frac{bn r^{2b}}{1-\frac{M}{\sqrt{n}}} - \alpha \bigg\rfloor.
\end{align*}
Note that $\theta_{-}^{(n,M)},\theta_{+}^{(n,M)} \in [0,1)$ are oscillatory but remain bounded as $n \to + \infty$. Recall that in this section, we take $\theta_{2} = \theta_{1}$.
\begin{lemma}\label{lemma:asymp S567 theta2=theta1}
If $(t_{1},t_{2})\neq (0,0)$, then as $n \to +\infty$,
\begin{align}
& S_{567} = D_{1} n^{2} + D_{2} \, n \log n + D_{3}^{(n,M)} n + D_{4}^{(n,M)}\sqrt{n} + \bigO\Big(M^{4} + \frac{n}{M^{6}} + \frac{\sqrt{n}}{M} \Big), \label{Dj for t1 t2 new 0}\\
& D_{1} := \sigma_{1}^{2}\frac{1-e^{-t_{1}-t_{2}}\big( 1+t_{1}+t_{2} \big)}{r_{1}^{2}(t_{1}+t_{2})^{2}}, \qquad D_{2} := \frac{b^{2}r_{1}^{2b-2}}{2}, \nonumber \\
& D_{3}^{(n,M)} := -b^{2} r_{1}^{2b-2}\Big( E_{1}\big( t_{1}+t_{2} \big) + \gamma_{\mathrm{E}} + \log(br_{1}^{2b}\frac{t_{1}+t_{2}}{\sigma_{1}}M) \Big) - \frac{b^{2}M^{2}r_{1}^{4b-2}}{2} \nonumber \\
& + \frac{(1-2x)\sigma_{1}}{2 r_{1}^{2}e^{t_{1}+t_{2}}} + \frac{\sigma_{1}}{r_{1}^{2}(t_{1}+t_{2})^{3}} \bigg\{ \frac{(t_{1}+t_{2})^{2}(t_{1}^{2}+t_{2}^{2})}{2e^{t_{1}+t_{2}}} \nonumber \\
& + \Big( 2t_{1}t_{2} - b^{2}r_{1}^{2b}\frac{t_{1}+t_{2}}{\sigma_{1}}(t_{1}^{2}+t_{2}^{2}) \Big) \bigg( 1 - \frac{1+t_{1}+t_{2}}{e^{t_{1}+t_{2}}} \bigg) \bigg\} - \frac{b^{2}}{M^{2} r_{1}^{2}} + \frac{5b^{2}}{2M^{4} r_{1}^{2+2b}}, \nonumber \\
& D_{4}^{(n,M)} := \frac{b^{2} r_{1}^{4b} \big( br_{1}^{2b}\frac{t_{1}+t_{2}}{\sigma_{1}}-3 \big)}{3 r_{1}^{2}}M^{3} + \frac{br_{1}^{2b} \big( 2\theta_{+}^{(n,M)} - 1 - 2b + 2b^{2}r_{1}^{2b}\frac{t_{1}+t_{2}}{\sigma_{1}} \big)}{2 r_{1}^{2}}M + \frac{(2\alpha+2\theta_{+}^{(n,M)}-1)b}{2 r_{1}^{2} M}. \nonumber
\end{align}
If $t_{1}=t_{2}=0$, then \eqref{Dj for t1 t2 new 0} holds with
\begin{align*}
& D_{1} = \frac{\sigma_{1}^{2}}{2 r_{1}^{2}}, \quad D_{2} = \frac{b^{2}r_{1}^{2b-2}}{2}, \\
& D_{3}^{(n,M)} = - \frac{b^{2}M^{2}r_{1}^{4b-2}}{2} + \frac{(\frac{1}{2}-x)\sigma_{1}}{ r_{1}^{2}} - b^{2}r_{1}^{2b-2}\log \frac{Mbr_{1}^{2b}}{\sigma_{1}} - \frac{b^{2}}{M^{2} r_{1}^{2}} + \frac{5b^{2}}{2M^{4} r_{1}^{2+2b}}, \\
& D_{4}^{(n,M)} = -b^{2} r_{1}^{4b-2} M^{3} + \frac{br_{1}^{2b} \big( 2\theta_{+}^{(n,M)} - 1 - 2b \big)}{2 r_{1}^{2}}M + \frac{(2\alpha+2\theta_{+}^{(n,M)}-1)b}{2 r_{1}^{2} M}.
\end{align*}
\end{lemma}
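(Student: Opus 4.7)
With $\theta_1 = \theta_2$ the phase factors vanish and $S_{567}$ becomes a real-valued Riemann-type sum. Set $u_j := j/n - br_1^{2b}$ and multiply out: the summand in \eqref{asymp S567} decomposes into (i) the leading piece $n u_j \, e^{-u_j(t_1+t_2)/\sigma_1}/r_1^2$, (ii) the order-one piece $u_j\,e^{-u_j(t_1+t_2)/\sigma_1}/r_1^2$ times the $\{\cdots\}$ brace, (iii) $-2b^4 r_1^{4b-2}/(n u_j^3) \cdot e^{-u_j(t_1+t_2)/\sigma_1}$, (iv) $10 b^6 r_1^{6b-2}/(n^2 u_j^5) \cdot e^{-u_j(t_1+t_2)/\sigma_1}$, and (v) the error $\bigO((n u_j^2)^{-1} + (n^3 u_j^7)^{-1})$. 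The plan is to apply Lemma \ref{lemma:Riemann sum NEW} to each piece with $A = br_1^{2b}/(1-M/\sqrt n)$, $a_0 = 1-\alpha-\theta_+^{(n,M)}$, $B = \sigma_\star$, $b_0 = -\alpha-x$. Crucially, $u_{\mathrm{low}} := A - br_1^{2b} = br_1^{2b} M n^{-1/2}(1+ M n^{-1/2} + \bigO(M^2/n))$ shrinks to zero at rate $M/\sqrt n$, and all $M$-dependent contributions in $D_3^{(n,M)}$ and $D_4^{(n,M)}$ will stem from this lower endpoint.

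For piece (i) the Riemann sum gives $n^2 \int_A^B f_0$; via the substitution $s = u(t_1+t_2)/\sigma_1$ this evaluates in closed form to $\sigma_1^2 r_1^{-2}(t_1+t_2)^{-2}\bigl[1-e^{-s}(1+s)\bigr]_{s_{\mathrm{low}}}^{t_1+t_2}$. The value at the upper limit yields $D_1$, while Taylor-expanding in $u_{\mathrm{low}}$ produces the $-\tfrac12 b^2 M^2 r_1^{4b-2}$ piece of $D_3^{(n,M)}$ at order $n$ and the $\tfrac{b^2 r_1^{4b-2}}{3}(br_1^{2b}(t_1+t_2)/\sigma_1 - 3)M^3$ piece of $D_4^{(n,M)}$ at order $\sqrt n$. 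The $D_2 n \log n$ term is generated by the summand $b^2 r_1^{2b}/u_j^2$ inside the brace of piece (ii): after cancelling with the outer $u_j$ it reduces to $b^2 r_1^{2b-2}/u_j$, and the Riemann integral equals $b^2 r_1^{2b-2} n (E_1(s_{\mathrm{low}}) - E_1(t_1+t_2))$; inserting $E_1(s) = -\gamma_{\mathrm E} - \log s + \bigO(s)$ together with $-\log s_{\mathrm{low}} = \tfrac12 \log n - \log M - \log(br_1^{2b}(t_1+t_2)/\sigma_1) + \bigO(M/\sqrt n)$ delivers $\tfrac12 b^2 r_1^{2b-2} n \log n$ along with the $E_1$-containing piece of $D_3^{(n,M)}$.

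The remaining constants in $D_3^{(n,M)}$ arise from pieces (iii) and (iv), whose integrals are dominated by $1/u_{\mathrm{low}}^2$ and $1/u_{\mathrm{low}}^4$ and contribute $-b^2 n/(M^2 r_1^2)$ and $5b^2 n/(2M^4 r_1^{2+2b})$; from the remaining $(t_1+t_2)$-dependent polynomial inside the brace, whose closed-form integrals against $u\, e^{-u(t_1+t_2)/\sigma_1}$ produce the $\sigma_1\{\cdots\}/(r_1^2(t_1+t_2)^3)$ bracket; and from the boundary correction $(1+2b_0)f_0(B)/2$ at $B$, which supplies the $(1-2x)\sigma_1/(2r_1^2 e^{t_1+t_2})$ piece after cancellation with the analogous boundary term of the $\alpha/u_j$ summand. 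The $M$ and $1/M$ coefficients of $D_4^{(n,M)}$ arise from the lower-endpoint boundary correction $(1-2a_0)f(A)/2$ applied to the leading piece $n f_0$ and to the singular $b^2 r_1^{2b}/u_j^2$ piece respectively, supplemented by the $b$- and $\alpha$-dependent terms of the brace evaluated at $A$. The $\bigO(M^4+n/M^6+\sqrt n/M)$ error collects the $\bigO(\sum \mathfrak{m}(f'''')/n^4)$ Riemann-sum tail for piece (iv) (giving $\bigO(M^4)$) with the sums of the two subparts in (v). The case $t_1 = t_2 = 0$ follows by taking $t_1+t_2 \to 0$ in the formulas, using $E_1(s)+\gamma_{\mathrm E}+\log s \to 0$ and $[1-e^{-s}(1+s)]/s^2 \to 1/2$. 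The main obstacle will be the algebraic bookkeeping: each of the $\alpha$, $b$, $\theta_+^{(n,M)}$, and $(t_1+t_2)$ contributions to the coefficient of $M$ in $D_4^{(n,M)}$ enters through a distinct mechanism at the lower endpoint, so that the compact stated form emerges only after carefully combining the boundary correction of the leading piece with the integrated correction terms.
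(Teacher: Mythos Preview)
Your proposal is correct and follows essentially the same route as the paper: split the summand into the pieces you label (i)--(v), apply Lemma~\ref{lemma:Riemann sum NEW} with exactly the endpoint data $A=br_1^{2b}/(1-M/\sqrt n)$, $a_0=1-\alpha-\theta_+^{(n,M)}$, $B=\sigma_\star$, $b_0=-\alpha-x$, and read off the $M$-dependent contributions from the Taylor expansion at the shrinking lower endpoint. One small bookkeeping correction: the $\bigO(M^4)$ error does not come from the Riemann-sum tail of piece (iv) but from truncating the lower-endpoint Taylor expansion of piece (i), since $n^2\cdot u_{\mathrm{low}}^4\asymp M^4$.
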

\begin{proof}
By \eqref{asymp S567} with $\theta_{1}=\theta_{2}$, as $n \to + \infty$ we have
\begin{align}
& S_{567} = n \sum_{j=g_{1,+}+1}^{\lfloor j_{\star}\rfloor} f_{1}(j/n) + \sum_{j=g_{1,+}+1}^{\lfloor j_{\star}\rfloor} f_{2}(j/n) + \frac{1}{n} \sum_{j=g_{1,+}+1}^{\lfloor j_{\star}\rfloor} \frac{e^{-(j/n-br_{1}^{2b})\frac{t_{1}+t_{2}}{\sigma_{1}}}}{r_{1}^{2}}\frac{-2 b^{4} r_{1}^{4b}}{(j/n-br_{1}^{2b})^{3}} \nonumber \\
& + \frac{1}{n^{2}} \sum_{j=g_{1,+}+1}^{\lfloor j_{\star}\rfloor} \frac{e^{-(j/n-br_{1}^{2b})\frac{t_{1}+t_{2}}{\sigma_{1}}}10b^{6}}{ r_{1}^{2-6b}(j/n-br_{1}^{2b})^{5}} + \bigO\Big( \frac{1}{n}\sum_{j=g_{1,+}+1}^{\lfloor j_{\star}\rfloor}\frac{1}{(j/n-br_{1}^{2b})^{2}} + \frac{1}{n^{3}}\sum_{j=g_{1,+}+1}^{\lfloor j_{\star}\rfloor} \frac{1}{(j/n-br_{1}^{2b})^{7}} \Big), \label{lol14}
\end{align}
where $f_{1}(x):= \frac{x-br_{1}^{2b}}{r_{1}^{2}e^{(x-br_{1}^{2b})\frac{t_{1}+t_{2}}{\sigma_{1}}}}$ and
\begin{align*}
f_{2}(x) := \frac{x-br_{1}^{2b}}{r_{1}^{2}e^{(x-br_{1}^{2b})\frac{t_{1}+t_{2}}{\sigma_{1}}}} \bigg( \frac{t_{1}+t_{2}}{\sigma_{1}}(1-\alpha) + \frac{b(1-2b)r_{1}^{2b}-x}{2\sigma_{1}^{2}}(t_{1}^{2}+t_{2}^{2}) + \frac{b^{2}r_{1}^{2b}+(x-br_{1}^{2b})\alpha}{(x-br_{1}^{2b})^{2}} \bigg).
\end{align*}
Since $g_{1,+} = \big\lfloor \frac{bnr_{1}^{2b}}{1-\frac{M}{\sqrt{n}}}-\alpha \big\rfloor$, for the error term in \eqref{lol14} we have the following estimate:
\begin{multline*}
\frac{1}{n}\sum_{j=g_{1,+}+1}^{\lfloor j_{\star}\rfloor}\frac{1}{(j/n-br_{1}^{2b})^{2}} + \frac{1}{n^{3}}\sum_{j=g_{1,+}+1}^{\lfloor j_{\star}\rfloor} \frac{1}{(j/n-br_{1}^{2b})^{7}} \\ = \bigO\bigg( \int_{M/\sqrt{n}}^{1} \frac{dx}{x^{2}} + \frac{1}{n^{2}}\int_{M/\sqrt{n}}^{1} \frac{dx}{x^{7}} \bigg) = \bigO \bigg( \frac{\sqrt{n}}{M} + \frac{n}{M^{6}} \bigg).
\end{multline*}
Since $f_{1}$ has no pole at $br_{1}^{2b}$ and $f_{2}$ has a pole of order $1$ at $br_{1}^{2b}$, long but direct calculations using Lemma \ref{lemma:Riemann sum NEW} with $A=\frac{br_{1}^{2b}}{1-\frac{M}{\sqrt{n}}}$, $a_{0}=1-\alpha-\theta_{+}^{(n,M)}$, $B=\sigma_{\star}$ and $b_{0}=-\alpha-x$ show that
\begin{align*}
& n \sum_{j=g_{1,+}+1}^{\lfloor j_{\star}\rfloor} f_{1}(j/n) = d_{1} n^{2} + (d_{2,1}M^{2}+d_{2,2})n + (d_{3,1}M^{3} + d_{3,2}M)\sqrt{n} + \bigO(M^{4}), \\
& \sum_{j=g_{1,+}+1}^{\lfloor j_{\star}\rfloor} f_{2}(j/n) = e_{1} n \log n + (e_{2,1} \log M + e_{2,2}) n + (e_{3,1}M + e_{3,2}M^{-1})\sqrt{n} + \bigO(M^{2}), \\
& \frac{1}{n} \sum_{j=g_{1,+}+1}^{\lfloor j_{\star}\rfloor} \frac{e^{-(j/n-br_{1}^{2b})\frac{t_{1}+t_{2}}{\sigma_{1}}}}{r_{1}^{2}}\frac{-2 b^{4} r_{1}^{4b}}{(j/n-br_{1}^{2b})^{3}} = - \frac{b^{2}n}{ r_{1}^{2} M^{2}} + \bigO\Big( \frac{\sqrt{n}}{M}  \Big), \\
& \frac{1}{n^{2}} \sum_{j=g_{1,+}+1}^{\lfloor j_{\star}\rfloor} \frac{e^{-(j/n-br_{1}^{2b})\frac{t_{1}+t_{2}}{\sigma_{1}}}10b^{6}}{r_{1}^{2-6b}(j/n-br_{1}^{2b})^{5}} = \frac{5b^{2}n}{2 r_{1}^{2+2b}M^{4}} + \bigO\Big( \frac{\sqrt{n}}{M^{3}} \Big),
\end{align*}
as $n \to + \infty$, for some explicit coefficients $d_{1},d_{2,1},d_{2,2},d_{3,1},d_{3,2},e_{1}, e_{2,1},e_{2,2},e_{3,1},e_{3,2}$ independent of $M$ that we do not write down. We then find the claim by combining the above two asymptotic formulas and simplifying (using primitives).
\end{proof}
To expand $S_{4}$, we will need the following lemma.
\begin{lemma}(Taken from \cite[Lemma 3.10]{CharlierAnnuli})\label{lemma:Riemann sum}
Let $h \in C^{3}(\mathbb{R})$ and $k =1$. As $n \to + \infty$, we have
\begin{align}
& \sum_{j=g_{k,-}}^{g_{k,+}} \hspace{-0.12cm} h(M_{j,k}) = br_{k}^{2b} \hspace{-0.07cm} \int_{-M}^{M} \hspace{-0.07cm} h(t) dt \; \sqrt{n} - 2 b r_{k}^{2b} \hspace{-0.07cm}\int_{-M}^{M} \hspace{-0.07cm} th(t) dt + \hspace{-0.07cm} \bigg( \frac{1}{2}-\theta_{k,-}^{(n,M)} \bigg)h(M)+ \hspace{-0.07cm} \bigg( \frac{1}{2}-\theta_{k,+}^{(n,M)} \bigg)h(-M) \nonumber \\
& + \frac{1}{\sqrt{n}}\bigg[ 3br_{k}^{2b} \hspace{-0.07cm} \int_{-M}^{M} \hspace{-0.1cm} t^{2}h(t)dt + \hspace{-0.07cm} \bigg( \hspace{-0.04cm}\frac{1}{12}+\frac{\theta_{k,-}^{(n,M)}(\theta_{k,-}^{(n,M)}\hspace{-0.04cm}-\hspace{-0.04cm}1)}{2} \bigg)\frac{h'(M)}{br_{k}^{2b}} - \bigg( \hspace{-0.04cm}\frac{1}{12}+\frac{\theta_{k,+}^{(n,M)}(\theta_{k,+}^{(n,M)}-1)}{2} \bigg)\frac{h'(-M)}{br_{k}^{2b}} \bigg] \nonumber \\
& + \bigO\Bigg(  \frac{1}{n^{3/2}} \hspace{-0.07cm} \sum_{j=g_{k,-}+1}^{g_{k,+}} \bigg( \hspace{-0.07cm} (1+|M_{j,1}|^{3}) \tilde{\mathfrak{m}}_{j,n}(h) + (1+M_{j,1}^{2})\tilde{\mathfrak{m}}_{j,n}(h') + (1+|M_{j,1}|) \tilde{\mathfrak{m}}_{j,n}(h'') + \tilde{\mathfrak{m}}_{j,n}(h''') \bigg) \hspace{-0.07cm} \Bigg), \label{sum f asymp 2}
\end{align}
where, for $\tilde{h} \in C(\mathbb{R})$ and $j \in \{g_{k,-}+1,\ldots,g_{k,+}\}$, we define $\tilde{\mathfrak{m}}_{j,n}(\tilde{h}) := \max_{x \in [M_{j,k},M_{j-1,k}]}|\tilde{h}(x)|$.
\end{lemma}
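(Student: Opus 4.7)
The plan is to reduce the sum to an integral by the Euler--Maclaurin formula applied to $\phi(t):=h(M(t,n))$ where
$M(t,n)=\sqrt{n}\bigl(\tfrac{bnr_{k}^{2b}}{t+\alpha}-1\bigr)$, so that $\phi(j)=h(M_{j,k})$, and then to change variables from $t$ to $u=M(t,n)$ inside the resulting integral, carefully tracking the contributions from the irrational endpoints $\tfrac{bnr_{k}^{2b}}{1\pm M/\sqrt{n}}-\alpha$ that are rounded up or down to form $g_{k,\pm}$.

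The first step is to invoke Euler--Maclaurin at order three, giving
\begin{align*}
\sum_{j=g_{k,-}}^{g_{k,+}}\phi(j)
&=\int_{g_{k,-}}^{g_{k,+}}\phi(t)\,dt
+\tfrac{\phi(g_{k,-})+\phi(g_{k,+})}{2}
+\tfrac{1}{12}\bigl(\phi'(g_{k,+})-\phi'(g_{k,-})\bigr)
+R,
\end{align*}
with $R$ controlled by the sup of $\phi'''$ on each subinterval, which via the chain rule becomes the error sum written in the statement. Next, in the integral I substitute $u=M(t,n)$, so that $t=\tfrac{bnr_{k}^{2b}}{1+u/\sqrt{n}}-\alpha$ and $\tfrac{dt}{du}=-br_{k}^{2b}\sqrt{n}\,(1+u/\sqrt{n})^{-2}$; expanding
\begin{align*}
-\tfrac{dt}{du}=br_{k}^{2b}\sqrt{n}\Bigl(1-\tfrac{2u}{\sqrt n}+\tfrac{3u^{2}}{n}+O\bigl(\tfrac{u^{3}}{n^{3/2}}\bigr)\Bigr)
\end{align*}
will produce the three explicit $u$-moments $\int_{-M}^{M} u^{p}h(u)\,du$ for $p=0,1,2$ that appear on the right-hand side of \eqref{sum f asymp 2}.

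The third and most delicate step is the boundary analysis. Writing $g_{k,-}+\alpha=\tfrac{bnr_{k}^{2b}}{1+M/\sqrt{n}}+\theta_{k,-}^{(n,M)}$ and expanding as a geometric series yields
\begin{align*}
M(g_{k,-},n)=M-\tfrac{\theta_{k,-}^{(n,M)}}{br_{k}^{2b}\sqrt{n}}+O\bigl(\tfrac{1}{n}\bigr),\qquad
M(g_{k,+},n)=-M+\tfrac{\theta_{k,+}^{(n,M)}}{br_{k}^{2b}\sqrt{n}}+O\bigl(\tfrac{1}{n}\bigr),
\end{align*}
and similar two-term expansions to order $n^{-1}$. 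The difference between $\int_{M(g_{k,+})}^{M(g_{k,-})}$ and $\int_{-M}^{M}$ contributes, after multiplication by the $br_{k}^{2b}\sqrt{n}$ Jacobian, precisely the corrections $-\theta_{k,-}^{(n,M)}h(M)-\theta_{k,+}^{(n,M)}h(-M)$ at constant order, together with an $n^{-1/2}$ boundary term of the form $\tfrac{\theta(\theta-1)}{2}\tfrac{h'(\pm M)}{br_{k}^{2b}}$ once one also Taylor-expands $h$ at $\pm M$. Adding to these the trapezoidal contributions $\tfrac{1}{2}\bigl(h(M_{g_{k,-},k})+h(M_{g_{k,+},k})\bigr)=\tfrac{1}{2}(h(M)+h(-M))+O(n^{-1/2})$ from Euler--Maclaurin produces the stated coefficients $(\tfrac12-\theta_{k,\pm}^{(n,M)})h(\pm M)$.

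The final step is the $n^{-1/2}$ order contribution, where three sources combine: the $3br_{k}^{2b}\int u^{2}h(u)\,du$ coming from the next Jacobian term; the Euler--Maclaurin derivative correction $\tfrac{1}{12}(\phi'(g_{k,+})-\phi'(g_{k,-}))$, which after $\phi'=h'(M(t))\cdot\partial_{t}M$ and $\partial_{t}M|_{t=g_{k,\pm}}=-br_{k}^{2b}\sqrt{n}(1\pm M/\sqrt n)^{2}+O(1)$ evaluates to $\pm\tfrac{1}{12br_{k}^{2b}}h'(\pm M)$ at the relevant order; and the boundary Taylor expansions described above. Adding them with the correct signs gives the $\tfrac{1}{12}+\tfrac{\theta_{k,\pm}^{(n,M)}(\theta_{k,\pm}^{(n,M)}-1)}{2}$ coefficients. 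The main obstacle is this bookkeeping: one must be careful with the direction of integration (since $M(t,n)$ is decreasing in $t$), with the double role of $\theta_{k,\pm}^{(n,M)}$ in both shifting the integration limits and in the Euler--Maclaurin boundary terms, and with establishing that only the displayed error sum with $\tilde{\mathfrak{m}}_{j,n}$-quantities survives, which comes from the remainder in Euler--Maclaurin together with the $O(u^{3}/n^{3/2})$ residual in the Jacobian expansion.
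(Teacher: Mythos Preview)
Your approach is correct and is precisely the standard route to such a Riemann sum expansion: Euler--Maclaurin at third order in the index variable $j$, followed by the change of variables $u=M(t,n)$ with the geometric expansion of the Jacobian $(1+u/\sqrt{n})^{-2}$, and then careful bookkeeping of the boundary shifts $\theta_{k,\pm}^{(n,M)}$ from the rounding. Note that the paper does not prove this lemma at all; it simply quotes it from \cite[Lemma~3.10]{CharlierAnnuli}, so there is no in-paper argument to compare against, but your plan matches what one expects that reference to do.

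One small slip to fix: you wrote $\partial_t M|_{t=g_{k,\pm}}=-br_k^{2b}\sqrt{n}(1\pm M/\sqrt{n})^2+O(1)$, but that is $dt/du$, not $du/dt$. In fact
\[
M'(t)=-\frac{\sqrt{n}\,bnr_k^{2b}}{(t+\alpha)^2}\Bigg|_{t=g_{k,\pm}}=-\frac{(1\mp M/\sqrt{n})^2}{br_k^{2b}\sqrt{n}}+O(n^{-1}),
\]
which is $O(n^{-1/2})$, not $O(\sqrt{n})$. Your subsequent claim that $\tfrac{1}{12}(\phi'(g_{k,+})-\phi'(g_{k,-}))$ contributes $\tfrac{1}{12br_k^{2b}\sqrt{n}}(h'(M)-h'(-M))$ is nonetheless correct, so this is only a typo. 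Also worth making explicit: the $(1+|M_{j,1}|^3)\tilde{\mathfrak{m}}_{j,n}(h)$ piece of the error comes from the $O(u^3/n^{3/2})$ residual in the Jacobian, after converting the integral back to a sum using that the spacing $M_{j-1,k}-M_{j,k}$ is of size $(br_k^{2b}\sqrt{n})^{-1}$; the $h',h''$ pieces similarly arise from the boundary Taylor expansions and the lower-order chain-rule terms in $\phi'''$.
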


\begin{lemma}\label{lemma:asymp S4 theta2=theta1}
As $n \to + \infty$,
\begin{align*}
& S_{4} = E_{3}^{(M)} n + E_{4}^{(M)} \sqrt{n} + \bigO\Big(M^{4} + \frac{M^{9}}{\sqrt{n}} + \frac{M^{14}}{n^{3/2}} \Big), \\
& E_{3}^{(M)} := br_{1}^{2b} \int_{-M}^{M} h_{1}(t)dt, \\
& E_{4}^{(M)} := br_{1}^{2b} \int_{-M}^{M} (h_{2}(t)-2th_{1}(t))dt + \bigg( \frac{1}{2}-\theta_{-}^{(n,M)} \bigg)h_{1}(M) + \bigg( \frac{1}{2} - \theta_{+}^{(n,M)} \bigg)h_{1}(-M),
\end{align*}
where
\begin{align*}
& h_{1}(y) := \frac{\sqrt{2}\, b r_{1}^{b-2}e^{-\frac{r_{1}^{2b}}{2}y^{2}}}{\sqrt{\pi} \mathrm{erfc}(-\frac{r_{1}^{b}y}{\sqrt{2}})}, \\
& h_{2}(y) := \frac{\sqrt{2}\, b r_{1}^{b-2}e^{-\frac{r_{1}^{2b}}{2}y^{2}}}{\sqrt{\pi} \mathrm{erfc}(-\frac{r_{1}^{b}y}{\sqrt{2}})} \Bigg( \frac{y}{6}\Big( 5y^{2}r_{1}^{2b}-3+6br_{1}^{2b}(t_{1}+t_{2}) \Big) + \frac{e^{-\frac{r_{1}^{2b}}{2}y^{2}}(5y^{2}r_{1}^{2b}-2)}{3\sqrt{2\pi} r_{1}^{b} \mathrm{erfc} (-\frac{r_{1}^{b}y}{\sqrt{2}})} \Bigg).
\end{align*}
\end{lemma}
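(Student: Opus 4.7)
The plan is to start from \eqref{hj asymp 4 r1r1} specialized to $\theta_{1}=\theta_{2}$, writing
\[
F_{j} = \sqrt{n}\, h_{1}(M_{j,1}) + h_{2}(M_{j,1}) + R_{j}, \qquad g_{1,-}\le j\le g_{1,+},
\]
with $h_{1},h_{2}$ as in the statement and $R_{j}$ absorbing the big-$\bigO$ pieces in \eqref{hj asymp 4 r1r1}. The two main sums $\sum h_{1}(M_{j,1})$ and $\sum h_{2}(M_{j,1})$ would then be treated as Riemann sums, to which we would apply Lemma \ref{lemma:Riemann sum} with $k=1$ (recall that $M_{j,1}-M_{j+1,1}\asymp n^{-1/2}$, so the effective step size is $\asymp 1/(br_{1}^{2b}\sqrt{n})$, matching the $\sqrt{n}$ prefactor appearing in \eqref{sum f asymp 2}).

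Applying that lemma to $h_{1}$ and then multiplying by $\sqrt{n}$, the leading Riemann-sum contribution $br_{1}^{2b}\sqrt{n}\int_{-M}^{M}h_{1}(t)\,dt$ would become exactly $E_{3}^{(M)}n$. The second-order Riemann-sum terms coming from $h_{1}$ in \eqref{sum f asymp 2}, namely $-2br_{1}^{2b}\int_{-M}^{M}th_{1}(t)\,dt$ together with the boundary pieces $(\tfrac{1}{2}-\theta_{-}^{(n,M)})h_{1}(M)+(\tfrac{1}{2}-\theta_{+}^{(n,M)})h_{1}(-M)$, would become $\sqrt{n}$-contributions after this multiplication. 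A second application of Lemma \ref{lemma:Riemann sum} to $h_{2}$ would yield $br_{1}^{2b}\sqrt{n}\int_{-M}^{M}h_{2}(t)\,dt$ at leading order. Collecting all these $\sqrt{n}$-pieces reproduces precisely the coefficient $E_{4}^{(M)}$ as stated.

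The hard part will be to show that the three residual sources of error -- the remainder $\sum R_{j}$ coming from the expansion, and the Riemann-sum errors produced by applying Lemma \ref{lemma:Riemann sum} to $h_{1}$ and to $h_{2}$ -- all fit within $\bigO(M^{4}+M^{9}/\sqrt{n}+M^{14}/n^{3/2})$. The key analytic input is the asymmetric behavior of $h_{1}$ across $y=0$: from the standard asymptotic $\mathrm{erfc}(x)\sim e^{-x^{2}}/(\sqrt{\pi}\,x)$ as $x\to+\infty$ one gets $h_{1}(y)\sim br_{1}^{2b-2}|y|$ as $y\to-\infty$ and a Gaussian decay as $y\to+\infty$; analogous polynomial-versus-Gaussian dichotomies hold for $h_{1}^{(k)}$ and (with higher polynomial weight) for $h_{2}^{(k)}$. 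With these pointwise bounds, each of the three big-$\bigO$ pieces in \eqref{hj asymp 4 r1r1} (orders $n^{-1}$, $n^{-3/2}$ and $n^{-2}$) would be estimated by splitting the $j$-sum at $M_{j,1}=0$: on the positive side the Gaussian decay neutralizes the polynomial $M_{j,1}^{k}$ factors so the contribution is $\bigO(1)$, while on the negative side one bounds $h_{1}(M_{j,1})$ by $C(1+|M_{j,1}|)$ and uses that at most $\bigO(M\sqrt{n})$ indices contribute; term-by-term this produces the three pieces $\bigO(M^{4})$, $\bigO(M^{9}/\sqrt{n})$ and $\bigO(M^{14}/n^{3/2})$. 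The Riemann-sum remainder in \eqref{sum f asymp 2} is treated in the same way, using $\tilde{\mathfrak{m}}_{j,n}(h_{1}^{(k)})=\bigO((1+|M_{j,1}|)^{\max(0,1-k)})$ and $\tilde{\mathfrak{m}}_{j,n}(h_{2}^{(k)})=\bigO((1+|M_{j,1}|)^{\max(0,4-k)})$, and is absorbed in the same bound.
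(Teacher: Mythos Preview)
Your plan is exactly the paper's: write $F_j=\sqrt{n}\,h_1(M_{j,1})+h_2(M_{j,1})+R_j$, bound $\sum R_j$ crudely using $h_1(M_{j,1})=\bigO(1+|M_{j,1}|)$, and then feed the two sums into Lemma~\ref{lemma:Riemann sum}. There is, however, one genuine gap in your error accounting for the $h_2$-sum. When Lemma~\ref{lemma:Riemann sum} is applied to $h_2$ (no $\sqrt n$ prefactor), only the leading piece $br_1^{2b}\sqrt n\int_{-M}^M h_2$ is retained in $E_4^{(M)}$; the next explicit line of \eqref{sum f asymp 2},
\[
-2br_1^{2b}\int_{-M}^{M}t\,h_2(t)\,dt+\Big(\tfrac12-\theta_-^{(n,M)}\Big)h_2(M)+\Big(\tfrac12-\theta_+^{(n,M)}\Big)h_2(-M),
\]
must therefore land in the error. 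You only discuss the final $\bigO$-remainder of \eqref{sum f asymp 2}, and with your stated bound $h_2(y)=\bigO((1+|y|)^{4})$ this line would be $\bigO(M^6)$, not $\bigO(M^4)$; since $M=n^{1/10}$ here, $M^6=n^{3/5}$ would swallow the $C_4\sqrt n$ term of Theorem~\ref{thm:r1 hard}.

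The missing ingredient is a cancellation inside the bracket defining $h_2$. From
\[
\frac{e^{-r_1^{2b}y^2/2}}{\sqrt{\pi}\,\mathrm{erfc}\big(-\tfrac{r_1^{b}y}{\sqrt2}\big)}=-\frac{r_1^{b}y}{\sqrt2}-\frac{1}{\sqrt2\,r_1^{b}y}+\bigO(y^{-3})\qquad(y\to-\infty),
\]
the $\tfrac{5r_1^{2b}}{6}y^{3}$ contribution from the polynomial part $\tfrac{y}{6}(5y^2r_1^{2b}-3+\cdots)$ is cancelled exactly by the cubic contribution of the erfc-ratio term, leaving the bracket $\bigO(y)$ and hence $h_2(y)=\bigO(y^{2})$ as $y\to-\infty$. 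With this sharper bound the displayed line above is $\bigO(M^{4})$, the $n^{-1/2}$-line is $\bigO(M^{5}/\sqrt n)\subset\bigO(M^{4})$, and your outline (and the paper's terse ``the claim follows'') goes through.
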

\begin{proof}
By \eqref{S4 asymp 1}, as $n \to + \infty$ we have
\begin{align*}
S_{4} = \sqrt{n} \sum_{j=g_{1,-}}^{g_{1,+}}h_{1}(M_{j,1}) + \sum_{j=g_{1,-}}^{g_{1,+}}h_{2}(M_{j,1}) + \bigO \bigg( \sum_{j=g_{1,-}}^{g_{1,+}} \Big( \frac{1+M_{j,1}^{3}}{\sqrt{n}} + \frac{1+M_{j,1}^{8}}{n} + \frac{1+M_{j,1}^{13}}{n^{3/2}} \Big) \bigg).
\end{align*}
Since $\sum_{j=g_{1,-}}^{g_{1,+}} 1 = \bigO(M\sqrt{n})$, the above error term can be rewritten as $\bigO(M^{4} + \frac{M^{9}}{\sqrt{n}} + \frac{M^{14}}{n^{3/2}})$. We then use Lemma \ref{lemma:Riemann sum} to expand the two sums involving $h_{1}$ and $h_{2}$, and the claim follows.
\end{proof}

Recall that $\mathcal{I}$ is define by \eqref{def of I}. Define also
\begin{align}
& \mathcal{I}_1 = \int_{-\infty}^{+\infty} \bigg\{ \frac{e^{-y^{2}}}{\sqrt{\pi}\, \mathrm{erfc}(y)} - \chi_{(0,+\infty)}(y) \bigg[ y + \frac{y}{2(1+y^{2})} \bigg] \bigg\}dy, \label{def of I1}
	\\
& \mathcal{I}_{2} = \int_{-\infty}^{+\infty} \bigg\{ \frac{y^{3}e^{-y^{2}}}{\sqrt{\pi} \, \mathrm{erfc}(y)} - \chi_{(0,+\infty)}(y) \bigg[ y^{4}+\frac{y^{2}}{2}-\frac{1}{2} \bigg] \bigg\}dy, \label{def of I3} \\
& \mathcal{I}_{3} = \int_{-\infty}^{+\infty} \bigg\{ \bigg( \frac{e^{-y^{2}}}{\sqrt{\pi} \, \mathrm{erfc}(y)} \bigg)^{2} - \chi_{(0,+\infty)}(y) \bigg[ y^{2}+1 \bigg] \bigg\} dy, \label{def of I4}
	\\
& \mathcal{I}_{4} = \int_{-\infty}^{+\infty} \bigg\{ \bigg( \frac{y \, e^{-y^{2}}}{\sqrt{\pi} \, \mathrm{erfc}(y)} \bigg)^{2} - \chi_{(0,+\infty)}(y)\bigg[ y^{4}+y^{2}-\frac{3}{4} \bigg] \bigg\} dy. \label{def of I5}
\end{align}

\begin{lemma}\label{lemma: S4+S567}
If $(t_{1},t_{2}) \neq (0,0)$, then as $n \to + \infty$,
\begin{align}
S_{4} + S_{567} & = C_{1} n^{2} + C_{2} \, n \log n + \bigg( \tilde{C}_{3}+\frac{(1-2x)\sigma_{1}}{2\pi r_{1}^{2}e^{t_{1}+t_{2}}} \bigg) n \nonumber \\
& + \tilde{C}_{4} \sqrt{n} + \bigO\Big(M^{4} + \frac{M^{9}}{\sqrt{n}} + \frac{M^{14}}{n^{3/2}}+\frac{n}{M^{6}} + \frac{\sqrt{n}}{M}\Big), \label{asymp S4 + S567 final}
\end{align}
where $C_{1},C_{2}$ are as in the statement of Theorem \ref{thm:r1 hard} and $\tilde{C}_{3},\tilde{C}_{4}$ are given by
\begin{align*}
& \tilde{C}_{3} = 2b^{2}r_{1}^{2b-2} \mathcal{I}_{1} - \frac{b^{2}r_{1}^{2b-2}}{2}\log \Big( \frac{2}{r_{1}^{2b}} \Big)  - b^{2} r_{1}^{2b-2}\Big( E_{1}( t_{1}+t_{2}) + \gamma_{\mathrm{E}} + \log(br_{1}^{2b}\frac{t_{1}+t_{2}}{\sigma_{1}}) \Big) \\
& + \frac{\sigma_{1}}{r_{1}^{2}(t_{1}+t_{2})^{3}} \bigg\{ \frac{(t_{1}+t_{2})^{2}(t_{1}^{2}+t_{2}^{2})}{2e^{t_{1}+t_{2}}} \\
& \hspace{2.7cm} + \Big( 2t_{1}t_{2} - \frac{b^{2}r_{1}^{2b}}{\sigma_{1}}(t_{1}+t_{2})(t_{1}^{2}+t_{2}^{2})\Big) \bigg( 1 - \frac{1+t_{1}+t_{2}}{e^{t_{1}+t_{2}}} \bigg) \bigg\}, \\
& \tilde{C}_{4} = \frac{\sqrt{2} b^{2} r_{1}^{b-2}}{3} \bigg\{  \big[15-6br_{1}^{2b}\frac{t_{1}+t_{2}}{\sigma_{1}}\big]\mathcal{I} - 10 \, \mathcal{I}_{2} - 2 \, \mathcal{I}_{3} + 10 \, \mathcal{I}_{4} \bigg\}.
\end{align*}
If $t_{1}=t_{2}=0$, then \eqref{asymp S4 + S567 final} holds with
\begin{align*}
& C_{1} = \frac{\sigma_{1}^{2}}{2 r_{1}^{2}}, \qquad C_{2} = \frac{b^{2}r_{1}^{2b-2}}{2}, \qquad \tilde{C}_{3} = 2b^{2}r_{1}^{2b-2} \mathcal{I}_{1} + \frac{b^{2}r_{1}^{2b-2}}{2} \log \Big( \frac{\sigma_{1}^{2}}{2b^{2}r_{1}^{2b}} \Big), \\
& \tilde{C}_{4} = \frac{\sqrt{2} b^{2} r_{1}^{b-2}}{3} \bigg\{ 15\, \mathcal{I} - 10 \, \mathcal{I}_{2} - 2 \, \mathcal{I}_{3} + 10 \, \mathcal{I}_{4} \bigg\}.
\end{align*}
\end{lemma}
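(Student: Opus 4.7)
The plan is to add the expansions of $S_4$ (from Lemma \ref{lemma:asymp S4 theta2=theta1}) and $S_{567}$ (from Lemma \ref{lemma:asymp S567 theta2=theta1}), and to observe that although each of $E_3^{(M)}, E_4^{(M)}, D_3^{(n,M)}, D_4^{(n,M)}$ has non-trivial dependence on the auxiliary parameter $M$, the $M$-dependent contributions cancel in the sum up to an error compatible with the claimed bound $\bigO(M^4+M^9/\sqrt{n}+M^{14}/n^{3/2}+n/M^6+\sqrt{n}/M)$. The $n^2$ and $n\log n$ terms sit entirely in $S_{567}$, so $D_1=C_1$ and $D_2=C_2$ are immediate.

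The heart of the matter is the large-$M$ asymptotics of the integrals defining $E_3^{(M)}$ and $E_4^{(M)}$. For $E_3^{(M)}$ the substitution $y=-r_1^b t/\sqrt{2}$ converts
\begin{align*}
E_3^{(M)}= b r_1^{2b}\int_{-M}^{M}h_1(t)\,dt = 2b^2 r_1^{2b-2}\int_{-r_1^b M/\sqrt{2}}^{r_1^b M/\sqrt{2}}\frac{e^{-y^2}}{\sqrt{\pi}\,\mathrm{erfc}(y)}\,dy.
\end{align*}
Adding and subtracting the polynomial tail $\chi_{(0,\infty)}(y)\bigl[y+y/(2(1+y^2))\bigr]$ from \eqref{def of I1}, the integral splits into $\mathcal{I}_1$ plus an explicit primitive
$\int_0^{r_1^b M/\sqrt{2}}\!\bigl[y+y/(2(1+y^2))\bigr]dy = \tfrac{r_1^{2b}M^2}{4}+\tfrac{1}{2}\log(r_1^b M/\sqrt{2})+o(1)$,
yielding the $M$-dependent pieces $\tfrac{1}{2}b^2 r_1^{4b-2}M^2 + b^2 r_1^{2b-2}\log M + \tfrac{1}{2}b^2 r_1^{2b-2}\log(r_1^{2b}/2)$. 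These match, with opposite sign, the terms $-\tfrac{1}{2}b^2 M^2 r_1^{4b-2}$ and the $-b^2 r_1^{2b-2}\log M$ hidden in $-b^2 r_1^{2b-2}\log(b r_1^{2b}\tfrac{t_1+t_2}{\sigma_1}M)$ of $D_3^{(n,M)}$. The residual $-b^2/(M^2 r_1^2)+\tfrac{5b^2}{2M^4 r_1^{2+2b}}$ is absorbed in higher-order Laurent corrections of the same asymptotic expansion. Collecting the surviving $M$-independent constants delivers $\tilde{C}_3+(1-2x)\sigma_1/(2 r_1^2 e^{t_1+t_2})$, as stated.

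For $E_4^{(M)}$ the same substitution expresses $\int_{-M}^{M}(h_2(t)-2t h_1(t))dt$ as a linear combination of integrals of $y\, e^{-y^2}/(\sqrt{\pi}\,\mathrm{erfc}(y))$, $y^3 e^{-y^2}/(\sqrt{\pi}\,\mathrm{erfc}(y))$, $(e^{-y^2}/(\sqrt{\pi}\,\mathrm{erfc}(y)))^2$ and $(y\, e^{-y^2}/(\sqrt{\pi}\,\mathrm{erfc}(y)))^2$; the coefficients are precisely those appearing in $\tilde{C}_4$. Adding and subtracting the polynomial tails from \eqref{def of I}, \eqref{def of I3}, \eqref{def of I4}, \eqref{def of I5}, each of these four integrals produces $\mathcal{I}, \mathcal{I}_2, \mathcal{I}_3, \mathcal{I}_4$ plus explicit polynomial-in-$M$ pieces. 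The boundary contributions $(\tfrac{1}{2}-\theta_\pm^{(n,M)})h_1(\pm M)$ contribute via the large-$y$ expansion $e^{-y^2}/(\sqrt{\pi}\,\mathrm{erfc}(y))\sim y+\tfrac{1}{2y}+\cdots$ (for $y\to+\infty$) and the double-exponential decay for $y\to-\infty$. All polynomial-in-$M$ pieces cancel against the explicit $M^3$, $M$, and $M^{-1}$ monomials in $D_4^{(n,M)}$; the matching $\theta_+^{(n,M)}$ factors are consistent because the same edge index $g_{1,+}=\lfloor b n r_1^{2b}/(1-M/\sqrt{n})-\alpha\rfloor$ is used in both lemmas.

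The main obstacle is the last bookkeeping step: one must carefully track the coefficients of $M^3, M^2, M, 1, M^{-1}$ in $E_4^{(M)}$ (which emerge from integrating powers and quadratic expressions in $e^{-y^2}/\mathrm{erfc}(y)$) and verify that they exactly cancel the corresponding monomials of $D_4^{(n,M)}$, including the nontrivial $\theta_+^{(n,M)}$-oscillating contributions. Once this arithmetic is complete, $\tilde{C}_3$ and $\tilde{C}_4$ are read off directly. The case $t_1=t_2=0$ is handled in parallel and uses the same cancellations, with the specialized values of $D_3^{(n,M)}$, $D_4^{(n,M)}$ from Lemma \ref{lemma:asymp S567 theta2=theta1} producing the simpler constants stated at the end of the lemma.
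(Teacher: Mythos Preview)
Your approach is correct and essentially the same as the paper's: add the expansions from Lemmas \ref{lemma:asymp S567 theta2=theta1} and \ref{lemma:asymp S4 theta2=theta1}, expand $E_3^{(M)}$ and $E_4^{(M)}$ for large $M$ via the substitution $y=-r_1^bt/\sqrt{2}$ and the regularized integrals $\mathcal{I},\mathcal{I}_1,\ldots,\mathcal{I}_4$, and verify that the polynomial-in-$M$ pieces cancel against the explicit $M$-dependent terms in $D_3^{(n,M)}$ and $D_4^{(n,M)}$. You in fact spell out more of the mechanism than the paper does---in particular, your remark that the residual $-b^2/(M^2r_1^2)+5b^2/(2M^4r_1^{2+2b})$ in $D_3^{(n,M)}$ must be cancelled by higher-order Laurent corrections from the large-$y$ expansion of $e^{-y^2}/(\sqrt{\pi}\,\mathrm{erfc}(y))$ is exactly what is needed to justify the stated $\bigO(n/M^6)$ error rather than a cruder $\bigO(n/M^2)$.
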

\begin{proof}
Combining Lemmas \ref{lemma:asymp S567 theta2=theta1} and \ref{lemma:asymp S4 theta2=theta1}, we infer that as $n \to + \infty$,
\begin{align*}
S_{4} + S_{567} = D_{1} n^{2} + D_{2} \, n \log n + (D_{3}^{(n,M)}+E_{3}^{(M)}) n + (D_{4}^{(M)}+E_{4}^{(M)})\sqrt{n} + \bigO\Big(M^{4}\Big).
\end{align*}
On the other hand, as $n \to + \infty$,
\begin{align*}
& (D_{3}^{(n,M)}+E_{3}^{(M)})n = \bigg( \tilde{C}_{3}+\frac{(1-2x)\sigma_{1}}{2 r_{1}^{2}e^{t_{1}+t_{2}}} \bigg)n + \bigO\Big(\frac{n}{M^{2}}\Big), \\
& (D_{4}^{(M)}+E_{4}^{(M)})\sqrt{n} = \tilde{C}_{4}\sqrt{n} + \bigO\Big(\frac{\sqrt{n}}{M}\Big),
\end{align*}
and the claim follows.
\end{proof}

Since $M=n^{\frac{1}{10}}$, we have $M^{4} = \frac{M^{9}}{\sqrt{n}} = \frac{M^{14}}{n^{3/2}} = \frac{n}{M^{6}} =  \frac{\sqrt{n}}{M} = n^{\frac{2}{5}}$, and thus the $\bigO$-term in \eqref{asymp S4 + S567 final} can be written as $\bigO(n^{\frac{2}{5}})$. Hence, combining Lemmas \ref{lemma:r1r1 second lemma}, \ref{lemma:S456 valid for all theta1 and theta2} and \ref{lemma: S4+S567} yields
\begin{align*}
K_{n}(z,w) = C_{1} n^{2} + C_{2} \, n \log n + \bigg( \tilde{C}_{3}+\frac{(\frac{1}{2}-x+\mathcal{Q}_{n})\sigma_{1}}{r_{1}^{2}e^{t_{1}+t_{2}}} \bigg) n  + \tilde{C}_{4} \sqrt{n} + \bigO(n^{\frac{2}{5}}).
\end{align*}
Furthermore, by \eqref{Fn theta 2 intro} and Lemma \ref{lemma:Qn in terms of theta}, we have $\frac{1}{2}-x+\mathcal{Q}_{n}=\mathcal{F}_{n}$, where $\mathcal{F}_{n}$ is given by \eqref{Fn theta 2 intro}. Finally, using the following identities (proved in \cite[Lemma 2.10]{ACCL2022 1})
\begin{align}\label{simplif of Ij}
& \mathcal{I}_{1} = \frac{\log (2\sqrt{\pi})}{2}, & &  \mathcal{I}_{3} = \mathcal{I}, & & \mathcal{I}_{4} = \mathcal{I}_{2}-\mathcal{I},
\end{align}
we obtain that $\tilde{C}_{3}=C_{3}$ and $\tilde{C}_{4}=C_{4}$, where $C_{3}$ and $C_{4}$ are as in the statement of Theorem \ref{thm:r1 hard}. This finishes the proof of Theorem \ref{thm:r1 hard}.

\subsection{Proof of Theorem \ref{thm:r1r1 hard}}\label{subsection:5.2}
In this subsection we assume that $\theta_{1}\neq \theta_{2} \mod 2\pi$, and we take $M=M' \sqrt{\log n}$. By \eqref{asymp S567} we have
\begin{align*}
& S_{567} = \frac{n}{r_{1}^{2}} \sum_{j=g_{1,+}+1}^{\lfloor j_{\star}\rfloor} \hspace{-0.15cm} e^{i(j-1)(\theta_{1}-\theta_{2})} \frac{j/n-br_{1}^{2b}}{e^{(j/n-br_{1}^{2b})\frac{t_{1}+t_{2}}{\sigma_{1}}}} \bigg( \hspace{-0.07cm} 1+ \bigO\bigg( \frac{1}{n(j/n-br_{1}^{2b})^{2}} \bigg) \hspace{-0.07cm} \bigg) \hspace{-0.07cm} = \frac{n e^{i \lfloor j_{\star}\rfloor (\theta_{1}-\theta_{2})}}{r_{1}^{2}} \sum_{\ell = 1}^{N} c_{\ell}d_{\ell}
\end{align*}
where $N := \lfloor j_{\star}\rfloor -g_{1,+}$, $d_{\ell} := p^{\ell}$, $p:=e^{-i (\theta_{1}-\theta_{2})}$, and $c_{\ell}$ satisfies
\begin{align}
& c_{\ell} = \frac{-\ell/n-br_{1}^{2b} + \frac{\lfloor j_{\star}\rfloor +1}{n}}{e^{(-\ell/n-br_{1}^{2b} + \frac{\lfloor j_{\star}\rfloor +1}{n})\frac{t_{1}+t_{2}}{\sigma_{1}}}}\bigg( 1 + \frac{1}{n} \bigg\{ \frac{t_{1}+t_{2}}{\sigma_{1}}(1-\alpha) - \frac{-\ell/n + \frac{\lfloor j_{\star}\rfloor +1}{n}}{2}\frac{t_{1}^{2}+t_{2}^{2}}{\sigma_{1}^{2}}  \nonumber \\
& + \frac{b}{2\sigma_{1}^{2}}(1-2b)r_{1}^{2b}(t_{1}^{2}+t_{2}^{2}) + \frac{b^{2}r_{1}^{2b}+(-\ell/n-br_{1}^{2b} + \frac{\lfloor j_{\star}\rfloor +1}{n})\alpha}{(-\ell/n-br_{1}^{2b} + \frac{\lfloor j_{\star}\rfloor +1}{n})^{2}} \bigg\} - \frac{2 b^{4} r_{1}^{4b}}{n^{2}(-\ell/n-br_{1}^{2b} + \frac{\lfloor j_{\star}\rfloor +1}{n})^{4}} \nonumber \\
& + \frac{10b^{6}r_{1}^{6b}}{n^{3}(-\ell/n-br_{1}^{2b} + \frac{\lfloor j_{\star}\rfloor +1}{n})^{6}} + \bigO\Big( \frac{1}{n^{2}(\ell/n-\sigma_{1})^{3}} + \frac{1}{n^{4}(\ell/n-\sigma_{1})^{8}} \Big) \bigg) \label{asymp c ell}
\end{align}
as $n\to+\infty$. Using summation by parts, we get
\begin{align}\label{lol6}
S_{567} = \frac{n e^{i \lfloor j_{\star}\rfloor (\theta_{1}-\theta_{2})}}{r_{1}^{2}} \bigg( c_{N}D_{N}-c_{1}D_{0}- \frac{c_{N}-c_{1}}{1-p} + \sum_{\ell=1}^{N-1} (c_{\ell+1}-c_{\ell}) \frac{p^{\ell+1}}{1-p} \bigg),
\end{align}
where $D_{\ell} := \sum_{j=0}^{\ell}d_{j}$.
Note that
\begin{align}
& \frac{n e^{i \lfloor j_{\star}\rfloor (\theta_{1}-\theta_{2})}}{r_{1}^{2}} \bigg( -c_{1}D_{0}- \frac{-c_{1}}{1-p} \bigg) = \frac{n e^{i \lfloor j_{\star}\rfloor (\theta_{1}-\theta_{2})}}{r_{1}^{2}} \frac{p}{1-p} \frac{\sigma_{1}}{e^{t_{1}+t_{2}}}\big( 1+ \bigO (n^{-1}) \big), \label{lol4} \\
& \frac{n e^{i \lfloor j_{\star}\rfloor (\theta_{1}-\theta_{2})}}{r_{1}^{2}} \bigg( c_{N}D_{N} - \frac{c_{N}}{1-p} \bigg) = \bigO(M\sqrt{n}), \label{lol5}
\end{align}
where for \eqref{lol5} we have used that $-N/n-br_{1}^{2b} + \frac{\lfloor j_{\star}\rfloor +1}{n} = \bigO(M/\sqrt{n})$. Let us now prove that
\begin{align}\label{lol3}
\bigg| \frac{n e^{i \lfloor j_{\star}\rfloor (\theta_{1}-\theta_{2})}}{r_{1}^{2}}  \sum_{\ell=1}^{N-1} (c_{\ell+1}-c_{\ell}) \frac{p^{\ell+1}}{1-p} \bigg| = \bigO\bigg( \frac{\sqrt{n}}{M} \bigg).
\end{align}
From \eqref{asymp c ell} we deduce that
\begin{align}\label{lol15}
c_{\ell+1}-c_{\ell} = c_{\ell} \bigg( \frac{1+(\ell/n-\sigma_{1})\frac{t_{1}+t_{2}}{\sigma_{1}}}{(\ell/n-\sigma_{1})n} + \bigO\bigg( \frac{1}{n^{2}(\ell/n-\sigma_{1})^{3}} \bigg) \bigg), \qquad \mbox{as } n \to + \infty.
\end{align}
Using \eqref{lol15} and then \eqref{asymp c ell}, we get
\begin{align}
& \sum_{\ell=1}^{N-1} (c_{\ell+1}-c_{\ell}) \frac{p^{\ell+1}}{1-p} \nonumber \\
&  = \frac{1}{n}\frac{p}{1-p} \sum_{\ell=1}^{N-1} c_{\ell} \frac{1+(\ell/n-\sigma_{1})\frac{t_{1}+t_{2}}{\sigma_{1}}}{\ell/n-\sigma_{1}} p^{\ell} + \sum_{\ell=1}^{N-1} \bigO\bigg( \frac{1}{n^{2}(\ell/n-\sigma_{1})^{2}} \bigg) \nonumber \\
& = \frac{1}{n}\frac{p}{1-p} \sum_{\ell=1}^{N-1} \tilde{c}_{\ell} p^{\ell} + \bigO\bigg( \frac{1}{M\sqrt{n}} \bigg), \qquad \tilde{c}_{\ell} := \frac{1+(\ell/n-\sigma_{1})\frac{t_{1}+t_{2}}{\sigma_{1}}}{e^{(\ell/n-\sigma_{1})\frac{t_{1}+t_{2}}{\sigma_{1}}}}. \label{lol2}
\end{align}
Summing again by parts, we get
\begin{align*}
\sum_{\ell=1}^{N-1} \tilde{c}_{\ell} p^{\ell} = \tilde{c}_{N-1}D_{N-1}-\tilde{c}_{1}D_{0}- \frac{\tilde{c}_{N-1}-c_{1}}{1-p} + \sum_{\ell=1}^{N-2} (\tilde{c}_{\ell+1}-\tilde{c}_{\ell}) \frac{p^{\ell+1}}{1-p} = \bigO(1),
\end{align*}
where we have used that $\tilde{c}_{\ell+1}-\tilde{c}_{\ell}=\bigO(n^{-1})$ uniformly for $\ell=1,\ldots,N-2$. Substituting the above in \eqref{lol2}, this proves \eqref{lol3}. Substituting now \eqref{lol4}, \eqref{lol5} and \eqref{lol3} in \eqref{lol6}, we get
\begin{align}\label{lol7}
S_{567} = \frac{n e^{i \lfloor j_{\star}\rfloor (\theta_{1}-\theta_{2})}}{r_{1}^{2}} \frac{p}{1-p} \frac{\sigma_{1}}{e^{t_{1}+t_{2}}} + \bigO\big( M \sqrt{n} \big).
\end{align}
Now we analyze $S_{4}$. By \eqref{S4 asymp 1}, we have
\begin{align*}
& S_{4} = \sqrt{n} \sum_{j=g_{1,-}}^{g_{1,+}} p^{-(j-1)} \hat{c}_{j} = p^{2-g_{1,-}}\sqrt{n} \sum_{\ell=1}^{\hat{N}} p^{-\ell} \hat{c}_{\ell}', \qquad \mbox{as } n \to + \infty,
\end{align*}
where $\hat{N}:=g_{1,+}-g_{1,-}+1$ and $\hat{c}_{j}, \hat{c}_{\ell}'$ satisfy
\begin{align*}
& \hat{c}_{j} = \frac{\sqrt{2} \, b r_{1}^{b-2} e^{-\frac{M_{j,1}^{2}r_{1}^{2b}}{2}}}{\sqrt{\pi} \mathrm{erfc}(-\frac{M_{j,1}r_{1}^{b}}{\sqrt{2}})} \bigg( 1 + \frac{1}{\sqrt{n}}\bigg\{\frac{(5M_{j,1}^{2}r_{1}^{2b}-2)e^{-\frac{r_{1}^{2b}M_{j,1}^{2}}{2}}}{3\sqrt{2\pi} r_{1}^{b} \mathrm{erfc}(-\frac{M_{j,1}r_{1}^{b}}{\sqrt{2}}) }  \nonumber \\
& \qquad + \frac{M_{j,1}}{6} \Big( 5M_{j,1}^{2}r_{1}^{2b}-3+6br_{1}^{2b}(t_{1}+t_{2})  \Big) \bigg\} \nonumber \\
& \qquad + \bigO\bigg(\frac{1+M_{j,1}^{2}+M_{j,1}^{6}\chi_{j,1}^{+}}{n}+\frac{1+M_{j,1}^{7}+M_{j,1}^{9}\chi_{j,1}^{+}}{n^{3/2}} + \frac{1+M_{j,1}^{12}}{n^{2}}\bigg)  \bigg), \\
& \hat{c}_{\ell}' := \hat{c}_{j}, \quad \mbox{where } j = \ell+g_{1,-}-1.
\end{align*}
Summing by parts, we get
\begin{align*}
\sum_{\ell=1}^{\hat{N}} p^{-\ell} \hat{c}_{\ell}' = \hat{c}_{\hat{N}}'\hat{D}_{\hat{N}}-\hat{c}_{1}'\hat{D}_{0}- \frac{\hat{c}'_{\hat{N}}-\hat{c}'_{1}}{1-p^{-1}} + \sum_{\ell=1}^{\hat{N}-1} (\hat{c}'_{\ell+1}-\hat{c}'_{\ell}) \frac{p^{-\ell-1}}{1-p^{-1}}
\end{align*}
where $\hat{D}_{\ell} := \sum_{j=0}^{\ell}p^{-j}$. Since $\hat{c}_{1}' = \bigO(e^{-M^{2}})$, $\hat{c}'_{\hat{N}} = \bigO(1)$, $\hat{c}'_{\ell+1}-\hat{c}'_{\ell} = \bigO(n^{-\frac{1}{2}})$ and $\hat{N} = \bigO(M\sqrt{n})$, we get $\sum_{\ell=1}^{\hat{N}} p^{-\ell} \hat{c}_{\ell}' = \bigO(M)$, and thus
\begin{align}\label{lol9}
S_{4} = \bigO(M\sqrt{n}), \qquad \mbox{as } n \to + \infty.
\end{align}
Combining \eqref{lol7}, \eqref{lol9} and Lemmas \ref{lemma:r1r1 second lemma} and \ref{lemma:S456 valid for all theta1 and theta2} yields
\begin{align}\label{lol16}
K_{n}(z,w) = \frac{n  e^{i \lfloor j_{\star} \rfloor (\theta_{1}-\theta_{2})} \sigma_{1}}{r_{1}^{2} e^{t_{1}+t_{2}}} \bigg( \frac{1}{p^{-1}-1} + \mathcal{Q}_{n} \bigg) +  \bigO(M\sqrt{n}).
\end{align}
Using \eqref{Abel series} and \eqref{def Fn theta1 neq theta2}, we infer that
\begin{align}\label{lol17}
\frac{\sigma_{1}}{r_{1}^{2}} \bigg( \frac{1}{p^{-1}-1} + \mathcal{Q}_{n} \bigg) = 2\pi \cdot r_{1}^{-2x} \cdot \mathcal{S}^{G}_{\mathrm{hard}}(r_{1}e^{\theta_{1}},r_{1}e^{\theta_{2}};n).
\end{align}
Substituting \eqref{lol17} in \eqref{lol16} yields
\begin{align*}
K_{n}(z,w) = 2\pi n \cdot \mathcal{S}^{G}_{\mathrm{hard}}(r_{1}e^{\theta_{1}},r_{1}e^{\theta_{2}};n) \cdot \frac{  e^{i \lfloor j_{\star} \rfloor (\theta_{1}-\theta_{2})}}{e^{t_{1}+t_{2}}} r_{1}^{-2x} +  \bigO(M\sqrt{n}).
\end{align*}
Since $M=M' \sqrt{\log n}$, this finishes the proof of Theorem \ref{thm:r1r1 hard}.

\section{Proofs of Theorems \ref{thm:r1 semi-hard} and \ref{thm:r1r1 semi-hard}}\label{section:r1r1 1/sqrtn}

In this section $z$ and $w$ are given by
\begin{align}\label{def of z1z2 r1 r1 case o}
z = r_{1}\Big(1-\frac{s_{1}}{\sqrt{n}}\Big)e^{i\theta_{1}}, \qquad w = r_{1}\Big(1-\frac{s_{2}}{\sqrt{n}}\Big)e^{i\theta_{2}}, \qquad s_{1}, s_{2} > 0, \theta_{1},\theta_{2}\in \mathbb{R},
\end{align}
and $s_{1}, s_{2}, \theta_{1},\theta_{2}$ are independent of $n$. At this point $\theta_{1},\theta_{2}\in \R$ can be arbitrary (in Subsection \ref{subsec:6.1} we will take $\theta_{2}=\theta_{1}$, and in Subsection \ref{subsec:6.2} we will take $\theta_{2}\neq \theta_{1} \mod 2\pi$). We emphasize that $s_{1}, s_{2}$ are strictly positive (the case $s_{1}=s_{2}=0$ is very different and covered in Section \ref{section:r1r1 1/n}). The parameters $s_{1},s_{2}$ in \eqref{def of z1z2 r1 r1 case o} are related to $\mathfrak{s}_{1}, \mathfrak{s}_{2}$ in \eqref{def of z1z2 r1 r1 case o thm intro} via $s_{j} = \mathfrak{s}_{j}/(\sqrt{2}br_{1}^{b})$, $j=1,2$. As in \eqref{def of Fj} we write $K_{n}(z,w) = \sum_{j=1}^{n}F_{j}$, where
\begin{align}\label{def of Fj r1r1 o}
& F_{j} := e^{-\frac{n}{2}Q(z)}e^{-\frac{n}{2}Q(w)} \frac{z^{j-1}\overline{w}^{j-1}}{h_{j}} = e^{-\frac{n}{2}|z|^{2b}}e^{-\frac{n}{2}|w|^{2b}}|z|^{\alpha}|w|^{\alpha}\frac{z^{j-1}\overline{w}^{j-1}}{h_{j}}.
\end{align}
In this section we take $M=M' \log n$. The following lemma is proved using Lemma \ref{lemma: asymp of hjinv}.
\begin{lemma}\label{lemma: asymp of Fj r1r1 o} Let $\delta \in (0,\frac{1}{100})$ be fixed. $M'$ can be chosen sufficiently large and independently of $\delta$ such that the following hold.
\begin{enumerate}
\item Let $j$ be fixed. As $n \to + \infty$, we have
\begin{align}\label{hj asymp 1 r1r1 o}
F_{j} = \bigO(e^{-n (1-\delta)r_{1}^{2b}}).
\end{align}
\item As $n \to + \infty$ with $M' \leq j \leq j_{1,-}$, we have
\begin{align}
F_{j} & = \bigO(e^{-nr_{1}^{2b}(1-\delta) \frac{\epsilon - \log(1+\epsilon)}{1+\epsilon} }). \label{hj asymp 2 r1r1 o}
\end{align}
\item As $n \to + \infty$ with $j_{1,-} \leq j \leq g_{1,-}$, we have
\begin{align}
F_{j} & = \bigO(n^{-100}). \label{hj asymp 3 r1r1 o}
\end{align}
\item As $n \to + \infty$ with $g_{1,-} \leq j \leq g_{1,+}$, we have
\begin{align}
& F_{j} = e^{i(j-1)(\theta_{1}-\theta_{2})} \sqrt{n}\frac{\sqrt{2} \, b r_{1}^{b-2} e^{-\frac{1}{2}r_{1}^{2b}(M_{j,1}^{2}-2bM_{j,1}(s_{1}+s_{2})+2b^{2}(s_{1}^{2}+s_{2}^{2}))}}{\sqrt{\pi} \mathrm{erfc}(-\frac{M_{j,1}r_{1}^{b}}{\sqrt{2}})}   \nonumber \\
& \qquad \times \bigg( 1 + \frac{1}{\sqrt{n}}\bigg\{\frac{(5M_{j,1}^{2}r_{1}^{2b}-2)e^{-\frac{r_{1}^{2b}M_{j,1}^{2}}{2}}}{3\sqrt{2\pi} r_{1}^{b} \mathrm{erfc}(-\frac{M_{j,1}r_{1}^{b}}{\sqrt{2}}) } - \frac{M_{j,1}}{2}+s_{1}+s_{2} + r_{1}^{2b} \bigg[ \frac{5}{6}M_{j,1}^{3} \nonumber \\
& \qquad - bM_{j,1}^{2}(s_{1}+s_{2}) + bM_{j,1} \frac{s_{1}^{2}+s_{2}^{2}}{2} + \frac{b^{2}(2b-3)}{3}(s_{1}^{3}+s_{2}^{3}) \bigg] \bigg\} + \bigO\Big(\frac{1+M_{j,1}^{6}}{n} \Big)  \bigg). \label{hj asymp 4 r1r1 o}
\end{align}
\item As $n \to + \infty$ with $g_{1,+} \leq j \leq j_{1,+}$, we have
\begin{align}
& F_{j} = \bigO(n^{-100}). \label{hj asymp 5 r1r1 o}
\end{align}
\item As $n \to + \infty$ with $j_{1,+} \leq j \leq \lfloor j_{\star}\rfloor$, we have
\begin{align}
& F_{j} = \bigO\Big( e^{-(1-\delta)\frac{\epsilon br_{1}^{2b}}{1-\epsilon}(s_{1}+s_{2})\sqrt{n}} \Big). \label{hj asymp 6 r1r1 o}
\end{align}
\item As $n \to + \infty$ with $\lfloor j_{\star}\rfloor+1 \leq j \leq j_{2,-}$, we have
\begin{align}
& F_{j} = \bigO\Big( e^{-(1-\delta)\sigma_{1}(s_{1}+s_{2})\sqrt{n}} \Big). \label{hj asymp 7 r1r1 o}
\end{align}
\item As $n \to + \infty$ with $j_{2,-} \leq j \leq g_{2,-}$, we have
\begin{align}
& F_{j} = \bigO(e^{-n(1-\delta) \log(\frac{r_{2}}{r_{1}})(\frac{br_{2}^{2b}}{1+\epsilon}-\sigma_{\star})}). \label{hj asymp 8 r1r1 o}
\end{align}
\item As $n \to + \infty$ with $g_{2,-} \leq j \leq g_{2,+}$, we have
\begin{align}
& F_{j} = \bigO(e^{-n(1-\delta) \log(\frac{r_{2}}{r_{1}})\sigma_{2}}). \label{hj asymp 9 r1r1 o}
\end{align}
\item As $n \to + \infty$ with $g_{2,+} \leq j \leq j_{2,+}$, we have
\begin{align}
F_{j} & = \bigO(e^{-n(1-\delta) \log(\frac{r_{2}}{r_{1}})\sigma_{2}}). \label{hj asymp 10 r1r1 o}
\end{align}
\item As $n \to + \infty$ with $j_{2,+} \leq j \leq n$, we have
\begin{align}
F_{j} & = \bigO(e^{-n(1-\delta) \log(\frac{r_{2}}{r_{1}})\sigma_{2}}). \label{hj asymp 11 r1r1 o}
\end{align}
\end{enumerate}
\end{lemma}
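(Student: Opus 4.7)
The proof follows the pattern of Lemma \ref{lemma: asymp of Fj r1r1}: for each of the eleven index ranges we combine the corresponding asymptotic formula for $h_j^{-1}$ from Lemma \ref{lemma: asymp of hjinv} with an explicit expansion of the prefactor
$$e^{-\tfrac{n}{2}(|z|^{2b}+|w|^{2b})}(|z||w|)^{\alpha}\,z^{j-1}\overline{w}^{j-1}$$
for $z,w$ given by \eqref{def of z1z2 r1 r1 case o}. The new feature, relative to Section \ref{section:r1r1 1/n}, is that the perturbation is of order $1/\sqrt n$ instead of $1/n$, so one must expand $(1-s_\ell/\sqrt n)^{j-1+\alpha}$ and $(1-s_\ell/\sqrt n)^{2b}$ one more order: in particular $\log(1-s_\ell/\sqrt n)$ multiplied by $(j-1+\alpha)\asymp n$ produces a term of order $\sqrt n$ in the exponent. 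These $\sqrt n$ terms are the main bookkeeping nuisance, and the key observation is that they cancel once combined with matching $\sqrt n$ contributions from $e^{-nr_{1}^{2b}(1-s_\ell/\sqrt n)^{2b}/2}$.

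For the bounding ranges (1), (2), (5), (8)--(11) the argument is essentially identical to the corresponding parts of Lemma \ref{lemma: asymp of Fj r1r1}: we use $(|z||w|)^{j-1+\alpha}\le r_{1}^{2(j-1+\alpha)}(1+\bigO(n^{-1/2}))^{j}$ and the convexity of $y\mapsto\tfrac{y}{b}(1-\log\tfrac{y}{b})-r_{1}^{2b}+2y\log r_{1}$ on $[0,br_{1}^{2b}]$ (maximized at $y=br_{1}^{2b}$) to exponentially defeat $h_j^{-1}$. Range (3) produces a bound of the form $e^{-c(1-\delta)M^{2}r_{1}^{2b}/2}$, which is $\bigO(n^{-100})$ since $M=M'\log n$ and $M'$ may be chosen arbitrarily large. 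Ranges (6) and (7) are genuinely new: here $|j/n-br_{1}^{2b}|\asymp 1$, so the factor $(|z||w|)^{j-1+\alpha}$ contributes an \emph{uncancelled} $e^{-(j+\alpha)(s_{1}+s_{2})/\sqrt n}$ that is not present in $h_j^{-1}$, yielding the decay $e^{-c(1-\delta)(s_{1}+s_{2})\sqrt n}$ stated in \eqref{hj asymp 6 r1r1 o}--\eqref{hj asymp 7 r1r1 o} (with $c$ equal to the infimum of $br_{1}^{2b}/(1-\epsilon)$ or $\sigma_{\star}$ in the respective subrange).

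The heart of the lemma is range (4), $g_{1,-}\le j\le g_{1,+}$. One inserts the expansion $j+\alpha=br_{1}^{2b}n\sum_{k\ge 0}(-M_{j,1}/\sqrt n)^{k}$ into
$$(1-s_\ell/\sqrt n)^{j-1+\alpha}=\exp\!\Bigl[(j-1+\alpha)\Bigl(-\tfrac{s_\ell}{\sqrt n}-\tfrac{s_\ell^{2}}{2n}-\tfrac{s_\ell^{3}}{3n^{3/2}}-\cdots\Bigr)\Bigr]$$
and likewise expands $e^{-nr_{1}^{2b}(1-s_\ell/\sqrt n)^{2b}/2}$. A direct calculation shows that the $\sqrt n$ contributions cancel, leaving the exponent $br_{1}^{2b}M_{j,1}s_\ell-b^{2}r_{1}^{2b}s_\ell^{2}+\bigO(1/\sqrt n)$. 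Summing over $\ell=1,2$ and multiplying by the explicit form of $h_j^{-1}$ from \eqref{hj asymp 4}, whose exponential factors $e^{r_{1}^{2b}(1-\log r_{1}^{2b})n}\, e^{M_{j,1}r_{1}^{2b}\log(r_{1}^{2b})\sqrt n}\, e^{-r_{1}^{2b}\log(r_{1}^{2b})M_{j,1}^{2}}$ cancel \emph{exactly} against $r_{1}^{2(j-1+\alpha)}$ and the remaining $-nr_{1}^{2b}$ (leaving only $-\tfrac{r_{1}^{2b}M_{j,1}^{2}}{2}$ from the $\tfrac{1}{2}$ in $h_j^{-1}$), produces the announced Gaussian factor
$$e^{-\frac{1}{2}r_{1}^{2b}\bigl(M_{j,1}^{2}-2bM_{j,1}(s_1+s_2)+2b^{2}(s_1^{2}+s_2^{2})\bigr)}.$$
The $1/\sqrt n$ correction in \eqref{hj asymp 4 r1r1 o} is obtained by collecting all the $1/\sqrt n$ pieces: those from the expansions of $(1-s_\ell/\sqrt n)^{j-1+\alpha}$ and $(1-s_\ell/\sqrt n)^{2b}$ (producing $s_\ell$, $s_\ell^{2}M_{j,1}$, $s_\ell M_{j,1}^{2}$, $s_\ell^{3}$, $M_{j,1}^{3}$ contributions), and the $1/\sqrt n$ correction to $h_j^{-1}$ already displayed in the second equality of \eqref{hj asymp 4}. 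The main technical obstacle is uniformity: one must verify that the error terms in \eqref{hj asymp 4 r1r1 o} are uniform in $M_{j,1}\in[-M,M]$ with $M=M'\log n$ and correctly incorporate the $\chi_{j,1}^{\pm}$-dependent remainder of \eqref{hj asymp 4}, which requires the Gaussian $e^{-r_{1}^{2b}M_{j,1}^{2}/2}$ to absorb polynomial growth in $M_{j,1}$ uniformly. The proofs of \eqref{hj asymp 1 r1r1 o}--\eqref{hj asymp 3 r1r1 o} and \eqref{hj asymp 5 r1r1 o}--\eqref{hj asymp 11 r1r1 o} are standard estimates along the same lines and are omitted in detail.
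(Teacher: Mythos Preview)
Your overall approach matches the paper's: combine the $h_j^{-1}$ asymptotics from Lemma~\ref{lemma: asymp of hjinv} with an expansion of the weight prefactor at the semi-hard scale, and your treatment of range (4) is correct in outline.

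There is, however, a concrete error in your handling of range (5), and a related miscalculation in (6)--(7). You group range (5) with (1), (2), (8)--(11) and claim it follows ``the corresponding part of Lemma~\ref{lemma: asymp of Fj r1r1}'' via convexity on $[0,br_1^{2b}]$. This fails on two counts. First, in Lemma~\ref{lemma: asymp of Fj r1r1} item (5) is a full asymptotic \emph{expansion}, not an $\bigO$-bound, so there is no bounding argument to borrow. Second, the convexity estimate on $[0,br_1^{2b}]$ does not apply since $j/n > br_1^{2b}$ throughout range (5); if you simply discard the $s_\ell$-dependence (using $|z|,|w|\le r_1$) you obtain only $F_j = \bigO(n)$, not $\bigO(n^{-100})$. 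The correct mechanism for (5) is the \emph{same} as for (6) and (7): the $s_\ell$-dependent factor supplies the decay. Since $j/n - br_1^{2b} \gtrsim br_1^{2b}M/\sqrt n$ at the lower endpoint $j=g_{1,+}$, one gets $F_j = \bigO(e^{-br_1^{2b}(s_1+s_2)M}) = \bigO(n^{-100})$ for $M = M'\log n$ with $M'$ large. The paper proves (5) exactly this way and then says (6)--(11) are similar.

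Your decay constants in (6)--(7) are also wrong. You say the uncancelled factor from $(|z||w|)^{j-1+\alpha}$ is $e^{-(j+\alpha)(s_1+s_2)/\sqrt n}$, leading to $c = br_1^{2b}/(1-\epsilon)$ and $c = \sigma_\star$. But $e^{-\frac n2|z|^{2b}}$ also depends on $s_\ell$: expanding $(1-s_\ell/\sqrt n)^{2b}$ contributes a compensating $e^{+br_1^{2b}s_\ell\sqrt n}$ --- precisely the ``matching $\sqrt n$ contribution'' you yourself invoked when discussing range (4). The net uncancelled factor is therefore $e^{-(j/n - br_1^{2b})(s_1+s_2)\sqrt n}$, and the correct constants are $c = \epsilon\, br_1^{2b}/(1-\epsilon)$ for (6) and $c = \sigma_\star - br_1^{2b} = \sigma_1$ for (7), as stated in the lemma.
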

\begin{proof}
\eqref{hj asymp 1 r1r1 o} directly follows from \eqref{hj asymp 1} and \eqref{def of Fj r1r1 o}. Let us now prove \eqref{hj asymp 2 r1r1 o}. By \eqref{hj asymp 2} and \eqref{def of Fj r1r1 o}, as $n \to + \infty$ with $M' \leq j \leq j_{1,-}$, we have
\begin{align*}
F_{j} = \bigO\bigg(\exp \bigg\{ n(1-\delta) \bigg[ \frac{1}{b}(j/n-j/n \log(\frac{j/n}{b})) - r_{1}^{2b} + 2j/n \log(r_{1}) \bigg] \bigg\} \bigg).
\end{align*}
Since the function $[0,br_{1}^{2b}] \in y \mapsto y-y \log(\frac{y}{b}) + y \log(r_{1}^{2b})$ is increasing, and since $j/n \leq \frac{br_{1}^{2b}}{1+\epsilon}+\bigO(n^{-1})$, we have
\begin{align*}
F_{j} & = \bigO\bigg(\exp \bigg\{ -nr_{1}^{2b}(1-\delta) \frac{\epsilon - \log(1+\epsilon)}{1+\epsilon} \bigg\} \bigg),
\end{align*}
which is \eqref{hj asymp 2 r1r1 o}. Now we prove \eqref{hj asymp 3 r1r1 o}. By \eqref{hj asymp 3} and \eqref{def of Fj r1r1 o}, as $n \to + \infty$ with $j_{1,-} \leq j \leq g_{1,-}$, we have
\begin{align*}
F_{j} = \bigO\bigg( \exp \bigg\{ n(1-\delta) \bigg[ \frac{j/n - j/n \log(\frac{j/n}{b})}{b}-r_{1}^{2b} + 2j/n \log(r_{1}) + \frac{s_{1}+s_{2}}{\sqrt{n}}(br_{1}^{2b}-j/n) \bigg] \bigg\} \bigg).
\end{align*}
For $j_{1,-} \leq j \leq g_{1,-}$, the above exponential attains its maximum at $j=g_{1,-} = n(\frac{br_{1}^{2b}}{1+\frac{M}{\sqrt{n}}}+\bigO(n^{-1}))$, and therefore
\begin{align*}
F_{j} = \bigO\bigg( \exp \bigg\{ -(1-\delta) \bigg[\frac{M^{2}r_{1}^{2b}}{2}-M br_{1}^{2b}(s_{1}+s_{2})\bigg] \bigg\} \bigg).
\end{align*}
Since $M=M' \log n$ and $s_{1},s_{2}$ are fixed, \eqref{hj asymp 3 r1r1 o} holds provided $M'$ is chosen large enough. Now we prove \eqref{hj asymp 4 r1r1 o}. As $n \to + \infty$ with $g_{1,-} \leq j \leq g_{1,+}$, we have
\begin{align*}
& e^{r_{1}^{2b}(1-\log(r_{1}^{2b}))n}e^{M_{j,1} r_{1}^{2b} \log(r_{1}^{2b})\sqrt{n}}e^{-r_{1}^{2b}(\frac{1}{2}+\log(r_{1}^{2b}))M_{j,1}^{2}} e^{-\frac{n}{2}|z|^{2b}}e^{-\frac{n}{2}|w|^{2b}}|z|^{\alpha}|w|^{\alpha}z^{j-1}\overline{w}^{j-1} \\
& = e^{i(j-1)(\theta_{1}-\theta_{2})}e^{-\frac{1}{2}r_{1}^{2b}(M_{j,1}^{2}-2bM_{j,1}(s_{1}+s_{2})+2b^{2}(s_{1}^{2}+s_{2}^{2}))}r_{1}^{-2} \\
& \times \bigg\{ 1 + \frac{1}{\sqrt{n}}\bigg[ s_{1}+s_{2}+ br_{1}^{2b} \bigg( \frac{b(2b-3)}{3}(s_{1}^{3}+s_{2}^{3}) + M_{j,1} \frac{s_{1}^{2}+s_{2}^{2}}{2} - M_{j,1}^{2}(s_{1}+s_{2}) - 2M_{j,1}^{3}\log(r_{1}) \bigg) \bigg] \\
& + \bigO\Big(\frac{1+M_{j,1}^{6}}{n}\Big) \bigg\}.
\end{align*}
Now \eqref{hj asymp 4 r1r1 o} follows directly from a computation using \eqref{hj asymp 4} and \eqref{def of Fj r1r1 o}. Now we prove \eqref{hj asymp 5 r1r1 o}. Using \eqref{hj asymp 5} and \eqref{def of Fj r1r1 o}, as $n \to + \infty$ with $g_{1,+} \leq j \leq j_{1,+}$ we obtain
\begin{align*}
F_{j} & = \bigO \Big( \exp \Big\{ -\sqrt{n}(1-\delta)(j/n-br_{1}^{2b})(s_{1}+s_{2}) \Big\} \Big) \\
& = \bigO \Big( \exp \Big\{ -br_{1}^{2b}(s_{1}+s_{2})M \Big\} \Big).
\end{align*}
Since $s_{1},s_{2}>0$ are fixed and $M=M' \log n$, we can choose $M'$ large enough so that \eqref{hj asymp 5 r1r1 o} holds. The proofs of \eqref{hj asymp 6 r1r1 o}--\eqref{hj asymp 11 r1r1 o} are similar to the proof of \eqref{hj asymp 5 r1r1 o} and are omitted.
\end{proof}

\begin{lemma}\label{lemma:semi hard S4+O1}
As $n \to + \infty$, we have
\begin{align*}
& K_{n}(z,w) = S_{4} + \bigO(1),
\end{align*}
where
\begin{align}
& S_{4} := \sum_{j=g_{1,-}}^{g_{1,+}} e^{i(j-1)(\theta_{1}-\theta_{2})} \sqrt{n}\frac{\sqrt{2} \, b r_{1}^{b-2} e^{-\frac{1}{2}r_{1}^{2b}(M_{j,1}^{2}-2bM_{j,1}(s_{1}+s_{2})+2b^{2}(s_{1}^{2}+s_{2}^{2}))}}{\sqrt{\pi} \mathrm{erfc}(-\frac{M_{j,1}r_{1}^{b}}{\sqrt{2}})}   \nonumber \\
& \qquad \times \bigg( 1 + \frac{1}{\sqrt{n}}\bigg\{\frac{(5M_{j,1}^{2}r_{1}^{2b}-2)e^{-\frac{r_{1}^{2b}M_{j,1}^{2}}{2}}}{3\sqrt{2\pi} r_{1}^{b} \mathrm{erfc}(-\frac{M_{j,1}r_{1}^{b}}{\sqrt{2}}) } - \frac{M_{j,1}}{2}+s_{1}+s_{2} + r_{1}^{2b} \bigg[ \frac{5}{6}M_{j,1}^{3} \nonumber \\
& \qquad - bM_{j,1}^{2}(s_{1}+s_{2}) + bM_{j,1} \frac{s_{1}^{2}+s_{2}^{2}}{2} + \frac{b^{2}(2b-3)}{3}(s_{1}^{3}+s_{2}^{3}) \bigg] \bigg\} \bigg), \label{S4 asymp 1 o}
\end{align}
\end{lemma}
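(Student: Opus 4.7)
The plan is to decompose the sum $K_n(z,w)=\sum_{j=1}^n F_j$ into the eleven index blocks appearing in Lemma \ref{lemma: asymp of Fj r1r1 o} and to show that only the central block $g_{1,-}\leq j\leq g_{1,+}$ contributes nontrivially. Substituting the expansion \eqref{hj asymp 4 r1r1 o} termwise into that block reproduces $S_{4}$ as defined in \eqref{S4 asymp 1 o}, while all remaining blocks together contribute $\bigO(1)$.

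I would first dispose of the eight exponentially decaying blocks, namely parts (1), (2) and (6)--(11) of Lemma \ref{lemma: asymp of Fj r1r1 o}. In each of these ranges, the per-term estimate is of the form $F_j=\bigO(e^{-cn})$ or $F_j=\bigO(e^{-c\sqrt{n}})$ with a constant $c>0$ that is explicit and uniform in $j$ within the block. It is crucial here that $s_1,s_2>0$: this hypothesis produces the $\sqrt{n}$-rate exponential decay on the outer side of the semi-hard window (ranges (6) and (7)) and is precisely what distinguishes the semi-hard setting from the hard one treated in Section \ref{section:r1r1 1/n}. Since each such block contains at most $n$ indices, its total contribution is $\bigO(n\,e^{-c\sqrt{n}})=o(1)$.

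I would then address the two polynomially decaying transitional blocks (3) (for $j_{1,-}\leq j\leq g_{1,-}$) and (5) (for $g_{1,+}\leq j\leq j_{1,+}$), where $F_j=\bigO(n^{-100})$. Each range contains at most $\bigO(n)$ indices, so the contribution is $\bigO(n^{-99})=o(1)$. The underlying mechanism is that the bound $\bigO(n^{-100})$ follows from taking $M=M'\log n$ with $M'$ sufficiently large, a choice already built into the statement of Lemma \ref{lemma: asymp of Fj r1r1 o}.

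Finally, for the central block $g_{1,-}\leq j\leq g_{1,+}$ (containing $\bigO(\sqrt{n}\log n)$ indices), \eqref{hj asymp 4 r1r1 o} expresses $F_j$ as the corresponding summand of $S_{4}$ in \eqref{S4 asymp 1 o} plus a remainder of size $\sqrt{n}\,e^{-r_1^{2b}M_{j,1}^{2}/2}\,\bigO((1+M_{j,1}^{6})/n)$ per term. Viewing the sum over $j$ as a Riemann sum in the variable $y=M_{j,1}$ (with $dj\asymp\sqrt{n}\,dy$ by \eqref{def of Mjk}), the Gaussian factor tames the polynomial growth $1+y^{6}$ and the total remainder is $\bigO(1)$. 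Combining the three steps yields $K_n(z,w)=S_{4}+\bigO(1)$. Since the analytic heavy lifting has been done in Lemma \ref{lemma: asymp of Fj r1r1 o}, the argument is essentially a bookkeeping exercise and I do not expect any genuine obstacle.
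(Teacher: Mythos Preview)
Your approach is essentially the same as the paper's: split into the eleven blocks of Lemma \ref{lemma: asymp of Fj r1r1 o}, discard everything but block (4), and control the block-(4) remainder by a Riemann-sum argument. One point deserves more care, however. You write the per-term remainder as $\sqrt{n}\,e^{-r_1^{2b}M_{j,1}^{2}/2}\,\bigO((1+M_{j,1}^{6})/n)$ and invoke ``the Gaussian factor''. This is not an accurate bound: the true prefactor in \eqref{hj asymp 4 r1r1 o} carries $\mathrm{erfc}(-M_{j,1}r_1^b/\sqrt{2})$ in the denominator, and for $M_{j,1}\to-\infty$ this cancels the Gaussian completely, since $1/\mathrm{erfc}(-y)\sim \sqrt{\pi}\,|y|\,e^{y^2}$ as $y\to-\infty$. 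What actually survives on that side is only the linear piece $e^{br_1^{2b}M_{j,1}(s_1+s_2)}$ from the full exponent, giving decay $\bigO(e^{-c|M_{j,1}|})$ with $c\propto s_1+s_2$. So the hypothesis $s_1,s_2>0$ is needed \emph{here} as well, not only in blocks (6)--(7); without it the remainder sum would grow like a power of $\log n$. The paper makes exactly this observation: it bounds the full factor $x\mapsto e^{-\frac12 r_1^{2b}(x^2-2bx(s_1+s_2))}/\mathrm{erfc}(-xr_1^b/\sqrt{2})$ by $\bigO(e^{-c|x|})$ and then applies the Riemann-sum Lemma \ref{lemma:Riemann sum}. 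Once you replace your Gaussian claim by this exponential bound, your argument goes through unchanged.
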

\begin{proof}
By Lemma \ref{lemma: asymp of Fj r1r1 o}, as $n\to+\infty$ we have
\begin{align}\label{lol18}
& K_{n}(z,w) = S_{4} + \bigO\bigg(\sum_{j=g_{1,-}}^{g_{1,+}} \sqrt{n}\frac{e^{-\frac{1}{2}r_{1}^{2b}(M_{j,1}^{2}-2bM_{j,1}(s_{1}+s_{2}))}}{\mathrm{erfc}(-\frac{M_{j,1}r_{1}^{b}}{\sqrt{2}})} \frac{1+M_{j,1}^{6}}{n}\bigg)+\bigO(n^{-50}),
\end{align}
Since $s_{1},s_{2}>0$ are fixed, the function $x \mapsto \frac{e^{-\frac{1}{2}r_{1}^{2b}(x^{2}-2bx(s_{1}+s_{2}))}}{\mathrm{erfc}(-xr_{1}^{b}/\sqrt{2})}$ is $\bigO(e^{-c|x|})$ as $x \to \pm \infty$ (for some $c>0$). Lemma \ref{lemma:Riemann sum} then implies that the $\bigO$-terms in \eqref{lol18} remain bounded as $n\to + \infty$, and the claim follows.
\end{proof}
\subsection{Proof of Theorem \ref{thm:r1 semi-hard}}\label{subsec:6.1}
Here we take $\theta_{2}=\theta_{1}$. Recall also that $s_{1},s_{2}$ in \eqref{def of z1z2 r1 r1 case o} are related to $\mathfrak{s}_{1}, \mathfrak{s}_{2}$ in \eqref{def of z1z2 r1 r1 case o thm intro} via $s_{j} = \mathfrak{s}_{j}/(\sqrt{2}br_{1}^{b})= \mathfrak{s}_{j}/(\tilde{\Delta}Q(r_{1}))^{\frac{1}{2}}$, $j=1,2$.

Using \eqref{S4 asymp 1 o}, we write $S_{4} = \sum_{j=g_{1,-}}^{g_{1,+}}(\sqrt{n} \, h_{1}(M_{j,1})+h_{2}(M_{j,1}))$, where
\begin{align*}
& h_{1}(x) = \frac{\sqrt{2} \, b r_{1}^{b-2} e^{-\frac{1}{2}((\frac{r_{1}^{b}x}{\sqrt{2}}-\mathfrak{s}_{1})^{2}+(\frac{r_{1}^{b}x}{\sqrt{2}}-\mathfrak{s}_{2})^{2})}}{\sqrt{\pi} \mathrm{erfc}(-\frac{x r_{1}^{b}}{\sqrt{2}})}, \\
& h_{2}(x) = h_{1}(x) \bigg\{\frac{(5x^{2}r_{1}^{2b}-2)e^{-\frac{r_{1}^{2b}x^{2}}{2}}}{3\sqrt{2\pi} r_{1}^{b} \mathrm{erfc}(-\frac{xr_{1}^{b}}{\sqrt{2}}) } - \frac{x}{2}+\frac{\mathfrak{s}_{1}+\mathfrak{s}_{2}}{\sqrt{2} \, b r_{1}^{b}}  \nonumber \\
& \hspace{2.3cm} + r_{1}^{2b} \bigg[ \frac{5}{6}x^{3} - x^{2}\frac{\mathfrak{s}_{1}+\mathfrak{s}_{2}}{\sqrt{2}r_{1}^{b}} + x \frac{\mathfrak{s}_{1}^{2}+\mathfrak{s}_{2}^{2}}{4br_{1}^{2b}} + \frac{2b-3}{6\sqrt{2} \, b}\frac{\mathfrak{s}_{1}^{3}+\mathfrak{s}_{2}^{3}}{r_{1}^{3b}} \bigg] \bigg\}.
\end{align*}
Since $\mathfrak{s}_{1},\mathfrak{s}_{2}>0$ are fixed, $h_{1}(x),h_{2}(x) = \bigO(e^{-c |x|})$ as $x \to \pm \infty$ for some $c>0$. Since $M=M' \log n$, by Lemma \ref{lemma:Riemann sum}, we obtain
\begin{align*}
S_{4} & = br_{1}^{2b}\int_{-M}^{M} h_{1}(t)dt \, n + br_{1}^{2b} \int_{-M}^{M} (h_{2}(t)-2th_{1}(t))dt \, \sqrt{n} + \bigO(1), \\
& = br_{1}^{2b}\int_{-\infty}^{+\infty} h_{1}(t)dt \, n + br_{1}^{2b} \int_{-\infty}^{+\infty} (h_{2}(t)-2th_{1}(t))dt \, \sqrt{n} + \bigO(1).
\end{align*}
The claim now follows from a simple change of variables. This finishes the proof of Theorem \ref{thm:r1 semi-hard}.

\subsection{Proof of Theorem \ref{thm:r1r1 semi-hard}}\label{subsec:6.2}
Here we take $\theta_{1}\neq \theta_{2} \mod 2\pi$. Let $p:=e^{-i(\theta_{1}-\theta_{2})}$. By \eqref{S4 asymp 1 o},
\begin{align*}
& S_{4} = \sqrt{n} \sum_{j=g_{1,-}}^{g_{1,+}} p^{-(j-1)} c_{j} = p^{2-g_{1,-}}\sqrt{n} \sum_{\ell=1}^{N} p^{-\ell} c_{\ell}',
\end{align*}
where $N:=g_{1,+}-g_{1,-}+1$, and $c_{j}$, $c_{\ell}'$ are defined by
\begin{align*}
& c_{j} := \frac{\sqrt{2} \, b r_{1}^{b-2} e^{-\frac{1}{2}r_{1}^{2b}(M_{j,1}^{2}-2bM_{j,1}(s_{1}+s_{2})+2b^{2}(s_{1}^{2}+s_{2}^{2}))}}{\sqrt{\pi} \mathrm{erfc}(-\frac{M_{j,1}r_{1}^{b}}{\sqrt{2}})}   \nonumber \\
& \qquad \times \bigg( 1 + \frac{1}{\sqrt{n}}\bigg\{\frac{(5M_{j,1}^{2}r_{1}^{2b}-2)e^{-\frac{r_{1}^{2b}M_{j,1}^{2}}{2}}}{3\sqrt{2\pi} r_{1}^{b} \mathrm{erfc}(-\frac{M_{j,1}r_{1}^{b}}{\sqrt{2}}) } - \frac{M_{j,1}}{2}+s_{1}+s_{2} + r_{1}^{2b} \bigg[ \frac{5}{6}M_{j,1}^{3} \nonumber \\
& \qquad - bM_{j,1}^{2}(s_{1}+s_{2}) + bM_{j,1} \frac{s_{1}^{2}+s_{2}^{2}}{2} + \frac{b^{2}(2b-3)}{3}(s_{1}^{3}+s_{2}^{3}) \bigg] \bigg\} \bigg), \\
& c_{\ell}' := c_{j}, \quad \mbox{where } j = \ell+g_{1,-}-1.
\end{align*}
Summing by parts, we get
\begin{align*}
\sum_{\ell=1}^{N} p^{-\ell} c_{\ell}' = c_{N}'\hat{D}_{N}-c_{1}'\hat{D}_{0}- \frac{c'_{N}-c'_{1}}{1-p^{-1}} + \sum_{\ell=1}^{N-1} (c'_{\ell+1}-c'_{\ell}) \frac{p^{-\ell-1}}{1-p^{-1}}
\end{align*}
where $\hat{D}_{\ell} := \sum_{j=0}^{\ell}p^{-j}$. Note that $c_{1}' = \bigO(e^{-c M})$, $c'_{N} = \bigO(e^{-c M^{2}})$, and $c_{j+1}-c_{j} = \bigO(n^{-\frac{1}{2}})c_{j}$, $c_{j}=\bigO(e^{-c|M_{j,1}|})$. Hence, using several times summation by parts, we get
\begin{align*}
\sum_{\ell=1}^{N} p^{-\ell} c_{\ell}'  \lesssim e^{-c M} + \frac{1}{\sqrt{n} }\sum_{\ell=1}^{N-1} p^{-\ell}c_{\ell}' \lesssim e^{-c M} + \frac{1}{n^{\frac{21}{2}}}\sum_{\ell=1}^{N-21} p^{-\ell}c_{\ell}' \lesssim \frac{M}{n^{10}},
\end{align*}
where we have used that $N=\bigO(M\sqrt{n})$ for the last equality. Hence $S_{4} = \bigO(\frac{M}{n^{10}})$. By Lemma \ref{lemma:semi hard S4+O1}, we thus have $K_{n}(z,w) = \bigO(1)$, which is the statement of Theorem \ref{thm:r1r1 semi-hard}.

\appendix

\section{The semi-hard 1-point function and its universality}\label{Appendix semi-hard universal}

In the following, we consider any radially symmetric potential $V$ which is smooth near a point $r_{1}>0$ in the bulk of the droplet $S=S[V]$; we also assume that
$\Delta V(r_{1})>0$. We redefine $V$ by placing a hard wall along the circle $r=r_{1}$, say
\begin{align*}
Q(z)=\begin{cases}
V(z),& |z|\le r_{1}, \\
+\infty,& |z|>r_{1}.
\end{cases}
\end{align*}

Let $K_n(z,w)$ be the associated correlation kernel, i.e., the reproducing kernel of the space
\begin{align*}
\mathcal{W}_n^h = \{p\cdot e^{-\frac{1}{2} n Q} : p \mbox{ is a holomorphic polynomial with } \mbox{deg } p \leq n-1\} \subset L^2(\C,dA).
\end{align*}
We write $R_n(z)=K_{n}(z,z)$ for the 1-point function.

Let us write $\tilde{K}_n(z,w)$ for the correlation kernel of the rescaled process pertaining to the blow-up map $z\mapsto \sqrt{2n\Delta V(r_{1})}\cdot (r_{1}-z)$, i.e.
\begin{equation}\label{scale}
\tilde{K}_n(z,w)=\frac{1}{2n\Delta V(r_{1})} K_{n}\Big(r_{1}-\frac z {\sqrt{2n\Delta V(r_{1})}},r_{1}-\frac w {\sqrt{2n\Delta V(r_{1})}}\Big).
\end{equation}
Our goal is to study the asymptotics of this rescaled ``semi-hard edge kernel'' for $z,w$ in a compact subset of the right half-plane: $\re z>0$ and $\re w>0$.

\begin{remark} In \cite{AKMW} the following slightly different rescaling was used:
\begin{align}\label{K hat}
\hat{K}_n(z,w) = \frac{1}{n\Delta V(r_{1})} K_{n}\Big(r_{1}+\frac{z}{\sqrt{n\Delta V(r_{1})}},r_{1}+\frac{w}{\sqrt{n\Delta V(r_{1})}}\Big).
\end{align}
Below we recall some theory for the limit of $\hat{K}_n$, and when we translate back to $\tilde{K}_n$, we use:
\begin{align}\label{transformation Ktilde Khat}
\tilde{K}_n(z,w)=\frac{1}{2} \hat{K}_n (-\frac z {\sqrt{2}},-\frac w {\sqrt{2}}), \qquad \hat{K}_n (z,w) = 2\tilde{K}_n(-\sqrt{2}z,-\sqrt{2}w).
\end{align}
\end{remark}

By \cite{AKMW}  we know that there is a sequence of cocycles $c_n(z,w)$ such that the kernels $c_n(z,w)\hat{K}_{n}(z,w)$ converge locally uniformly for $z,w$ in the left half-plane to a limit kernel $\hat{K}(z,w)$ of the form
\begin{equation}\label{struct}
\hat{K}(z,w)=G(z,w)\Psi(z,w) \qquad \re z<0,\re w<0,
\end{equation}
where $G(z,w)=e^{ z\bar{w}-\frac 12|z|^2-\frac 12|w|^2}$ is the ``Ginibre kernel'' while $\Psi(z,w)$ is a function analytic in $z$ and $\bar{w}$ and well-defined for $\re z<0$, $\re w<0$.
Also, the limit kernel gives rise to a solution $\hat{R}(z)$ of the Ward equation
\begin{equation}\label{weq}
\overline{\partial} \hat{C}(z) = \hat{R}(z)-1-\Delta \log \hat{R}(z),\qquad \re z<0,
\end{equation}
where
\begin{align*}
\hat{R}(z)=\hat{K}(z,z)=\Psi(z,z),\qquad \hat{C}(z)=\int_{\re w<0}\frac {|\hat{K}(z,w)|^2}{\hat{K}(z,z)}\frac 1 {z-w}\, dA(w).
\end{align*}
(If $\hat{R}(z)>0$ for one $z$, then $\hat{R}(z)>0$ everywhere in the left half-plane, see \cite{AKMW} for details.)

By standard balayage arguments \cite{ACCL2022 1}, we know that $\hat{R}$ blows up at the hard edge (this is not surprising since $K_{n}(z,z)$ grows like $n^2$ in the hard edge regime), i.e.
\begin{equation}\label{bc}
\lim_{\re z\to 0} \hat{R}(z)=+\infty.
\end{equation}
It is also useful to note that when $\re z\to -\infty$ we have, by well-known bulk asymptotics \cite{AmeurBulk},
\begin{equation}\label{bu}
\lim_{\re z\to -\infty}\hat{R}(z)=1.
\end{equation}

The rotational symmetry of the ensemble also easily implies that $\hat{R}$ is vertical translation invariant (see \cite{AKM} for a detailed proof of this)
\begin{equation}\label{symm}
\hat{R}(z+it)=\hat{R}(z),\qquad \forall t\in \R.
\end{equation}
This is equivalent to saying that the analytic function $\Psi$ in \eqref{struct} takes the form $\Psi(z,w)=\Phi(z+\bar{w})$ (see e.g. \cite{AKM}) where $\Phi(z)$ is an analytic function of one variable,
defined for $\re z<0$.

Thus we have
\begin{equation}\label{smock}
\hat{K}(z,w)=G(z,w)\Phi(z+\bar{w}),\qquad \hat{R}(z)=\Phi(z+\bar{z}).
\end{equation}

We now have the following result, which is implicit in the analysis in \cite{AKMW}.

\begin{proposition}
There is a unique analytic function $\Phi(z)$ for $\re z < 0$ which satisfies the Ward equation \eqref{weq} with boundary conditions \eqref{bc} and \eqref{bu}. This function is given by
\begin{equation}\label{soln}
\Phi(z)=\frac{1}{\sqrt{2\pi}} \int_{-\infty}^{+\infty}\frac{e^{-\frac{1}{2} (z-t)^2}}{\frac{1}{2} \mathrm{erfc} \frac{t}{\sqrt{2}}}\, dt,\qquad (\re z<0).
\end{equation}
\end{proposition}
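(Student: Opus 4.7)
\emph{Proof plan.}

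First I would reduce the problem to a scalar one. Since $\Phi$ is analytic in the left half-plane, it is determined by its values on the negative real axis, where $\Phi(2x)=\hat{R}(z)|_{z=x}$. Thus, by the translation invariance \eqref{symm} and the structural form \eqref{smock}, uniqueness of $\Phi$ reduces to uniqueness of the real-valued function $R(x):=\Phi(2x)$ on $(-\infty,0)$. The quantity $\hat{C}(z)$ defined after \eqref{weq} inherits this translation invariance as well (the simultaneous shift $(z,w)\mapsto(z+it,w+it)$ leaves the integrand invariant, and translates the domain of integration to itself), so $\hat{C}(z)=C(\re z)$ and $\overline{\partial}\hat{C}(z)=\tfrac{1}{2}C'(\re z)$. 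The Ward equation \eqref{weq} thus collapses to the scalar integro-differential equation
\begin{equation*}
\tfrac{1}{2}C'(x) \;=\; R(x)-1-\tfrac{1}{4}(\log R)''(x), \qquad x<0,
\end{equation*}
where $C$ is expressed non-locally in $R$ through the Cauchy-type integral defining $\hat{C}$.

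For existence and the explicit formula, I would bootstrap from \eqref{zipp} of Remark~\ref{ss}. Combined with the compactness and Ward-equation arguments of \cite{AKMW}, this shows that $c_{n}\hat{K}_{n}$ has subsequential limits of the form \eqref{smock} whose $\Phi$ satisfies the Ward equation together with \eqref{bc} and \eqref{bu}. Via \eqref{transformation Ktilde Khat}, the diagonal of such a limit is $\hat{R}(z)=2\rho(-\sqrt{2}\,\re z)$, so $\Phi(2x)=2\rho(-\sqrt{2}\,x)$ on $(-\infty,0)$; the substitution $t=y\sqrt{2}$ in the definition \eqref{1dp} of $\rho$ converts this into the explicit formula \eqref{soln}, which extends analytically to the whole left half-plane. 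The boundary conditions can then be checked directly from \eqref{soln}: as $\re z\to -\infty$ the Gaussian in the integrand concentrates near $t=z$, where $\tfrac{1}{2}\mathrm{erfc}(t/\sqrt{2})\to 1$, so $\Phi\to 1$; as $\re z\to 0^{-}$, the dominant contribution comes from the tail $t\to +\infty$ where $\tfrac{1}{2}\mathrm{erfc}(t/\sqrt{2})\sim (t\sqrt{2\pi})^{-1}e^{-t^{2}/2}$, producing an integrand of order $t$ and hence the blowup.

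The main obstacle is uniqueness. Once the problem is reduced to the scalar equation above, I would adapt the argument of \cite{AKMW}: supposing $R_{1},R_{2}$ are two solutions with the prescribed boundary behavior, the difference $\delta=R_{1}-R_{2}$ satisfies a linearization of the integro-differential equation with homogeneous boundary conditions at both ends of $(-\infty,0)$. Propagating $\delta\to 0$ from $-\infty$ together with sharp control of $\delta$ in the boundary layer near $0^{-}$ (where, according to \eqref{soln}, $R(x)$ blows up like $1/|x|$, coming from the $t\to+\infty$ tail) should force $\delta\equiv 0$. The technical difficulty is controlling the non-local term $C'(x)$, since it depends on $R$ on the entire negative axis; here one would exploit the exponential damping $e^{-|z-w|^{2}}$ in $|\hat{K}(z,w)|^{2}=e^{-|z-w|^{2}}|\Phi(z+\bar{w})|^{2}$ together with the explicit large-$t$ asymptotics of $\mathrm{erfc}$ to obtain sharp enough estimates on the Cauchy integral.
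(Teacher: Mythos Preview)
Your existence argument is fine in outline, though somewhat roundabout: the paper simply cites \cite[Theorem 8]{AKMW} directly for the fact that \eqref{soln} solves the Ward equation, then verifies the boundary conditions much as you do.

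The uniqueness argument has a genuine gap. What you describe as ``adapting the argument of \cite{AKMW}'' --- a linearization of the integro-differential equation for the difference $\delta=R_1-R_2$, propagated from $-\infty$ --- is not what \cite{AKMW} actually does, and as you yourself note, controlling the non-local Cauchy term $C'(x)$ makes such a direct approach hard to close; there is no evident maximum principle or energy structure to exploit here. The key input you are missing is that \cite[Theorem 8]{AKMW} is a \emph{classification} theorem: every translation-invariant solution of the Ward equation in the left half-plane necessarily has the form
\[
\Phi(z)=\frac{1}{\sqrt{2\pi}}\int_I \frac{e^{-\frac{1}{2}(z-t)^2}}{\frac{1}{2}\,\mathrm{erfc}\,\frac{t}{\sqrt{2}}}\,dt
\]
for some interval $I\subset\R$. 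Once this structure is in hand, uniqueness is a two-line check on $I$: the condition $\Phi(x)\to 1$ as $x\to -\infty$ forces $I$ to be unbounded below (otherwise $\Phi(x)\to 0$), and $\Phi(x)\to +\infty$ as $x\to 0^-$ forces $I$ to be unbounded above (otherwise $\Phi$ stays bounded near $0$). Hence $I=\R$, which is exactly the paper's proof.
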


\begin{proof} That \eqref{soln} solves the Ward equation is shown in \cite[Theorem 8]{AKMW}, and \eqref{bc} and \eqref{bu} are likewise clear for the solution \eqref{soln}.
In the other direction, given an arbitrary solution $\Phi$ there exists, by \cite[Theorem 8]{AKMW}, an interval $I\subset \R$ such that $\Phi$ has the structure
\begin{align*}
\Phi(z)=\frac{1}{\sqrt{2\pi}}\int_I \frac{e^{-\frac{1}{2} (z-t)^2}}{\frac{1}{2} \mathrm{erfc} \frac{t}{\sqrt{2}}}\, dt.
\end{align*}
The conditions $\Phi(x)\to 1$ as $x\to -\infty$ and $\Phi(x)\to+\infty$ as $x\to 0$ easily implies that we must have $I=\R$.
\end{proof}

It follows that the limit kernel $\hat{K}(z,w)$ is uniquely given by
\begin{align*}
\hat{K}(z,w)=e^{z\bar{w}-\frac 1 2|z|^2-\frac 1 2|w|^2}\frac{1}{\sqrt{2\pi}}\int_{-\infty}^{+\infty}\frac {e^{-\frac 1 2 (z+\bar{w}-t)^2}}{\frac 1 2 \mathrm{erfc} \frac{t}{\sqrt{2}}}\, dt,\qquad
\re z<0,\, \re w<0.
\end{align*}

Using \eqref{transformation Ktilde Khat}, we then get for $z$ and $w$ in the right half-plane $\re z>0$, $\re w>0$ that
\begin{align}
\tilde{K}(z,w)&=\frac 1 2 e^{\frac 1 2 z\bar{w}-\frac 1 4|z|^2-\frac 14|w|^2}\frac 1 {\sqrt{2\pi}}\int_{-\infty}^{+\infty}\frac {e^{-\frac{1}{2} (\frac {z+\bar{w}}{\sqrt{2}}+t)^2}}{\frac{1}{2} \mathrm{erfc}\frac{t}{\sqrt{2}}}\, dt\nonumber \\
&=e^{\frac 1 2 z\bar{w}-\frac 1 4|z|^2-\frac 14|w|^2}\frac 1 {\sqrt{\pi}}\int_{-\infty}^{+\infty}\frac {e^{-\frac 14(z+\bar{w}+2t)^2}}{\erfc t}\, dt.\label{gen}
\end{align}

If we choose $\mathfrak{s}_1,\mathfrak{s}_2$ real and positive this gives
\begin{equation}\label{exa}
\tilde{K}(\mathfrak{s}_1,\mathfrak{s}_2) = \int_{-\infty}^{+\infty} \frac{e^{-\frac{(t+\mathfrak{s}_{1})^{2} + (t+\mathfrak{s}_{2})^{2}}{2}}}{\sqrt{\pi} \mathrm{erfc}\,t}dt.
\end{equation}

\paragraph{Universality for non-rotation invariant potentials.} Note that the above arguments work not only for rotationally symmetric potentials, but for arbitrary potentials and scaling limits $\tilde{K}$ which are vertical translation invariant.

Indeed, consider a smooth potential $V$ (not necessarily rotation invariant), fix any point $r_{1}$ and define $Q$ as follows:
\begin{align*}
Q(z)=\begin{cases}
V(z),& z \in D, \\
+\infty,& z \notin D,
\end{cases}
\end{align*}
where $D \subset \mbox{int }S[V]$ is such that $\partial D$ is smooth and passes through $r_{1}$. Assume that
$\Delta V(r_{1})>0$. Let $K_{n}$ be the correlation kernel associated to $Q$, and define $\tilde{K}_{n},\hat{K}_{n}$ as in \eqref{scale} and \eqref{K hat}. By \cite{AKMW} we know that each subsequential limit $\hat{K}$ of the rescaled kernels $\hat{K}_n$ satisfy Ward's equation \eqref{weq} as well as the boundary conditions \eqref{bc} and \eqref{bu}. Hence if $\hat{K}$ is vertical translation invariant, i.e., $\hat{K}(z+it,z+it)=\hat{K}(z,z)$ for all $t\in \R$, \cite[Theorem 8]{AKMW} implies that $\hat{K}$ is uniquely given by \eqref{smock},\eqref{soln}.
In other words, the kernel $\tilde{K}$ in \eqref{gen} is universal among translation invariant scaling limits.

\section{Uniform asymptotics of the incomplete gamma function}\label{appendix:incomplete gamma}

To analyze the large $n$ behavior of $K_{n}(z,w)$, we will use the asymptotics of $\gamma(a,z)$ in various regimes of the parameters $a$ and $z$. These asymptotics are available in the literature and are summarized in the following lemmas.
\begin{lemma}\label{lemma:various regime of gamma}(\cite[formula 8.11.2]{NIST}).
Let $a$ be fixed. As $z \to \infty$,
\begin{align*}
\gamma(a,z) = \Gamma(a) + \bigO(z^{a-1}e^{-z}).
\end{align*}
\end{lemma}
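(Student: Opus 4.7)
The plan is to reduce to the complementary incomplete gamma function and then do one integration by parts (or equivalently a substitution). First I would write
\[
\Gamma(a)-\gamma(a,z)=\int_{z}^{\infty}t^{a-1}e^{-t}\,dt,
\]
so that the statement is equivalent to showing $\int_{z}^{\infty}t^{a-1}e^{-t}\,dt=\bigO(z^{a-1}e^{-z})$ as $z\to\infty$ for fixed $a$. I would then substitute $t=z+u$ to peel off the expected leading behavior, giving
\[
\int_{z}^{\infty}t^{a-1}e^{-t}\,dt = z^{a-1}e^{-z}\int_{0}^{\infty}\Bigl(1+\tfrac{u}{z}\Bigr)^{a-1}e^{-u}\,du.
\]
So the task reduces to proving that the remaining integral is bounded as $z\to\infty$.

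For this I would split on the sign of $a-1$. If $a\leq 1$ the factor $(1+u/z)^{a-1}$ is bounded by $1$, so the integral is at most $1$. If $a>1$, I would use the elementary bound $(1+u/z)^{a-1}\leq e^{(a-1)u/z}$; for $z$ large enough that $(a-1)/z\leq 1/2$, this yields $\int_{0}^{\infty}e^{-u/2}\,du=2$. Either way the remaining integral is $\bigO(1)$, which completes the proof. There is no real obstacle here; the only subtlety is handling $a>1$, where one must prevent the algebraic prefactor $(1+u/z)^{a-1}$ from destroying the exponential decay, and the $e^{(a-1)u/z}$ bound is a clean way to do so. Of course one could alternatively quote \cite[8.11.2]{NIST} directly, but the above one-line argument is self-contained.
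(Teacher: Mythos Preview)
Your argument is correct. The paper does not actually prove this lemma; it merely cites \cite[formula 8.11.2]{NIST} as the source. Your substitution $t=z+u$ followed by the split on the sign of $a-1$ (with the bound $(1+u/z)^{a-1}\le e^{(a-1)u/z}$ in the case $a>1$) gives a clean self-contained proof of what the paper takes as a quoted fact.
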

\begin{lemma}\label{lemma: uniform}(\cite[Section 11.2.4]{Temme}).
The following hold:
\begin{align}\label{Temme exact formula}
& \frac{\gamma(a,z)}{\Gamma(a)} = \frac{1}{2}\mathrm{erfc}(-\eta \sqrt{a/2}) - R_{a}(\eta), \qquad R_{a}(\eta) := \frac{e^{-\frac{1}{2}a \eta^{2}}}{2\pi i}\int_{-\infty}^{\infty}e^{-\frac{1}{2}a u^{2}}g(u)du,
\end{align}
where $\mathrm{erfc}$ is given by \eqref{def of erfc}, $g(u) := \frac{dt}{du}\frac{1}{\lambda-t}+\frac{1}{u+i \eta}$,
\begin{align}\label{lol8}
& \lambda = \frac{z}{a}, \quad \eta = (\lambda-1)\sqrt{\frac{2 (\lambda-1-\log \lambda)}{(\lambda-1)^{2}}}, \quad  u=-i(t-1)\sqrt{\frac{2(t-1-\log t)}{(t-1)^{2}}},
\end{align}
and the principal branch is used for the roots. In particular, $t \in \mathcal{L}:=\{\frac{\theta}{\sin \theta} e^{i\theta}: -\pi < \theta < \pi\}$ for $u \in \mathbb{R}$ and $\eta \in \R$ for $\lambda >0$. Furthermore, as $a \to + \infty$, uniformly for $z \in [0,\infty)$,
\begin{align}\label{asymp of Ra}
& R_{a}(\eta) \sim \frac{e^{-\frac{1}{2}a \eta^{2}}}{\sqrt{2\pi a}}\sum_{j=0}^{\infty} \frac{c_{j}(\eta)}{a^{j}},
\end{align}
where all coefficients $c_{j}(\eta)$ are bounded functions of $\eta \in \mathbb{R}$ (i.e. bounded for $\lambda \in (0,\infty)$). The coefficients $c_{0}(\eta)$ and $c_{1}(\eta)$ are explicitly given by (see \cite[p. 312]{Temme})
\begin{align}\label{def of c0 and c1}
c_{0}(\eta) = \frac{1}{\lambda-1}-\frac{1}{\eta}, \qquad c_{1}(\eta) = \frac{1}{\eta^{3}}-\frac{1}{(\lambda-1)^{3}}-\frac{1}{(\lambda-1)^{2}}-\frac{1}{12(\lambda-1)}.
\end{align}
\end{lemma}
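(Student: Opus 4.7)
The plan is to carry out a uniform saddle-point analysis on the integral representation of $\gamma(a,z)$, following Temme's approach. First I would rescale by setting $t = a\tau$ in $\gamma(a,z) = \int_0^z t^{a-1}e^{-t}\,dt$ to display the saddle structure:
\begin{align*}
\gamma(a,z) = a^a \int_0^{\lambda} e^{-a\phi(\tau)}\, \frac{d\tau}{\tau}, \qquad \phi(\tau) := \tau - \log \tau,
\end{align*}
where $\phi$ has a global minimum $\phi(1)=1$ on $(0,\infty)$ with $\phi''(1)=1$. Combined with Stirling's expansion of $\Gamma(a)$, this normalizes the integral so that the leading Gaussian at the saddle contributes a factor of the expected order.

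Next, I would introduce Temme's change of variables $\tau - 1 - \log \tau = -u^2/2$ with the branch choice specified in the statement, so that $\tau \mapsto u$ carries the contour $\mathcal{L}$ bijectively onto $\R$ and the positive real $\tau$-axis onto the imaginary $u$-axis. By Cauchy's theorem, which applies since the integrand is analytic on the intermediate region, one may deform the real path $[0,\lambda]$ onto a segment of $\mathcal{L}$ terminating at $u=-i\eta$. The integrand becomes $e^{-au^2/2}\cdot\frac{1}{\tau}\frac{d\tau}{du}$, and differentiating the defining relation gives the algebraic identity $\frac{1}{\tau}\frac{d\tau}{du}=\frac{u}{\tau-1}$. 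Writing
\begin{align*}
\frac{u}{\tau - 1} = -\frac{1}{u + i \eta} + g(u),
\end{align*}
the $-\frac{1}{u+i\eta}$ piece, combined with the contribution from the other limit of integration, assembles into the leading term $\frac{1}{2}\mathrm{erfc}(-\eta\sqrt{a/2})$, while the regular remainder $g(u)$ produces $-R_a(\eta)$ in exactly the form stated in \eqref{Temme exact formula}.

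Once this identity is in hand, the asymptotic expansion \eqref{asymp of Ra} follows by Watson's lemma: expand $g(u)=\sum_{k\geq 0}g^{(k)}(0)\,u^k/k!$ at $u=0$ and integrate term by term against $e^{-au^2/2}$, giving a series in powers of $1/a$ whose coefficients $c_j(\eta)$ are explicit numerical constants times $g^{(2j)}(0)$. The explicit formula $c_0(\eta)=(\lambda-1)^{-1}-\eta^{-1}$ and the corresponding $c_1(\eta)$ in \eqref{def of c0 and c1} are then read off by computing enough Taylor coefficients of $\tau(u)$ at $u=0$ and assembling the pieces of $\frac{u}{\tau-1}+\frac{1}{u+i\eta}$.

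The main obstacle is controlling the remainder uniformly in $\lambda$, equivalently in $\eta\in\R$, in particular across the coalescence $\lambda\to 1$ where $\eta\to 0$. Each of the two terms in the definition of $g(u)$ is individually singular as $\eta\to 0$, but the cancellation built into the split forces the combination to remain bounded, with all derivatives $g^{(k)}(0)$ depending boundedly on $\eta$. Verifying this cancellation to all orders, equivalently matching the Laurent expansions of $\frac{u}{\tau-1}$ and $\frac{1}{u+i\eta}$ at their common singularity $u=-i\eta$, is the delicate step that upgrades the expansion from pointwise to uniform in $\eta\in\R$.
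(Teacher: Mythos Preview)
The paper does not supply its own proof; the lemma is quoted directly from Temme's book. Your outline captures the correct overall strategy---Temme's uniform saddle-point method, with the quadratic change of variable and the split into an $\mathrm{erfc}$ main term plus a smooth remainder analysed by Watson's lemma---and your discussion of the uniformity issue as $\eta\to 0$ is on target.

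There is, however, a gap in the integral representation you start from. With the lemma's substitution $\tau-1-\log\tau=-u^2/2$, the exponent $-a\phi(\tau)=-a-a(\tau-1-\log\tau)$ becomes $-a+au^2/2$, so the Gaussian factor is $e^{+au^2/2}$, which \emph{diverges} on the real $u$-line; you cannot deform onto $\R$. Relatedly, the Jacobian factor $\tfrac{1}{\tau}\tfrac{d\tau}{du}=\tfrac{-u}{\tau-1}$ (note the sign) is not the quantity $\tfrac{dt}{du}\cdot\tfrac{1}{\lambda-t}$ that appears in the stated $g(u)$: the Cauchy pole at $t=\lambda$ (equivalently $u=-i\eta$) is absent from the direct real integral for $\gamma(a,z)$. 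Temme's actual derivation starts from a contour-integral representation for the ratio $\gamma(a,z)/\Gamma(a)$, obtained via the Hankel contour for $1/\Gamma(a)$; this is what introduces the kernel $\tfrac{1}{\lambda-t}$ and yields the decaying Gaussian $e^{-au^2/2}$ along $\mathcal{L}$, after which the split $\tfrac{dt}{du}\tfrac{1}{\lambda-t}=-\tfrac{1}{u+i\eta}+g(u)$ and the term-by-term expansion go through exactly as you describe.
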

Recall that $\mathrm{erfc}$ satisfies \cite[7.12.1]{NIST}
\begin{align}\label{large y asymp of erfc}
& \mathrm{erfc}(y) = \frac{e^{-y^{2}}}{\sqrt{\pi}}\bigg( \frac{1}{y}-\frac{1}{2y^{3}}+\frac{3}{4y^{5}}-\frac{15}{8y^{7}} + \bigO(y^{-9}) \bigg), & & \mbox{as } y \to + \infty,
\end{align}
and $\mathrm{erfc}(-y) = 2-\mathrm{erfc}(y)$. By combining Lemma \ref{lemma: uniform} with the above large $y$ asymptotics of $\mathrm{erfc}(y)$, we get the following.
\begin{lemma}\label{lemma: asymp of gamma for lambda bounded away from 1}
Let $\eta$ be as in \eqref{lol8}.
\item[(i)] Let $\delta >0$ be fixed. As $a \to +\infty$, uniformly for $\lambda \geq 1+\delta$,
\begin{align*}
\frac{\gamma(a,\lambda a)}{\Gamma(a)} = 1 + \frac{e^{-\frac{a\eta^{2}}{2}}}{\sqrt{2\pi}} \bigg( \frac{-1}{\lambda-1}\frac{1}{\sqrt{a}}+\frac{1+10\lambda+\lambda^{2}}{12(\lambda-1)^{3}} \frac{1}{a^{3/2}} + \bigO(a^{-5/2}) \bigg).
\end{align*}
\item[(ii)] As $a \to +\infty$, uniformly for $\lambda$ in compact subsets of $(0,1)$,
\begin{align*}
\frac{\gamma(a,\lambda a)}{\Gamma(a)} = \frac{e^{-\frac{a\eta^{2}}{2}}}{\sqrt{2\pi}} \bigg( \frac{-1}{\lambda-1}\frac{1}{\sqrt{a}}+\frac{1+10\lambda+\lambda^{2}}{12(\lambda-1)^{3}} \frac{1}{a^{3/2}} + \bigO(a^{-5/2}) \bigg).
\end{align*}
\end{lemma}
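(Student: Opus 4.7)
The plan is to read off both expansions directly from the exact representation
\begin{align*}
\frac{\gamma(a,\lambda a)}{\Gamma(a)} = \tfrac{1}{2}\mathrm{erfc}\!\left(-\eta\sqrt{a/2}\right) - R_{a}(\eta)
\end{align*}
given in Lemma \ref{lemma: uniform}, by inserting the uniform asymptotic series \eqref{asymp of Ra}, the explicit values \eqref{def of c0 and c1} of $c_0(\eta),c_1(\eta)$, and the large-argument expansion \eqref{large y asymp of erfc} of $\mathrm{erfc}$. The two cases are distinguished only by the sign of $\eta$ (which matches the sign of $\lambda-1$), and in each case one argument of $\mathrm{erfc}$ tends uniformly to $+\infty$.

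For part (i), I will use that $\lambda \geq 1+\delta$ forces $\eta \geq \eta_0(\delta) > 0$, so that $\eta\sqrt{a/2}\to +\infty$ uniformly. I would write $\mathrm{erfc}(-\eta\sqrt{a/2}) = 2 - \mathrm{erfc}(\eta\sqrt{a/2})$ and apply \eqref{large y asymp of erfc} with $y = \eta\sqrt{a/2}$, yielding
\begin{align*}
\tfrac{1}{2}\mathrm{erfc}(-\eta\sqrt{a/2}) = 1 - \tfrac{e^{-a\eta^{2}/2}}{\sqrt{2\pi a}}\!\left[\tfrac{1}{\eta} - \tfrac{1}{\eta^{3}a} + O(a^{-2})\right].
\end{align*}
Subtracting the expansion \eqref{asymp of Ra} of $R_a(\eta)$ and using $c_0(\eta)=\tfrac{1}{\lambda-1}-\tfrac{1}{\eta}$ produces the clean cancellation $\tfrac{1}{\eta}+c_0(\eta)=\tfrac{1}{\lambda-1}$, removing the apparent singularity at $\eta=0$. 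For the next order, the combination $-\tfrac{1}{\eta^{3}}+c_1(\eta) = -\tfrac{1}{(\lambda-1)^{3}}-\tfrac{1}{(\lambda-1)^{2}}-\tfrac{1}{12(\lambda-1)}$, rewritten over the common denominator $12(\lambda-1)^{3}$, equals $-\tfrac{1+10\lambda+\lambda^{2}}{12(\lambda-1)^{3}}$, which gives exactly the claimed $a^{-3/2}$-coefficient after extracting $1/\sqrt{a}$ from the prefactor.

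For part (ii), $\lambda$ ranging in a compact subset of $(0,1)$ yields $\eta \leq -\eta_1 < 0$, so now $-\eta\sqrt{a/2}\to +\infty$ uniformly. The term $\tfrac{1}{2}\mathrm{erfc}(-\eta\sqrt{a/2})$ is itself exponentially small of order $e^{-a\eta^{2}/2}/\sqrt{a}$, so the leading ``1'' of part (i) disappears. Applying \eqref{large y asymp of erfc} directly (with $y=-\eta\sqrt{a/2}$) and converting back to $\eta$ via $|\eta|^{-k}=(-1)^{k}\eta^{-k}$, the same algebraic identities $\tfrac{1}{\eta}+c_0(\eta)=\tfrac{1}{\lambda-1}$ and the $a^{-3/2}$ simplification go through verbatim, since $c_0,c_1$ depend on $\lambda$ only and the expansion \eqref{large y asymp of erfc} holds for any positive argument tending to $+\infty$. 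This yields the second formula.

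The only real technical point is uniformity of the remainders. In both ranges $|\eta|$ is bounded away from zero by a positive constant, so the coefficients $c_j(\eta)$ are uniformly bounded (this is exactly the content of the last sentence in Lemma \ref{lemma: uniform}) and the expansion \eqref{asymp of Ra} is uniform; simultaneously $|\eta|\sqrt{a/2}\to +\infty$ uniformly, so \eqref{large y asymp of erfc} applies with a uniform $O(y^{-9})$ error. Consequently the $O(a^{-5/2})$ remainder in the lemma is genuinely uniform in $\lambda$ on the prescribed sets, and nothing beyond careful bookkeeping is required.
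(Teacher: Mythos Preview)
Your proposal is correct and follows exactly the approach indicated in the paper, which simply states that the lemma is obtained ``by combining Lemma \ref{lemma: uniform} with the above large $y$ asymptotics of $\mathrm{erfc}(y)$.'' You have filled in the details the paper omits, including the key cancellations $\tfrac{1}{\eta}+c_0(\eta)=\tfrac{1}{\lambda-1}$ and $-\tfrac{1}{\eta^3}+c_1(\eta)=-\tfrac{1+10\lambda+\lambda^2}{12(\lambda-1)^3}$, and you correctly address the uniformity via the lower bound on $|\eta|$ in each regime.
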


\paragraph{Conflict of interest statement.} The authors have no conflict of interest to disclose.

\paragraph{Data availability statement.} There is no data associated to this work.

\subsection*{Acknowledgement.} CC acknowledges support from the Swedish Research Council, Grant No. 2021-04626. We are grateful to Sung-Soo Byun for useful remarks.

\footnotesize

\end{document}